\providecommand{\U}[1]{\protect\rule{.1in}{.1in}}
\numberwithin{equation}{section}
\newtheorem{theorem}{Theorem}
\newtheorem{definition}[theorem]{Definition}
\newtheorem{lemma}[theorem]{Lemma}
\newtheorem{proposition}[theorem]{Proposition}
\newtheorem{remark}[theorem]{Remark}
\newenvironment{proof}[1][Proof]{\noindent\textbf{#1.} }{\ \rule{0.5em}{0.5em}}
\begin{document}
\bigskip

\begin{center}
{\Large Classical non mass preserving solutions of coagulation equations.}

\bigskip

M. Escobedo\footnote{Departamento de Matem\'{a}ticas. Universidad del
Pa\'{\i}s Vasco. Apartado 644. E-48080 Bilbao, Spain. E-mail:
miguel.escobedo@ehu.es}, J. J. L. Vel\'{a}zquez\footnote{ICMAT
(CSIC-UAM-UC3M-UCM). Facultad de Matem\'{a}ticas. Universidad Complutense.
E-28040, Madrid, Spain. E-mail: JJ\_Velazquez@mat.ucm.es}

\bigskip
\end{center}

{\large Abstract:} In this paper we construct classical solutions of a family
of coagulation equations with homogeneous kernels that exhibit the behaviour
known as gelation. This behaviour consists in the loss of mass due to the fact
that some of the particles can become infinitely large in finite time.

\section{INTRODUCTION}

In this paper we prove existence of solutions of the classical coagulation
equation for which the mass is not conserved in time. The coagulation equation
reads as:%
\begin{align}
\frac{\partial f}{\partial t}\left(  t,x\right)   &  =Q\left[  f\right]
\left(  t,x\right)  \ \ \ \ ,\ \ \ \ x\geq0\ \ ,\ \ \ t>0\label{S1E1}\\
Q\left[  f\right]   &  =\frac{1}{2}\int_{0}^{x}K\left(  x-y,y\right)  f\left(
t,x-y\right)  f\left(  t,y\right)  -\int_{0}^{\infty}K\left(  x,y\right)
f\left(  t,x\right)  f\left(  t,y\right)  dy\label{S1E2}\\
f\left(  x,0\right)   &  =f_{0}\left(  x\right)  \ \ \ ,\ \ \ \ x>0
\label{S1E3}%
\end{align}
where the kernel $K$ whose specific form will be precised later, satisfies
$K\left(  x,y\right)  =K\left(  y,x\right)  \geq0.$

\bigskip

The solutions of (\ref{S1E1})-(\ref{S1E3}) satisfy formally, assuming that
Fubini's Theorem can be applied, the mass conservation property:%
\begin{equation}
\frac{d}{dt}\left(  \int_{0}^{\infty}xf\left(  t,x\right)  dx\right)  =0
\label{S1E4}%
\end{equation}

However, it is well known that for a large class of homogeneous kernels
$K\left(  x,y\right)  $ solutions of (\ref{S1E1})-(\ref{S1E3}) satisfying
(\ref{S1E4}) cannot exist globally in time (cf. \cite{EZH}, \cite{EMP},
\cite{Jeon}, \cite{McLeod}). More precisely, there exists solutions of
(\ref{S1E1})-(\ref{S1E3}) that preserve the total mass of the particles
$\int_{0}^{\infty}xf\left(  t,x\right)  dx$ during a finite time interval
$0\leq t\leq T<\infty,$ but the mass is not preserved for arbitrarily long
times. This phenomenon is usually termed as gelation.

In this paper we will restrict our attention to the study of kernels with the
form:%
\begin{equation}
K\left(  x,y\right)  =\left(  x\ y\right)  ^{\frac{\lambda}{2}}%
\ \ ,\ \ 1<\lambda<2 \label{S0E1}%
\end{equation}

The range of exponents in (\ref{S0E1}) is the one in which changes of mass of
order one can be expected in times of order one. Global weak solutions of
(\ref{S1E1}) have been obtained in \cite{L}.

\bigskip

The main goal of this paper is to construct classical solutions of
(\ref{S1E1})-(\ref{S1E3}) exhibiting gelation. We will assume that the initial
data behaves as a suitable power law for large values of $x,$ and therefore
that the loss of mass takes place since $t=0.$ In particular, in the classical
solutions obtained in this paper, it will be possible to compute a detailed
asymptotic behaviour of the solution $f\left(  t,x\right)  $ as $x\rightarrow
\infty,$ as well as the flux of mass escaping to infinity. The solutions
obtained will be local in time, since we cannot avoid the possibility of
discontinuities in the fluxes at infinity for \ positive times.

\bigskip

The results obtained in this paper rely heavily in the estimates obtained in
the papers \cite{EV1}, \cite{EV2}, where some related linear coagulation
models were studied. In particular we have obtained very detailed estimates
for the fundamental solution of the linear coagulation equation that results
linearizing (\ref{S1E1})-(\ref{S1E3}) around the power law $\bar{f}\left(
x\right)  =\frac{1}{x^{\frac{3+\lambda}{2}}}$ in \cite{EV1}. On the other
hand, we have introduced in \cite{EV2} some natural functional spaces to study
the linearized version of (\ref{S1E1})-(\ref{S1E3}) that results considering
small deviations of a bounded initial data $f_{0}\left(  x\right)  $ behaving
asymptotically as $\frac{1}{x^{\frac{3+\lambda}{2}}}$ as $x\rightarrow\infty.$
Both the fundamental solution in \cite{EV1} and the functional framework
introduced in \cite{EV2}~will be used extensively in this paper.

\bigskip

The coagulation equation is one among a large family of kinetic equations
exhibiting particle fluxes for homogeneous solutions. Several examples can be
found in \cite{BZ}. A rigorous construction of solutions exhibiting loss of
mass for small values of the energy for the so-called Uehling-Uhlenbeck
equation (or quantum Boltzmann equation) has been obtained in \cite{EMV1},
\cite{EMV2}. The type of methods used in those papers is closely related to
the ones used in this paper, although there are some technical differences.

In both cases (coagulation and Uehling-Uhlenbeck) we can think that the
obtained solutions are mass preserving measure valued solutions having a
singular part at some distinghished point and a regular part that is described
by the integro-differential equations. In the case of coagulation the singular
part of the measure (or gel) would be supported at $x=\infty,$ and in the case
of Uehling-Uhlenbeck such atomic measure (or Bose-Einstein condensate) would
correspond to a macroscopic fraction of particles with zero energy. A natural
question that arises in both cases, and in general in the study of equations
with particle fluxes is to understand the interaction\ between the singular
measure and the regular part of the measure. For the solutions obtained in
\cite{EMV1}, \cite{EMV2} and in this paper we assume that the regular part of
the measure is not affected by the singular part. However, it is well known
that such interaction could be nontrivial. For instance, in the case of
coagulation models, explicit examples for the kernel $K\left(  x,y\right)
=x\cdot y$ show that different solutions can be expected if there is
interaction between the singular part and the regular part (cf. \cite{Flory41}%
, \cite{St}) or if such interaction does not exist. For more general kernels
it is known that different dynamics can arise for different mass preserving
regularizations of the kernel $K\left(  x,y\right)  $ after passing to the
limit where gelation can occur (cf. \cite{FL}). In the case of
Uehling-Uhlenbeck the computations and physical arguments in \cite{LLPR},
\cite{ST1}, \cite{ST2} suggest the existence of solutions of this equation
exhibiting nontrivial interactions between the regular part of the particle
distribution and the Bose-Einstein condensate. We also remark that in
\cite{Lu1}, \cite{Lu2} a construction of global mass preserving weak solutions
for the Uehling-Uhlenbeck system has been given. Such a construction begins
regularizing the collision kernel for small energies and pass to the limit in
the cutoff parameter. It is not known if the solutions constructed in
\cite{Lu1}, \cite{Lu2} are the same as the ones in \cite{EMV1}, \cite{EMV2}.
In all these problems a detailed understanding of the physical regularizations
yielding cutoff mechanisms plays a crucial role (cf. also \cite{Spohn} for a
discussion about these problems).

\bigskip

The plan of this paper is the following. In Section 2 we describe the
functional framework used to prove the main Theorem of this paper and state
the main result. Section 3 gives a general sketch of the strategy of the
proof. Section 4 summarizes some results that have been proved in \cite{EV1},
\cite{EV2} that will be used in this paper. Section 5 contains some auxiliary
technical results concerning the functional spaces as well as the fundamental
solution $g\left(  \tau,x;1\right)  $ studied in \cite{EV1}. Section 6
provides some estimates for the nonlinear term. Section 7 describes the
asymptotics of the solutions of some linear equations as $x\rightarrow\infty$
in a detailed manner. Finally Section 8 explains the fixed point argument that
concludes the proof of the Theorem.

\bigskip

\section{FUNCTIONAL FRAMEWORK AND MAIN RESULT.\label{initialdata}}

In this paper we will choose the initial data in (\ref{S1E3}) satisfying
$f_{0}\in C^{3}(\mathbb{R}^{+})$. We will assume also, as in \cite{EV2}, that
the function $f_{0}$ is close to a power law for large $x$. To this end we
define:%
\begin{equation}
r=\frac{\lambda-1}{2} \label{U3E4a}%
\end{equation}
We fix also $\delta>0$ satisfying $\delta<\min\left\{  r,\frac{2-\lambda}%
{2}\right\}  .$ We will then assume that $f_{0}$ has the form:%
\begin{align}
f_{0}\left(  x\right)   &  =f_{1}\left(  x\right)  +f_{2}\left(  x\right)
+f_{3}\left(  x\right)  \ ,\ \ f_{1}\left(  x\right)  =\frac{D_{1}\xi\left(
x\right)  }{x^{\frac{3+\lambda}{2}}}\ ,\ \ f_{2}\left(  x\right)  =\frac
{D_{2}\xi\left(  x\right)  }{x^{\frac{3+\lambda}{2}+r}}\label{Z1E1a}\\
f_{1;2}\left(  x\right)   &  =f_{1}\left(  x\right)  +f_{2}\left(  x\right)
\label{Z1E2}%
\end{align}
where $D_{1}>0,\ D_{2}\in\mathbb{R}$ and:
\begin{equation}
\xi\in C^{\infty}\left[  0,\infty\right)  ,\ \xi(x)=1\text{ for }x\geq1\text{
and }\xi(x)=0\text{ if \ }0\leq x\leq1/2\ ,\ \ \xi^{\prime}(x)\geq0
\label{Z1E2a}%
\end{equation}%
\begin{equation}
\left\vert f_{3}^{k}\left(  x\right)  \right\vert \leq\frac{B}{\left(
x+1\right)  ^{\frac{3+\lambda}{2}+r+k+\delta}}\ \ ,\ \ \ k=0,1,2,3,4
\label{Z1E2b}%
\end{equation}
for some $B>0.$ The following auxiliary function will be used repeatedly:
\begin{equation}
h_{0}(x)=f_{0}(x)-f_{1}\left(  x\right)  =f_{2}\left(  x\right)  +f_{3}\left(
x\right)  \label{S1E5}%
\end{equation}

Notice that (\ref{Z1E1a})-(\ref{Z1E2b}) imply:
\begin{align}
&  \left(  1+y^{\frac{3+\lambda}{2}+r}\right)  \left\vert h_{0}(y)\right\vert
+\left(  1+y^{\frac{3+\lambda}{2}+r+1}\right)  \left\vert h_{0}^{\prime
}(y)\right\vert +\label{S1E6}\\
&  +\left(  1+y^{\frac{3+\lambda}{2}+r+2}\right)  \left\vert h_{0}%
^{\prime\prime}(y)\right\vert +\left(  1+y^{\frac{3+\lambda}{2}+r+3}\right)
\left\vert h_{0}^{\prime\prime\prime}(y)\right\vert \leq CB.\nonumber
\end{align}
for some $C>0.$ We will assume in the rest of the paper that $C$ is a generic
constant that can change from line to line and that might depend only on
$D_{1},D_{2},B,\lambda$ and $\delta$ unless some additional dependence is
written explicitly. Moreover, we will assume without loss of generality that
$D_{1}=1,$ since this parameter can be absorbed in a rescaling of $t.$

For any interval $I\subset(0,+\infty)$ we will denote as $L^{2}\left(
I\right)  $ the usual Lebesgue space of square integrable functions. For any
$\sigma>0$ we denote as $H^{\sigma}(I)$ the usual Sobolev space $W^{\sigma
,2}(I)$. The corresponding norms will be denoted $||\cdot||_{L^{2}}$ and
$||\cdot||_{H^{\sigma}}$. Dealing with functions depending on variables $x$
and $t$ we will write $H_{x}^{\sigma}$ or $L_{t}^{2}$ in order to indicate the
argument with respect to which the norm is taken.

In order to define suitable functional spaces we define, for any $T>0,\ R>0:$

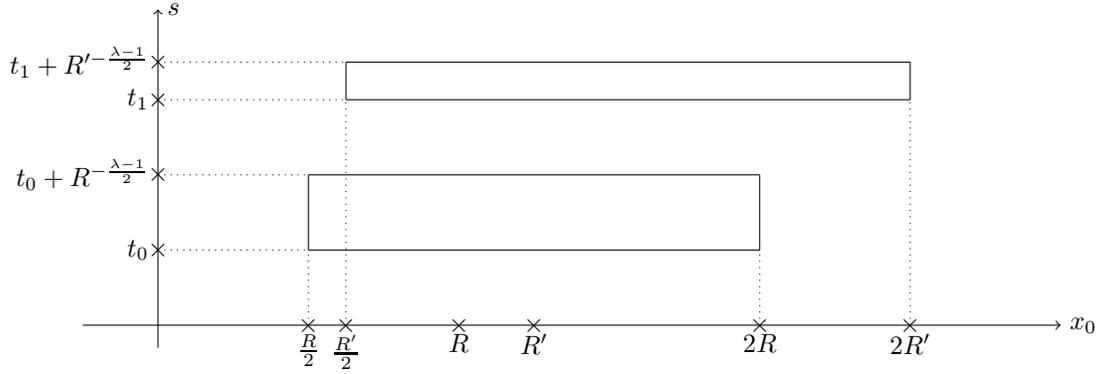
\begin{figure}
\begin{tikzpicture}
\draw[->] (-1, 0) -- (12, 0) node[right]{$x_0$};
\draw[->] (0, -0.3) -- (0, 4.2) node[right]{$s$};
\draw(2, 0) node {$\times$};
\draw(2, 0) node[below]{$\frac{R}{2}$};
\draw(4, 0) node {$\times$};
\draw(4, 0) node[below]{$R$};
\draw(8, 0) node {$\times$};
\draw(8, 0) node[below]{$2R$};
\draw(0, 1) node {$\times$};
\draw(0, 1) node[left]{$t_0$};
\draw(0, 2) node {$\times$};
\draw(0, 2) node[left]{$t_0+R^{-\frac{\lambda-1}{2}}$};
\draw(2, 1)--(8,1);
\draw(2, 1)--(2, 2);
\draw(2, 2)--(8, 2);
\draw(8, 1)--(8, 2);
\draw[dotted](2, 0)--(2, 1);
\draw[dotted](8, 0)--(8, 1);
\draw[dotted](0, 1)--(2,1);
\draw[dotted](0, 2)--(2,2);

\draw(2.5, 0) node {$\times$};
\draw(2.5, 0) node[below]{$\frac{R'}{2}$};
\draw(5, 0) node {$\times$};
\draw(5, 0) node[below]{$R'$};
\draw(10, 0) node {$\times$};
\draw(10, 0) node[below]{$2R'$};
\draw(0, 3) node {$\times$};
\draw(0, 3) node[left]{$t_1$};
\draw(0, 3.5) node {$\times$};
\draw(0, 3.5) node[left]{$t_1+R'^{-\frac{\lambda-1}{2}}$};
\draw(2.5, 3)--(10,3);
\draw(2.5, 3)--(2.5, 3.5);
\draw(2.5, 3.5)--(10, 3.5);
\draw(10, 3)--(10, 3.5);
\draw[dotted](2.5, 0)--(2.5, 3);
\draw[dotted](10, 0)--(10, 3);
\draw[dotted](0, 3)--(2.5,3);
\draw[dotted](0, 3.5)--(2.5, 3.5);

\end{tikzpicture}
\caption{Two cubes of the kind appearing in the norms $N_{2, \sigma}$ and $N_\infty$ defined below.}
\end{figure}

\begin{equation}
N_{2;\,\sigma}(f;\,t_{0},R)=\left(  R^{\frac{\lambda-1}{2}+2\sigma-1}%
\int_{t_{0}}^{\min(t_{0}+R^{-(\lambda-1)/2},T)}||D_{x}^{\sigma}f(t)||_{L^{2}%
(R/2,2R)}^{2}dt\right)  ^{1/2}\ \ ,\ \ \sigma\geq0 \label{S1E8}%
\end{equation}

\begin{equation}
M_{2;\sigma}(f;\,R)=\left(  R^{2\sigma-1}\int_{0}^{T}||D_{x}^{\sigma
}f(t)||_{L^{2}(R/2,2R)}^{2}dt\right)  ^{1/2}\ \ ,\ \ \sigma\geq0 \label{S1E10}%
\end{equation}%
\begin{align*}
N_{\infty}(f;\,t_{0},R)  &  =\left(  R^{\frac{\lambda-1}{2}}\int_{t_{0}}%
^{\min(t_{0}+R^{-(\lambda-1)/2},T)}||f(t)||_{L^{\infty}(R/2,2R)}^{2}dt\right)
^{1/2}\ \ \\
M_{\infty}(f;\,R)  &  =\left(  \int_{0}^{T}||f(t)||_{L^{\infty}(R/2,2R)}%
^{2}dt\right)  ^{1/2}%
\end{align*}

Then, for any $\sigma>0$ we define the following norms:%
\begin{align*}
\left\Vert f\right\Vert _{Y_{q,p}^{\sigma}\left(  T\right)  }  &
=\sup_{0<R\leq1}R^{q}M_{2;\,0}(f;R)+\sup_{0<R\leq1}R^{q}M_{2;\,\sigma}(f;R)+\\
&  +\sup_{0\leq t_{0}\leq T}\sup_{R\geq1}R^{p}N_{2;0}(f;t_{0},R)+\sup_{0\leq
t_{0}\leq T}\sup_{R\geq1}R^{p}N_{2;\,\sigma}(f;t_{0},R)
\end{align*}%
\[
||f||_{X_{q,p}\left(  T\right)  }=\sup_{0<R\leq1}R^{q}M_{\infty}%
(f;R))+\sup_{0\leq t_{0}\leq T}\sup_{R\geq1}R^{p}N_{\infty}(f;t_{0},R)
\]%
\begin{align}
\left\vert \left\vert \left\vert f\right\vert \right\vert \right\vert _{q,p}
&  =\sup_{0\leq x\leq1}\left\{  x^{q}\left\vert f\left(  x\right)  \right\vert
\right\}  +\sup_{x>1}\left\{  x^{p}\left\vert f\left(  x\right)  \right\vert
\right\} \label{M2E8}\\
\left\vert \left\vert \left\vert f\right\vert \right\vert \right\vert
_{\sigma}  &  =\sup_{0\leq t\leq T}\left\vert \left\vert \left\vert
f\right\vert \right\vert \right\vert _{\frac{3}{2},\frac{3+\lambda}{2}%
}+\left\Vert f\right\Vert _{Y_{\frac{3}{2},\frac{3+\lambda}{2}}^{\sigma
}\left(  T\right)  } \label{M2E9}%
\end{align}
and the following spaces:
\[
Y_{q,p}^{\sigma}\left(  T\right)  =\left\{  f:||f||_{Y_{q,p}^{\sigma}%
(T)}<\infty\right\}  \ \ \ \ ,\ \ X_{q,p}\left(  T\right)  =\left\{
f:||f||_{X_{q,p}(T)}<\infty\right\}
\]%
\[
\mathcal{E}_{T;\sigma}=\left\{  f:\left\vert \left\vert \left\vert
f\right\vert \right\vert \right\vert _{\sigma}<\infty\right\}
\]

Throughout this paper we will assume that%
\begin{equation}
\sigma\in\left(  1,2\right)  \ \label{Z1E4}%
\end{equation}

Therefore, Sobolev embeddings imply $Y_{q,p}^{\sigma}\left(  T\right)  \subset
X_{q,p}\left(  T\right)  .$ Actually such embeddings would take place assuming
the weaker condition $\sigma>\frac{1}{2}.$ The main reason for the choice of
$\sigma$ as in (\ref{Z1E4}) is purely technical, and it is due to the fact
that the Theorem proved in \cite{EV2} to solve a suitable linearized problem
(cf. for instance (\ref{S2E4})) requires such a regularity. It is likely that
using the\ "almost half derivatives" that we introduce now would be possible
to weaken the condition on $\sigma$ to $\frac{1}{2}<\sigma<1$ both for the
results of \cite{EV2} and this paper (cf. Remark 6.4 in \cite{EV2}).

We will solve (\ref{S1E1})-(\ref{S1E3}) using a functional space that measures
in a natural way the regularizing effects of the coagulation equation as
$x\rightarrow\infty$ that have been studied in \cite{EV2}. Let $\eta\in
C^{\infty}\left(  \mathbb{R}^{+}\right)  $ a cutoff function satisfying
$\eta(x)=1$ for $x\in\left(  \frac{1}{4},3\right)  ,$ $\eta(x)=0$ for
$x\not \in \left(  \frac{1}{8},4\right)  .$ Given $f\in C\left(
\mathbb{R}^{+}\right)  ,\ t_{0}\in\lbrack0,T],\ R\geq1$ we define:
\begin{equation}
F_{R,t_{0}}(\theta,X)=\eta(RX)f\left(  t_{0}+\theta R^{-(\lambda
-1)/2},RX\right)  \label{S1E10a}%
\end{equation}
and:%
\begin{equation}
\left[  f\right]  _{p}^{\sigma;\frac{1}{2}}=\sup_{R\geq1}\sup_{0\leq t_{0}\leq
T}R^{p}\left(  \int_{t_{0}}^{\min(t_{0}+R^{-(\lambda-1)/2},T)}\int
_{\mathbb{R}}|\widehat{F}_{R,t_{0}}(\theta,k)|^{2}Q_{R,\sigma}\left(
k\right)  dk\,d\theta\right)  ^{1/2} \label{S1E11}%
\end{equation}
where $Q_{R,\sigma}\left(  k\right)  =\left(  1+|k|^{2\,\sigma}\right)
\left(  1+\min\{|k|,\,R\}\right)  .$%

\begin{align}
\left\Vert f\right\Vert _{\mathcal{Z}_{p}^{\sigma;\frac{1}{2}}\left(
T\right)  }  &  =\left\Vert f\right\Vert _{L^{2}\left(  \left(  0,T\right)
;H_{x}^{\sigma}\left(  0,2\right)  \right)  }+\left[  f\right]  _{p}%
^{\sigma;\frac{1}{2}}+\sup_{0\leq t\leq T}\left\vert \left\vert \left\vert
f\right\vert \right\vert \right\vert _{\frac{3}{2},p}+||f||_{Y_{\frac{3}{2}%
,p}^{\sigma}\left(  T\right)  }\ \ \label{S1E12a}\\
\mathcal{Z}_{p}^{\sigma;\frac{1}{2}}\left(  T\right)   &  =\left\{
f:|\left\Vert f\right\Vert _{\mathcal{Z}_{p}^{\sigma;\frac{1}{2}}\left(
T\right)  }<\infty\right\} \nonumber
\end{align}

The intuition behind these spaces is the following. As it has been seen in
\cite{EV2} the main terms in the coagulation equation for solutions that are
close to the power law $x^{-\frac{3+\lambda}{2}}$ as $x\rightarrow\infty$ can
be thought as a perturbation of the half-derivative operator. However, since
the integral operator $Q\left[  f\right]  $ in (\ref{S1E2}) is an integral
operator the equation (\ref{S1E1}), cannot be expected to have smoothing
effects. Nevertheless, it has been seen in \cite{EV2} that the equation
(\ref{S1E1}) has some kind of regularizing effect\ due to the fact that the
right hand side of (\ref{S1E1}) can be thought, for solutions close to
$x^{-\frac{3+\lambda}{2}}$ as $x\rightarrow\infty$ as the half-derivative
operator, if we restrict ourselves to incremental quotients with length $x$
larger than one. This is the source of the regularizing effects that will be
studied using the functionals (\ref{S1E11}), (\ref{S1E12a}).

In order to gain some intuition about the spaces $X_{q,p}\left(  T\right)
,\ Y_{q,p}^{\sigma}\left(  T\right)  ,$ $\mathcal{Z}_{p}^{\sigma;\frac{1}{2}%
}\left(  T\right)  $ it is useful to think about them as functions that can be
estimated like $x^{-p}$ as $x\rightarrow\infty$ and $x^{-q}$ as $x\rightarrow
0$ in the case of the spaces $X_{q,p}\left(  T\right)  ,\ Y_{q,p}^{\sigma
}\left(  T\right)  $ and $x^{-\frac{3}{2}}$ in the case of $\mathcal{Z}%
_{p}^{\sigma;\frac{1}{2}}\left(  T\right)  .$ Concerning regularity, the
functions in $X_{q,p}\left(  T\right)  $ are estimated pointwise, the
functions in $Y_{q,p}^{\sigma}\left(  T\right)  $ have $\sigma$ derivatives in
space and the functions in $\mathcal{Z}_{p}^{\sigma;\frac{1}{2}}\left(
T\right)  $ have almost $\left(  \sigma+\frac{1}{2}\right)  $ derivatives in
the sense of the definition (\ref{S1E11}).

The main result of this paper is the following:

\begin{theorem}
\label{Th1}Suppose that $f_{0}$ satisfies (\ref{Z1E1a})-(\ref{Z1E2b}),
$\sigma$ is as in (\ref{Z1E4}) and $K$ is as in (\ref{S0E1}). Then, there
exists a classical solution $f\in\mathcal{Z}_{\frac{3+\lambda}{2}}%
^{\sigma;\frac{1}{2}}$ of (\ref{S1E1})-(\ref{S1E3}) with $f_{t}\in L^{\infty
}\left(  \left(  0,T\right)  \times\mathbb{R}^{+}\right)  .$ Moreover, this
solution is unique in the class of functions satisfying:%
\[
f\left(  t,x\right)  =\lambda\left(  t\right)  \xi(x)x^{-\frac{3+\lambda}{2}%
}+h\left(  t,x\right)
\]
with $\lambda\in C\left[  0,T\right]  ,\ h\in\mathcal{Z}_{\bar{p}}%
^{\sigma;\frac{1}{2}}\left(  T\right)  ,$ $\lim_{\bar{t}\rightarrow
0}\left\Vert h\right\Vert _{\mathcal{Z}_{\bar{p}}^{\sigma;\frac{1}{2}}\left(
\bar{t}\right)  }=0,\ $where $\bar{p}=\frac{3+\lambda}{2}+\bar{\delta}$ with
$0<\bar{\delta}<r,$ and $T$ small enough.
\end{theorem}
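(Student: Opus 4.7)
The plan is to set up a contraction mapping in the space $\mathcal{Z}_{\bar{p}}^{\sigma;\frac{1}{2}}(\bar{t})$ for $\bar{t}$ small. The first step is to fix a convenient decomposition of the unknown. Since the leading power law $x^{-(3+\lambda)/2}$ is what carries the steady flux of mass to infinity, I would write $f(t,x) = f_{1;2}(x) + h(t,x)$ with $f_{1;2}$ as in (\ref{Z1E2}), insert this into (\ref{S1E1})--(\ref{S1E2}), and expand. This produces an equation of the form
\[
\partial_t h \;=\; \mathcal{L} h \;+\; \tfrac{1}{2}Q[h,h] \;+\; S(x),
\]
where $\mathcal{L}h := Q[f_{1;2},h] + Q[h,f_{1;2}]$ is exactly the linearized operator whose well-posedness theory is built in \cite{EV2}, and $S = Q[f_{1;2}]$ is a time-independent source encoding the gelation flux at infinity plus faster-decaying corrections coming from $f_{2}$. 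The fundamental solution $g(\tau,x;1)$ of \cite{EV1} provides the representation semigroup $U(t)$, and Duhamel gives
\[
h(t,\cdot) \;=\; U(t)\,h_0 \;+\; \tfrac{1}{2}\int_{0}^{t}U(t-s)\,Q[h(s),h(s)]\,ds \;+\; \int_{0}^{t}U(t-s)\,S\,ds.
\]

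The next step is to verify the two inhomogeneous pieces. For the initial-data term, the bounds (\ref{S1E6}) on $h_0$, coupled with the gain of almost half a derivative in the $\mathcal{Z}_{\bar p}^{\sigma;\frac{1}{2}}$ norm that $U(t)$ provides on length scales $R \ge 1$, allow $U(t)h_0$ to be estimated with norm tending to zero as $t \to 0$, thanks to the strict inequality $\bar{\delta} < r$. For the source term, the key point is the asymptotic analysis carried out in Section 7: although $Q[f_{1;2}]$ is not integrable against the weight $x^{(3+\lambda)/2}$ (this is precisely gelation), its structure as $x \to \infty$ matches what $U(t-s)$ propagates, so that the residual piece after subtracting the expected flux lies in $\mathcal{Z}_{\bar p}^{\sigma;\frac{1}{2}}(\bar t)$ with a norm controlled by a power of $\bar t$.

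The central analytic estimate is then the bilinear inequality
\[
\Bigl\| \int_{0}^{t} U(t-s)\,Q[h_{1}(s),h_{2}(s)]\,ds \Bigr\|_{\mathcal{Z}_{\bar p}^{\sigma;\frac{1}{2}}(\bar t)} \;\le\; C\,\omega(\bar t)\,\|h_{1}\|_{\mathcal{Z}_{\bar p}^{\sigma;\frac{1}{2}}(\bar t)}\,\|h_{2}\|_{\mathcal{Z}_{\bar p}^{\sigma;\frac{1}{2}}(\bar t)},
\]
with $\omega(\bar t) \to 0$ as $\bar t \to 0$, for which the estimates on the nonlinear term promised in Section 6 are decisive. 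Once this is in hand, Banach's fixed point theorem in a small ball of $\mathcal{Z}_{\bar p}^{\sigma;\frac{1}{2}}(\bar t)$ produces a unique mild solution; the assumption $\sigma \in (1,2)$ together with the pointwise component of the norm (see (\ref{M2E8})--(\ref{M2E9})) upgrades $h$ to a continuous function and yields $f_{t} \in L^{\infty}((0,T)\times \mathbb{R}^{+})$, hence the classical regularity claimed. Uniqueness within the stated ansatz $\lambda(t)\xi(x)x^{-(3+\lambda)/2} + h$ follows because the drift $\lambda(t) - 1$ can be reabsorbed into a time-dependent redefinition of the background $f_{1;2}$, reducing any two such solutions to the same contraction.

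The hard part, and the one that dictates the precise choice of $\sigma$, $\bar{p}$ and the windows on $\delta$ and $\bar{\delta}$, is the bilinear estimate above. The convolution kernel $(xy)^{\lambda/2}$ is not integrable against the weights implicit in the norm either near $0$ or at $\infty$, so one must decompose $Q[h_{1},h_{2}]$ according to the relative size of the arguments (the $y \ll x$, $y \sim x$ and $y \gg x$ regimes) and transport each piece through $U(t-s)$ separately. The contribution that controls the "almost half derivative" part defined by (\ref{S1E10a})--(\ref{S1E11}) requires a Fourier-side analysis of $\eta(RX)\,Q[h_{1},h_{2}](R X)$ at each dyadic scale $R\ge 1$, and matching the half-derivative gain with the loss of $\lambda/2$ derivatives coming from the homogeneity of the kernel is exactly where the excess decay $\bar{\delta}$ and the smallness of $\bar{t}$ have to be spent.
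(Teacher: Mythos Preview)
Your proposal has a genuine gap: the decomposition $f = f_{1;2} + h$ with a \emph{fixed} background cannot close a contraction in $\mathcal{Z}_{\bar{p}}^{\sigma;1/2}$ with $\bar{p} = \frac{3+\lambda}{2}+\bar{\delta}>\frac{3+\lambda}{2}$. The point of Proposition~\ref{PropAsympt} is precisely that when one solves $\varphi_\tau = \mathcal{L}_{f_0}[\varphi] + F$ with $F \in Y_{3/2,\,2+\bar{\delta}}^\sigma$, the solution develops a leading tail $\mathcal{W}(\tau)\,x^{-(3+\lambda)/2}$; only the remainder $\varphi_R$ after subtracting this tail lies in $\mathcal{Z}_{\bar{p}}^{\sigma;1/2}$. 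Since both $\int_0^t U(t-s)\,Q[f_{1;2}]\,ds$ and $\int_0^t U(t-s)\,Q[h,h]\,ds$ carry such a tail (with $\mathcal{W}\not\equiv 0$ in general), the Duhamel map you wrote does not send $\mathcal{Z}_{\bar{p}}^{\sigma;1/2}$ to itself, and the contraction never gets off the ground. Your phrase ``the residual piece after subtracting the expected flux'' acknowledges the difficulty but provides no mechanism to perform the subtraction inside the fixed-point scheme.

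The paper's resolution is to make the background time-dependent: one writes $f = \lambda(t)\,f_0 + h$ with $\lambda(0)=1$, so that the unknown is the pair $(h,\lambda)$. The extra degree of freedom $\lambda(\cdot)$ is then fixed by the requirement that the $x^{-(3+\lambda)/2}$ coefficient of $\tilde h$ vanish, which produces an auxiliary Volterra-type integral equation for $\Lambda$ (equations (\ref{S2E6})--(\ref{S2E7}), solved in Lemma~\ref{LemmaIntEqu}). Only after $\Lambda$ is so chosen does $\mathcal{T}[h]$ land in $\mathcal{Z}_{\bar{p}}^{\sigma;1/2}$; this coupled fixed point in $(h,\Lambda)$ is the substance of Section~8. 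Your uniqueness sketch (``reabsorb $\lambda(t)-1$ into a redefinition of the background'') is essentially this idea stated in reverse, but without the integral equation determining $\lambda$ it cannot be carried through. A secondary point: the bilinear estimate of Proposition~\ref{LemmaQuad} does not come with a factor $\omega(\bar t)\to 0$; smallness of the quadratic contribution is obtained by working in a small ball $\|h\|\le\rho_0$ (yielding $C\rho_0^2$), while time-smallness enters through the source $\Lambda\,Q[f_0]$ via (\ref{S5E6b}) and through the $\tilde h_2$ terms.
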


\begin{remark}
Assumptions (\ref{Z1E1a})-(\ref{Z1E2b}) seem a very strong condition. However,
this condition is analogous to the type of compatibility conditions that must
be assumed solving boundary value problems in order to obtain smooth
solutions, or also to assume that the initial data has as many derivatives
appear in the equation solving a parabolic problem. It is likely that
(\ref{Z1E1a})-(\ref{Z1E2b}) could be weakened to the form $f_{0}\left(
x\right)  =D_{1}x^{-\frac{3+\lambda}{2}}+O\left(  x^{-\frac{3+\lambda}%
{2}-\delta}\right)  $ as $x\rightarrow\infty$ for some $\delta>0.$ However, to
prove this would require to obtain some delicate regularizing effects that we
have preferred to avoid in this paper that is already rather technical. The
specific value of $r$ will play a role in the proof of Proposition
\ref{PropAsympt} (cf. Remark \ref{rationalef_0}) as well as in the Proof of
Proposition \ref{Proposition_psi}.
\end{remark}

\section{GENERAL STRATEGY OF THE PROOF.\label{strategy}}

\bigskip

The general plan that we will use to prove Theorem \ref{Th1} is the following.
We look for a solution of (\ref{S1E1})-(\ref{S1E3}) in the form:%
\begin{equation}
f\left(  t,x\right)  =\lambda\left(  t\right)  f_{0}\left(  x\right)
+h\left(  t,x\right)  \label{S2E1}%
\end{equation}
where $f_{0}$ is the initial data (cf. (\ref{S1E3})) and $h$ will be a small
perturbation for short times. The function $\lambda$ is a differentiable
function to be prescribed satisfying $\lambda\left(  0\right)  =1.$ Then
$h,\ \lambda$ solve:%
\begin{equation}
h_{t}=\lambda\left(  t\right)  {\mathcal{L}}_{f_{0}}\left[  h\right]
+Q\left[  h\right]  +\left(  \lambda\left(  t\right)  \right)  ^{2}Q\left[
f_{0}\right]  -\lambda_{t}f_{0}\left(  x\right)  \label{S2E2}%
\end{equation}
where the linear operator ${\mathcal{L}}_{f_{0}}$ is as in \cite{EV2}:
\begin{align}
{\mathcal{L}}_{f_{0}}\left[  h\right]   &  =\int_{0}^{x}(x-y)^{\lambda/2}%
f_{0}(x-y)y^{\lambda/2}h(y)\,dy-\label{S2E3}\\
&  -x^{\lambda/2}\,f_{0}(x)\int_{0}^{\infty}y^{\lambda/2}h(y)dy-x^{\lambda
/2}h(x)\int_{0}^{\infty}y^{\lambda/2}f_{0}(y)dy.\nonumber
\end{align}

Our strategy is to solve (\ref{S2E2}) by means of a fixed point argument for a
suitable operator $\mathcal{T}$ defined in $\mathcal{Z}_{\bar{p}}%
^{\sigma;\frac{1}{2}}\left(  T\right)  $ with $r$ as in (\ref{U3E4a}%
),$\ \sigma$ as in (\ref{Z1E4}) and $T$ sufficiently small (cf. (\ref{S1E12a}%
)). It is convenient first, in order to apply the well-posedness results in
\cite{EV2} to introduce a new time scale. We will assume in all the paper that
$\left\vert \lambda\left(  t\right)  -1\right\vert \leq\frac{1}{2}.$ We can
then define a new time scale $\tau$ and a new function $\Lambda$ by means of:%
\begin{equation}
d\tau=\lambda\left(  t\right)  dt\ \ ,\ \ \ \tau=0\ \ \text{at\ \ }%
t=0\ \ \ ,\ \ \ \ \Lambda\left(  \tau\right)  =\lambda\left(  t\right)
\label{time}%
\end{equation}

Then (\ref{S2E2}) becomes:%
\[
h_{\tau}={\mathcal{L}}_{f_{0}}\left[  h\right]  +\frac{Q\left[  h\right]
}{\Lambda\left(  \tau\right)  }+\Lambda\left(  \tau\right)  Q\left[
f_{0}\right]  -\Lambda_{\tau}f_{0}\left(  x\right)
\]
where we will write $h\left(  t,x\right)  =h\left(  \tau,x\right)  $ by convenience.

Given $h\in\mathcal{Z}_{\bar{p}}^{\sigma;\frac{1}{2}}\left(  T\right)  $ and
$\Lambda\in C^{1}\left(  \left[  0,T\right]  \right)  $ we will define
$\tilde{h}=\tilde{h}\left[  \Lambda\right]  $ as the unique solution of:%
\begin{equation}
\tilde{h}_{\tau}={\mathcal{L}}_{f_{0}}\left[  \tilde{h}\right]  +\frac
{Q\left[  h\right]  }{\Lambda\left(  \tau\right)  }+\Lambda\left(
\tau\right)  Q\left[  f_{0}\right]  -\Lambda_{\tau}f_{0}\left(  x\right)
\label{S2E4}%
\end{equation}
in $\mathcal{E}_{T;\sigma}.$ The existence of such a solution will be a
consequence of the results in \cite{EV2}. In order to apply such a results we
will need to show that $Q\left[  f_{0}\right]  ,Q\left[  h\right]  \in
Y_{\frac{3}{2},\left(  2+\bar{\delta}\right)  }^{\sigma}\left(  T\right)  .$
In the case of $Q\left[  f_{0}\right]  $ this will be a consequence of
(\ref{S1E5}), (\ref{S1E6}). In order to derive this property for $Q\left[
h\right]  $ we will use the decay and regularity properties of the functions
$h\in\mathcal{Z}_{\bar{p}}^{\sigma;\frac{1}{2}}\left(  T\right)  .$ The
details will be given in Section \ref{SectionQh}.

After obtaining $\tilde{h}=\tilde{h}\left[  \Lambda\right]  $ we proceed to
determine $\Lambda\left(  \tau\right)  .$ To this end we will argue as
follows. The asymptotic behaviour of $\tilde{h}$ as $x\rightarrow\infty$ is
given by:%
\begin{equation}
\tilde{h}\left(  \tau,x\right)  \sim\left[  \mathcal{G}\left[  \tau
;h,\Lambda\right]  -\int_{0}^{\tau}a\left(  \tau-s\right)  \Lambda_{\tau
}\left(  s\right)  ds\right]  x^{-\frac{3+\lambda}{2}}\ \ \text{as\ \ }%
x\rightarrow\infty\ \ ,\ \ 0\leq\tau\leq T \label{S2E5}%
\end{equation}
where $a\left(  \cdot\right)  $ is a function depending on $f_{0}$ and
$\mathcal{G}\left[  \cdot;h,\Lambda\right]  $ a functional that will be
precised later (cf. Proposition \ref{Propositionh_1}, Proposition
\ref{Proposition_psi} and Lemma \ref{Ash2} for a precise formulation of this result).

The asymptotics (\ref{S2E5}) will be obtained using the properties of the
fundamental solution constructed in \cite{EV2}. In order to close the fixed
point argument, we need to choose $\Lambda\left(  \tau\right)  $ in such a way
that $\tilde{h}\left(  \tau,x\right)  =o\left(  x^{-\frac{3+\lambda}{2}%
}\right)  $ as $x\rightarrow\infty.$ This can be achieved assuming that
$\Lambda$ solves the equation:%
\begin{equation}
\int_{0}^{\tau}a\left(  \tau-s\right)  \Lambda_{\tau}\left(  s\right)
ds-\mathcal{G}\left[  \tau;h,\Lambda\right]  =0\ \ ,\ \ 0\leq\tau\leq T
\label{S2E6}%
\end{equation}

A detailed analysis of the function $a\left(  \tau\right)  $ (see Subsection
\ref{Functionh_2}) will allow to transform (\ref{S2E6}) in something more like
a first order Volterra integral equation:%
\begin{equation}
a\left(  0\right)  \Lambda\left(  \tau\right)  -\int_{0}^{\tau}\frac{da}%
{d\tau}\left(  \tau-s\right)  \Lambda\left(  s\right)  ds-a\left(
\tau\right)  -\mathcal{G}\left[  \tau;h,\Lambda\right]  =0\ \ ,\ \ 0\leq
\tau\leq T \label{S2E7}%
\end{equation}
with $a\left(  0\right)  =1.$ This equation can be solved by means of a
standard fixed point argument, and this gives the desired $\Lambda$ that will
be denoted as $\tilde{\Lambda}.$ We then define $\mathcal{T}\left[  h\right]
=\tilde{h}\left[  \tilde{\Lambda}\right]  .$ Notice that $\mathcal{T}\left[
h\right]  \left(  \tau,x\right)  =o\left(  x^{-\frac{3+\lambda}{2}}\right)  $
as $x\rightarrow\infty.$ Actually, a more careful analysis of (\ref{S2E4}),
(\ref{S2E7}) shows that $\mathcal{T}\left[  h\right]  \in\mathcal{Z}_{\bar{p}%
}^{\sigma;\frac{1}{2}}\left(  T\right)  .$ Moreover, the operator
$\mathcal{T}$ is contractive in $\mathcal{Z}_{\bar{p}}^{\sigma;\frac{1}{2}%
}\left(  T\right)  $ if $T$ is sufficiently small and a suitable choice of
$\bar{\delta}$.

\section{\label{Summary}SUMMARY OF SOME OF THE RESULTS IN \cite{EV1},
\cite{EV2}.}

We recall in this Section several results that have been obtained in
\cite{EV1}, \cite{EV2} and that will be used repeatedly in this paper.

In order to study the asymptotic behaviour of $\tilde{h}$ defined in the
previous Section, we will need some properties of the semigroup defined by the
operator:%
\begin{align}
L(h)  &  =\int_{0}^{\frac{x}{2}}\left[  (x-y)^{\lambda/2}G(x-y)-x^{\lambda
/2}G(x)\right]  y^{\lambda/2}h(y)\,dy+\label{S3E1}\\
&  +\int_{0}^{\frac{x}{2}}\left[  (x-y)^{\lambda/2}h(x-y)-x^{\lambda
/2}h(x)\right]  y^{-\frac{3}{2}}dy-x^{-\frac{3}{2}}\int_{\frac{x}{2}}^{\infty
}y^{\lambda/2}h(y)dy-2\sqrt{2}x^{\frac{\lambda-1}{2}}h(x)\nonumber
\end{align}
where $G\left(  x\right)  =\frac{1}{x^{\frac{3+\lambda}{2}}}.$ We have studied
in \cite{EV1} the solution of the following problem:%
\begin{equation}
\partial_{\tau}g\left(  \tau,x\right)  =L\left[  g\right]  \left(
\tau,x\right)  \ \ \ ,\ \ \ x>0\ \ ,\ \ \tau>0\ \ ,\ \ g\left(  0,x,x_{0}%
\right)  =\delta\left(  x-x_{0}\right)  \label{S3E3}%
\end{equation}

In particular we have proved there the following results:

\begin{theorem}
[cf. Theorem 3.8 in \cite{EV1}]\label{ThFS}There exists a unique solution
$g(\tau,\cdot,x_{0})\in C^{\infty}\left(  \mathbb{R}^{+}\right)  $ of
(\ref{S3E3}) that has the following properties. There exist $\varepsilon
_{1}>0$ and $\varepsilon_{2}>0$ depending only on $\lambda$ such that, for any
$0<\varepsilon<\varepsilon_{1}$ the following statements hold.

The function $g(\tau,\cdot,x_{0})$ has the following self-similar structure:
\begin{equation}
g(\tau,x,x_{0})=\frac{1}{x_{0}}g\left(  \tau x_{0}^{\frac{\lambda-1}{2}}%
,\frac{x}{x_{0}},1\right)  \label{T1E0a}%
\end{equation}

For all $\tau\geq1:$
\begin{equation}
g(\tau,x,1)=\tau^{\frac{2}{\lambda-1}}\varphi_{1}(\rho)+\varphi_{2}(\tau
,\rho)\ \ ,\ \ \ \rho=\tau^{\frac{2}{\lambda-1}}x \label{T1E0}%
\end{equation}
with:%
\begin{equation}
\varphi_{1}(\mathcal{\rho})=\left\{
\begin{array}
[c]{c}%
a_{1}\mathcal{\rho}^{-\frac{3}{2}}+O_{\varepsilon}\left(  \mathcal{\rho
}^{-\frac{4-\lambda}{2}+\varepsilon}\right)  \ \ ,\ \ 0\leq\mathcal{\rho}%
\leq1\\
a_{2}\mathcal{\rho}^{-\frac{3+\lambda}{2}}+O_{\varepsilon}\left(
\mathcal{\rho}^{-\left(  1+\lambda-\varepsilon\right)  }\right)
\ \ ,\ \ \mathcal{\rho}>1
\end{array}
\right.  \label{T1E1}%
\end{equation}
where $a_{1},\ a_{2}$ are two explicit constants.
\begin{equation}
\varphi_{2}(\tau,\mathcal{\rho})=\left\{
\begin{array}
[c]{c}%
b_{1}\left(  \tau\right)  \mathcal{\rho}^{-\frac{3}{2}}+O\left(  \tau
^{\frac{2}{\lambda-1}-\varepsilon_{2}}\mathcal{\rho}^{-\frac{3}{2}%
+\varepsilon_{2}}\right)  \ \ ,\ \ 0\leq\mathcal{\rho}\leq1\\
b_{2}\left(  \tau\right)  \mathcal{\rho}^{-\frac{3+\lambda}{2}}+O\left(
\tau^{\frac{2}{\lambda-1}-\varepsilon_{2}}\mathcal{\rho}^{-\frac{3+\lambda}%
{2}-\varepsilon_{2}}\right)  \ \ ,\ \ \mathcal{\rho}>1
\end{array}
\right.  \label{T1E2}%
\end{equation}
where $b_{1},b_{2}\in$ are two continuous functions such that $\left\vert
b_{1}\left(  \tau\right)  \right\vert +\left\vert b_{2}\left(  \tau\right)
\right\vert \leq C\tau^{\frac{2}{\lambda-1}-\varepsilon_{2}}.$

For $0<\tau\leq1$ we have:%
\begin{equation}
g(\tau,x,1)=\left\{
\begin{array}
[c]{c}%
\tau x^{-\frac{3}{2}}+b_{3}\left(  \tau\right)  x^{-\frac{3}{2}}+O\left(  \tau
x^{-\frac{3}{2}+\varepsilon_{2}}\right)  \ \ ,\ \ 0\leq x\leq\frac{1}{2}\\
a_{3}\tau x^{-\frac{3+\lambda}{2}}+b_{4}\left(  \tau\right)  x^{-\frac
{3+\lambda}{2}}+O\left(  \tau x^{-\frac{3+\lambda}{2}-\varepsilon_{2}}\right)
\ \ ,\ \ x\geq\frac{3}{2}\\
O_{\varepsilon}\left(  \frac{t^{1-2\varepsilon}}{\left\vert x-1\right\vert
^{\frac{3}{2}-\varepsilon}}\right)  \ \ \text{for\ \ }t^{2}<\left\vert
x-1\right\vert <\frac{1}{2}%
\end{array}
\right.  \label{T1E3}%
\end{equation}
where $a_{3}$ is an explicit numerical constant and $b_{3},b_{4}$ are
continuous functions such that $\left\vert b_{3}\left(  \tau\right)
\right\vert +\left\vert b_{4}\left(  \tau\right)  \right\vert \leq
C\tau^{1+\varepsilon_{2}}.$ Moreover:%
\[
\lim_{t\rightarrow0}t^{2}g\left(  t,1+t^{2}\chi,1\right)  =\Psi\left(
\chi\right)  \ \ \text{uniformly on compact subsets of }\mathbb{R}\text{ }%
\]
where the function $\Psi$ is given by:%
\begin{equation}
\Psi\left(  \chi\right)  =\frac{2}{\pi}\frac{\exp\left(  -\frac{\pi}%
{\chi^{3/2}}\right)  }{\chi^{3/2}}\ \text{for}\ \ \chi>0\ \ ,\ \ \Psi\left(
\chi\right)  =0\ \text{for}\ \ \chi<0 \label{T1E5}%
\end{equation}

\end{theorem}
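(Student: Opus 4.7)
I would proceed in four stages: scaling, semigroup construction, short-time matched asymptotics, and long-time self-similar analysis. For the scaling, $L$ in (\ref{S3E1}) satisfies $(Lh)(x_0\tilde x) = x_0^{-(\lambda-1)/2}(L[\tilde h])(\tilde x)$ with $\tilde h(\tilde x)=h(x_0\tilde x)$, as a termwise consequence of the homogeneities $G(y)=y^{-(3+\lambda)/2}$ and $y^{\lambda/2}G(y)=y^{-3/2}$. Together with $\delta(x-x_0)=x_0^{-1}\delta(x/x_0-1)$, this yields (\ref{T1E0a}) and reduces the rest of the theorem to $x_0=1$. For existence I would split $L=L_0+K$, where $L_0$ contains the half-derivative piece (the convolution of $h$ against $y^{-3/2}\,dy$ together with the multiplicative loss $-2\sqrt{2}\,x^{(\lambda-1)/2}h$) and $K$ is the remaining gain and far-loss integrals. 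After the change of variable $x=e^u$, $L_0$ becomes a translation-invariant convolution whose Fourier symbol has negative real part of order $|\xi|$, so $L_0$ generates an analytic semigroup on a weighted space $L^2(x^{2\alpha}dx)$ with $\alpha$ placing the critical tail exponents $-3/2$ and $-(3+\lambda)/2$ on the boundary of integrability; the perturbation $K$ is relatively bounded and is handled by Duhamel. Analyticity smooths the delta datum instantly, giving $g(\tau,\cdot,1)\in C^{\infty}(\mathbb{R}^+)$ for $\tau>0$.

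For the short-time expansion (\ref{T1E3})-(\ref{T1E5}), I would zoom near $x=1$ by $x=1+\tau^2\chi$. A direct expansion of the kernels of $L$ on this microscale shows that the leading operator acting on $\Psi_\tau(\chi)=\tau^2 g(\tau,1+\tau^2\chi,1)$ is $-\partial^{1/2}_\chi$, with the other terms of $L$ contributing at order $o(1)$ as $\tau\to 0$. The fundamental solution of $\Psi_s=-\partial^{1/2}_\chi\Psi$ is the one-sided $\tfrac12$-stable law, which is precisely (\ref{T1E5}). Away from $x=1$, a single Duhamel iteration from $\delta(x-1)$ gives the leading far-field behaviour in both outer regions: the convolution against $y^{-3/2}$ of the delta produces $\tau x^{-3/2}$ on $x\leq 1/2$, while the gain integral $\int (x-y)^{\lambda/2}G(x-y)y^{\lambda/2}\delta(y-1)dy=(x-1)^{\lambda/2}G(x-1)$ expands as $a_3 x^{-(3+\lambda)/2}+O(x^{-(3+\lambda)/2-\varepsilon_2})$ on $x\geq 3/2$, with $a_3=1$ after accounting for the normalization. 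The corrections $b_3,b_4$ absorb the second Duhamel iteration, whose extra smallness $\tau^{1+\varepsilon_2}$ comes from the analyticity estimate on $e^{\tau L_0}$.

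For the long-time asymptotics (\ref{T1E0})-(\ref{T1E2}) I would pass to self-similar variables $\rho=\tau^{2/(\lambda-1)}x$ and $\Phi(\tau,\rho)=\tau^{-2/(\lambda-1)}g(\tau,x,1)$. Stationary profiles $\varphi_1$ of the rescaled operator $\tilde L$ (which has the same form as $L$ acting in $\rho$) solve the shifted eigenvalue equation $\tilde L[\varphi_1]=\frac{2}{\lambda-1}(2\varphi_1+\rho\varphi_1')$, and I would construct them by a shooting/matching argument in the two asymptotic regimes. The power-law tails $\rho^{-3/2}$ at $\rho\to 0$ and $\rho^{-(3+\lambda)/2}$ at $\rho\to\infty$ are forced: these are the exponents at which the $y^{-3/2}$-convolution and the $x^{(\lambda-1)/2}$-multiplication respectively dominate the other pieces of $\tilde L$. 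Convergence $\Phi(\tau,\cdot)\to\varphi_1$ at rate $\tau^{-\varepsilon_2}$ in the weighted $L^2$ norm follows from a spectral-gap estimate for the linearization of $\tilde L$ about $\varphi_1$, yielding the corrections $\varphi_2,\,b_1,\,b_2$ recorded in (\ref{T1E2}).

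The main obstacle is the crossover estimate in the middle line of (\ref{T1E3}), $O_\varepsilon(t^{1-2\varepsilon}|x-1|^{-3/2+\varepsilon})$, together with the identification of $\Psi$ in (\ref{T1E5}). One must control, with explicit quantitative rates, the error in replacing the nonlocal operator $L$ by the pure half-derivative on a shrinking window around $x=1$, and then match that microscopic expansion to the first-order Duhamel expansion valid on $|x-1|\gg\tau^2$. This interpolation between the microscopic stable-law regime and the macroscopic power-law spreading regime, with sharp pointwise control on $e^{\tau L_0}$ applied to the delta in the intermediate zone, is the technically delicate heart of the theorem.
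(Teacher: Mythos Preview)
This theorem is not proved in the present paper: Section~\ref{Summary} explicitly quotes it from \cite{EV1}. What the paper does disclose about the method in \cite{EV1} is Theorem~\ref{ReFor} and the proof of Proposition~\ref{PropositionImproved}: after $x=e^{X}$ the fundamental solution is written as an explicit double contour integral built from the Mellin symbol $\Phi$ and a Wiener--Hopf factor $\mathcal{V}$, and every asymptotic in (\ref{T1E0})--(\ref{T1E5}) is then read off by contour deformation and residues. Your plan is methodologically quite different (abstract semigroup plus matched asymptotics plus a spectral gap), and while the scaling argument for (\ref{T1E0a}) and the inner-layer identification of the one-sided $\tfrac12$-stable law for (\ref{T1E5}) are sound, two steps do not go through as written.

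First, the short-time far-field claim is incorrect. A single Duhamel iteration on $\delta(\cdot-1)$ gives, for $x>3/2$,
\[
L[\delta(\cdot-1)](x)=(x-1)^{-3/2}-x^{-3/2}\sim \tfrac{3}{2}\,x^{-5/2}\quad(x\to\infty),
\]
not $a_{3}x^{-(3+\lambda)/2}$; your computation ``$(x-1)^{\lambda/2}G(x-1)=a_{3}x^{-(3+\lambda)/2}+\ldots$'' simplifies the wrong way, since $(x-1)^{\lambda/2}G(x-1)=(x-1)^{-3/2}$. Because $(3+\lambda)/2<5/2$, the leading tail $x^{-(3+\lambda)/2}$ in (\ref{T1E3}) cannot come from the first iterate; in the \cite{EV1} approach it arises from the pole of $\mathcal{V}$ at $\xi=\tfrac{3+\lambda}{2}i$ in the representation (\ref{S3E4}), i.e.\ from a genuinely nonperturbative feature of the full semigroup.

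Second, the long-time step rests on an unjustified spectral-gap assertion. You write that convergence to $\varphi_{1}$ at rate $\tau^{-\varepsilon_{2}}$ ``follows from a spectral-gap estimate for the linearization of $\tilde L$ about $\varphi_{1}$'', but the operator is nonlocal, non-self-adjoint, and acts on weighted spaces where the critical exponents $-3/2$ and $-(3+\lambda)/2$ lie at the boundary; producing a gap here is essentially the content of the theorem, not a lemma one can invoke. In \cite{EV1} this is bypassed entirely: the rate $\varepsilon_{2}$ and the subleading powers in (\ref{T1E1})--(\ref{T1E2}) are the distances to the next poles of the integrand in (\ref{S3E4}), so no abstract spectral argument is needed.
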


\begin{remark}
The functions $O_{\varepsilon}\left(  \cdot\right)  $ depend on $\varepsilon.$
\end{remark}

\begin{remark}
Notice that (\ref{T1E1})-(\ref{T1E3}) imply the existence of a function
$\Theta=\Theta\left(  \tau\right)  $ and $\varepsilon>0$ such that:%
\begin{align}
\left\vert g(\tau,x,1)-\Theta\left(  \tau\right)  x^{-\frac{3+\lambda}{2}%
}\right\vert  &  \leq C\tau x^{-\frac{3+\lambda}{2}-\varepsilon}%
\ \ ,\ \ \tau\leq1\ \ ,\ \ x\geq1\label{G1E1}\\
\left\vert g(\tau,x,1)-\Theta\left(  \tau\right)  x^{-\frac{3+\lambda}{2}%
}\right\vert  &  \leq\frac{C}{\tau^{\frac{\lambda+1}{\lambda-1}+\frac
{2\varepsilon}{\lambda-1}}x^{\frac{3+\lambda}{2}+\varepsilon}}\ \ ,\ \ \tau
\geq1\ \ ,\ \ x\geq1 \label{G1E2}%
\end{align}
where:%
\begin{equation}
\Theta\left(  \tau\right)  =\left\{
\begin{array}
[c]{c}%
a_{4}\tau+b_{4}\left(  \tau\right)  \ \ ,\ \ \left\vert b_{4}\left(
\tau\right)  \right\vert \leq C\tau^{1+\varepsilon}\ \ ,\ \ \tau\leq1\\
a_{2}\tau^{-\frac{\lambda+1}{\lambda-1}}+b_{2}\left(  \tau\right)
\ \ ,\ \ \left\vert b_{2}\left(  \tau\right)  \right\vert \leq C\tau
^{-\frac{\lambda+1}{\lambda-1}-\varepsilon}\ \ ,\ \ \tau\geq1
\end{array}
\right.  \label{G1E3}%
\end{equation}

\end{remark}

We will need improved estimates for $g\left(  \tau,x,1\right)  .$ More
precisely we need to compute the next order in the expansion of $g$ as
$x\rightarrow\infty.$ To this end we obtain the representation formulas for
the function $g(\tau,x,1)$ that we have obtained in the Proof of Lemma 7.10 of
\cite{EV1}.

\begin{theorem}
[cf. Lemma 5.1 in \cite{EV1}]\label{ReFor}The function $g\left(
\tau,x,1\right)  $ described in Theorem \ref{ThFS} can be written as $g\left(
\tau,x,1\right)  =G\left(  \tau,X\right)  \ \ \ ,\ \ \ x=e^{X}$ with:%
\begin{align}
G\left(  \tau,X\right)   &  =-\frac{\mathcal{V}\left(  2i\right)  i}%
{2\pi\left(  \lambda-1\right)  }e^{-\frac{3+\lambda}{2}X}\int
_{\operatorname{Im}\left(  Y\right)  =-\gamma_{1}}dY\frac{\tau^{-\frac
{2iY}{\lambda-1}}}{\mathcal{V}\left(  \frac{\left(  3+\lambda\right)  i}%
{2}+Y\right)  }\Gamma\left(  \frac{2iY}{\lambda-1}\right)  +\label{S3E4}\\
&  +\frac{i}{\pi\left(  \lambda-1\right)  }\int_{\operatorname{Im}\left(
\xi\right)  =\beta}d\xi e^{i\xi X}\int_{\operatorname{Im}\left(  Y\right)
=-\gamma_{1}}dY\frac{\mathcal{V}\left(  \xi\right)  \tau^{-\frac{2iY}%
{\lambda-1}}}{\mathcal{V}\left(  \xi+Y\right)  }\Gamma\left(  \frac
{2iY}{\lambda-1}\right) \nonumber
\end{align}
where $\left(  \beta-\frac{3+\lambda}{2}\right)  >0$ and $\gamma_{1}>0$ are
sufficiently small. The function $\mathcal{V}\left(  \xi\right)  $ is given
by:
\begin{align*}
\mathcal{V}\left(  \xi\right)   &  =\exp\left(  -\frac{2i}{\lambda-1}%
\int_{\operatorname{Im}\left(  \xi\right)  =\beta_{1}}\log\left(  -\Phi\left(
\eta\right)  \right)  \left[  \frac{1}{1-e^{\frac{4\pi\left(  \xi-\eta\right)
}{\lambda-1}}}-\frac{1}{1+e^{-\frac{4\pi\eta}{\lambda-1}}}\right]
d\eta\right)  ,\ \beta_{1}\in\left(  \frac{2+\lambda}{2},\frac{3+\lambda}%
{2}\right) \\
\ \ \Phi\left(  \eta\right)   &  =-\frac{2\sqrt{\pi}\ \Gamma\left(
i\eta+1+\frac{\lambda}{2}\right)  }{\Gamma\left(  i\eta+\frac{\lambda+1}%
{2}\right)  }\ \ ,\ \ \lim_{\operatorname{Re}\left(  \eta\right)
\rightarrow\infty}\arg\left(  -\Phi\left(  \eta\right)  \right)  =\frac{\pi
}{4}%
\end{align*}

\end{theorem}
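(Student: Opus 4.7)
The plan is to prove Theorem \ref{ReFor} by combining a Laplace transform in $\tau$ with a Fourier transform in $X=\log x$ and solving the resulting spectral problem by a Wiener--Hopf factorisation; (\ref{S3E4}) is then obtained by inverting the two transforms. My strategy proceeds in three stages.

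Stage one reduces (\ref{S3E3}) to a transform problem. Laplace-transforming $\partial_\tau g = L[g]$ with datum $\delta(x-1)$ yields $(s-L)\hat g(s,x) = \delta(x-1)$. Each term of $L$ defined in (\ref{S3E1}) is homogeneous of degree $(\lambda-1)/2$ under simultaneous rescaling of $x,y$ (this is exactly the scaling that produces the similarity variable $\rho=\tau^{2/(\lambda-1)}x$ in Theorem \ref{ThFS}), so in the logarithmic coordinate $X=\log x$ the operator becomes translation invariant away from the cutoffs at $y=x/2$. Setting $Y=\log y$ and $\hat G(s,X)=\hat g(s,e^X)$, those cutoffs restrict the convolution integrals to the half-line $Y-X<-\log 2$, so the equation becomes a half-convolution equation on $\mathbb{R}$.

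Stage two performs the Wiener--Hopf factorisation. Fourier-transforming in $X$, the symbol of the equation is essentially the Mellin transform of the half-line kernel; matching its analytic structure against the asymptotic behaviour (\ref{T1E1})--(\ref{T1E3}) one identifies $\Phi(\eta)$ as the relevant multiplicative factor, where the ratio $\Gamma(i\eta+1+\lambda/2)/\Gamma(i\eta+(\lambda+1)/2)$ arises from evaluating the Mellin transform of $(1-z)^{\lambda/2}z^{-3/2}$ via the Beta integral. The function $\mathcal{V}(\xi)$ is then constructed as the exponential of the Sokhotski--Plemelj additive split of $\log(-\Phi)$ along $\operatorname{Im}(\eta)=\beta_1$, which gives exactly the integral formula stated; the prescribed $\arg(-\Phi(\eta))\to\pi/4$ at infinity selects the correct branch of the logarithm and ensures absolute convergence of the factorisation integral.

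Stage three inverts the transforms. Fourier inversion in $X$ produces an integral over $\operatorname{Im}(\xi)=\beta$ with $\beta>(3+\lambda)/2$, while Laplace inversion in $s$ is carried out using the identity $\int_0^\infty e^{-s\tau}\tau^{\alpha-1}d\tau=s^{-\alpha}\Gamma(\alpha)$, which after the change of spectral variable $\alpha\leftrightarrow 2iY/(\lambda-1)$ produces both the oscillatory factor $\tau^{-2iY/(\lambda-1)}$ and the $\Gamma(2iY/(\lambda-1))$ in the integrand. Deforming the $\xi$-contour down across the pole of $1/\mathcal{V}(\xi+Y)$ located at $\xi=(3+\lambda)i/2$ collects a residue proportional to $e^{-(3+\lambda)X/2}$ and yields the first term of (\ref{S3E4}), while the integral remaining on $\operatorname{Im}(\xi)=\beta$ is the second term. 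The principal obstacle is stage two: one must verify that $\mathcal{V}$ is analytic and non-vanishing in a sufficiently wide horizontal strip, satisfies the multiplicative functional equation needed to factor the symbol, and has tractable growth at $\operatorname{Re}(\eta)\to\pm\infty$; this rests on detailed Stirling asymptotics for the ratio of gamma functions defining $\Phi$. A secondary but nontrivial task is to choose $\beta$ and $\gamma_1$ so that both contours lie in a simultaneous strip of analyticity of all factors, which requires careful bookkeeping of the poles of $\Gamma(2iY/(\lambda-1))$ and of $1/\mathcal{V}(\xi+Y)$.
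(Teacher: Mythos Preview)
The paper does not contain a proof of this statement. Theorem \ref{ReFor} sits in Section \ref{Summary}, whose purpose is explicitly to ``recall \dots\ several results that have been obtained in \cite{EV1}, \cite{EV2},'' and the theorem is labelled ``cf.\ Lemma 5.1 in \cite{EV1}.'' There is therefore nothing in the present paper to compare your proposal against; the representation formula is simply imported from \cite{EV1} and used as a tool (notably in the proof of Proposition \ref{PropositionImproved}, where the contours in (\ref{S3E4}) are deformed to extract further terms in the large-$x$ asymptotics).

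That said, your outline is the right strategy and is essentially what \cite{EV1} carries out: Laplace transform in $\tau$, Mellin/Fourier transform in $X=\log x$ exploiting the homogeneity of $L$, and a Wiener--Hopf factorisation producing $\mathcal{V}$. Two small corrections to your sketch. First, the operator $L$ in (\ref{S3E1}) is globally homogeneous of degree $\frac{\lambda-1}{2}$, so in the $X$ variable it is translation invariant everywhere, not merely ``away from the cutoffs at $y=x/2$''; the split at $x/2$ is what makes the resulting convolution one-sided and hence amenable to Wiener--Hopf, rather than an obstruction to translation invariance. Second, the separated first term in (\ref{S3E4}) is not obtained by deforming the $\xi$-contour across a pole of $1/\mathcal{V}(\xi+Y)$ after the fact; rather, it arises directly from the structure of the Wiener--Hopf solution, which naturally produces a residue contribution at $\xi=\frac{3+\lambda}{2}i$ (the zero of $\Phi$) plus a contour integral. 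The contour deformation you describe is instead the mechanism used \emph{subsequently} in this paper (proof of Proposition \ref{PropositionImproved}) to peel off further asymptotic terms.
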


On the other hand we have proved the following results in \cite{EV2}:

\begin{theorem}
[cf Theorem 2.1 in \cite{EV2}]\label{ExistenceLredonda} For any $\sigma
\in(1,2)$, $\bar{\delta}>0$ and any $f_{0}$ satisfying (\ref{S1E5}),
(\ref{S1E6})\ there exists $T>0$ such that for all $\mu\in Y_{3/2,\,2+\bar
{\delta}}^{\sigma}$ the Cauchy problem
\begin{equation}
h_{\tau}={\mathcal{L}}_{f_{0}}(h)+\mu\ \ \ \ ,\ \ \ \ h\left(  0\right)
=0\ \label{S3E5}%
\end{equation}
has a unique solution $h$ in $\mathcal{E}_{T;\,\sigma}$. Moreover
$|||h|||_{\sigma}\leq C||\mu||_{Y_{3/2,\,\left(  2+\bar{\delta}\right)
}^{\sigma}}$ for some \ positive constant $C$ depending on $T$, $\sigma$,
$\bar{\delta}$ as well as $A$, $B$ and $\gamma$ in (\ref{S1E5}), (\ref{S1E6})
but not on $\mu$.
\end{theorem}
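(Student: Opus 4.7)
The plan is a Duhamel contraction argument built on the fundamental solution $g(\tau,x,x_0)$ of the operator $L$ in (\ref{S3E1}) described in Theorem \ref{ThFS}. Using the decomposition $f_0(x)=D_1\xi(x)x^{-(3+\lambda)/2}+h_0(x)$ from (\ref{S1E5}) and the decay of $h_0$ in (\ref{S1E6}), I split
\[
\mathcal{L}_{f_0}[h]=L[h]+\mathcal{R}_{h_0}[h],
\]
where $L$ is the generator of the semigroup $g$ and $\mathcal{R}_{h_0}[h]$ collects the terms in $\mathcal{L}_{f_0}$ in which $f_0$ is replaced by $h_0$, together with the cutoff error $D_1(\xi(x)-1)x^{-(3+\lambda)/2}$ and the tail contributions from replacing the $x/2$ cutoffs of $L$ by the $x$, respectively $\infty$, cutoffs of $\mathcal{L}_{f_0}$. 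Because $h_0$ decays strictly faster than $x^{-(3+\lambda)/2}$ at infinity and is regular near $x=0$, one expects $\mathcal{R}_{h_0}$ to map $\mathcal{E}_{T;\sigma}$ into $Y^\sigma_{3/2,\,2+\bar{\delta}}(T)$ with a norm that shrinks to $0$ as $T\to 0$.

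I then recast (\ref{S3E5}) as the fixed point equation
\[
h(\tau,x)=\int_0^\tau\!\!\int_0^\infty g(\tau-s,x,x_0)\bigl[\mu(s,x_0)+\mathcal{R}_{h_0}[h](s,x_0)\bigr]\,dx_0\,ds=:\Phi[h](\tau,x),
\]
and seek $h$ as a fixed point in $\mathcal{E}_{T;\sigma}$. Everything reduces to the linear continuity bound
\[
|||U[\nu]|||_{\sigma}\le C\,\|\nu\|_{Y^\sigma_{3/2,\,2+\bar{\delta}}(T)},\qquad U[\nu](\tau,x):=\int_0^\tau\!\!\int_0^\infty g(\tau-s,x,x_0)\,\nu(s,x_0)\,dx_0\,ds.
\]
This bound has two ingredients: (a) a pointwise decay estimate giving $|U[\nu](\tau,x)|\le C\|\nu\|\bigl(x^{-3/2}\,\text{for }x\le 1,\ x^{-(3+\lambda)/2}\,\text{for }x>1\bigr)$; and (b) the $\sigma$-regularity estimate in the dyadic norms (\ref{S1E8})--(\ref{S1E10}). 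Both follow from the sharp asymptotics of $g$ and the scaling identity (\ref{T1E0a}): the small- and large-$\rho$ branches in (\ref{T1E1})--(\ref{T1E3}) and the remainder bounds (\ref{G1E1})--(\ref{G1E2}) transport the weights $x^{-3/2}$, $x^{-(2+\bar{\delta})}$ of $\nu$ precisely to the weights $x^{-3/2}$, $x^{-(3+\lambda)/2}$ of $h$, the loss in decay being exactly the price paid for the $x^{-(3+\lambda)/2}$ tail inherent in $g$. For (b) the rectangles $(R/2,2R)\times(t_0,t_0+R^{-(\lambda-1)/2})$ in $M_{2;\sigma}$ and $N_{2;\sigma}$ are the self-similarity cells of $g$, so after rescaling $\rho=\tau x_0^{(\lambda-1)/2}$ the problem becomes an $L^2_{\tau,x}$ estimate on a fixed cell for Fourier multipliers associated with the representation (\ref{S3E4}) of Theorem \ref{ReFor}, which is handled by Plancherel against the bound on $\mathcal V(\xi)$ coming from the Wiener-Hopf factorization in that representation.

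Once the bound on $U$ is established, contraction of $\Phi$ on $\mathcal{E}_{T;\sigma}$ follows from the smallness of $\mathcal{R}_{h_0}$ for small $T$: choosing $T$ so that $C\,\|\mathcal{R}_{h_0}\|_{\mathcal{E}_{T;\sigma}\to Y^\sigma_{3/2,2+\bar{\delta}}}\le 1/2$, Banach's fixed point theorem produces a unique fixed point $h\in\mathcal{E}_{T;\sigma}$, and summing the Neumann series gives $|||h|||_\sigma\le C\,\|\mu\|_{Y^\sigma_{3/2,\,2+\bar{\delta}}}$. Uniqueness inside $\mathcal{E}_{T;\sigma}$: if $h_1,h_2$ are two solutions with the same $\mu$, then $h_1-h_2=\Phi[h_1-h_2]$ with the forcing set to zero, so the contraction constant forces $|||h_1-h_2|||_\sigma=0$. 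The main obstacle is ingredient (b): the operator $L$ is not smoothing in any parabolic sense, so the gain of $\sigma$ spatial derivatives on the Duhamel integral cannot be obtained from a simple heat-like bound and instead requires the delicate Mellin-type representation (\ref{S3E4}) of Theorem \ref{ReFor}, matched carefully against the scaling of the $Y^\sigma$ cells.
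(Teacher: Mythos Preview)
This theorem is not proved in the present paper: it is quoted verbatim from \cite{EV2} (see Section~\ref{Summary}, which merely summarizes results of \cite{EV1}, \cite{EV2}). So there is no ``paper's own proof'' here to compare against. That said, your sketch contains a genuine gap that would prevent the argument from closing, and the gap is precisely the one that the machinery of \cite{EV2} (as visible in Theorem~\ref{S8T3-101} and the Remark after Lemma~\ref{Le3}) is designed to address.

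The problematic step is the claim that $\mathcal{R}_{h_0}=\mathcal{L}_{f_0}-L$ maps $\mathcal{E}_{T;\sigma}$ into $Y^\sigma_{3/2,\,2+\bar\delta}(T)$ with norm tending to $0$ as $T\to0$. Neither half of this assertion is correct. First, $\mathcal{R}_{h_0}$ contains the term
\[
A_2[h](x)=\int_0^{x/2}\bigl[(x-y)^{\lambda/2}h(x-y)-x^{\lambda/2}h(x)\bigr]\,H(y)\,dy,\qquad H(y)=y^{\lambda/2}f_0(y)-y^{-3/2},
\]
which behaves like a (truncated) half-derivative of $h$. If $h$ is only in $\mathcal{E}_{T;\sigma}$ (i.e.\ has $\sigma$ spatial derivatives in the $N_{2;\sigma}$ sense), then $A_2[h]$ lands in $H^{\sigma}$ only at the cost of losing decay: you get $N_{2;\sigma}(A_2;t_0,R)\lesssim R^{-2}$, \emph{not} $R^{-(2+\bar\delta)}$. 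This is exactly the content of the Remark following Lemma~\ref{Le3}: to obtain $R^{-(2+r)}$ decay in $H^\sigma$ one needs the extra half-derivative encoded in the seminorm $[\,\cdot\,]^{\sigma;1/2}_p$, i.e.\ one needs $h\in\mathcal{Z}^{\sigma;1/2}_p$, not merely $h\in\mathcal{E}_{T;\sigma}$. Since $\bar\delta>0$ is required, the perturbation does not land in $Y^\sigma_{3/2,\,2+\bar\delta}$, and the Duhamel iteration on $\mathcal{E}_{T;\sigma}$ does not close. Second, $\mathcal{R}_{h_0}$ is a purely spatial operator; for $R$ large the time windows in $N_{2;\sigma}$ have length $R^{-(\lambda-1)/2}\ll T$, so there is no mechanism producing a factor of $T$ in the operator norm of $\mathcal{R}_{h_0}$.

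The route actually taken in \cite{EV2}, as one can infer from Theorem~\ref{S8T3-101}, is different: the half-derivative part of $\mathcal{L}_{f_0}$ is \emph{not} treated as a small perturbation of $L$ but is kept on the left as the principal part $T_{\varepsilon,R}(M_{\lambda/2}\,\cdot)$, and a genuine regularity theory is proved for that operator allowing a source term $P$ in $H^{\sigma-1/2}$ (see (\ref{S8T3-101E143}) and (\ref{S8T3-101E143bis})). The half-derivative lost in $A_2$ is then recovered from the half-derivative smoothing of the principal part, and the smallness needed for the fixed point comes from the absorption term $-a(x,t)f$ with $a\ge A>0$, not from $T\to0$ alone. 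In short: your Duhamel-plus-perturbation outline is the right picture at the level of pointwise bounds (part~(a)), but part~(b) cannot be completed by treating $\mathcal{R}_{h_0}$ as lower order; one must exploit the half-derivative gain of the linear evolution, which is precisely what Theorem~\ref{S8T3-101} and the $\mathcal{Z}^{\sigma;1/2}_p$ framework encode.
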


\begin{theorem}
[cf. Theorem 2.2 in \cite{EV2}]\label{RegularityLredonda} For any $\sigma
\in(1,2)$, $\bar{\delta}>0$ and for any $f_{0}$ satisfying (\ref{S1E5}),
(\ref{S1E6}), the solution of the Cauchy problem (\ref{S3E5}) obtained in
Theorem \ref{ExistenceLredonda} satisfies%
\[
\lbrack h]_{\frac{3+\lambda}{2}}^{\sigma;\frac{1}{2}}\leq C||\mu
||_{Y_{3/2,\,2+\bar{\delta}}^{\sigma}}%
\]
for some \ positive constant $C$ depending on $T$, $\sigma$, $\bar{\delta}$ as
well as $A$, $B$ and $\gamma$ in (\ref{S1E5}), (\ref{S1E6}) but not on $\mu$.
\end{theorem}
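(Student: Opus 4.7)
The plan is to derive the seminorm $[h]_{(3+\lambda)/2}^{\sigma;1/2}$ through a Fourier-side energy estimate performed on the rescaled equation at each scale $R\geq 1$ and each starting time $t_{0}\in[0,T]$. Following the intuition described after (\ref{S1E12a}), the operator $\mathcal{L}_{f_{0}}$ acts, to leading order on arguments close to $G(x)=x^{-(3+\lambda)/2}$, like a half-derivative operator acting on increments of length larger than one. Once the box $[R/2,2R]$ is pulled back to order one through $F_{R,t_{0}}(\theta,X)=\eta(RX)h(t_{0}+\theta R^{-(\lambda-1)/2},RX)$ of (\ref{S1E10a}), this mechanism should manifest as a dissipative Fourier symbol whose real part has size $(1+|k|^{2\sigma})(1+\min\{|k|,R\})=Q_{R,\sigma}(k)$, which is precisely the weight appearing in (\ref{S1E11}).

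First I would decompose $\mathcal{L}_{f_{0}}=\mathcal{L}_{G}+\mathcal{L}_{h_{0}}$, where $G=\xi(x)x^{-(3+\lambda)/2}$ is the model profile and $h_{0}=f_{0}-f_{1}$ satisfies the faster-decay bounds (\ref{S1E6}). The main analysis would be carried out on the model piece $\mathcal{L}_{G}$; the perturbation $\mathcal{L}_{h_{0}}[h]$ inherits an extra $x^{-r}$ decay from $h_{0}$ and can be moved to the right-hand side, where it is absorbed using the bound $|||h|||_{\sigma}\leq C\|\mu\|_{Y_{3/2,2+\bar{\delta}}^{\sigma}}$ from Theorem \ref{ExistenceLredonda}. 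The homogeneity of $\mathcal{L}_{G}$ of degree $(\lambda-1)/2$ makes the rescaling $x=RX$, $\tau=t_{0}+\theta R^{-(\lambda-1)/2}$ transparent and produces a rescaled equation of the schematic form
\[
\partial_{\theta}F_{R,t_{0}}=\widetilde{\mathcal{L}}_{G}^{(R)}F_{R,t_{0}}+\mathcal{C}_{R}[F_{R,t_{0}}]+\widetilde{M}_{R,t_{0}},
\]
where $\mathcal{C}_{R}$ collects the commutators with $\eta(RX)$ and $\widetilde{M}_{R,t_{0}}$ is the rescaled effective forcing incorporating both $\mu$ and $\mathcal{L}_{h_{0}}[h]$.

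The core step is a Fourier energy identity: Fourier transforming in $X$ and testing against $\overline{\widehat{F}_{R,t_{0}}(\theta,k)}$ gives
\[
\tfrac{1}{2}\partial_{\theta}\|\widehat{F}_{R,t_{0}}(\theta,\cdot)\|_{L^{2}}^{2}+\int_{\mathbb{R}}\mathrm{Re}\bigl(-\widehat{\mathcal{L}}_{G}^{(R)}(k)\bigr)|\widehat{F}_{R,t_{0}}(\theta,k)|^{2}\,dk=\int_{\mathbb{R}}\mathrm{Re}\bigl(\widehat{\widetilde{M}}_{R,t_{0}}\,\overline{\widehat{F}_{R,t_{0}}}\bigr)dk+(\text{commutators}).
\]
The crucial symbolic bound is $\mathrm{Re}(-\widehat{\mathcal{L}}_{G}^{(R)})(k)\geq c\,Q_{R,\sigma}(k)$ uniformly in $R\geq1$: the $|k|^{2\sigma}$ factor comes from the near-diagonal singular piece of the kernel of $L$ in (\ref{S3E1}), while the cutoff factor $\min\{|k|,R\}$ reflects the fact that after rescaling, $\widetilde{\mathcal{L}}_{G}^{(R)}$ integrates over $Y\in[0,RX]$ instead of the full half-line, so the half-derivative mechanism saturates at frequency $R$. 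The size and sign of the symbol would be extracted from the Mellin-type representation (\ref{S3E4}) of Theorem \ref{ReFor}: the function $\Phi$ has explicit real part along the relevant contour, and the truncated Wiener--Hopf factorization corresponding to the cutoff at $RX$ produces the transition at $|k|\sim R$. Young's inequality then absorbs $\tfrac{c}{2}\int|\widehat{F}_{R,t_{0}}|^{2}Q_{R,\sigma}\,dk$, while the dual control $\int|\widehat{\widetilde{M}}_{R,t_{0}}|^{2}Q_{R,\sigma}^{-1}\,dk\lesssim R^{-2p}\|\mu\|_{Y_{3/2,2+\bar{\delta}}^{\sigma}}^{2}$ follows from the definitions of the $Y$-norm and the rescaling, exploiting the decay gap $2+\bar{\delta}>(3+\lambda)/2$ afforded by $\bar{\delta}<r$. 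Integrating $\theta$ over $[0,R^{-(\lambda-1)/2}\wedge T]$ and taking the supremum over $R\geq 1$, $t_{0}\in[0,T]$ with the prefactor $R^{p}$, $p=(3+\lambda)/2$, assembles the desired bound.

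The hardest step will be the uniform-in-$R$ coercivity $\mathrm{Re}(-\widehat{\mathcal{L}}_{G}^{(R)})(k)\gtrsim Q_{R,\sigma}(k)$, and in particular the sharp break at $|k|\sim R$. This requires tracking how the truncation of the integration at $Y=RX$ modifies the Mellin factorization underlying (\ref{S3E4})--(\ref{T1E5}): contour deformations involving $\mathcal{V}$ must be controlled as the spectral variable crosses the scale $R$, and the positivity of the resulting symbol must be verified separately in the two regimes $|k|\lesssim R$ and $|k|\gtrsim R$. Once that coercivity is secured, the commutator terms generated by $\eta(RX)$ and the perturbation $\mathcal{L}_{h_{0}}$ are routine to absorb via Theorem \ref{ExistenceLredonda}, and the rescaling identities compatible with the sub-cubes sketched in Figure 1 deliver the $R^{p}$-weighted supremum that defines $[\,\cdot\,]_{(3+\lambda)/2}^{\sigma;1/2}$.
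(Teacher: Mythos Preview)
This theorem is not proved in the present paper: it is quoted verbatim as Theorem~2.2 of \cite{EV2} in Section~\ref{Summary}, together with the companion local regularity statement Theorem~\ref{S8T3-101}. So there is no proof here to compare against; the relevant benchmark is the machinery recorded in Theorem~\ref{S8T3-101}(iii), which is the building block \cite{EV2} uses after rescaling to a unit box.

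Your overall architecture (rescale to $F_{R,t_0}$, split $\mathcal{L}_{f_0}=\mathcal{L}_G+\mathcal{L}_{h_0}$, push the perturbative piece to the right via Theorem~\ref{ExistenceLredonda}, and extract the extra half-derivative on the Fourier side) is the right one and matches how \cite{EV2} is organized. But the core analytic claim is wrong as written. You assert $\mathrm{Re}\bigl(-\widehat{\mathcal{L}}_G^{(R)}\bigr)(k)\geq c\,Q_{R,\sigma}(k)=c(1+|k|^{2\sigma})(1+\min\{|k|,R\})$. The singular piece of $L$ in (\ref{S3E1}) is $\int_0^{x/2}[(x-y)^{\lambda/2}h(x-y)-x^{\lambda/2}h(x)]y^{-3/2}\,dy$, whose principal part is a half-derivative: its Fourier symbol is of size $|k|^{1/2}$, not $|k|^{2\sigma+1}$. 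No amount of contour deformation in the representation (\ref{S3E4}) will produce a multiplier growing like $|k|^{2\sigma}$ out of a kernel with an integrable-minus-$y^{-3/2}$ singularity. The truncation at scale $1/R$ in the rescaled variable cuts this off to $\min\{|k|,R\}^{1/2}$, and that is all the operator gives.

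What actually happens is the following. The factor $(1+|k|^{2\sigma})$ in $Q_{R,\sigma}$ does not come from the symbol; it comes from testing at the $H^\sigma$ level, i.e.\ against $(1+|k|^{2\sigma})\overline{\widehat{F}_{R,t_0}}$ rather than $\overline{\widehat{F}_{R,t_0}}$. This is why the source has to lie in $Y^\sigma$ and why Theorem~\ref{S8T3-101}(iii) places $Q$ in $L^2_tH^\sigma_x$. The remaining factor $\min\{|k|,R\}$ then appears as the \emph{square} of the dissipative symbol $\min\{|k|,R\}^{1/2}$, obtained by running the scalar ODE estimate frequency-by-frequency (or, equivalently, by $L^2$-in-time maximal regularity for $\partial_\theta+\min\{|k|,R\}^{1/2}$): for each fixed $k$ with $F_{R,t_0}(0,\cdot)=0$ one gets $\min\{|k|,R\}\int_0^1|\widehat{F}_{R,t_0}|^2\,d\theta\leq C\int_0^1|\widehat{M}|^2\,d\theta$, and multiplying by $(1+|k|^{2\sigma})$ before integrating in $k$ yields exactly the $Q_{R,\sigma}$-weighted bound. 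Your single $L^2$ energy identity, with the symbol you wrote, collapses both mechanisms into one false inequality. A secondary issue you underweight is that after multiplication by $\eta$ the rescaled operator is no longer a Fourier multiplier; in \cite{EV2} this is handled by isolating the model multiplier $T_{\varepsilon,R}(M_{\lambda/2}\cdot)$ of Theorem~\ref{S8T3-101} and treating the variable-coefficient remainders (in particular the $-a(x,t)f$ absorption term there) as lower order, not by asserting a pointwise symbol bound for the full $\widetilde{\mathcal{L}}_G^{(R)}$.
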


\bigskip

This is a regularity result proved in \cite{EV2} that will be used repeatedly
in the following:

\begin{theorem}
[cf. Theorem 3.1 in \cite{EV2}]\label{S8T3-101} (i) Suppose that $Q\in
L_{t}^{2}(0,1;H_{x}^{\sigma}(1/2,2))$, $P\in L_{t}^{2}(0,1;H_{x}^{\sigma
-1/2}(1/2,2))$ with $\sigma\in(1/2,2)$, $\kappa\in(0,1]$ and $f\in L^{\infty
}((1/4,2)\times(0,1))\cap L^{2}(0,1;H^{1/2}(1/4,2))\cap H^{1}(0,1;L^{2}%
(1/4,2))$ is such that $f=0$ if $x<1/8$ or $x>4$ and satisfies
\[
\frac{\partial f}{\partial t}=\kappa\,T_{\varepsilon,R}\left(  M_{\lambda
/2}\,f\right)  +Q+P
\]
for all $x\in(1/4,2)$, $t\in(0,1)$ and $f(x,0)=0$. Then:
\begin{equation}
||f||_{L_{t}^{2}(0,1;H_{x}^{\sigma}(3/4,5/4))}\leq C\left(  ||Q||_{L_{t}%
^{2}(0,1;H_{x}^{\sigma}(1/2,2))}+\frac{1}{\varepsilon\,\kappa}||P||_{L_{t}%
^{2}(0,1;H_{x}^{\sigma-1/2}(1/2,2))}+||f||_{L^{\infty}((1/4,2)\times
(0,1))}\right)  \nonumber\label{S8T3-10198756}%
\end{equation}
for some positive constant $C$ independent of $\varepsilon$ and $R$.

(ii) Suppose that $Q\in L_{t}^{2}(0,T_{max};H_{x}^{\sigma}(1/2,2))$, $P\in
L_{t}^{2}(0,T_{max};H_{x}^{\sigma-1/2}(1/2,2))$, $f\in L^{\infty
}((1/4,2)\times(0,T_{max}))\cap C_{t}^{1}(0,T_{max};H_{x}^{1/2}(1/4,2)))$ for
some $T_{max}>0$ is such that $f=0$ if $x<1/8$ or $x>4$ and satisfies
\begin{align}
\frac{\partial f}{\partial t}  &  =T_{\varepsilon,R}\left(  M_{\lambda
/2}\,f\right)  +Q+P-a(x,t)\,f,\qquad x\in(1/4,2),\,t>0)\label{S8T3-101E143}\\
f(x,0)  &  =0 \label{S8T3-101E143MAs}%
\end{align}
for some function $a\in L^{\infty}({0,T_{max};H^{\sigma}(1/2,2)})$, $a\geq
A>0$. Then, for all $T\in\lbrack0,T_{max}-1]$:
\begin{align}
&  \sup_{0\leq T\leq T_{max}}\left(  \int_{T}^{\min(T+1,T_{max})}%
||f(t)||_{H^{\sigma}(3/4,5/4)}^{2}\,dt\right)  ^{1/2}\leq
\label{S8T3-101E143bis}\\
&  \hskip1cmC\sup_{0\leq T\leq T_{max}}\left(  \int_{T}^{\min(T+1,T_{max}%
)}||Q(t)||_{H^{\sigma}(1/2,2)}^{2}\,dt\right)  ^{1/2}+\nonumber\\
&  \hskip1.5cm+\frac{C}{\varepsilon}\sup_{0\leq T\leq T_{max}}\left(  \int
_{T}^{\min(T+1,T_{max})}||P(t)||_{H^{\sigma-1/2}(1/2,2)}^{2}\,dt\right)
^{1/2}+\nonumber\\
&  \hskip2.3cm+C\,||f||_{L^{\infty}((1/4,2)\times(0,T_{max}))}\nonumber
\end{align}

(iii) Suppose that for some $T_{max}>0$, $Q\in L_{t}^{2}(0,T_{max}%
;H_{x}^{\sigma}(1/2,2))$, $f\in L^{\infty}((1/4,2)\times(0,T_{max}))\cap
C_{t}^{1}(0,T_{max};H_{x}^{1/2}(1/4,2)))$ is such that $f=0$ if $x<1/8$ or
$x>4$ and satisfies (\ref{S8T3-101E143}) (\ref{S8T3-101E143MAs}) with $P=0$
and $\varepsilon=0$. Then
\begin{align}
&  \left(  \int_{T}^{\min(T+1,T_{max})}\int_{\mathbb{R}}|\widehat{F}%
(k,t)|^{2}|k|^{2\sigma}\min\{|k|,R\}dk\right)  ^{1/2}\leq
\nonumber\label{S8T3-101E143bisbis}\\
&  C\sup_{0\leq T\leq T_{max}}\left(  \int_{T}^{\min(T+1,T_{max}%
)}||Q(t)||_{H^{\sigma}(1/2,2)}^{2}\,dt\right)  ^{1/2}+C\,||f||_{L^{\infty
}((1/4,2)\times(0,T_{max}))}\nonumber\\
&
\end{align}
where $F(x,t)=\eta(x)\,f(x,t)$, $\eta\in C^{\infty}$ is a cutoff satisfying
$\eta\left(  x\right)  =1$ if $x\in\left(  \frac{3}{4},\frac{5}{4}\right)  $
and $\eta\left(  x\right)  =0$ if $x\notin\left(  \frac{1}{8},\frac{1}%
{4}\right)  $. The constant $C$ is independent of $R$.
\end{theorem}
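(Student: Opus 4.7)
The plan is to prove parts (i)--(iii) as fractional-parabolic energy estimates, exploiting the fact that $T_{\varepsilon,R}(M_{\lambda/2}\,\cdot)$ is, on the Fourier side, the composition of a smooth multiplication (by $x^{\lambda/2}$) with a Fourier multiplier $m_{\varepsilon,R}(k)$ behaving like a half-derivative truncated at low frequency $|k|\lesssim\varepsilon$ and high frequency $|k|\gtrsim R$. Since $f$ vanishes outside $(1/8,4)$ and at $t=0$, I work with $F=\eta f$ for a cutoff $\eta$ adapted to the support, and take Fourier transforms on $\mathbb{R}$.

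For part (i), I freeze coefficients: writing $x^{\lambda/2}=c_{0}+r(x)$ on the support of $\eta$ with $r$ smooth and small, the equation for $F$ takes the form $\partial_{t}F=\kappa\,c_{0}\,T_{\varepsilon,R}F+Q+P+\mathrm{Err}[F]$, where $\mathrm{Err}$ gathers the commutator and remainder terms from $\eta$ and $r$ and is controlled in $L_{t}^{2}H_{x}^{\sigma}$ by a Kato--Ponce / pseudodifferential bound in terms of $\|f\|_{L^{\infty}}$ plus a small multiple of $\|F\|_{L_{t}^{2}H_{x}^{\sigma}}$ that will be absorbed. Pairing the equation with $|D_{x}|^{2\sigma}F$ in $L^{2}_{x,t}$ and using $F(0)=0$ yields
\[
\tfrac{1}{2}\|F(T)\|_{H^{\sigma}}^{2}+\kappa\!\int_{0}^{T}\!\!\int_{\mathbb{R}}|k|^{2\sigma}\bigl(-\operatorname{Re} m_{\varepsilon,R}(k)\bigr)|\widehat{F}|^{2}\,dk\,dt \;=\; \text{RHS},
\]
whose second term on the left is coercive because $-\operatorname{Re} m_{\varepsilon,R}(k)\gtrsim\chi_{|k|\ge\varepsilon}\sqrt{\min\{|k|,R\}}$. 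The $Q$-pairing is handled by Cauchy--Schwarz; for the $P$-pairing, Parseval shifts half a derivative, $\int P\,|D|^{2\sigma}F=\int|D|^{1/2}P\cdot|D|^{2\sigma-1/2}F$, which produces $\|P\|_{L^{2}H^{\sigma-1/2}}\|F\|_{L^{2}H^{\sigma+1/2}}$; a Young inequality against the coercivity term (which dominates $\kappa\varepsilon\|F\|^{2}_{L^{2}H^{\sigma+1/2}}$ on the band $\varepsilon\le|k|\le R$) extracts exactly the $(\kappa\varepsilon)^{-1}$ factor appearing in the statement.

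Part (ii) adds the dissipative term $-a(x,t)f$ with $a\ge A>0$, contributing $A\|F\|_{L^{2}}^{2}$ to the left side of the energy identity; combined with an exponential-weight Grönwall argument on unit-length time windows and the uniform-in-$T$ initial contribution absorbed into $\|f\|_{L^{\infty}}$, this upgrades the finite-horizon estimate of (i) to the sup-over-windows bound (\ref{S8T3-101E143bis}). For part (iii), setting $\varepsilon=0$ and $P=0$, the low-frequency truncation disappears and $|m_{0,R}(k)|^{2}\asymp\min\{|k|,R\}$ uniformly on $\mathbb{R}$; the coercivity integral becomes exactly the norm appearing on the left of (\ref{S8T3-101E143bisbis}), while the right-hand side collects only $\|Q\|_{L^{2}H^{\sigma}}$ and $\|f\|_{L^{\infty}}$, with no $P$-absorption needed.

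The main obstacle is verifying the pseudodifferential-type properties of the truncated operator $T_{\varepsilon,R}$ uniformly in $\varepsilon$ and $R$. The symbol $m_{\varepsilon,R}$ is not smooth at the cutoff scales, so direct symbolic calculus is unavailable; the commutator estimates $[T_{\varepsilon,R},\eta]$ and $[T_{\varepsilon,R},r]$ must instead be established via a dyadic Littlewood--Paley decomposition, showing that the truncation loses only an $\varepsilon,R$-independent operator norm when composed with smooth multiplication. The second delicate point is making the $P$-absorption sharp enough to produce $(\varepsilon\kappa)^{-1}$ rather than a worse power, which requires distinguishing the regime $\varepsilon\le|k|\le1$ (where the coercivity is only $\kappa|k|^{1/2}$) from $|k|\ge1$ (where it is $\kappa\sqrt{R}$) when applying Young's inequality and interpolating the extra spatial regularity from $H^{\sigma}$ into $H^{\sigma+1/2}$ only on the low-frequency band.
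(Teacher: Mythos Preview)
This theorem is not proved in the present paper. It appears in Section~\ref{Summary}, which is explicitly a summary of results quoted from the companion papers \cite{EV1} and \cite{EV2}; the theorem is labelled ``cf.\ Theorem~3.1 in \cite{EV2}'' and no proof is given here. Consequently there is no ``paper's own proof'' against which to compare your proposal.

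On the substance of your sketch: the energy-estimate strategy you outline (localize with $\eta$, take Fourier transforms, pair with $|D_x|^{2\sigma}F$, and use coercivity of the symbol of $T_{\varepsilon,R}$) is a natural approach for a half-derivative parabolic equation and is likely close in spirit to what is done in \cite{EV2}. However, several steps rest on properties of $T_{\varepsilon,R}$ and $M_{\lambda/2}$ that are not defined in this paper, so your argument cannot be checked from the present text alone. In particular, your claim that $-\operatorname{Re} m_{\varepsilon,R}(k)\gtrsim \chi_{|k|\ge\varepsilon}\sqrt{\min\{|k|,R\}}$ and your commutator bounds for $[T_{\varepsilon,R},\eta]$ and $[T_{\varepsilon,R},r]$ uniformly in $\varepsilon,R$ are exactly the technical heart of the matter; whether they hold depends on the precise construction of $T_{\varepsilon,R}$ in \cite{EV2}, which may be an integral operator rather than a clean Fourier multiplier. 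To validate or complete your argument you would need to consult \cite{EV2} directly.
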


\section{SOME AUXILIARY RESULTS.}

In this Section we collect two estimates that will be used in the Proof of
Theorem \ref{Th1}.

\subsection{Remarks about notation.}

We will use in the arguments several different symbols. Specific letters have
been reserved for quantities with precise meanings. We write them shortly here
as a guide for the reader.

The letter $r=\frac{\lambda-1}{2}$ will denote the first order correction to
the asymptotics of $f_{0}$ as $x\rightarrow\infty$ (cf. (\ref{U3E4a}%
)-(\ref{Z1E2b}))$.$ We will use $\delta$ to denote the exponent of the second
order correction of $f_{0}$ as $x\rightarrow\infty$ $.$ It will be assumed in
the whole paper that $\delta<\min\left\{  r,\frac{2-\lambda}{2}\right\}  $.

The parameter $\bar{\delta}$ characterizes the functional space where the
solution of the equation will be obtained (cf. Theorem \ \ref{Th1}). It will
be always assumed that $\bar{\delta}<\min\left\{  r,\delta\right\}  .$ We will
use also the notation $\bar{p}=\frac{3+\lambda}{2}+\bar{\delta}.$

The symbols $\varepsilon$ 's will be used for the fundamental solution
associated to $g_{t}=L\left[  g\right]  $ (cf. Theorem \ref{ThFS}).

We use $\sigma$ to denote the spatial regularity of the solutions. We assume
$\sigma\in\left(  1,2\right)  .$

\subsection{A general estimate for the functions in $\mathcal{Z}_{p}%
^{\sigma;\frac{1}{2}}\left(  T\right)  .$}

\begin{lemma}
\label{LemmaInteg}Suppose that $\phi\in\mathcal{Z}_{p}^{\sigma;\frac{1}{2}%
}\left(  T\right)  $ for $\sigma\in\left(  1,2\right)  ,\ p>0\ .$ Let us
define:%
\begin{equation}
\omega\left(  t,x\right)  =\int_{0}^{t}\phi\left(  s,x\right)  ds\ \ ,\ \ x\in
\mathbb{R}^{+},\ 0\leq t\leq T \label{N1}%
\end{equation}

Then, there exists $C>0$ independent of $T,\ \phi$ such that:%
\begin{equation}
\left\Vert \omega\right\Vert _{\mathcal{Z}_{p}^{\sigma;\frac{1}{2}}\left(
T\right)  }\leq4\sqrt{T}\left\Vert \phi\right\Vert _{\mathcal{Z}_{p}%
^{\sigma;\frac{1}{2}}\left(  T\right)  }\ \ ,\ \ \label{N2}%
\end{equation}

\end{lemma}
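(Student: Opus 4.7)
The norm $\|\cdot\|_{\mathcal{Z}_{p}^{\sigma;\frac{1}{2}}(T)}$ defined in \eqref{S1E12a} decomposes into the sum of four terms which I denote $\mathcal{N}_{1},\mathcal{N}_{2},\mathcal{N}_{3},\mathcal{N}_{4}$: the $L^{2}((0,T);H^{\sigma}_{x}(0,2))$ norm, the Fourier seminorm $[\cdot]_{p}^{\sigma;\frac{1}{2}}$ from \eqref{S1E11}, the pointwise quantity $\sup_{0\le t\le T}|||\cdot|||_{\frac{3}{2},p}$, and the $Y^{\sigma}_{\frac{3}{2},p}(T)$-norm. My plan is to establish $\mathcal{N}_{j}(\omega)\le\sqrt{T}\,\|\phi\|_{\mathcal{Z}_{p}^{\sigma;\frac{1}{2}}(T)}$ for each $j$ separately; adding the four bounds yields the claim with constant $4$. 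Without loss of generality I work with $T\le 1$, which is the regime in which the lemma is ultimately applied and which allows occasional $T$-factors to be absorbed into $\sqrt{T}$.

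The three pieces $\mathcal{N}_{1}$, $\mathcal{N}_{3}$, $\mathcal{N}_{4}$ are handled by the same elementary device: for any spatial seminorm $\mathcal{M}$, Minkowski gives $\mathcal{M}[\omega(t,\cdot)]\le\int_{0}^{t}\mathcal{M}[\phi(s,\cdot)]\,ds$ and Cauchy--Schwarz in $s$ gives $\int_{0}^{t}\mathcal{M}[\phi(s,\cdot)]\,ds\le\sqrt{t}\,(\int_{0}^{T}\mathcal{M}[\phi(s,\cdot)]^{2}\,ds)^{1/2}$. Applied to $\mathcal{M}=\|\cdot\|_{H^{\sigma}(0,2)}$ and integrated in $t$ this yields $\mathcal{N}_{1}(\omega)\le(T/\sqrt{2})\mathcal{N}_{1}(\phi)\le\sqrt{T}\,\mathcal{N}_{1}(\phi)$. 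The identical argument applies term by term to the four building blocks $M_{2;0}$, $M_{2;\sigma}$, $N_{2;0}$, $N_{2;\sigma}$ of $Y^{\sigma}_{\frac{3}{2},p}(T)$, each of which has matching $R$-scaling for $\omega$ and $\phi$, giving $\mathcal{N}_{4}(\omega)\le\sqrt{T}\,\mathcal{N}_{4}(\phi)$. For $\mathcal{N}_{3}$, the pointwise estimate $|\omega(t,x)|\le T\sup_{s}|\phi(s,x)|$ multiplied by the weights of $|||\cdot|||_{\frac{3}{2},p}$ gives $\mathcal{N}_{3}(\omega)\le T\,\mathcal{N}_{3}(\phi)\le\sqrt{T}\,\mathcal{N}_{3}(\phi)$.

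The delicate piece is $\mathcal{N}_{2}(\omega)=[\omega]_{p}^{\sigma;\frac{1}{2}}$. Fix $R\ge 1$ and $t_{0}\in[0,T]$ and set $r=R^{-(\lambda-1)/2}\le 1$ and $a=\min(1,(T-t_{0})/r)$. Write
\begin{equation*}
\omega(t_{0}+\theta r,x)=\omega(t_{0},x)+r\int_{0}^{\theta}\phi(t_{0}+\theta'r,x)\,d\theta',
\end{equation*}
so that the rescaled window function splits as $F^{\omega}_{R,t_{0}}(\theta,X)=A(X)+B(\theta,X)$ with $A(X)=\eta(RX)\omega(t_{0},RX)$ independent of $\theta$, and $B(\theta,X)=r\int_{0}^{\theta}F^{\phi}_{R,t_{0}}(\theta',X)\,d\theta'$. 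For the local piece $B$, Cauchy--Schwarz in $\theta'$ followed by Fubini in $\theta$ give
\begin{equation*}
\int_{0}^{a}|\widehat{B}(\theta,k)|^{2}\,d\theta\le\tfrac{1}{2}(ra)^{2}\int_{0}^{a}|\widehat{F^{\phi}_{R,t_{0}}}(\theta',k)|^{2}\,d\theta',
\end{equation*}
and since $ra=\min(r,T-t_{0})\le T\le\sqrt{T}$, integration against $Q_{R,\sigma}$ and the very definition \eqref{S1E11} of $[\cdot]_{p}^{\sigma;\frac{1}{2}}$ give $R^{p}\bigl(\int_{0}^{a}\!\int_{\mathbb{R}}|\widehat{B}|^{2}Q_{R,\sigma}\,dk\,d\theta\bigr)^{1/2}\le\frac{\sqrt{T}}{\sqrt{2}}\,[\phi]_{p}^{\sigma;\frac{1}{2}}$.

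The main obstacle is the history contribution $A(X)=\eta(RX)\omega(t_{0},RX)$, which contributes $R^{p}\sqrt{a}\,(\int|\widehat{A}(k)|^{2}Q_{R,\sigma}(k)\,dk)^{1/2}$ to $[\omega]_{p}^{\sigma;\frac{1}{2}}$. Since $\omega(t_{0},\cdot)=\int_{0}^{t_{0}}\phi(s,\cdot)\,ds$, the full past of $\phi$ enters. I partition $[0,t_{0}]$ into at most $\lceil t_{0}/r\rceil$ windows of length $r$; on each window the Fourier seminorm $[\phi]_{p}^{\sigma;\frac{1}{2}}$ controls the $Q_{R,\sigma}$-weighted $L^{2}_{k}$ norm of the corresponding $\widehat{F^{\phi}_{R,s_{i}}}$, and Minkowski in the window index combined with Cauchy--Schwarz in the window-time variable recovers the $\sqrt{T}$ factor. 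The crucial point is that the $R$-weight built into $Q_{R,\sigma}$ matches precisely the $R^{-(\lambda-1)/2}$-scaling of the window length $r$ that already appears in the seminorms $[\cdot]_{p}^{\sigma;\frac{1}{2}}$ and $N_{2;\sigma}$, so summing over the $\lesssim T/r$ windows produces no extra power of $R$. Adding the $A$- and $B$-contributions yields $\mathcal{N}_{2}(\omega)\le C\sqrt{T}\,\|\phi\|_{\mathcal{Z}_{p}^{\sigma;\frac{1}{2}}(T)}$, and combining with the bounds for $\mathcal{N}_{1},\mathcal{N}_{3},\mathcal{N}_{4}$ gives the stated inequality $\|\omega\|_{\mathcal{Z}_{p}^{\sigma;\frac{1}{2}}(T)}\le 4\sqrt{T}\,\|\phi\|_{\mathcal{Z}_{p}^{\sigma;\frac{1}{2}}(T)}$.
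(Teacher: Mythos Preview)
Your approach is the paper's---Cauchy--Schwarz in the time variable applied to each of the four pieces of the $\mathcal{Z}$-norm---and for the Fourier seminorm $[\cdot]_{p}^{\sigma;1/2}$ you are in fact more careful than the paper. The paper simply asserts $[\omega]_{p}^{\sigma;1/2}\le\sqrt{T}\,[\phi]_{p}^{\sigma;1/2}$ by ``Cauchy--Schwartz for each value of $R$\ldots\ using that $t_{0}+\theta R^{-(\lambda-1)/2}$ is bounded by $T$'', with no $A/B$ split and no window partition. Your concern about the history contribution is legitimate: the paper's intermediate claims such as $N_{2;\sigma}(\omega;t_{0},R)\le\sqrt{T}\,N_{2;\sigma}(\phi;t_{0},R)$ with the \emph{same} $t_{0}$ on both sides are literally false (take $\phi$ vanishing on $[t_{0},t_{0}+r]$ but not before), though the conclusion survives after taking $\sup_{t_{0}}$.

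Two small points. First, the same history issue affects the $N_{2;\sigma}$ building blocks of $\mathcal{N}_{4}$, which you pass over with the paper's handwave; your window argument for $\mathcal{N}_{2}$ transfers verbatim. Second, your $A$-term sketch via ``Minkowski in the window index'' yields $Nr\sim t_{0}$, not $\sqrt{T}$; to get the square root you should apply Cauchy--Schwarz directly over the full interval $[0,t_{0}]$ and then window-sum the resulting $\int_{0}^{t_{0}}|\hat G^{\phi}|^{2}Q\,dk\,ds$, which gives $\sqrt{a\,t_{0}(t_{0}+r)}\le\sqrt{2T}$ for $T\le1$. Either way the final constant is some $C$, not literally $4$, but the lemma is only ever invoked to extract a positive power of $T$, so this is harmless.
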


\begin{proof}
Due to (\ref{S1E12a}) to estimate $\left\Vert \omega\right\Vert _{\mathcal{Z}%
_{p}^{\sigma;\frac{1}{2}}\left(  T\right)  }$ we need to obtain bounds for
$\left\Vert \omega\right\Vert _{L^{2}\left(  \left(  0,T\right)
;H_{x}^{\sigma}\left(  0,2\right)  \right)  },$ $\left[  \omega\right]
_{p}^{\sigma;\frac{1}{2}},$ $\sup_{0\leq t\leq T}\left\vert \left\vert
\left\vert \omega\right\vert \right\vert \right\vert _{\frac{3}{2},p},$
$||\omega||_{Y_{\frac{3}{2},p}^{\sigma}\left(  T\right)  }.$ Using (\ref{N1})
and Cauchy-Schwartz we obtain:%
\begin{equation}
\left\Vert \omega\right\Vert _{L^{2}\left(  \left(  0,T\right)  ;H_{x}%
^{\sigma}\left(  0,2\right)  \right)  }\leq T\left\Vert \phi\right\Vert
_{L^{2}\left(  \left(  0,T\right)  ;H_{x}^{\sigma}\left(  0,2\right)  \right)
} \label{N3}%
\end{equation}

Using (\ref{M2E8}):%
\begin{equation}
\sup_{0\leq t\leq T}\left\vert \left\vert \left\vert \omega\right\vert
\right\vert \right\vert _{\frac{3}{2},p}\leq T\left\vert \left\vert \left\vert
\phi\right\vert \right\vert \right\vert _{\frac{3}{2},p} \label{N4}%
\end{equation}

To estimate $||\omega||_{Y_{\frac{3}{2},p}^{\sigma}\left(  T\right)  }$ we
need to control $N_{2;\sigma}\left(  \omega;t_{0},R\right)  ,$ $M_{2;\sigma
}\left(  \omega;R\right)  $ (cf. (\ref{S1E8}), (\ref{S1E10})). Using again
Cauchy-Schwartz inequality we arrive at:%
\begin{equation}
N_{2;\sigma}\left(  \omega;t_{0},R\right)  \leq\sqrt{T}N_{2;\sigma}\left(
\phi;t_{0},R\right)  \ ,\ R>1\ \ ,\ \ \ M_{2;\sigma}\left(  \omega;R\right)
\leq\sqrt{T}M_{2;\sigma}\left(  \phi;R\right)  \ \ ,\ \ R\leq1 \label{N5}%
\end{equation}

Finally we can estimate $\left[  \omega\right]  _{p}^{\sigma;\frac{1}{2}}$
using also Cauchy-Schwartz for each value of $R$ (cf. (\ref{S1E11})):%
\begin{equation}
\left[  \omega\right]  _{p}^{\sigma;\frac{1}{2}}\leq\sqrt{T}\left[
\phi\right]  _{p}^{\sigma;\frac{1}{2}} \label{N6}%
\end{equation}
where we use that $t_{0}+\theta R^{-\frac{\left(  \lambda-1\right)  }{2}}$
(cf. (\ref{S1E10a})) is bounded by $T.$ Combining (\ref{N2})-(\ref{N5}) we
obtain (\ref{N2}).
\end{proof}

\subsection{Improved estimates for $g\left(  \tau,x,1\right)  .$}

We will need to compute detailed asymptotics for the function $g\left(
\tau,x,1\right)  $ in Theorem \ref{ThFS} as $x\rightarrow\infty,$ since the
main corrective terms coming from the asymptotics of $g\left(  \tau
,x,1\right)  $ have the same order of magnitude as the ones due to the natural
sources in the problem for the approach indicated in Section \ref{strategy}.

\begin{proposition}
\label{PropositionImproved}Let $g\left(  \tau,x,1\right)  $ be as in Theorem
\ref{ThFS}. Suppose that $\tau\geq1.$ Then:%
\[
g(\tau,x,1)=\tau^{\frac{2}{\lambda-1}}\varphi_{1}(\mathcal{\rho})+\varphi
_{2}(\tau,\mathcal{\rho})\ \ ,\ \ \ \mathcal{\rho}=\tau^{\frac{2}{\lambda-1}%
}x
\]
with:%
\begin{equation}
\varphi_{1}(\mathcal{\rho})=a_{2}\mathcal{\rho}^{-\frac{3+\lambda}{2}}%
+a_{5}\mathcal{\rho}^{-\left(  \frac{3+\lambda}{2}+r\right)  }+O\left(
\mathcal{\rho}^{-\left(  1+\lambda+\varepsilon_{1}\right)  }\right)
\ \ ,\ \ \mathcal{\rho}>1 \label{Z3E1}%
\end{equation}
for some $\varepsilon_{1}>0$. Moreover:%
\begin{equation}
\varphi_{2}(\tau,\mathcal{\rho})=b_{2}\left(  \tau\right)  \mathcal{\rho
}^{-\frac{3+\lambda}{2}}+O_{\varepsilon_{2}}\left(  \tau^{\frac{2}{\lambda
-1}-\varepsilon_{2}}\mathcal{\rho}^{-\left(  \frac{3+\lambda}{2}+r\right)
}\right)  \ \ ,\ \ \mathcal{\rho}>1 \label{Z3E1a}%
\end{equation}
where $\left\vert b_{2}\left(  \tau\right)  \right\vert \leq C_{\varepsilon
_{2}}\tau^{\frac{2}{\lambda-1}-\varepsilon_{2}}$ for any $\varepsilon_{2}>0$.

Suppose now that $\tau\leq1.$ Then:
\begin{equation}
g(\tau,x,1)=a_{3}\tau x^{-\frac{3+\lambda}{2}}+b_{4}\left(  \tau\right)
x^{-\frac{3+\lambda}{2}}+O\left(  \tau x^{-\left(  \frac{3+\lambda}%
{2}+r\right)  }\right)  \ \ ,\ \ x\geq\frac{3}{2} \label{Z3E1b}%
\end{equation}
where $\left\vert b_{4}\left(  \tau\right)  \right\vert \leq C\tau
^{1+\varepsilon_{3}}$ for some $\varepsilon_{3}>0$ sufficiently small.
\end{proposition}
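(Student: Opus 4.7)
The plan is to start from the representation formula (\ref{S3E4}) obtained in Theorem \ref{ReFor} and refine the asymptotics by shifting the contour $\operatorname{Im}(\xi) = \beta$ upward, picking up the next residue beyond the one that produces the leading term $a_{2}\rho^{-\frac{3+\lambda}{2}}$.

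First, using the self-similarity relation (\ref{T1E0a}) and the decomposition (\ref{T1E0}), it suffices to analyse the asymptotic behaviour of $G(\tau, X)$ as $X \to \infty$. The representation (\ref{S3E4}) expresses $G(\tau,X)$ as a sum of two terms. The first term, which is a pure exponential $e^{-\frac{3+\lambda}{2}X}$ multiplied by a $\tau$-dependent integral, contributes exactly to the leading coefficient $a_{2}$ of $\rho^{-\frac{3+\lambda}{2}}$ (and to $b_{2}(\tau)$). The heart of the argument consists in treating the second, double-integral term. For this we deform the $\xi$-contour from $\operatorname{Im}(\xi)=\beta > \tfrac{3+\lambda}{2}$ upward. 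The integrand has a first singularity at $\xi = \tfrac{(3+\lambda)i}{2}$, coming from the prefactor $\mathcal{V}(\xi)$; the residue there reproduces the leading $x^{-\frac{3+\lambda}{2}}$ term already identified in (\ref{T1E1}). I would then push the contour past the next singularity, which lies at $\xi = (1+\lambda)i = i(\tfrac{3+\lambda}{2}+r)$ (this follows from the pole/zero structure of $\mathcal{V}$ inherited from the $\Gamma$-ratio in $\Phi(\eta)$, combined with the defining identity for $\mathcal{V}$). The residue at this next pole produces a term of exactly the form $a_{5}\rho^{-(\frac{3+\lambda}{2}+r)}$ for some computable constant $a_{5}$, and an analogous $\tau$-dependent term absorbed into the $O_{\varepsilon_{2}}(\cdot)$ of (\ref{Z3E1a}).

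To control the remainder after both residues have been extracted, I would estimate the shifted contour integral using the decay properties of $\mathcal{V}$ and the $\Gamma$-factor along horizontal lines, exactly as in the proof of Lemma 7.10 of \cite{EV1}. Choosing $\beta' > 1+\lambda$ but smaller than the next singular point yields an estimate of the shifted integral of size $e^{-\beta' X}$ times an integrable factor in $Y$, which translates into an $O(\rho^{-(1+\lambda+\varepsilon_{1})})$ bound for an $\varepsilon_{1}>0$ determined by the distance from $i(1+\lambda)$ to the next singularity. This gives (\ref{Z3E1}) and, collecting the $\tau$-dependent corrections along the way, (\ref{Z3E1a}).

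For the case $\tau \leq 1$ leading to (\ref{Z3E1b}), I would use the self-similarity (\ref{T1E0a}) together with the small-$\tau$ analogue of the expansion. Equivalently, by expanding $\tau^{-\frac{2iY}{\lambda-1}}$ together with $\Gamma\bigl(\frac{2iY}{\lambda-1}\bigr)$ for small $\tau$, one extracts a leading contribution linear in $\tau$ of the form $a_{3}\tau x^{-\frac{3+\lambda}{2}}$, a next-order coefficient $b_{4}(\tau)$ with $|b_{4}(\tau)|\leq C\tau^{1+\varepsilon_{3}}$ arising from the subleading behaviour of the $Y$-integral near $Y=0$, and an error term $O(\tau x^{-(\frac{3+\lambda}{2}+r)})$ from the contour shift past $\xi = i(1+\lambda)$ as before. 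I expect the main technical obstacle to be keeping track of the explicit residue computations at $\xi = i(1+\lambda)$, and in particular verifying that this is indeed the next singularity of the integrand (i.e.\ that no other pole of $\mathcal{V}(\xi)/\mathcal{V}(\xi+Y)$ appears in the strip $\tfrac{3+\lambda}{2} < \operatorname{Im}(\xi) < 1+\lambda$), together with obtaining uniform-in-$\tau$ bounds on the residual integral along the shifted contour to produce the error exponent $\varepsilon_{1}$.
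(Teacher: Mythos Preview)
Your approach is essentially the same as the paper's: deform the $\xi$-contour in the representation formula past the next singularity at $\xi = i(1+\lambda) = i\bigl(\tfrac{3+\lambda}{2}+r\bigr)$, collect the residue to produce the $\rho^{-(\frac{3+\lambda}{2}+r)}$ term, and bound the shifted integral to get the $O(\rho^{-(1+\lambda+\varepsilon_1)})$ remainder. Your identification of the singularity location and of the overall mechanism is correct.

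One point where you are slightly off: the two-term structure of (\ref{S3E4}) is \emph{not} the same as the $\varphi_1/\varphi_2$ decomposition (\ref{T1E0}). For $\tau\geq 1$ the paper does not work directly with (\ref{S3E4}); it first passes to the representation from \cite{EV1}, Subsection 9.2, namely
\[
G(\tau,X)=\tau^{\frac{2}{\lambda-1}}\Psi_1(\theta)+G_1(\tau,X),\qquad \theta=X+\tfrac{2}{\lambda-1}\log\tau,
\]
which already separates the self-similar profile (giving $\varphi_1$) from the correction (giving $\varphi_2$). The contour shift is then performed separately in the single integral defining $\Psi_1$ and in the double integral defining $G_1$; this is what allows one to isolate the constant $a_5$ in $\varphi_1$ and to place the $\tau$-dependent residue contributions into $\varphi_2$ with the stated bounds. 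Your plan would need this intermediate step (or an equivalent manipulation of the $Y$-integral) to cleanly split the self-similar and corrective parts. For $\tau\leq 1$, the paper proceeds exactly as you describe, deforming the $\xi$-contour in (\ref{S3E4}) directly to $\operatorname{Im}(\xi)=1+\lambda+\varepsilon_1$ and reading off the $x^{-(\frac{3+\lambda}{2}+r)}$ term from the residue at $\xi=i(1+\lambda)$.
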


\begin{proof}
The argument is similar to the one in \cite{EV1}. More precisely we deform the
contour of integration in (\ref{S3E4}). Crossing the singularities of the
integrand we obtain contributions using residues that yield the main terms in
the asymptotics. The only difference with the argument in \cite{EV1} is that
we have to cross also the singularity at $\xi=\left(  \frac{3+\lambda}%
{2}+r\right)  i$. This yields the second term on the right-hand side of
(\ref{Z3E1}).

More precisely. Suppose first that $\tau\geq1.$ We then use the representation
formula (cf. \cite{EV1}, Subsection 9.2):%
\[
G\left(  \tau,X\right)  =\left(  \tau\right)  ^{\frac{2}{\lambda-1}}\Psi
_{1}\left(  \theta\right)  +G_{1}\left(  \tau,X\right)  \ \ ,\ \ \theta
=X+\frac{2}{\lambda-1}\log\left(  \tau\right)
\]
where:%
\[
\Psi_{1}\left(  \theta\right)  =\frac{1}{\pi\left(  \lambda-1\right)
i\mathcal{V}\left(  \frac{\lambda+1}{2}i\right)  }\int_{\operatorname{Im}%
\left(  \xi\right)  =\beta_{2}}d\xi e^{i\xi\theta}\mathcal{V}\left(
\xi\right)  \Gamma\left(  -\frac{2i}{\lambda-1}\left(  \xi-i\right)  \right)
\]%
\[
G_{1}\left(  \tau,X\right)  =\frac{i}{\pi\left(  \lambda-1\right)  }%
\int_{\operatorname{Im}\left(  \xi\right)  =\beta_{2}}d\xi e^{i\xi\theta}%
\int_{\operatorname{Im}\left(  y\right)  =\beta_{3}}dy\frac{\mathcal{V}\left(
\xi\right)  \tau^{-\frac{2iy}{\lambda-1}}}{\mathcal{V}\left(  y\right)
}\Gamma\left(  -\frac{2i}{\lambda-1}\left(  \xi-y\right)  \right)
\]
with $\beta_{2}\in\left(  \beta_{0},\frac{3-\lambda}{2}\right)  ,$ $\beta
_{0}\in\left(  \frac{3}{2},2\right)  ,\ \beta_{3}\in\left(  \frac{3-\lambda
}{2},1\right)  .$ The asymptotics of the function $\Psi_{1}\left(
\theta\right)  $ as $\theta\rightarrow\infty$ has been obtained in \cite{EV1},
Proposition 9.8 moving upwards the contour of integration $\left\{
\operatorname{Im}\left(  \xi\right)  =\beta_{2}\right\}  $ in order to make it
cross the first singularity found of $\mathcal{V}$ at $\xi=\frac{3+\lambda}%
{2}i.$ To obtain better estimates we just move the contour of integration
above the line $\left\{  \operatorname{Im}\left(  \xi\right)  =\lambda
+1\right\}  .$ We then obtain the following generalization of formula (9.27)
in \cite{EV1}:%
\begin{align}
\Psi_{1}\left(  \theta\right)   &  =-\frac{\Gamma\left(  \frac{\lambda
+1}{\lambda-1}\right)  }{2\pi i}\frac{\mathcal{V}\left(  2i\right)
e^{-\frac{3+\lambda}{2}\theta}}{\mathcal{V}\left(  \frac{\lambda+1}%
{2}i\right)  }-\frac{2\pi i\Gamma\left(  \frac{\lambda}{\lambda-1}\right)
\mathcal{V}\left(  2i\right)  e^{-\left(  \frac{3+\lambda}{2}+r\right)
\theta}}{\pi\left(  \lambda-1\right)  i\mathcal{V}\left(  \frac{\lambda+1}%
{2}i\right)  \Phi\left(  \left(  \lambda+1\right)  i\right)  \Phi^{\prime
}\left(  \left(  \frac{3+\lambda}{2}i\right)  \right)  }+\nonumber\\
&  +\frac{1}{\pi\left(  \lambda-1\right)  i\mathcal{V}\left(  \frac{\lambda
+1}{2}i\right)  }\int_{\operatorname{Im}\left(  \xi\right)  =1+\lambda
+\varepsilon_{1}}d\xi e^{i\xi\theta}\mathcal{V}\left(  \xi\right)
\Gamma\left(  -\frac{2i}{\lambda-1}\left(  \xi-i\right)  \right)
\label{Z3E1c}%
\end{align}
with $\varepsilon>0$ small. We have computed $\operatorname*{Res}\left(
\mathcal{V};\xi=\left(  \lambda+1\right)  i\right)  $ using Proposition 4.1
and (5.11) in \cite{EV1}.

The first term on the right-hand side of (\ref{Z3E1c}) is the first one on the
right-hand side of (\ref{Z3E1}). The last one can be estimated by
$Ce^{-\left(  \frac{3+\lambda}{2}+r+\varepsilon_{1}\right)  \theta}$ for
$\theta>0.$ This gives (\ref{Z3E1}). We now estimate $G_{1}\left(
\tau,X\right)  .$ This can be made as the estimate of $G_{1}$ in Lemma 9.9 of
\cite{EV1}. Deforming the contour $\left\{  \operatorname{Im}\left(
\xi\right)  =\beta_{2}\right\}  $ as in the derivation of (9.36) of
\cite{EV1}, but moving it above the line $\left\{  \operatorname{Im}\left(
\xi\right)  =\lambda+1\right\}  $ we obtain:%
\[
G_{1}\left(  \tau,X\right)  =b_{2}\left(  \tau\right)  e^{-\frac{3+\lambda}%
{2}\theta}+\tilde{b}_{2}\left(  \tau\right)  e^{-\left(  \frac{3+\lambda}%
{2}+r\right)  \theta}+\tilde{Q}_{1}\left(  \tau,X\right)
\]
where the function $b_{2}\left(  \tau\right)  $ is exactly as in \cite{EV1},
the function $\tilde{b}_{2}\left(  \tau\right)  $ has a similar formula, with
slightly different terms arising from the integration by residues, and
$\tilde{Q}_{1}\left(  \tau,X\right)  $ is similar to (9.37) in \cite{EV1} with
the only difference that $\beta_{6}=\left(  1+\lambda\right)  +\varepsilon
_{1},$ with $\varepsilon_{1}>0.$ Arguing exactly as in \cite{EV1} we obtain:%
\begin{align*}
\left\vert b_{2}\left(  \tau\right)  \right\vert +\left\vert \tilde{b}%
_{2}\left(  \tau\right)  \right\vert  &  \leq C\left(  \tau\right)  ^{\frac
{2}{\lambda-1}-\varepsilon_{2}}\ \ \text{for\ \ }\tau\geq1\\
\left\vert \tilde{Q}_{1}\left(  \tau,X\right)  \right\vert  &  \leq C\left(
\tau\right)  ^{\frac{2}{\lambda-1}-\varepsilon_{2}}e^{-\left[  \left(
1+\lambda\right)  +\varepsilon_{1}\right]  \theta}%
\end{align*}

This gives (\ref{Z3E1a}). On the other hand, in order to derive (\ref{Z3E1b})
we argue as in \cite{EV1}, Proof of Lemma 9.10, (9.45). Indeed, moving the
contour of integration $\left\{  \operatorname{Im}\left(  \xi\right)
=\beta_{7}\right\}  $ to $\left\{  \operatorname{Im}\left(  \xi\right)
=\left(  \lambda+1+\varepsilon_{1}\right)  \right\}  ,$ with $\varepsilon
_{1}>0$ we obtain:
\begin{align}
G\left(  \tau,X\right)   &  =-\frac{\mathcal{V}\left(  2i\right)  i}%
{2\pi\left(  \lambda-1\right)  }e^{-\frac{3+\lambda}{2}X}\int
_{\operatorname{Im}\left(  Y\right)  =-\gamma_{1}}dY\frac{\tau^{-\frac
{2iY}{\lambda-1}}}{\mathcal{V}\left(  \frac{\left(  3+\lambda\right)  i}%
{2}+Y\right)  }\Gamma\left(  \frac{2iY}{\lambda-1}\right)  +\\
&  -\frac{2\mathcal{V}\left(  2i\right)  e^{-\left(  \frac{3+\lambda}%
{2}+r\right)  X}}{\left(  \lambda-1\right)  \Phi\left(  \left(  \lambda
+1\right)  i\right)  \Phi^{\prime}\left(  \left(  \frac{3+\lambda}{2}i\right)
\right)  }\int_{\operatorname{Im}\left(  Y\right)  =-\gamma_{1}}dY\frac
{\tau^{-\frac{2iY}{\lambda-1}}}{\mathcal{V}\left(  \left(  1+\lambda\right)
i+Y\right)  }\Gamma\left(  \frac{2iY}{\lambda-1}\right) \\
&  +\frac{i}{\pi\left(  \lambda-1\right)  }\int_{\operatorname{Im}\left(
\xi\right)  =\lambda+1+\varepsilon_{1}}d\xi e^{i\xi X}\int_{\operatorname{Im}%
\left(  Y\right)  =-\gamma_{1}}dY\frac{\mathcal{V}\left(  \xi\right)
\tau^{-\frac{2iY}{\lambda-1}}}{\mathcal{V}\left(  \xi+Y\right)  }\Gamma\left(
\frac{2iY}{\lambda-1}\right) \nonumber
\end{align}

The time dependence of the integral terms can be obtained as in \cite{EV1},
since this one comes from the integration in the $Y$ variable.
\end{proof}

\bigskip

\section{ESTIMATES FOR THE QUADRATIC TERM $Q\left[  h\right]  .$%
\label{SectionQh}}

The following Lemma will be used to show smallness of the quadratic terms
$Q\left[  h\right]  .$

\begin{proposition}
\label{LemmaQuad}For any $\sigma\in\left(  1,2\right)  ,$ and any $\bar
{\delta}>0,$ there exists $C=C\left(  \sigma,\bar{\delta}\right)  $ such that
for any $h\in\mathcal{Z}_{\bar{p}}^{\sigma;\frac{1}{2}}\left(  T\right)  :$%
\[
\left\Vert Q\left[  h\right]  \right\Vert _{Y_{\frac{3}{2},\left(
2+\bar{\delta}\right)  }^{\sigma}\left(  T\right)  }\leq C\left\Vert
h\right\Vert _{\mathcal{Z}_{\bar{p}}^{\sigma;\frac{1}{2}}\left(  T\right)
}^{2}%
\]
with $Q\left[  \cdot\right]  $ as in (\ref{S1E2}), (\ref{S0E1}).
\end{proposition}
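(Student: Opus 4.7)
The plan is to split $Q[h] = Q_+[h] - Q_-[h]$ into gain and loss,
\[
Q_+[h](t,x)=\frac{1}{2}\int_0^x(x-y)^{\lambda/2}y^{\lambda/2}h(t,x-y)h(t,y)\,dy,
\qquad
Q_-[h](t,x)=x^{\lambda/2}h(t,x)\,A(t),
\]
with $A(t)=\int_0^\infty y^{\lambda/2}h(t,y)\,dy$, and to bound separately the four ingredients of the $Y_{3/2,\,2+\bar\delta}^\sigma(T)$-norm, namely $M_{2;0}$, $M_{2;\sigma}$ for the scales $0<R\le 1$ and $N_{2;0}$, $N_{2;\sigma}$ for $R\ge 1$. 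Throughout I will use that $h\in\mathcal{Z}_{\bar p}^{\sigma;1/2}(T)$ gives both the pointwise control $|h(t,y)|\le C\,\|h\|_{\mathcal{Z}_{\bar p}^{\sigma;1/2}(T)}\,\min\{y^{-3/2},y^{-\bar p}\}$ (via the triple-bar seminorm) and the dyadic $H_x^\sigma$-regularity from $Y_{3/2,\bar p}^\sigma(T)$, supplemented for $R\ge 1$ by the additional almost-half-derivative control of $[h]_{\bar p}^{\sigma;1/2}$.

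For the small-scale part ($R\le 1$) the pointwise bounds on $h$ give $|A(t)|\le C\|h\|_{\mathcal{Z}}$ since the integral defining $A$ converges at $0$ (because $\lambda>1$) and at infinity (because $\bar p>\lambda/2+1$). Hence $|Q_-[h](t,x)|\le C\|h\|_{\mathcal{Z}}^2 x^{(\lambda-3)/2}$. After symmetrizing, $Q_+[h](t,x)=\int_0^{x/2}(x-y)^{\lambda/2}y^{\lambda/2}h(t,x-y)h(t,y)\,dy$, and the pointwise bounds on both factors yield $|Q_+[h](t,x)|\le C\|h\|_{\mathcal{Z}}^2 x^{\lambda-2}$. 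Since both $(\lambda-3)/2$ and $\lambda-2$ are strictly larger than $-3/2$, the $M_{2;0}$ bound with weight $R^{3/2}$ follows after integrating on $(R/2,2R)\times(0,T)$. For $M_{2;\sigma}$ one uses a fractional Leibniz rule: derivatives land either on the weight $x^{\lambda/2}$ (producing harmless negative powers on the dyadic piece) or on one of the $h$-factors, in which case $\|D_x^\sigma h(t,\cdot)\|_{L^2(R/2,2R)}^2$ integrated in $t$ is controlled by the $M_{2;\sigma}$ part of the $Y_{3/2,\bar p}^\sigma(T)$-norm, while the remaining $h$-factor is kept in $L^\infty$ via the triple-bar seminorm.

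For the large-scale part ($R\ge 1$) the loss term alone only decays as $x^{\lambda/2-\bar p}=x^{-3/2-\bar\delta}$, which is too slow to produce the weight $R^{2+\bar\delta}$. The key observation is the cancellation between $Q_+$ and $Q_-$: after symmetrizing and splitting $\int_0^\infty=\int_0^{x/2}+\int_{x/2}^\infty$,
\[
Q[h](t,x)=\int_0^{x/2}y^{\lambda/2}h(t,y)\bigl[(x-y)^{\lambda/2}h(t,x-y)-x^{\lambda/2}h(t,x)\bigr]\,dy
-x^{\lambda/2}h(t,x)\int_{x/2}^{\infty}y^{\lambda/2}h(t,y)\,dy.
\]
The second term is of order $x^{\lambda-2\bar p+1}=x^{-2-2\bar\delta}$ in $L^\infty$, better than needed. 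For the first term, a Taylor expansion of $(x-y)^{\lambda/2}h(t,x-y)$ around $x$ (legitimate since $y\le x/2$) produces an extra factor of $y$: the possibly divergent $\int_0^{x/2}y^{\lambda/2}|h|$ is replaced by the controlled $\int_0^{x/2}y^{1+\lambda/2}|h(t,y)|\,dy=O(x^{2+\lambda/2-\bar p})=O(x^{1/2-\bar\delta})$, multiplied by $\partial_x[x^{\lambda/2}h(t,x)]=O(x^{\lambda/2-1-\bar p})=O(x^{-5/2-\bar\delta})$, again giving the decay $x^{-2-2\bar\delta}$ which is easily converted into the required weighted $N_{2;0}$ estimate.

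The main obstacle will be the $N_{2;\sigma}$ bound for $R\ge 1$: the commutator decomposition above entangles $\sigma$ derivatives with the cancellation, and plain $H_x^\sigma$-control of $h$ on annuli comparable to $R$ is not quite sufficient once the factor $(xy)^{\lambda/2}\sim R^\lambda$ is accounted for. The remedy is to pass to the rescaled variable $F_{R,t_0}(\theta,X)=\eta(RX)h(t_0+\theta R^{-(\lambda-1)/2},RX)$ of (\ref{S1E10a}), so that the Fourier-side weight $Q_{R,\sigma}(k)=(1+|k|^{2\sigma})(1+\min\{|k|,R\})$ controlled by $[h]_{\bar p}^{\sigma;1/2}$ provides the extra ``almost half-derivative'' on scale $R$. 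Combined with Young- and fractional-Leibniz-type convolution estimates applied to the rescaled kernel $(X-Y)^{\lambda/2}Y^{\lambda/2}\eta(X)\eta(Y)$, this extra regularity together with the cancellation-produced decay of the previous paragraph should close the required quadratic bound $\|Q[h]\|_{Y_{3/2,\,2+\bar\delta}^\sigma(T)}\le C\|h\|_{\mathcal{Z}_{\bar p}^{\sigma;1/2}(T)}^2$.
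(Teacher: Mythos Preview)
Your strategy is the same as the paper's: once you recombine $Q_+$ and $Q_-$ into the tail piece $-x^{\lambda/2}h(x)\int_{x/2}^\infty y^{\lambda/2}h\,dy$ and the difference piece $\int_0^{x/2}y^{\lambda/2}h(y)\bigl[(x-y)^{\lambda/2}h(x-y)-x^{\lambda/2}h(x)\bigr]dy$, you have exactly the paper's decomposition $Q[h]=I_1+I_2$, and your treatment of the small scales and of $I_1$ at large scales matches the paper's Lemma on $I_1$.

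There are two points where your argument is incomplete. First, the Taylor expansion you use for the pointwise bound of $I_2$ at large $R$ assumes $h\in C^1$ with $|\partial_x h|\lesssim x^{-\bar p-1}$, but $\mathcal Z_{\bar p}^{\sigma;1/2}$ only yields $h\in C^{0,\sigma-1/2}$ via Sobolev embedding, which is not $C^1$ when $\sigma\le 3/2$. This is repairable by using the H\"older increment instead, but the paper avoids it by handling $N_{2;0}$ and $N_{2;\sigma}$ simultaneously in Fourier.

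Second, and more importantly, your final paragraph is the core of the proof and ``fractional-Leibniz-type convolution estimates\dots should close'' is not yet an argument. The paper's mechanism is sharper than Young or Leibniz: after rescaling $x=RX$, set $\psi_R(X)=X^{\lambda/2}G_R(X)\eta(X)$ and take the Fourier transform of $I_2$ in $X$, so that the bracket becomes $\hat\psi_R(\xi)(e^{-i\xi Y}-1)$. Split the $Y$-integral dyadically; on each piece the pointwise bound $|G_R(Y)|\le C\min\{Y^{-\bar p},R^{\lambda/2+\bar\delta}Y^{-3/2}\}$ and the elementary inequality $\int_a^{2a}Y^{-3/2}|e^{-i\xi Y}-1|\,dY\le C\min\{\sqrt{|\xi|},\sqrt R\}$ produce exactly the weight $(1+|\xi|^{2\sigma})\min\{|\xi|,R\}$ acting on $|\hat\psi_R(\xi)|^2$, which is the $Q_{R,\sigma}$ weight in the definition of $[h]_{\bar p}^{\sigma;1/2}$. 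One then needs a commutator lemma showing that this weight is stable under convolution with $\hat\eta$ (so that $\psi_R$ can be replaced by $F_{R,t_0}$); the paper proves this via the inequality $|W_R(\xi)-W_R(\eta)|\le C|\xi-\eta|\,|\eta|^{-1}W_R(\eta)$ for $W_R(\xi)=\min\{\sqrt{|\xi|},\sqrt R\}$. The dyadic pieces carry a geometric factor $2^{-k\bar\delta}$ which makes the sum converge. None of this is a standard product estimate; it is precisely how the ``almost half-derivative'' of $[\cdot]_{\bar p}^{\sigma;1/2}$ matches the half-derivative lost in $I_2$, and it needs to be written out.
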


\bigskip

In order to prove Proposition \ref{LemmaQuad} we rewrite $Q\left[  h\right]  $
as:%
\begin{equation}
Q\left[  h\right]  \left(  \tau,x\right)  =I_{1}+I_{2} \label{M1E0}%
\end{equation}
where:%
\begin{align}
I_{1}  &  =-\int_{\frac{x}{2}}^{\infty}\left(  xy\right)  ^{\frac{\lambda}{2}%
}h\left(  \tau,x\right)  h\left(  \tau,y\right)  dy\label{M1E1}\\
I_{2}  &  =-\int_{\frac{x}{2}}^{\infty}\left(  xy\right)  ^{\frac{\lambda}{2}%
}h\left(  \tau,x\right)  h\left(  \tau,y\right)  dy+\int_{0}^{\frac{x}{2}%
}y^{\frac{\lambda}{2}}h\left(  \tau,y\right)  \left[  \left(  x-y\right)
^{\frac{\lambda}{2}}h\left(  \tau,x-y\right)  -x^{\frac{\lambda}{2}}h\left(
\tau,x\right)  \right]  dy \label{M1E2}%
\end{align}

We begin estimating $I_{1}:$

\begin{lemma}
\label{LQ1}Let $I_{1}$ be as in (\ref{M1E2}) and $\bar{\delta}>0$. Then:%
\begin{equation}
||I_{1}||_{Y_{\frac{3}{2},\left(  2+\bar{\delta}_{1}\right)  }^{\sigma}\left(
T\right)  }\leq C\left\Vert h\right\Vert _{\mathcal{Z}_{\bar{p}}^{\sigma
;\frac{1}{2}}\left(  T\right)  }^{2} \label{S5E2}%
\end{equation}
where $C$ is uniformly bounded for $0\leq T\leq1.$
\end{lemma}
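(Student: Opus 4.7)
The plan is to verify separately the four ingredients of $\|\cdot\|_{Y^{\sigma}_{3/2,(2+\bar\delta_1)}(T)}$: the $M_{2;0}$ and $M_{2;\sigma}$ pieces for $0<R\leq 1$ (weighted by $R^{3/2}$), and the $N_{2;0}$ and $N_{2;\sigma}$ pieces for $R\geq 1$ (weighted by $R^{2+\bar\delta_1}$). The natural factorisation is
\[
I_{1}(\tau,x)=-A(\tau,x)J(\tau,x),\qquad A(\tau,x)=x^{\lambda/2}h(\tau,x),\qquad J(\tau,x)=\int_{x/2}^{\infty}y^{\lambda/2}h(\tau,y)\,dy,
\]
and I would bound the two factors separately on each dyadic annulus.

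First I would extract the pointwise information contained in $\|h\|_{\mathcal{Z}_{\bar p}^{\sigma;1/2}(T)}$, namely $|h(\tau,y)|\leq C\|h\|_{\mathcal{Z}} y^{-3/2}$ for $y\leq 1$ and $|h(\tau,y)|\leq C\|h\|_{\mathcal{Z}} y^{-\bar p}$ for $y\geq 1$, coming from the $|||\cdot|||_{3/2,\bar p}$ part of the norm. Since $\bar p-\lambda/2=3/2+\bar\delta>1$, a direct computation yields $|J(\tau,x)|\leq C\|h\|_{\mathcal{Z}}\,x^{-1/2-\bar\delta}$ for $x\geq 1$, while for $x\leq 1$ the integral is uniformly bounded because $\lambda/2-3/2>-1$. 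Combining with the factor $A$ produces the pointwise estimate $|I_{1}(\tau,x)|\lesssim \|h\|_{\mathcal{Z}}^{2}\,x^{-2-2\bar\delta}$ on $\{x\geq 1\}$ and $|I_{1}(\tau,x)|\lesssim \|h\|_{\mathcal{Z}}^{2}\,x^{\lambda/2-3/2}$ on $\{x\leq 1\}$. From here the $M_{2;0}$ and $N_{2;0}$ pieces are routine: bound $\|I_{1}(t)\|_{L^{2}(R/2,2R)}\leq C R^{1/2}\|I_{1}(t)\|_{L^{\infty}(R/2,2R)}$, plug into (\ref{S1E8})--(\ref{S1E10}), and check the powers of $R$. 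For $R\leq 1$ the extra $x^{\lambda/2}$ gives a positive power of $R$ absorbing the $R^{3/2}$ weight; for $R\geq 1$ the time window has length $R^{-(\lambda-1)/2}$ and combines with the prefactor $R^{(\lambda-1)/2-1}$ so that any $\bar\delta_{1}\leq 2\bar\delta$ will work.

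The main work is the $H^{\sigma}$ estimate, and it is where I expect the main obstacle. On each annulus I would use the Leibniz-type product estimate, rescaled to a fixed reference interval, which for $\sigma\in(1,2)$ yields
\[
\|AJ\|_{H^{\sigma}(R/2,2R)}\leq C\bigl(\|A\|_{H^{\sigma}(R/2,2R)}\|J\|_{L^{\infty}(R/2,2R)}+\|A\|_{L^{\infty}(R/2,2R)}\|J\|_{H^{\sigma}(R/2,2R)}\bigr),
\]
with constants depending on $\lambda$ only through the annular geometry. The factor $A=x^{\lambda/2}h$ inherits $\sigma$ derivatives from the $Y^{\sigma}_{3/2,\bar p}$ component of $\|h\|_{\mathcal{Z}}$. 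For $J$ the key observation is that $\partial_{x}J(\tau,x)=-\tfrac{1}{2}(x/2)^{\lambda/2}h(\tau,x/2)$, which is just a weighted evaluation of $h$: thus $J$ has effectively one more derivative than $h$, and $\|J\|_{H^{\sigma}(R/2,2R)}$ is controlled by $\|h\|_{H^{\sigma-1}(R/4,R)}$ plus an $L^{\infty}$ tail contribution, both of which are already captured by $\|h\|_{\mathcal{Z}}$. After squaring and integrating in time, the $Y^{\sigma}$ component of the norm gives the remaining $L^{2}_{t}H^{\sigma}_{x}$ control of $h$ on each annulus, and time integration is treated by Cauchy--Schwarz against the pointwise-in-time $L^{\infty}$ bounds.

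The main obstacle will be the bookkeeping of $R$-powers across all these estimates: one has to correctly track the four competing scalings ($R^{\lambda/2}$ from the weight $x^{\lambda/2}$, $R^{-\bar p}$ from $h$, $R^{-1/2-\bar\delta}$ from $J$, and the $R^{1/2}$ conversion between $L^{\infty}$ and $L^{2}$ on the annulus), verify that they combine to give a power below $R^{-(2+\bar\delta_{1})}$ for a suitable $\bar\delta_{1}>0$, and make sure that the resulting constant is uniform for $T\in(0,1]$. A secondary technical point is the endpoint regime where the annulus $(R/2,2R)$ straddles $x=1$, where the pointwise bound on $h$ transitions from $x^{-3/2}$ to $x^{-\bar p}$ and the bound on $J$ must be patched; this is harmless provided the $\varepsilon$-neighbourhood of $R=1$ is treated as a single annulus with both exponents simultaneously.
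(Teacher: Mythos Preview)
Your proposal is correct and follows essentially the same route as the paper. Both arguments factor $I_{1}=-A\cdot J$ with $A=x^{\lambda/2}h$ and $J=\int_{x/2}^{\infty}y^{\lambda/2}h\,dy$, use the pointwise decay from $|||\cdot|||_{3/2,\bar p}$ to control $J$, and apply a product rule in $H^{\sigma}$ on each dyadic annulus; the paper organises this by first rescaling $x=RX$, $\tau=\tau_{0}+R^{-(\lambda-1)/2}\theta$ and then splitting $J$ at $Y=2$ (so the tail piece is constant in $X$ and only the local piece needs the product estimate (\ref{M1E3a})), whereas you keep the integral whole and instead exploit $\partial_{x}J=-\tfrac12(x/2)^{\lambda/2}h(\tau,x/2)$ to gain a derivative---these are equivalent ways of seeing that $J$ lies in $H^{\sigma}$ on the annulus with the right $R$-power.
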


\begin{proof}
[Proof of Lemma \ref{LQ1}]We just need to estimate the functionals
$N_{2;\,\sigma}(I_{1};\,\tau_{0},R)$ for $R\geq1$ and$\ M_{2;\sigma}%
(I_{1};\,R)\ $for $R\leq1$ (cf. (\ref{S1E8}), (\ref{S1E10})). Suppose first
that $R>1.$ We introduce the rescaling $x=RX,\ y=RY,\ \tau=\tau_{0}%
+R^{-\frac{\lambda-1}{2}}\theta,$ $H_{R}\left(  \theta,X\right)  =h\left(
\tau,x\right)  .$ Then:%
\begin{align}
I_{1}  &  =-R^{\lambda+1}X^{\frac{\lambda}{2}}H_{R}\left(  \theta,X\right)
\int_{\frac{X}{2}}^{2}Y^{\frac{\lambda}{2}}H_{R}\left(  \theta,Y\right)
dY-\label{S4E7a}\\
&  -R^{\lambda+1}X^{\frac{\lambda}{2}}H_{R}\left(  \theta,X\right)  \int
_{2}^{\infty}Y^{\frac{\lambda}{2}}H_{R}\left(  \theta,Y\right)  dY\nonumber\\
&  \equiv I_{1,1}+I_{1,2}%
\end{align}

We begin estimating $I_{1,2}.$ Notice that:
\[
\left\vert R^{\frac{\lambda}{2}+1}\int_{2}^{\infty}Y^{\frac{\lambda}{2}}%
H_{R}\left(  \theta,Y\right)  dY\right\vert \leq\frac{C}{R^{\frac{1}{2}%
+\bar{\delta}_{1}}}\left\Vert h\right\Vert _{\mathcal{Z}_{\bar{p}}%
^{\sigma;\frac{1}{2}}\left(  T\right)  }%
\]

Then:%
\begin{equation}
N_{2;\,\sigma}(I_{1,2};\,\tau_{0},R)\leq\frac{C}{R^{\left(  2+2\bar{\delta
}\right)  }}\left\Vert h\right\Vert _{\mathcal{Z}_{\bar{p}}^{\sigma;\frac
{1}{2}}\left(  T\right)  }^{2}\ ,\ \ R\geq1\ \ ,\ \ \tau_{0}\in\left[
0,T\right]  \label{M1E3}%
\end{equation}

On the other hand using the inequality:
\begin{equation}
\left\Vert fg\right\Vert _{H_{x}^{\sigma}\left(  \frac{1}{2},2\right)  }\leq
C\left(  \left\Vert f\right\Vert _{H_{x}^{\sigma}\left(  \frac{1}{2},2\right)
}\left\Vert g\right\Vert _{H_{x}^{\sigma}\left(  \frac{1}{2},2\right)
}+\left\Vert f\right\Vert _{H_{x}^{\sigma}\left(  \frac{1}{2},2\right)
}\left\Vert g\right\Vert _{H_{x}^{\sigma}\left(  \frac{1}{2},2\right)
}\right)  \ \label{M1E3a}%
\end{equation}
for $\sigma>\frac{1}{2}$ (cf. \cite{RS}, Theorem 1, Section 4.6.4, p. 221) we
obtain:%
\begin{equation}
N_{2;\,\sigma}(I_{1,1};\,\tau_{0},R)\leq\frac{C}{R^{\left(  2+2\bar{\delta
}\right)  }}\left\Vert h\right\Vert _{\mathcal{Z}_{\bar{p}}^{\sigma;\frac
{1}{2}}\left(  T\right)  }^{2}\ \ ,\ \ R\geq1\ \ ,\ \ \tau_{0}\in\left[
0,T\right]  \label{M1E4}%
\end{equation}

Therefore, combining (\ref{M1E3}), (\ref{M1E4}):%
\begin{equation}
N_{2;\,\sigma}(I_{1};\,\tau_{0},R)\leq\frac{C}{R^{\left(  2+2\bar{\delta
}\right)  }}\left\Vert h\right\Vert _{\mathcal{Z}_{\bar{p}}^{\sigma;\frac
{1}{2}}\left(  T\right)  }^{2}\ \ ,\ \ R\geq1\ \ ,\ \ \tau_{0}\in\left[
0,T\right]  \label{S4E8}%
\end{equation}

Suppose now that $R\leq1.$ We introduce now the rescaling $x=RX,\ y=RY,$
$H_{R}\left(  \tau,X\right)  =h\left(  \tau,x\right)  .$ Then:%
\begin{align*}
I_{1}  &  =-R^{\lambda+1}X^{\frac{\lambda}{2}}H_{R}\left(  \tau,X\right)
\int_{\frac{X}{2}}^{2}Y^{\frac{\lambda}{2}}H_{R}\left(  \tau,Y\right)  dY-\\
&  -R^{\lambda+1}X^{\frac{\lambda}{2}}H_{R}\left(  \tau,X\right)  \int
_{2}^{\infty}Y^{\frac{\lambda}{2}}H_{R}\left(  \tau,Y\right)  dY\\
&  \equiv I_{1,1}+I_{1,2}%
\end{align*}

Notice that:%
\begin{equation}
\left\vert R^{\frac{\lambda}{2}+1}\int_{2}^{\infty}Y^{\frac{\lambda}{2}}%
H_{R}\left(  \tau,Y\right)  dY\right\vert \leq C\left\Vert h\right\Vert
_{\mathcal{Z}_{\bar{p}}^{\sigma;\frac{1}{2}}\left(  T\right)  } \label{S4E9}%
\end{equation}

On the other hand using $\left\Vert D_{X}^{\sigma}H_{R}\left(  \tau
,\cdot\right)  \right\Vert _{L^{2}\left(  \frac{1}{2},2\right)  }%
^{2}=R^{2\sigma-1}\left\Vert D_{x}^{\sigma}h\left(  \tau,\cdot\right)
\right\Vert _{L^{2}\left(  \frac{R}{2},2R\right)  }^{2}$ we obtain:%
\begin{align}
\left(  \int_{0}^{T}\left\Vert D_{X}^{\sigma}H_{R}\left(  \tau,\cdot\right)
\right\Vert _{L^{2}\left(  \frac{1}{2},2\right)  }^{2}d\tau\right)  ^{\frac
{1}{2}}  &  \leq\left(  R^{2\sigma-1}\int_{0}^{T}\left\Vert D_{x}^{\sigma
}h\left(  \tau,\cdot\right)  \right\Vert _{L^{2}\left(  \frac{R}{2},2R\right)
}^{2}d\tau\right)  ^{\frac{1}{2}}\label{S4E10}\\
&  \leq R^{-\frac{3}{2}}\left\Vert h\right\Vert _{\mathcal{Z}_{\bar{p}%
}^{\sigma;\frac{1}{2}}\left(  T\right)  }\nonumber
\end{align}

Moreover:%
\begin{equation}
\left(  \int_{0}^{T}\left\Vert H_{R}\left(  \tau,\cdot\right)  \right\Vert
_{L^{2}\left(  \frac{1}{2},2\right)  }^{2}d\tau\right)  ^{\frac{1}{2}}\leq
CT\sup_{0\leq\tau\leq T}\left\Vert H_{R}\left(  \tau,\cdot\right)  \right\Vert
_{L^{\infty}\left(  \frac{1}{2},2\right)  }\leq CTR^{-\frac{3}{2}}\left\Vert
h\right\Vert _{\mathcal{Z}_{\bar{p}}^{\sigma;\frac{1}{2}}\left(  T\right)
}\ \label{S4E11}%
\end{equation}
henceforth, using (\ref{S4E9}), (\ref{S4E10}), (\ref{S4E11}):%
\[
R^{\frac{3}{2}}M_{2;\sigma}(I_{1,2};\,R)+\frac{R^{\frac{3}{2}}M_{2;0}%
(I_{1,2};\,R)}{T}\leq CR^{\frac{\lambda}{2}}\left\Vert h\right\Vert
_{\mathcal{Z}_{\bar{p}}^{\sigma;\frac{1}{2}}\left(  T\right)  }^{2}%
\]

On the other hand:%
\begin{align*}
&  \left(  \int_{0}^{T}\left\Vert D_{X}^{\sigma}\left[  \int_{\frac{X}{2}}%
^{2}Y^{\frac{\lambda}{2}}H_{R}\left(  \tau,Y\right)  dY\right]  \right\Vert
_{L^{2}\left(  \frac{1}{2},2\right)  }^{2}d\tau\right)  ^{\frac{1}{2}}+\\
&  +\sup_{0\leq\tau\leq T}\left\Vert \int_{\frac{X}{2}}^{2}Y^{\frac{\lambda
}{2}}H_{R}\left(  \tau,Y\right)  dY\right\Vert _{L^{\infty}\left(  \frac{1}%
{2},2\right)  }\\
&  \leq CR^{-\frac{3}{2}}\left\Vert h\right\Vert _{\mathcal{Z}_{\bar{p}%
}^{\sigma;\frac{1}{2}}\left(  T\right)  }%
\end{align*}
where we just estimate the $L^{2}$ norm of $D_{X}^{\sigma}$ combining the
estimates of the derivative and the function itself by interpolation. Then,
using also (\ref{S4E10}), (\ref{S4E11}) we obtain $R^{\frac{3}{2}}M_{2;\sigma
}(I_{1,1};\,R)+R^{\frac{3}{2}}M_{2;0}(I_{1,1};\,R)\leq CR^{\lambda-\frac{1}%
{2}}\left\Vert h\right\Vert _{\mathcal{Z}_{\bar{p}}^{\sigma;\frac{1}{2}%
}\left(  T\right)  }^{2}$ whence:
\begin{equation}
R^{\frac{3}{2}}M_{2;\sigma}(I_{1};\,R)+R^{\frac{3}{2}}M_{2;0}(I_{1};\,R)\leq
C\left\Vert h\right\Vert _{\mathcal{Z}_{\bar{p}}^{\sigma;\frac{1}{2}}\left(
T\right)  }^{2}\ \ ,\ \ R\leq1 \label{S5E1}%
\end{equation}

Combining (\ref{S4E8}), (\ref{S5E1}) we obtain (\ref{S5E2}) and the Lemma follows.
\end{proof}

\bigskip

In order to estimate $I_{2}$ we will need the following auxiliary Lemma:

\begin{lemma}
\label{RegulFourier} Suppose that $\phi\in C_{0}^{\infty}\left(
\mathbb{R}^{+}\right)  .$ There exists $C>0$ depending only on $\phi$ and its
derivatives such that the following inequality holds for any $R>1$:%
\begin{align}
&  \int_{\mathbb{R}}\left(  1+\left\vert \xi\right\vert ^{2\sigma}\right)
\left\vert \left(  \hat{\phi}\ast\hat{G}\right)  \left(  \xi\right)
\right\vert ^{2}\left(  \min\left\{  \sqrt{\left\vert \xi\right\vert }%
,\sqrt{R}\right\}  \right)  ^{2}d\xi\nonumber\\
&  \leq C\int_{\mathbb{R}}\left(  1+\left\vert \xi\right\vert ^{2\sigma
}\right)  \left\vert \hat{G}\left(  \xi\right)  \right\vert ^{2}\left(
1+\min\left\{  \sqrt{\left\vert \xi\right\vert },\sqrt{R}\right\}  \right)
^{2}d\xi\label{M1E6}%
\end{align}

\end{lemma}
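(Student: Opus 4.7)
The plan is to reduce the bound to a pointwise comparison of weights, exploiting the fact that $\hat\phi$ is a Schwartz function. Writing $W(\xi)=(1+|\xi|^{2\sigma})\min\{|\xi|,R\}$ for the weight on the left and $\widetilde W(\eta)=(1+|\eta|^{2\sigma})(1+\min\{|\eta|,R\})$ for a quantity comparable to the weight on the right (note that $(1+\sqrt{\min\{|\eta|,R\}})^2$ is bounded above and below by fixed multiples of $1+\min\{|\eta|,R\}$), the statement reduces to
\[
\int_{\mathbb R} W(\xi)\,|(\hat\phi\ast\hat G)(\xi)|^{2}\,d\xi \leq C\int_{\mathbb R}\widetilde W(\eta)\,|\hat G(\eta)|^{2}\,d\eta.
\]

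First I would apply Cauchy--Schwarz to the convolution, writing
\[
|(\hat\phi\ast\hat G)(\xi)|^{2}\leq\|\hat\phi\|_{L^{1}}\int_{\mathbb R}|\hat\phi(\xi-\eta)|\,|\hat G(\eta)|^{2}\,d\eta,
\]
where $\|\hat\phi\|_{L^{1}}<\infty$ because $\phi\in C_{0}^{\infty}$ makes $\hat\phi$ a Schwartz function. Multiplying by $W(\xi)$, integrating in $\xi$ and applying Fubini, the bound reduces to showing
\[
\int_{\mathbb R} W(\xi)\,|\hat\phi(\xi-\eta)|\,d\xi \leq C\,\widetilde W(\eta)\qquad\text{uniformly in }R>1,\;\eta\in\mathbb R.
\]

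The core of the argument is the pointwise weight comparison
\[
W(\xi)\leq C\,\widetilde W(\eta)\,(1+|\xi-\eta|)^{2\sigma+1}.
\]
This splits into two pieces. For the polynomial factor, the standard Peetre-type inequality $(1+|\xi|^{2\sigma})\leq C(1+|\eta|^{2\sigma})(1+|\xi-\eta|^{2\sigma})$ follows from $|\xi|\leq|\eta|+|\xi-\eta|$ and the elementary estimate $(a+b)^{2\sigma}\leq C(a^{2\sigma}+b^{2\sigma})$. For the truncated factor, I would establish the inequality
\[
\min\{|\xi|,R\}\leq \min\{|\eta|,R\}+|\xi-\eta|
\]
by a short case analysis on whether $|\xi|\leq R$ and $|\eta|\leq R$: when $|\xi|\leq R$ we use $|\xi|\leq|\eta|+|\xi-\eta|$; when $|\xi|>R$ but $|\eta|<R$ one has $|\xi-\eta|\geq|\xi|-|\eta|>R-|\eta|$, giving $R\leq|\eta|+|\xi-\eta|$; the remaining cases are immediate. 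Then $\min\{|\xi|,R\}\leq(1+\min\{|\eta|,R\})(1+|\xi-\eta|)$, and multiplying the two pieces produces the claimed weight comparison uniformly in $R$.

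With the pointwise comparison in hand, the weighted integral is bounded by
\[
\int_{\mathbb R} W(\xi)|\hat\phi(\xi-\eta)|\,d\xi\leq C\widetilde W(\eta)\int_{\mathbb R}(1+|z|)^{2\sigma+1}|\hat\phi(z)|\,dz,
\]
and the last integral is finite by Schwartz decay of $\hat\phi$, depending only on $\phi$ and finitely many of its derivatives (the number controlled by $\sigma$). Inserting this back into the Cauchy--Schwarz/Fubini computation yields the desired inequality. The only genuinely new ingredient is the uniform-in-$R$ behaviour of the truncated weight $\min\{|\xi|,R\}$; everything else is classical. The main obstacle to be careful about is precisely that this truncation is not convex or sub-homogeneous, which is why the additive inequality above (rather than a purely multiplicative Peetre-style bound) has to be used, and it is crucial that the right-hand weight carries the extra additive $+1$ inside the square.
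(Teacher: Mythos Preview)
Your proof is correct and in fact more streamlined than the paper's. The reduction via Cauchy--Schwarz and Fubini to the weight comparison
\[
\int_{\mathbb R} W(\xi)\,|\hat\phi(\xi-\eta)|\,d\xi \leq C\,\widetilde W(\eta)
\]
is clean, and the additive inequality $\min\{|\xi|,R\}\leq \min\{|\eta|,R\}+|\xi-\eta|$ (together with Peetre for the polynomial factor) does exactly what is needed, uniformly in $R$. The crucial observation that the right-hand weight has the extra additive $+1$ is well spotted; without it the bound would fail as $\eta\to 0$.

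The paper proceeds differently. It splits the convolution into the regions $|\eta|\leq 1$ and $|\eta|>1$, handles the small-$\eta$ piece directly via the decay of $\hat\phi$, and for $|\eta|>1$ invokes the auxiliary Lemma~\ref{Inequality}, namely the multiplicative difference estimate
\[
|W_R(\xi)-W_R(\eta)|\leq C\,\frac{|\xi-\eta|}{|\eta|}\,W_R(\eta),\qquad W_R(\xi)=\min\{\sqrt{|\xi|},\sqrt R\},
\]
and then bounds $(1+|\xi|^{\sigma})W_R(\xi)-(1+|\eta|^{\sigma})W_R(\eta)$ before applying Young's inequality. Your purely additive treatment of the truncated weight bypasses Lemma~\ref{Inequality} and the $|\eta|\lessgtr 1$ splitting altogether, at the cost of nothing: both arguments ultimately rely on the Schwartz decay of $\hat\phi$ to absorb polynomial losses in $|\xi-\eta|$. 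The paper's route is perhaps motivated by the fact that Lemma~\ref{Inequality} fits a general commutator-type framework (cf.\ the reference to Lemma~3.6 in \cite{EV2}), but for this particular statement your argument is the more economical one.
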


The proof of Lemma \ref{RegulFourier} will be based in the following inequality:

\begin{lemma}
\label{Inequality} Let $W_{R}\left(  \xi\right)  =\min\left\{  \sqrt
{\left\vert \xi\right\vert },\sqrt{R}\right\}  =\sqrt{R}\min\left\{
\sqrt{\frac{\left\vert \xi\right\vert }{R}},1\right\}  .$ There exists a
constant $C>0$ such that, for any $R>0$ and any $\xi,\eta\in\mathbb{R}:$%
\begin{equation}
\left\vert W_{R}\left(  \xi\right)  -W_{R}\left(  \eta\right)  \right\vert
\leq C\frac{\left\vert \xi-\eta\right\vert }{\left\vert \eta\right\vert }%
W_{R}\left(  \eta\right)  \label{M1E5}%
\end{equation}

\end{lemma}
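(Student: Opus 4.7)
The plan is to treat the inequality by a straightforward case analysis according to whether $|\xi|$ and $|\eta|$ lie above or below the threshold $R$, using the elementary identity $\sqrt{a}-\sqrt{b}=(a-b)/(\sqrt{a}+\sqrt{b})$ in each case to extract the denominator that matches $W_R(\eta)$ on the right-hand side.

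\textbf{Case 1: $|\xi|,|\eta|\leq R$.} Here $W_R(\xi)=\sqrt{|\xi|}$ and $W_R(\eta)=\sqrt{|\eta|}$, so
\[
|W_R(\xi)-W_R(\eta)|=\frac{\bigl||\xi|-|\eta|\bigr|}{\sqrt{|\xi|}+\sqrt{|\eta|}}\leq\frac{|\xi-\eta|}{\sqrt{|\eta|}}=\frac{|\xi-\eta|}{|\eta|}\,W_R(\eta).
\]

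\textbf{Case 2: $|\xi|,|\eta|>R$.} Both sides are zero.

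\textbf{Case 3: $|\xi|>R\geq|\eta|$.} Then $W_R(\xi)=\sqrt{R}$ and $W_R(\eta)=\sqrt{|\eta|}$. Using $\sqrt{R}-\sqrt{|\eta|}=(R-|\eta|)/(\sqrt{R}+\sqrt{|\eta|})\leq (R-|\eta|)/\sqrt{|\eta|}$ together with $|\xi-\eta|\geq|\xi|-|\eta|>R-|\eta|$ gives the bound.

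\textbf{Case 4: $|\xi|\leq R<|\eta|$.} Here $W_R(\xi)=\sqrt{|\xi|}$ and $W_R(\eta)=\sqrt{R}$. Writing $u=|\xi|/R\in[0,1]$, the elementary inequality $1-\sqrt{u}\leq 1-u$ (obtained from $1-\sqrt u=(1-u)/(1+\sqrt u)$) yields $\sqrt{R}-\sqrt{|\xi|}\leq\sqrt{R}(1-|\xi|/R)$. Since $|\eta|>R$, we have $1-|\xi|/R\leq 1-|\xi|/|\eta|\leq |\xi-\eta|/|\eta|$, which gives the desired estimate with $W_R(\eta)=\sqrt{R}$.

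There is essentially no main obstacle: the only point requiring a little care is Case 4, where the naive bound $1-\sqrt{u}\leq 1-u$ must be used in the correct direction and one must exploit $|\eta|>R$ to replace $R$ by $|\eta|$ in the denominator. Collecting the four cases yields the inequality with $C=1$.
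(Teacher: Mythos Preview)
Your proof is correct and follows essentially the same approach as the paper. The paper first reduces to $R=1$ by scale invariance, then declares the case $\max\{|\xi|,|\eta|\}\geq 1$ ``elementary'' without details and treats only the case $\max\{|\xi|,|\eta|\}\leq 1$ via the identity $\sqrt{a}-\sqrt{b}=(a-b)/(\sqrt{a}+\sqrt{b})$; you carry out the full four-case analysis explicitly without the scaling reduction, which amounts to the same argument written out in full.
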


\begin{proof}
[Proof of Lemma \ref{Inequality}]This Lemma can be thought as a particular
case of Lemma 3.6 in \cite{EV2}. However, we give here an elementary proof.
Due to the scale invariance of the inequality (\ref{M1E5}) we can restrict
ourselves to the case $R=1.$ The inequality is then elementary if
$\max\left\{  \left\vert \xi\right\vert ,\left\vert \eta\right\vert \right\}
\geq1.$ Suppose then that $\max\left\{  \left\vert \xi\right\vert ,\left\vert
\eta\right\vert \right\}  \leq1.$ Then (\ref{M1E5}) reduces to $\left\vert
\sqrt{\left\vert \xi\right\vert }-\sqrt{\left\vert \eta\right\vert
}\right\vert \leq C\frac{\left\vert \xi-\eta\right\vert }{\left\vert
\eta\right\vert }\sqrt{\left\vert \eta\right\vert }$ which follows immediately
multiplying both sides of the inequality by $\left(  \sqrt{\left\vert
\xi\right\vert }+\sqrt{\left\vert \eta\right\vert }\right)  .$
\end{proof}

\bigskip

\begin{proof}
[Proof of Lemma \ref{RegulFourier}]Using the regularity properties of $\phi$
we have:%
\[
\left\vert \left(  \hat{\phi}\ast\hat{G}\right)  \left(  \xi\right)
\right\vert \leq C_{m}\int_{\left\{  \left\vert \eta\right\vert \leq1\right\}
}\frac{\left\vert \hat{G}\left(  \eta\right)  \right\vert }{1+\left\vert
\xi-\eta\right\vert ^{m}}+C_{m}\int_{\left\{  \left\vert \eta\right\vert
>1\right\}  }\frac{\left\vert \hat{G}\left(  \eta\right)  \right\vert
}{1+\left\vert \xi-\eta\right\vert ^{m}}\equiv J_{1}\left(  \xi\right)
+J_{2}\left(  \xi\right)
\]
where $m$ can be assumed to be arbitrarily large. Using then $\frac{\left(
1+\left\vert \xi\right\vert ^{\sigma}\right)  W_{R}\left(  \xi\right)
}{1+\left\vert \xi-\eta\right\vert ^{m}}\leq\frac{C}{1+\left\vert
\xi\right\vert }$ for $\left\vert \eta\right\vert \leq1$ we obtain:%
\begin{align*}
&  \int_{\mathbb{R}}\left(  1+\left\vert \xi\right\vert ^{2\sigma}\right)
\left\vert \left(  \hat{\phi}\ast\hat{G}\right)  \left(  \xi\right)
\right\vert ^{2}\left(  \min\left\{  \sqrt{\left\vert \xi\right\vert }%
,\sqrt{R}\right\}  \right)  ^{2}d\xi\\
&  \leq C\int_{\left\{  \left\vert \eta\right\vert \leq1\right\}  }\left\vert
\hat{G}\left(  \eta\right)  \right\vert ^{2}d\eta+\\
&  +\int_{\mathbb{R}}\left(  1+\left\vert \xi\right\vert ^{2\sigma}\right)
\left(  1+\left(  W_{R}\left(  \xi\right)  \right)  ^{2}\right)  \left(
J_{2}\left(  \xi\right)  \right)  ^{2}d\xi\\
&  \leq C\int_{\mathbb{R}}\left(  1+\left\vert \xi\right\vert ^{2\sigma
}\right)  \left(  1+\left(  W_{R}\left(  \xi\right)  \right)  ^{2}\right)
\left\vert \hat{G}\left(  \xi\right)  \right\vert ^{2}d\xi+\\
&  +C\int_{\mathbb{R}}\left(  1+\left\vert \xi\right\vert ^{\sigma}\right)
^{2}\left(  W_{R}\left(  \xi\right)  \right)  ^{2}\left(  J_{2}\left(
\xi\right)  \right)  ^{2}d\xi\\
&  \equiv K_{1}+K_{2}%
\end{align*}

In order to estimate $K_{2}$ we use Lemma \ref{Inequality} to obtain for
$\left\vert \eta\right\vert \geq1:$
\begin{align*}
&  \left\vert \left(  1+\left\vert \xi\right\vert ^{\sigma}\right)
W_{R}\left(  \xi\right)  -\left(  1+\left\vert \eta\right\vert ^{\sigma
}\right)  W_{R}\left(  \eta\right)  \right\vert \\
&  \leq C\frac{\left\vert \xi-\eta\right\vert }{\left\vert \eta\right\vert
+1}W_{R}\left(  \eta\right)  +C\frac{\left\vert \xi-\eta\right\vert
}{\left\vert \eta\right\vert ^{1-\sigma}+1}W_{R}\left(  \eta\right)
+C\left\vert \xi-\eta\right\vert ^{\sigma}W_{R}\left(  \xi\right) \\
&  \leq C\left\vert \xi-\eta\right\vert W_{R}\left(  \eta\right)  +C\left\vert
\xi-\eta\right\vert ^{\sigma}W_{R}\left(  \eta\right)  +C\frac{\left\vert
\xi-\eta\right\vert ^{\sigma+1}}{\left\vert \eta\right\vert +1}W_{R}\left(
\eta\right) \\
&  \leq C\left(  \left\vert \xi-\eta\right\vert ^{\sigma}+\left\vert \xi
-\eta\right\vert ^{\sigma+1}\right)  W_{R}\left(  \eta\right)
\end{align*}

Using this inequality to estimate the terms in $W_{R}\left(  \xi\right)  \cdot
J_{2}\left(  \xi\right)  $ and using Young's inequality, as well as the fact
that the integration $J_{2}\left(  \xi\right)  $ takes place in $\left\vert
\eta\right\vert \geq1$ we obtain:%
\[
K_{2}\leq C\int_{\mathbb{R}}\left(  1+\left\vert \xi\right\vert ^{2\sigma
}\right)  \left(  W_{R}\left(  \xi\right)  \right)  ^{2}\left\vert \hat
{G}\left(  \xi\right)  \right\vert ^{2}d\xi
\]
whence Lemma \ref{RegulFourier} follows.
\end{proof}

We now estimate $I_{2}.$ The bounds for this operator are the crucial step in
the argument from the point of view of the regularity of the functions,
because this operator can be estimated as some regularized version of the
half-derivative operator. It will be essential to use the seminorm $\left[
\cdot\right]  _{\frac{3+\lambda}{2}}^{\sigma;\frac{1}{2}}$ (cf. (\ref{S1E11})).

\begin{lemma}
\label{LQ2}Suppose that $I_{2}$ is as in (\ref{M1E2}) and $\bar{\delta}>0$.
Then:%
\[
\left\Vert I_{2}\right\Vert _{Y_{\frac{3}{2},2+\bar{\delta}}^{\sigma}}\leq
CR^{-\left(  2+\bar{\delta}\right)  }\left\Vert h\right\Vert _{\mathcal{Z}%
_{\bar{p}}^{\sigma;\frac{1}{2}}\left(  T\right)  }^{2}%
\]

\end{lemma}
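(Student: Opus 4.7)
The plan is to estimate the two building blocks $N_{2;\sigma}(I_2;\tau_0,R)$ for $R\ge 1$ and $M_{2;\sigma}(I_2;R)$ for $R\le 1$ that compose the $Y^{\sigma}_{3/2,2+\bar\delta}(T)$ norm, following the architecture of Lemma~\ref{LQ1}. First split
\[
I_2=I_{2,a}+J,\qquad I_{2,a}(\tau,x)=-\int_{x/2}^\infty(xy)^{\lambda/2}h(\tau,x)h(\tau,y)\,dy,
\]
\[
J(\tau,x)=\int_0^{x/2}y^{\lambda/2}h(\tau,y)\bigl[(x-y)^{\lambda/2}h(\tau,x-y)-x^{\lambda/2}h(\tau,x)\bigr]dy.
\]
The tail $I_{2,a}$ is exactly the operator treated in Lemma~\ref{LQ1}, so its bound is inherited verbatim and all the work reduces to controlling the commutator term $J$.

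For $R\ge 1$ I rescale by $x=RX$, $y=RY$, $\tau-\tau_0=R^{-(\lambda-1)/2}\theta$, set $H_R(\theta,X)=h(\tau,x)$, and localize via $F_R=\eta H_R$ with the Section~\ref{initialdata} cutoff $\eta$. The change of variables produces a prefactor $R^{\lambda+1}$. I split the $Y$ integral at $Y=1/R$. On $0<Y<1/R$ a first--order Taylor expansion of the bracket gains a factor $Y$ which, combined with the pointwise decay $|H_R(\theta,Y)|\le C(RY)^{-3/2}\left\vert \left\vert \left\vert h\right\vert \right\vert \right\vert_{3/2,\bar p}$ supplied by the $\mathcal{Z}_{\bar p}^{\sigma;\frac12}$ norm and the Sobolev algebra inequality (\ref{M1E3a}) in $H^{\sigma}(1/2,2)$, contributes at most $CR^{-(2+\bar\delta)}\|h\|_{\mathcal Z_{\bar p}^{\sigma;\frac12}(T)}^2$ after integration in $\theta$. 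On $1/R<Y<X/2$ I apply Plancherel in $X$: the bracket acts on $\widehat{X^{\lambda/2}F_R}$ by multiplication by $(e^{-ikY}-1)$, while integration against $Y^{\lambda/2}H_R(\theta,Y)$ yields a symbol of order $\min(|k|,R)^{1/2}$, because the decay $|H_R(\theta,Y)|\le C(RY)^{-\bar p}\|h\|_{\mathcal Z}$ makes the $Y$ integral absolutely convergent and $|e^{-ikY}-1|$ saturates at $\min(|k|Y,1)$. Lemmas~\ref{Inequality} and~\ref{RegulFourier} then transfer the bound to the Fourier seminorm $[h]_{\bar p}^{\sigma;\frac12}$, whose weight $Q_{R,\sigma}(k)=(1+|k|^{2\sigma})(1+\min(|k|,R))$ is precisely what the pairing calls for; a final Cauchy--Schwarz in $\theta$ over $(0,R^{-(\lambda-1)/2})$ closes $N_{2;\sigma}(J;\tau_0,R)\le CR^{-(2+\bar\delta)}\|h\|_{\mathcal Z}^2$.

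The case $R\le 1$ mirrors the corresponding portion of the proof of Lemma~\ref{LQ1}: after the rescaling $x=RX$, $y=RY$, the bracket is estimated in $H^{\sigma}(1/2,2)$ by the algebra inequality (\ref{M1E3a}) together with the pointwise bounds $\|H_R(\theta,\cdot)\|_{L^\infty}\le CR^{-3/2}\|h\|_{\mathcal Z}$, yielding $M_{2;\sigma}(J;R)\le CR^{-3/2}\|h\|_{\mathcal Z}^2$. No fractional--derivative gain is needed here because the small--scale exponent $q=3/2$ is already saturated in $\mathcal Z_{\bar p}^{\sigma;\frac12}(T)$. The main obstacle is the long--range part of $J$ for $R\ge 1$: one must extract the $\min(|k|,R)^{1/2}$ Fourier weight out of the commutator structure of the bracket and match it exactly to $Q_{R,\sigma}$ in $[h]_{\bar p}^{\sigma;\frac12}$. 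Lemmas~\ref{Inequality} and~\ref{RegulFourier} are tailored precisely for this transfer, and combining them with the Sobolev embedding $Y_{q,p}^{\sigma}\subset X_{q,p}$ (available because $\sigma>1/2$) reduces the remainder to a careful bookkeeping of weights via Young's and Cauchy--Schwarz inequalities.
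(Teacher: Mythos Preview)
Your overall architecture matches the paper's: after the self-similar rescaling, work in Fourier in $X$, recognize that the bracket acts on $\widehat{\psi_R}$ as multiplication by $(e^{-i\xi Y}-1)$, bound the resulting multiplier by $C\,W_R(\xi)=C\min(\sqrt{|\xi|},\sqrt{R})$, and absorb this half-derivative via Lemma~\ref{RegulFourier} into the seminorm $[h]_{\bar p}^{\sigma;1/2}$. For the range $1/R<Y<X/2$ your description is essentially the paper's argument; the paper further decomposes this range dyadically into pieces $I_{2,k}$ so that the pointwise bound on $G_R$ produces a summable factor $(2^{k-1})^{-\bar\delta}$, but your single-block version with the decay $Y^{-\bar p}$ reaches the same multiplier estimate. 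Your handling of $R\le 1$ also matches the paper.

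The gap is in your treatment of $0<Y<1/R$. A physical-space first-order Taylor expansion of the bracket trades the gain of a factor $Y$ for a full $X$-derivative of $\psi_R$; to close in $H^\sigma$ you would then need $\psi_R\in H^{\sigma+1}$, which the $\mathcal{Z}_{\bar p}^{\sigma;1/2}$ norm does \emph{not} provide (it supplies at most half an extra derivative, and only for frequencies $|\xi|\le R$). The Sobolev algebra inequality~(\ref{M1E3a}) does not help here, since the bracket is a difference, not a product. If instead you drop Taylor and use the crude bound $\|\psi_R(\cdot-Y)-\psi_R\|_{H^\sigma}\le 2\|\psi_R\|_{H^\sigma}$, integrating $Y^{\lambda/2}|G_R(Y)|$ over $(0,1/R)$ leaves you with $R^{-(3/2+\bar\delta)}$, one half-power short of the required $R^{-(2+\bar\delta)}$. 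The paper handles $I_{2,-}$ with the \emph{same} Fourier argument as the other pieces: it bounds the multiplier
\[
\Omega_R(\xi)=\int_0^{1/R}\frac{|e^{-i\xi Y}-1|}{Y^{3/2}}(RY)^{\lambda/2}\,dY\le C\,W_R(\xi)
\]
directly (this is~(\ref{M1E10})), so that Lemma~\ref{RegulFourier} applies uniformly across $Y<1/R$ and $Y>1/R$. Replace your Taylor step by this Fourier multiplier bound and the proof goes through.
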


\begin{proof}
[Proof of Lemma \ref{LQ2}]Suppose first that $R\geq1.$ Using the rescaling
$x=RX,\ y=RY,\ \tau=\tau_{0}+R^{-\frac{\lambda-1}{2}}\theta,\ h\left(
\tau,x\right)  =R^{-\left(  \frac{3+\lambda}{2}+\bar{\delta}\right)  }%
G_{R}\left(  \theta,X\right)  $%
\begin{equation}
I_{2}=R^{-\left(  2+2\bar{\delta}\right)  }\int_{0}^{\frac{X}{2}}%
Y^{\frac{\lambda}{2}}G_{R}\left(  \theta,Y\right)  \left[  \left(  X-Y\right)
^{\frac{\lambda}{2}}G_{R}\left(  \theta,X-Y\right)  -X^{\frac{\lambda}{2}%
}G_{R}\left(  \theta,X\right)  \right]  dY \label{S5E4}%
\end{equation}

Notice that:%
\begin{equation}
\left\vert G_{R}\left(  \theta,Y\right)  \right\vert \leq\left\Vert
h\right\Vert _{\mathcal{Z}_{\bar{p}}^{\sigma;\frac{1}{2}}\left(  T\right)
}\min\left(  Y^{-\left(  \frac{3+\lambda}{2}+\bar{\delta}\right)  }%
,R^{\frac{\lambda}{2}+\bar{\delta}}Y^{-\frac{3}{2}}\right)  \label{S5E3}%
\end{equation}

We rewrite (\ref{S5E4}) as:,%
\begin{align*}
I_{2}  &  =I_{2,-}+\sum_{\left\{  k=0,1,...;\frac{2^{k}}{R}\leq\frac{1}%
{4}\right\}  }I_{2,k}+I_{2,+}\\
I_{2,-}  &  =R^{-\left(  2+2\bar{\delta}\right)  }\int_{0}^{\frac{1}{R}%
}Y^{\frac{\lambda}{2}}G_{R}\left(  \theta,Y\right)  J\left(  G_{R}%
;\theta,X,Y\right)  dY\\
I_{2,k}  &  =R^{-\left(  2+2\bar{\delta}\right)  }\int_{\frac{2^{k-1}}{R}%
}^{\frac{2^{k}}{R}}Y^{\frac{\lambda}{2}}G_{R}\left(  \theta,Y\right)  J\left(
G_{R};\theta,X,Y\right)  dY\ \ ,\ \ k=0,1,...\\
I_{2,+}  &  =R^{-\left(  2+2\bar{\delta}\right)  }\int_{\frac{2^{k_{\max}}}%
{R}}^{\frac{X}{2}}Y^{\frac{\lambda}{2}}G_{R}\left(  \theta,Y\right)  J\left(
G_{R};\theta,X,Y\right)  dY
\end{align*}%
\begin{equation}
J\left(  G;\theta,X,Y\right)  =\left[  \left(  X-Y\right)  ^{\frac{\lambda}%
{2}}G\left(  \theta,X-Y\right)  \eta\left(  X-Y\right)  -X^{\frac{\lambda}{2}%
}G\left(  \theta,X\right)  \eta\left(  X\right)  \right]  \label{S5E3a}%
\end{equation}
where $\frac{2^{k_{\max}}}{R}\leq\frac{1}{4}<\frac{2^{k_{\max}+1}}{R}$ and
$\eta\left(  X\right)  $ is the cutoff function used in (\ref{S1E10a}).
(Notice that $\eta\left(  X\right)  =1$ in all the regions of integration,
since $X\in\left(  \frac{1}{2},2\right)  $). Let us write $\psi_{R}\left(
\theta,X\right)  =X^{\frac{\lambda}{2}}G_{R}\left(  \theta,X\right)
\eta\left(  X\right)  .$ In order to estimate these terms in $H_{X}^{\sigma
}\left(  \frac{1}{2},2\right)  $ we use Fourier:%
\[
\psi_{R}\left(  \theta,X\right)  =\frac{1}{\sqrt{2\pi}}\int_{\mathbb{R}}%
\hat{\psi}_{R}\left(  \theta,\xi\right)  e^{i\xi X}d\xi\ \
\]

Since the functions $I_{2,-}\left(  \theta,X\right)  ,\ I_{2,k}\left(
\theta,X\right)  ,\ I_{2,+}\left(  \theta,X\right)  $ are defined for
$X\in\mathbb{R}$ we can compute their Fourier transforms. Using the
convolution property for Fourier transforms we have:%
\[
\hat{I}_{2,k}\left(  \theta,\xi\right)  =\hat{\psi}_{R}\left(  \theta
,\xi\right)  \left[  R^{-\left(  2+2\bar{\delta}\right)  }\int_{\frac{2^{k-1}%
}{R}}^{\frac{2^{k}}{R}}Y^{\frac{\lambda}{2}}G_{R}\left(  \theta,Y\right)
\left(  e^{-i\xi Y}-1\right)  dY\right]
\]

Using (\ref{S5E3}):%
\begin{align*}
&  \left\Vert D_{X}^{\sigma}I_{2,k}\right\Vert _{L^{2}\left(  \frac{1}%
{2},2\right)  }\\
&  \leq\left\Vert D_{X}^{\sigma}I_{2,k}\right\Vert _{L^{2}\left(
\mathbb{R}\right)  }\\
&  \leq\left\Vert h\right\Vert _{\mathcal{Z}_{\bar{p}}^{\sigma;\frac{1}{2}%
}\left(  T\right)  }\frac{R^{-\left(  2+\bar{\delta}\right)  }}{\left(
2^{k-1}\right)  ^{\bar{\delta}}}\left(  \int_{\mathbb{R}}\left\vert
\xi\right\vert ^{2\sigma}\left\vert \hat{\psi}_{R}\left(  \theta,\xi\right)
\right\vert ^{2}\left(  \int_{\frac{2^{k-1}}{R}}^{\frac{2^{k}}{R}}%
\frac{\left\vert e^{-i\xi Y}-1\right\vert }{Y^{\frac{3}{2}}}dY\right)
^{2}d\xi\right)  ^{\frac{1}{2}}%
\end{align*}

We now use that:%
\[
\int_{\frac{2^{k-1}}{R}}^{\frac{2^{k}}{R}}\frac{\left\vert e^{-i\xi
Y}-1\right\vert }{Y^{\frac{3}{2}}}dY\leq C\min\left\{  \sqrt{\left\vert
\xi\right\vert },\sqrt{R}\right\}
\]
whence:%
\[
\left\Vert D_{X}^{\sigma}I_{2,k}\right\Vert _{L^{2}\left(  \frac{1}%
{2},2\right)  }\leq C\left\Vert h\right\Vert _{\mathcal{Z}_{\bar{p}}%
^{\sigma;\frac{1}{2}}\left(  T\right)  }\frac{R^{-\left(  2+\bar{\delta
}\right)  }}{\left(  2^{k-1}\right)  ^{\bar{\delta}}}\left(  \int_{\mathbb{R}%
}\left\vert \xi\right\vert ^{2\sigma}\left\vert \hat{\psi}_{R}\left(
\theta,\xi\right)  \right\vert ^{2}\left(  \min\left\{  \sqrt{\left\vert
\xi\right\vert },\sqrt{R}\right\}  \right)  ^{2}d\xi\right)  ^{\frac{1}{2}}%
\]

Using Lemma \ref{RegulFourier} it follows that:%
\begin{equation}
\left\Vert D_{X}^{\sigma}I_{2,k}\right\Vert _{L^{2}\left(  \frac{1}%
{2},2\right)  }\leq\frac{CR^{-\left(  2+\bar{\delta}\right)  }}{\left(
2^{k-1}\right)  ^{\bar{\delta}}}\left\Vert h\right\Vert _{\mathcal{Z}_{\bar
{p}}^{\sigma;\frac{1}{2}}\left(  T\right)  }^{2} \label{M1E7}%
\end{equation}

The term $I_{2,+}$ can be estimated similarly:%
\begin{equation}
\left\Vert D_{X}^{\sigma}I_{2,+}\right\Vert _{L^{2}\left(  \frac{1}%
{2},2\right)  }\leq\frac{CR^{-\left(  2+\bar{\delta}\right)  }}{\left(
2^{k_{\max}-1}\right)  ^{\bar{\delta}}}\left\Vert h\right\Vert _{\mathcal{Z}%
_{\bar{p}}^{\sigma;\frac{1}{2}}\left(  T\right)  }^{2} \label{M1E8}%
\end{equation}

We now estimate $I_{2,-}.$ A similar argument yields:%
\[
\left\Vert D_{X}^{\sigma}I_{2,-}\right\Vert _{L^{2}\left(  \frac{1}%
{2},2\right)  }\leq\left\Vert D_{X}^{\sigma}I_{2,-}\right\Vert _{L^{2}\left(
\mathbb{R}\right)  }\leq\left\Vert h\right\Vert _{\mathcal{Z}_{\bar{p}%
}^{\sigma;\frac{1}{2}}\left(  T\right)  }R^{-\left(  2+\bar{\delta}\right)
}\left(  \int_{\mathbb{R}}\left\vert \xi\right\vert ^{2\sigma}\left\vert
\hat{\psi}_{R}\left(  \theta,\xi\right)  \right\vert ^{2}\left(  \Omega
_{R}\left(  Y\right)  \right)  ^{2}d\xi\right)  ^{\frac{1}{2}}%
\]
where:%
\begin{equation}
\Omega_{R}\left(  Y\right)  =\int_{0}^{\frac{1}{R}}\frac{\left\vert e^{-i\xi
Y}-1\right\vert }{Y^{\frac{3}{2}}}\left(  RY\right)  ^{\frac{\lambda}{2}%
}dY=\sqrt{R}\int_{0}^{1}\frac{\left\vert e^{-i\frac{\xi}{R}y}-1\right\vert
}{y^{\frac{3}{2}}}\left(  y\right)  ^{\frac{\lambda}{2}}dy\leq CW_{R}\left(
\xi\right)  \label{M1E10}%
\end{equation}
with $W_{R}\left(  \xi\right)  $ as in Lemma \ref{Inequality}. The last
inequality follows computing the asymptotics of the second integral in
(\ref{M1E10}) for $\frac{\xi}{R}\rightarrow0$ and $\frac{\xi}{R}%
\rightarrow\infty.$

Therefore%
\[
\left\Vert D_{X}^{\sigma}I_{2,-}\right\Vert _{L^{2}\left(  \frac{1}%
{2},2\right)  }\leq C\left\Vert h\right\Vert _{\mathcal{Z}_{\bar{p}}%
^{\sigma;\frac{1}{2}}\left(  T\right)  }R^{-\left(  2+\bar{\delta}\right)
}\left(  \int_{\mathbb{R}}\left\vert \xi\right\vert ^{2\sigma}\left\vert
\hat{\psi}_{R}\left(  \theta,\xi\right)  \right\vert ^{2}\left(  W_{R}\left(
\xi\right)  \right)  ^{2}d\xi\right)  ^{\frac{1}{2}}%
\]
whence:%
\begin{equation}
\left\Vert D_{X}^{\sigma}I_{2,-}\right\Vert _{L^{2}\left(  \frac{1}%
{2},2\right)  }\leq C\left\Vert h\right\Vert _{\mathcal{Z}_{\bar{p}}%
^{\sigma;\frac{1}{2}}\left(  T\right)  }^{2}R^{-\left(  2+\bar{\delta}\right)
} \label{M1E9}%
\end{equation}

To conclude the proof of Lemma \ref{LQ2} it only remains to estimate the
contributions of the region where $R\leq1.$ The estimate of $R^{\frac{3}{2}%
}M_{2;\sigma}(I_{2};\,R),$ $R^{\frac{3}{2}}M_{2;0}(I_{2};\,R)$ can be made in
exactly the same way as the estimate (\ref{S5E1}) for $I_{1}.$ Notice that the
two terms in $I_{2}$ yield integrals that converge separately since\thinspace
$h\left(  \tau,y\right)  $ can be estimated as $\frac{1}{y^{\frac{3}{2}}}$ for
$y\leq1$ and then, the term $y^{\frac{\lambda}{2}-\frac{3}{2}}$ gives
integrability. Therefore:%
\begin{equation}
R^{\frac{3}{2}}M_{2;\sigma}(I_{2};\,R)+R^{\frac{3}{2}}M_{2;0}(I_{2};\,R)\leq
C\left\Vert h\right\Vert _{\mathcal{Z}_{\bar{p}}^{\sigma;\frac{1}{2}}\left(
T\right)  }^{2}\ \ ,\ \ R\leq1 \label{M1E11}%
\end{equation}

Combining (\ref{M1E7}), (\ref{M1E8}), (\ref{M1E9}), (\ref{M1E11}) Lemma
\ref{LQ2} follows.
\end{proof}

\begin{proof}
[Proof of Proposition \ref{LemmaQuad}]It is just a consequence of
(\ref{M1E1}), (\ref{M1E2}), Lemma \ref{LQ1}, Lemma \ref{LQ2}.
\end{proof}

We can also prove the following Lipschitz property for the functional
$Q\left[  \cdot\right]  :$

\begin{proposition}
\label{LemmaQuadLip}For any $\sigma\in\left(  1,2\right)  $ and any
$\bar{\delta}>0$ there exists $C=C\left(  \sigma,\bar{\delta}\right)  $ such
that for any $h_{1,}h_{2}\in\mathcal{Z}_{\bar{p}}^{\sigma;\frac{1}{2}}\left(
T\right)  :$%
\[
\left\Vert Q\left[  h_{1}\right]  -Q\left[  h_{2}\right]  \right\Vert
_{Y_{\frac{3}{2},\left(  2+\bar{\delta}\right)  }^{\sigma}\left(  T\right)
}\leq C\left(  \sum_{k=1}^{2}\left\Vert h_{k}\right\Vert _{\mathcal{Z}%
_{\bar{p}}^{\sigma;\frac{1}{2}}\left(  T\right)  }\right)  \left\Vert
h_{1}-h_{2}\right\Vert _{\mathcal{Z}_{\bar{p}}^{\sigma;\frac{1}{2}}\left(
T\right)  }%
\]
with $Q\left[  \cdot\right]  $ as in (\ref{S1E2}), (\ref{S0E1}).
\end{proposition}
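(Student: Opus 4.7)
The plan is to exploit the quadratic structure of $Q$. Denote by $B[\cdot,\cdot]$ the symmetric bilinear form on functions of $(\tau,x)$ associated to $Q$, namely
\[
B[f,g] = \frac{1}{4}\int_0^x ((x-y)y)^{\lambda/2}\bigl[f(x-y)g(y) + g(x-y)f(y)\bigr]dy - \frac{1}{2}\int_0^\infty (xy)^{\lambda/2}\bigl[f(x)g(y) + g(x)f(y)\bigr]dy,
\]
so that $Q[f] = B[f,f]$. Polarization then gives
\[
Q[h_1] - Q[h_2] = B[h_1 + h_2,\, h_1 - h_2].
\]
Consequently, the Proposition reduces to establishing the bilinear estimate
\[
\left\Vert B[f,g]\right\Vert_{Y^{\sigma}_{\frac{3}{2},\,2+\bar\delta}(T)} \leq C\, \left\Vert f\right\Vert_{\mathcal{Z}^{\sigma;\frac{1}{2}}_{\bar p}(T)}\, \left\Vert g\right\Vert_{\mathcal{Z}^{\sigma;\frac{1}{2}}_{\bar p}(T)}
\]
for all $f, g \in \mathcal{Z}^{\sigma;\frac{1}{2}}_{\bar p}(T)$, since applying this with $f = h_1 + h_2$, $g = h_1 - h_2$ and using the triangle inequality for $\|h_1+h_2\|_{\mathcal{Z}^{\sigma;\frac{1}{2}}_{\bar p}(T)}$ then yields the claimed Lipschitz bound.

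To prove the bilinear estimate, I would decompose $B[f,g] = \tilde I_1(f,g) + \tilde I_2(f,g)$ in exact analogy with the splitting (\ref{M1E0})--(\ref{M1E2}) of $Q[h]$, replacing one occurrence of $h$ by $f$ and the other by $g$ (appropriately symmetrized). Then I would rerun the proofs of Lemma \ref{LQ1} and Lemma \ref{LQ2} step by step. Every estimate used there is inherently bilinear in the two copies of the input function: the control of $N_{2;\sigma}$ and $M_{2;\sigma}$ via Cauchy--Schwarz, the algebra inequality (\ref{M1E3a}) in $H^{\sigma}(1/2,2)$, the dyadic Fourier-side decomposition into $\tilde I_{2,-}$, $\tilde I_{2,k}$, $\tilde I_{2,+}$, the application of Lemma \ref{RegulFourier} to the localized product $X^{\lambda/2}G_R(\theta,X)\eta(X)$, and the $R\leq 1$ estimates. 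In each place where the original proof produces a factor $\|h\|^2_{\mathcal{Z}^{\sigma;\frac{1}{2}}_{\bar p}(T)}$, the bilinear version produces the product $\|f\|_{\mathcal{Z}^{\sigma;\frac{1}{2}}_{\bar p}(T)}\|g\|_{\mathcal{Z}^{\sigma;\frac{1}{2}}_{\bar p}(T)}$, with identical $R$-dependence.

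The only point requiring a little care is the estimate of $\tilde I_{2}(f,g)$: the Fourier-side argument in Lemma \ref{LQ2} treats the inner factor $Y^{\lambda/2} G_R(\theta, Y)$ and the bracket $J(G_R;\theta,X,Y)$ from (\ref{S5E3a}) asymmetrically, one being bounded pointwise via (\ref{S5E3}) and the other through its Fourier transform and Lemma \ref{RegulFourier}. One must therefore carry out the argument twice, assigning $f$ to the inner factor and $g$ to the bracket, and then vice versa, which exhausts the symmetrization present in $B$. Since both $f$ and $g$ obey the same pointwise bound (\ref{S5E3}), no new difficulty arises, and the expected bilinear constant is the same as the constant in Proposition \ref{LemmaQuad}. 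This is the main technical point; everything else is bookkeeping.
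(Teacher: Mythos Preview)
Your proposal is correct and is essentially the same approach as the paper's own proof: the paper also splits $Q[h_1]-Q[h_2]$ into the $I_1$ and $I_2$ pieces, writes each difference using the bilinear structure (one factor $h_1-h_2$, the other $h_1$ or $h_2$), and then reruns the arguments of Lemmas~\ref{LQ1} and~\ref{LQ2} verbatim, including the asymmetric Fourier-side treatment of $I_2$. The only cosmetic difference is that you invoke polarization $Q[h_1]-Q[h_2]=B[h_1+h_2,h_1-h_2]$ up front, whereas the paper unpacks the differences by hand; the estimates and the constants are identical.
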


\begin{proof}
We have $Q\left[  h_{1}\right]  \left(  \tau,x\right)  -Q\left[  h_{2}\right]
\left(  \tau,x\right)  =I_{1}+I_{2}$ with:%
\begin{align*}
I_{1}  &  \equiv-\left(  \int_{\frac{x}{2}}^{\infty}\left(  xy\right)
^{\frac{\lambda}{2}}h_{1}\left(  \tau,x\right)  h_{1}\left(  \tau,y\right)
dy-\int_{\frac{x}{2}}^{\infty}\left(  xy\right)  ^{\frac{\lambda}{2}}%
h_{2}\left(  \tau,x\right)  h_{2}\left(  \tau,y\right)  dy\right) \\
I_{2}  &  \equiv\left(  \int_{0}^{\frac{x}{2}}y^{\frac{\lambda}{2}}%
h_{1}\left(  \tau,y\right)  \left[  \left(  x-y\right)  ^{\frac{\lambda}{2}%
}h_{1}\left(  \tau,x-y\right)  -x^{\frac{\lambda}{2}}h_{1}\left(
\tau,x\right)  \right]  dy-\right. \\
&  \left.  -\int_{0}^{\frac{x}{2}}y^{\frac{\lambda}{2}}h_{2}\left(
\tau,y\right)  \left[  \left(  x-y\right)  ^{\frac{\lambda}{2}}h_{2}\left(
\tau,x-y\right)  -x^{\frac{\lambda}{2}}h_{2}\left(  \tau,x\right)  \right]
dy\right)
\end{align*}

To estimate $I_{1}$ we need to estimate the functionals $N_{2;\,\sigma}%
(I_{1};\,\tau_{0},R)$ for $R\geq1$ and$\ M_{2;\sigma}(I_{1};\,R)\ $for
$R\leq1$ (cf. (\ref{S1E8}), (\ref{S1E10})). Suppose first that $R>1.$ We
introduce the rescaling $x=RX,\ y=RY,\ \tau=\tau_{0}+R^{-\frac{\lambda-1}{2}%
}\theta,$ $H_{R,1}\left(  \theta,X\right)  =h_{1}\left(  \tau,x\right)
,\ H_{R,2}\left(  \theta,X\right)  =h_{2}\left(  \tau,x\right)  .$ Then:%
\begin{align}
I_{1}  &  =-R^{\lambda+1}X^{\frac{\lambda}{2}}\left[  H_{R,1}\left(
\theta,X\right)  -H_{R,2}\left(  \theta,X\right)  \right]  \int_{\frac{X}{2}%
}^{\infty}Y^{\frac{\lambda}{2}}H_{R,1}\left(  \theta,Y\right)
dY-\label{T4E7a}\\
&  -R^{\lambda+1}X^{\frac{\lambda}{2}}H_{R,2}\left(  \theta,X\right)
\int_{\frac{X}{2}}^{\infty}Y^{\frac{\lambda}{2}}\left[  H_{R,1}\left(
\theta,Y\right)  -H_{R,2}\left(  \theta,Y\right)  \right]  dY\nonumber
\end{align}

Notice that:
\begin{align*}
&  \sup_{X\in\left(  \frac{1}{2},2\right)  }\left\vert \int_{X}^{\infty
}Y^{\frac{\lambda}{2}}H_{R,1}\left(  \theta,Y\right)  dY\right\vert
+\left\Vert \int_{\left(  \cdot\right)  }^{\infty}Y^{\frac{\lambda}{2}}%
H_{R,1}\left(  \theta,Y\right)  dY\right\Vert _{H_{x}^{\sigma}\left(  \frac
{1}{2},2\right)  }\\
&  \leq\frac{C}{R^{\frac{3+\lambda}{2}+\bar{\delta}}}\left\Vert h_{1}%
\right\Vert _{\mathcal{Z}_{\bar{p}}^{\sigma;\frac{1}{2}}\left(  T\right)  }%
\end{align*}%
\begin{align*}
&  \sup_{X\in\left(  \frac{1}{2},2\right)  }\left\vert \int_{\frac{X}{2}%
}^{\infty}Y^{\frac{\lambda}{2}}\left[  H_{R,1}\left(  \theta,Y\right)
-H_{R,2}\left(  \theta,Y\right)  \right]  dY\right\vert +\\
&  +\left\Vert \int_{\left(  \cdot\right)  }^{\infty}Y^{\frac{\lambda}{2}%
}\left[  H_{R,1}\left(  \theta,Y\right)  -H_{R,2}\left(  \theta,Y\right)
\right]  dY\right\Vert _{H_{x}^{\sigma}\left(  \frac{1}{2},2\right)  }%
\leq\frac{C}{R^{\frac{3+\lambda}{2}+\bar{\delta}}}\left\Vert h_{1}%
-h_{2}\right\Vert _{\mathcal{Z}_{\bar{p}}^{\sigma;\frac{1}{2}}\left(
T\right)  }%
\end{align*}

Using (\ref{M1E3a}) in (\ref{T4E7a}) we obtain:%
\begin{equation}
N_{2;\,\sigma}(I_{1};\,\tau_{0},R)\leq\frac{C}{R^{\left(  2+2\bar{\delta
}\right)  }}\left(  \sum_{k=1}^{2}\left\Vert h_{k}\right\Vert _{\mathcal{Z}%
_{\bar{p}}^{\sigma;\frac{1}{2}}\left(  T\right)  }\right)  \left\Vert
h_{1}-h_{2}\right\Vert _{\mathcal{Z}_{\bar{p}}^{\sigma;\frac{1}{2}}\left(
T\right)  }\ \ \label{M2E1}%
\end{equation}
for$\ R\geq1\ \ ,\ \ \tau_{0}\in\left[  0,T\right]  .$ Suppose now that
$R\leq1.$ We introduce the rescaling $x=RX,\ y=RY,$ $H_{R,1}\left(
\tau,X\right)  =h_{1}\left(  \tau,x\right)  ,\ H_{R,2}\left(  \tau,X\right)
=h_{2}\left(  \tau,x\right)  .$Then:%
\[
I_{1}=I_{1,1}+I_{1,2}%
\]%
\begin{align}
I_{1,1}  &  \equiv-\left[  R^{\lambda+1}X^{\frac{\lambda}{2}}\left[
H_{R,1}\left(  \tau,X\right)  -H_{R,2}\left(  \tau,X\right)  \right]
\int_{\frac{X}{2}}^{2}Y^{\frac{\lambda}{2}}H_{R,1}\left(  \tau,Y\right)
dY+\right. \nonumber\\
&  \left.  +R^{\lambda+1}X^{\frac{\lambda}{2}}H_{R,2}\left(  \tau,X\right)
\int_{\frac{X}{2}}^{2}Y^{\frac{\lambda}{2}}\left[  H_{R,1}\left(
\tau,Y\right)  -H_{R,2}\left(  \tau,Y\right)  \right]  dY\right]
\label{M2E2}\\
I_{1,2}  &  \equiv-\left[  R^{\lambda+1}X^{\frac{\lambda}{2}}\left[
H_{R,1}\left(  \tau,X\right)  -H_{R,2}\left(  \tau,X\right)  \right]  \int
_{2}^{\infty}Y^{\frac{\lambda}{2}}H_{R,1}\left(  \tau,Y\right)  dY+\right.
\nonumber\\
&  \left.  +R^{\lambda+1}X^{\frac{\lambda}{2}}H_{R,2}\left(  \tau,X\right)
\int_{2}^{\infty}Y^{\frac{\lambda}{2}}\left[  H_{R,1}\left(  \tau,Y\right)
-H_{R,2}\left(  \tau,Y\right)  \right]  dY\right]  \label{M2E3}%
\end{align}

Notice that:%
\begin{equation}
\left\vert R^{\frac{\lambda}{2}+1}\int_{2}^{\infty}Y^{\frac{\lambda}{2}%
}H_{R,1}\left(  \tau,Y\right)  dY\right\vert \leq C\left\Vert h_{1}\right\Vert
_{\mathcal{Z}_{\bar{p}}^{\sigma;\frac{1}{2}}\left(  T\right)  } \label{T4E9}%
\end{equation}%
\begin{equation}
\left\vert R^{\frac{\lambda}{2}+1}\int_{2}^{\infty}Y^{\frac{\lambda}{2}%
}\left[  H_{R,1}\left(  \tau,Y\right)  -H_{R,2}\left(  \tau,Y\right)  \right]
dY\right\vert \leq C\left\Vert h_{1}-h_{2}\right\Vert _{\mathcal{Z}_{\bar{p}%
}^{\sigma;\frac{1}{2}}\left(  T\right)  } \label{T4E10}%
\end{equation}

On the other hand, using the definition of $H_{R,1},\ H_{R,2}$ we arrive at:%
\begin{equation}
\int_{0}^{T}\left\Vert D_{X}^{\sigma}H_{R,2}\left(  \tau,\cdot\right)
\right\Vert _{L^{2}\left(  \frac{1}{2},2\right)  }^{2}d\tau\leq\left\Vert
h_{2}\right\Vert _{\mathcal{Z}_{\bar{p}}^{\sigma;\frac{1}{2}}\left(  T\right)
}^{2} \label{U4E10}%
\end{equation}

\begin{equation}
\int_{0}^{T}\left\Vert D_{X}^{\sigma}\left[  H_{R,1}\left(  \tau,X\right)
-H_{R,2}\left(  \tau,X\right)  \right]  \right\Vert _{L^{2}\left(  \frac{1}%
{2},2\right)  }^{2}d\tau\leq\left\Vert h_{1}-h_{2}\right\Vert _{\mathcal{Z}%
_{\bar{p}}^{\sigma;\frac{1}{2}}\left(  T\right)  }^{2} \label{U4E10a}%
\end{equation}

Moreover:%
\begin{equation}
\left(  \int_{0}^{T}\left\Vert H_{R,2}\left(  \tau,\cdot\right)  \right\Vert
_{L^{2}\left(  \frac{1}{2},2\right)  }^{2}d\tau\right)  ^{\frac{1}{2}}\leq
CTR^{-\frac{3}{2}}\left\Vert h\right\Vert _{\mathcal{Z}_{\bar{p}}%
^{\sigma;\frac{1}{2}}\left(  T\right)  }\ \ \label{T4E11}%
\end{equation}%
\begin{equation}
\left(  \int_{0}^{T}\left\Vert \left[  H_{R,1}\left(  \tau,\cdot\right)
-H_{R,2}\left(  \tau,\cdot\right)  \right]  \right\Vert _{L^{2}\left(
\frac{1}{2},2\right)  }^{2}d\tau\right)  ^{\frac{1}{2}}\leq CTR^{-\frac{3}{2}%
}\left\Vert h_{1}-h_{2}\right\Vert _{\mathcal{Z}_{\bar{p}}^{\sigma;\frac{1}%
{2}}\left(  T\right)  } \label{T4E11a}%
\end{equation}
henceforth, using (\ref{M1E3a}), (\ref{M2E3})-(\ref{T4E11a}):
\begin{equation}
R^{\frac{3}{2}}M_{2;0}(I_{1,2};\,R)+R^{\frac{3}{2}}M_{2;\sigma}(I_{1,2}%
;\,R)\leq R^{\frac{\lambda}{2}}\left(  \sum_{k=1}^{2}\left\Vert h_{k}%
\right\Vert _{\mathcal{Z}_{\bar{p}}^{\sigma;\frac{1}{2}}\left(  T\right)
}\right)  \left\Vert h_{1}-h_{2}\right\Vert _{\mathcal{Z}_{\bar{p}}%
^{\sigma;\frac{1}{2}}\left(  T\right)  } \label{M2E4}%
\end{equation}

On the other hand for $R\leq1$ we have:%
\begin{align*}
&  \left(  \int_{0}^{T}\left\Vert D_{X}^{\sigma}\left[  \int_{\frac{X}{2}}%
^{2}Y^{\frac{\lambda}{2}}\left[  H_{R,1}\left(  \tau,\cdot\right)
-H_{R,2}\left(  \tau,\cdot\right)  \right]  dY\right]  \right\Vert
_{L^{2}\left(  \frac{1}{2},2\right)  }^{2}d\tau\right)  ^{\frac{1}{2}}+\\
&  +\sup_{0\leq\tau\leq T}\left\Vert \int_{\frac{X}{2}}^{2}Y^{\frac{\lambda
}{2}}H_{R,1}\left(  \tau,\cdot\right)  -H_{R,2}\left(  \tau,\cdot\right)
dY\right\Vert _{L^{\infty}\left(  \frac{1}{2},2\right)  }\\
&  \leq CR^{-\frac{3}{2}}\left\Vert h_{1}-h_{2}\right\Vert _{\mathcal{Z}%
_{\bar{p}}^{\sigma;\frac{1}{2}}\left(  T\right)  }%
\end{align*}
where we just estimate the $L^{2}$ norm of $D_{X}^{\sigma}$ by $D_{X}$ and the
function itself by interpolation. Then, using also (\ref{S4E10}),
(\ref{S4E11}) as well as the fact that $R\leq1:$%
\begin{equation}
R^{\frac{3}{2}}M_{2;\sigma}(I_{1,1};\,R)+R^{\frac{3}{2}}M_{2;0}(I_{1,1}%
;\,R)\leq C\left(  \sum_{k=1}^{2}\left\Vert h_{k}\right\Vert _{\mathcal{Z}%
_{\bar{p}}^{\sigma;\frac{1}{2}}\left(  T\right)  }\right)  \left\Vert
h_{1}-h_{2}\right\Vert _{\mathcal{Z}_{\bar{p}}^{\sigma;\frac{1}{2}}\left(
T\right)  } \label{S5E1a}%
\end{equation}

Combining (\ref{M2E4}), (\ref{S5E1a}):%
\begin{equation}
||I_{1}||_{Y_{\frac{3}{2},\left(  2+\bar{\delta}\right)  }^{\sigma}\left(
T\right)  }\leq\left(  \sum_{k=1}^{2}\left\Vert h_{k}\right\Vert
_{\mathcal{Z}_{\bar{p}}^{\sigma;\frac{1}{2}}\left(  T\right)  }\right)
C\left\Vert h_{1}-h_{2}\right\Vert _{\mathcal{Z}_{\bar{p}}^{\sigma;\frac{1}%
{2}}\left(  T\right)  } \label{T5E2}%
\end{equation}
where $C$ is uniformly bounded for $0\leq T\leq1.$

We now estimate $I_{2}.$ Suppose first that $R\geq1.$ Using the rescaling
$x=RX,$ $y=RY,$ $\tau=\tau_{0}+R^{-\frac{\lambda-1}{2}}\theta,$ $h_{1}\left(
\tau,x\right)  =R^{-\left(  \frac{3+\lambda}{2}+\bar{\delta}\right)  }%
G_{R,1}\left(  \theta,X\right)  ,$ $h_{2}\left(  \tau,x\right)  =R^{-\left(
\frac{3+\lambda}{2}+\bar{\delta}\right)  }G_{R,2}\left(  \theta,X\right)  $
and using (\ref{S5E3a}) we obtain:%
\[
I_{2}=I_{2,1}+I_{2,1}%
\]%
\begin{align*}
I_{2}  &  \equiv\left(  \int_{0}^{\frac{x}{2}}y^{\frac{\lambda}{2}}%
h_{1}\left(  \tau,y\right)  \left[  \left(  x-y\right)  ^{\frac{\lambda}{2}%
}h_{1}\left(  \tau,x-y\right)  -x^{\frac{\lambda}{2}}h_{1}\left(
\tau,x\right)  \right]  dy-\right. \\
&  \left.  -\int_{0}^{\frac{x}{2}}y^{\frac{\lambda}{2}}h_{2}\left(
\tau,y\right)  \left[  \left(  x-y\right)  ^{\frac{\lambda}{2}}h_{2}\left(
\tau,x-y\right)  -x^{\frac{\lambda}{2}}h_{2}\left(  \tau,x\right)  \right]
dy\right)
\end{align*}%
\begin{align}
I_{2,1}  &  =R^{-\left(  2+2\bar{\delta}\right)  }\int_{0}^{\frac{X}{2}%
}Y^{\frac{\lambda}{2}}\left(  G_{R,1}\left(  \theta,Y\right)  -G_{R,2}\left(
\theta,Y\right)  \right)  J\left(  G_{R,1};\theta,X,Y\right)  dY
\label{T5E4a}\\
I_{2,2}  &  =R^{-\left(  2+2\bar{\delta}\right)  }\int_{0}^{\frac{X}{2}%
}Y^{\frac{\lambda}{2}}G_{R,2}\left(  \theta,Y\right)  J\left(  G_{R,1}%
-G_{R,2};\theta,X,Y\right)  dY \label{T5E4b}%
\end{align}

Notice that:%
\begin{equation}
\left\vert G_{R,1}\left(  \theta,Y\right)  -G_{R,2}\left(  \theta,Y\right)
\right\vert \leq\left\Vert h_{1}-h_{2}\right\Vert _{\mathcal{Z}_{\bar{p}%
}^{\sigma;\frac{1}{2}}\left(  T\right)  }\min\left(  Y^{-\left(
\frac{3+\lambda}{2}+\bar{\delta}\right)  },R^{\frac{\lambda}{2}+\bar{\delta}%
}Y^{-\frac{3}{2}}\right)  \label{T5E3}%
\end{equation}

We now argue exactly as in the proof of Lemma \ref{LQ2} in order to estimate
$I_{2,1},\ I_{2,2}.$ Notice that estimating these terms it is crucial to use
the boundedness of the seminorm $\left[  \cdot\right]  _{\frac{3+\lambda}{2}%
}^{\sigma;\frac{1}{2}}$ in (\ref{S1E11}) for the sources. On the other hand,
the argument in the Proof of Lemma \ref{LQ2} shows that the pointwise estimate
(\ref{T5E3}) is needed. A similar argument and estimate allows to estimate the
terms $I_{2,1},\ I_{2,2}$ in (\ref{T5E4a}), (\ref{T5E4b}). Therefore, after
some computations:%
\begin{equation}
N_{2,0}\left(  I_{2},\tau_{0},R\right)  +N_{2,\sigma}\left(  I_{2},\tau
_{0},R\right)  \leq\frac{C}{R^{2+\bar{\delta}}}\left(  \sum_{k=1}%
^{2}\left\Vert h_{k}\right\Vert _{\mathcal{Z}_{\bar{p}}^{\sigma;\frac{1}{2}%
}\left(  T\right)  }\right)  \left\Vert h_{1}-h_{2}\right\Vert _{\mathcal{Z}%
_{\bar{p}}^{\sigma;\frac{1}{2}}\left(  T\right)  } \label{M2E5}%
\end{equation}

It only remains to estimate the region where $R\leq1.$ The estimate of
$R^{\frac{3}{2}}M_{2;\sigma}(I_{2};\,R),$ $R^{\frac{3}{2}}M_{2;0}(I_{2};\,R)$
can be made exactly in the same way as the estimate of similar terms for
$I_{1}.$ Notice that the two terms in $I_{2}$ yield integrals that converge
separately since\thinspace$h\left(  \tau,y\right)  $ can be estimated as
$\frac{1}{y^{\frac{3}{2}}}$ for $y\leq1$ and then, the term $y^{\frac{\lambda
}{2}-\frac{3}{2}}$ is integrable near the origin. Then:%
\begin{equation}
R^{\frac{3}{2}}M_{2;0}(I_{2};\,R)+R^{\frac{3}{2}}M_{2;\sigma}(I_{2};\,R)\leq
C\left(  \sum_{k=1}^{2}\left\Vert h_{k}\right\Vert _{\mathcal{Z}_{\bar{p}%
}^{\sigma;\frac{1}{2}}\left(  T\right)  }\right)  \left\Vert h_{1}%
-h_{2}\right\Vert _{\mathcal{Z}_{\bar{p}}^{\sigma;\frac{1}{2}}\left(
T\right)  } \label{M2E6}%
\end{equation}

Combining (\ref{M2E5}), (\ref{M2E6}) we obtain:
\begin{equation}
||I_{2}||_{Y_{\frac{3}{2},\left(  2+\bar{\delta}\right)  }^{\sigma}\left(
T\right)  }\leq C\left(  \sum_{k=1}^{2}\left\Vert h_{k}\right\Vert
_{\mathcal{Z}_{\bar{p}}^{\sigma;\frac{1}{2}}\left(  T\right)  }\right)
\left\Vert h_{1}-h_{2}\right\Vert _{\mathcal{Z}_{\bar{p}}^{\sigma;\frac{1}{2}%
}\left(  T\right)  } \label{M2E7}%
\end{equation}

The proof of the Lemma is then concluded using (\ref{T5E2}) and (\ref{M2E7}).
\end{proof}

\section{DERIVATION OF THE ASYMPTOTICS $x^{-\frac{3+\lambda}{2}}$ AS
$x\rightarrow\infty$.}

The main result in this Section is the following.

\begin{proposition}
\label{PropAsympt}Suppose that $\varphi\in\mathcal{Z}_{\frac{3+\lambda}{2}%
}^{\sigma;\frac{1}{2}}\left(  T\right)  $ solves:%
\begin{equation}
\varphi_{\tau}={\mathcal{L}}_{f_{0}}\left[  \varphi\right]  +F\left(
\tau,x\right)  \ ,\ \ \ x>0\ \ ,\ \ 0\leq t\leq T\ \ ,\ \ \varphi\left(
0,x\right)  =0 \label{G2E3}%
\end{equation}
where $F\in Y_{\frac{3}{2},2+\bar{\delta}}^{\sigma}\left(  T\right)  $ and
$\bar{\delta}<r.$ Then, the following asymptotics holds:%
\begin{equation}
\varphi\left(  \tau,x\right)  =\mathcal{W}\left(  \tau\right)  x^{-\frac
{3+\lambda}{2}}+\varphi_{R}\left(  \tau,x\right)  \ \ \ \ \ \ \ \text{as\ \ }%
x\rightarrow\infty\label{G2E2}%
\end{equation}
where:%
\begin{equation}
\mathcal{W}\left(  \tau\right)  =\int_{0}^{\tau}ds\int_{0}^{\infty}%
\frac{dx_{0}}{x_{0}}\Theta\left(  \left(  \tau-s\right)  x_{0}^{\frac
{\lambda-1}{2}}\right)  x_{0}^{\frac{3+\lambda}{2}}\left[  F\left(
s,x_{0}\right)  +\left(  {\mathcal{L}}_{f_{0}}-L\right)  \left[
\varphi\right]  \left(  s,x_{0}\right)  \right]  \label{G2E5}%
\end{equation}
with $\Theta\left(  \cdot\right)  $ as in (\ref{G1E3}).and:
\begin{equation}
\varphi_{R}\in\mathcal{Z}_{\bar{p}}^{\sigma;\frac{1}{2}}\left(  T\right)
\ \label{G2E6}%
\end{equation}

\end{proposition}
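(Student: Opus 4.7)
The plan is to use Duhamel's formula for the semigroup generated by the model operator $L$ of \eqref{S3E1} rather than $\mathcal{L}_{f_{0}}$, and then read off the $x^{-(3+\lambda)/2}$ coefficient from the pointwise asymptotics of the fundamental solution $g(\tau,x,x_{0})$ obtained in Theorem \ref{ThFS} and refined in Proposition \ref{PropositionImproved}. First I would rewrite \eqref{G2E3} as $\varphi_{\tau}=L[\varphi]+S(\tau,x)$ with $S:=F+(\mathcal{L}_{f_{0}}-L)[\varphi]$, so that Duhamel against $g$ gives
$$\varphi(\tau,x)=\int_{0}^{\tau}ds\int_{0}^{\infty}g(\tau-s,x,x_{0})\,S(s,x_{0})\,dx_{0}.$$

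Next I would insert the large-$x$ asymptotics of $g$. Combining the self-similarity \eqref{T1E0a} with \eqref{G1E1}--\eqref{G1E2} and Proposition \ref{PropositionImproved} gives, for $x$ sufficiently larger than $x_{0}$,
$$g(\tau-s,x,x_{0})=\Theta((\tau-s)x_{0}^{(\lambda-1)/2})\,x_{0}^{(1+\lambda)/2}\,x^{-(3+\lambda)/2}+E(\tau-s,x,x_{0}),$$
where the remainder $E$ has strictly better decay in $x$, picking up an extra factor $(x/x_{0})^{-r}$ from the second order term in Proposition \ref{PropositionImproved}. The contribution of the leading term to the Duhamel integral produces exactly $x^{-(3+\lambda)/2}\mathcal{W}(\tau)$ with $\mathcal{W}$ as in \eqref{G2E5}, since $\tfrac{1}{x_{0}}\cdot x_{0}^{(3+\lambda)/2}=x_{0}^{(1+\lambda)/2}$ absorbs the $1/x_{0}$ coming from self-similarity, and the contribution of $E$ defines $\varphi_{R}$.

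Before controlling $\varphi_{R}$, I would check that the source $S$ lies in a suitable space. For $F$ this is the hypothesis $F\in Y_{3/2,2+\bar{\delta}}^{\sigma}(T)$. For $(\mathcal{L}_{f_{0}}-L)[\varphi]$ the point is that the difference of operators only sees the perturbation $f_{0}-G$, which by \eqref{Z1E1a}--\eqref{Z1E2b} decays like $x^{-(3+\lambda)/2-r}$; combined with the pointwise bound $|\varphi|\lesssim x^{-(3+\lambda)/2}$ encoded in the $\mathcal{Z}_{(3+\lambda)/2}^{\sigma;1/2}$ norm, and a splitting analogous to the one performed for $Q[h]$ in Section \ref{SectionQh}, one obtains $(\mathcal{L}_{f_{0}}-L)[\varphi]\in Y_{3/2,2+\bar{\delta}}^{\sigma}(T)$ with norm controlled by $||\varphi||_{\mathcal{Z}_{(3+\lambda)/2}^{\sigma;1/2}(T)}$, using the hypothesis $\bar{\delta}<r$.

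The hard part will be establishing \eqref{G2E6}, i.e.\ verifying that $\varphi_{R}$ lies in $\mathcal{Z}_{\bar{p}}^{\sigma;1/2}(T)$, which demands bounds for each of the four components of the norm \eqref{S1E12a}. For the pointwise and weighted $L^{2}$ pieces I would exploit the refined expansion of $g-\Theta(\cdot)\rho^{-(3+\lambda)/2}$ from Proposition \ref{PropositionImproved}: the extra factor $x^{-r}$ is precisely what upgrades the decay exponent from $(3+\lambda)/2$ to $\bar{p}=(3+\lambda)/2+\bar{\delta}$, which is once again where $\bar{\delta}<r$ enters. For the half-derivative seminorm $[\cdot]_{\bar{p}}^{\sigma;1/2}$, I would observe that $\varphi_{R}=\varphi-\mathcal{W}(\tau)\xi(x)x^{-(3+\lambda)/2}$ solves a Cauchy problem of the form $(\varphi_{R})_{\tau}=\mathcal{L}_{f_{0}}[\varphi_{R}]+\tilde{S}$ with $\tilde{S}$ assembled from $S$ and from $\partial_{\tau}(\mathcal{W}(\tau)\xi(x)x^{-(3+\lambda)/2})$, and then apply Theorem \ref{RegularityLredonda} once $\tilde{S}\in Y_{3/2,2+\bar{\delta}}^{\sigma}(T)$ is verified. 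Throughout, the integrability of \eqref{G2E5} near $x_{0}=0$ and for $x_{0}\to\infty$ must be checked using the $x^{-3/2}$ pointwise control of the source and the decay of $\Theta(\tau)$ for large argument given in \eqref{G1E3}.
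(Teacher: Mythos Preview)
Your overall strategy---rewrite the equation as $\varphi_\tau=L[\varphi]+S$ with $S=F+(\mathcal{L}_{f_0}-L)[\varphi]$, apply Duhamel against $g$, and extract the coefficient $\mathcal{W}(\tau)$ from the large-$x$ asymptotics of $g$---matches the paper's approach through Lemmas \ref{Le3}, \ref{Lemma5}, \ref{Lemma6}. The pointwise part of \eqref{G2E6}, i.e.\ the $|||\cdot|||_{3/2,\bar p}$ bound on $\varphi_R$, is handled exactly as you outline.

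The gap is in the regularity step. You propose to write $(\varphi_R)_\tau=\mathcal{L}_{f_0}[\varphi_R]+\tilde S$ with $\tilde S$ containing $\partial_\tau\mathcal{W}(\tau)\,\xi(x)x^{-(3+\lambda)/2}$, and then invoke Theorem \ref{RegularityLredonda}. But $\mathcal{W}(\tau)$ is in general \emph{not} differentiable: the best one can prove (this is Lemma \ref{Lemma7} in the paper) is a H\"older estimate $|\mathcal{W}(\tau_1)-\mathcal{W}(\tau_0)|\le C|\tau_1-\tau_0|^{2\bar\delta/(\lambda-1)}$, and since $\bar\delta<r=(\lambda-1)/2$ the exponent is strictly less than one. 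So $\tilde S$ as you define it does not exist, and the application of Theorem \ref{RegularityLredonda} is blocked.

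The paper's fix is an $R$-dependent mollification: for each dyadic scale $R\ge 1$ one convolves $\mathcal{W}(\cdot)$ in time with a mollifier of width $R^{-(\lambda-1)/2}$ (the natural time-scale of the boxes in the norms), obtaining a smooth approximation $\tilde I_R(\tau)$ with $|\tilde I_R-\mathcal{W}|\le CR^{-\bar\delta}$ and $|\partial_\tau\tilde I_R|\le CR^{(\lambda-1)/2-\bar\delta}$; both bounds come from the H\"older estimate. One then subtracts $\tilde I_R(\tau)\xi(x)x^{-(3+\lambda)/2}$ instead of $\mathcal{W}(\tau)\xi(x)x^{-(3+\lambda)/2}$, so that the resulting source term \emph{is} well-defined and, after the rescaling $x=RX$, $\tau=\rho R^{-(\lambda-1)/2}$, lands in the right space to apply the local regularity result (Theorem \ref{S8T3-101}). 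The extra $R^{-\bar\delta}$ in both bounds is precisely what upgrades the decay from $(3+\lambda)/2$ to $\bar p$. Without this scale-adapted smoothing the argument does not close.
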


The proof of Proposition \ref{PropAsympt} is based in dealing with the
operator ${\mathcal{L}}_{f_{0}}$ as a perturbation of the operator $L.$ To
this end, we rewrite (\ref{G2E3}) as:%
\[
\varphi_{\tau}=L\left[  \varphi\right]  +\left(  {\mathcal{L}}_{f_{0}%
}-L\right)  \left[  \tilde{h}\right]  +F\left(  \tau,x\right)
\]

Using variations of constants and Theorem \ref{ThFS} we have the
representation formula:%
\begin{equation}
\varphi\left(  \tau,x\right)  =\int_{0}^{\tau}ds\int_{0}^{\infty}g\left(
\left(  \tau-s\right)  x_{0}^{\frac{\lambda-1}{2}},\frac{x}{x_{0}},1\right)
\left[  \left(  {\mathcal{L}}_{f_{0}}-L\right)  \left[  \varphi\right]
\left(  s,x\right)  +F\left(  s,x\right)  \right]  \frac{dx_{0}}{x_{0}}
\label{S5E5}%
\end{equation}

In order to prove Proposition \ref{PropAsympt} we will derive some auxiliary
Lemmas. We begin estimating the term $\left(  {\mathcal{L}}_{f_{0}}-L\right)
\left[  \varphi\right]  $ (cf. (\ref{S5E5})). Most of the estimates in the
next Lemma have been already obtained in \cite{EV2}, but we recall them here
for convenience.

\begin{lemma}
\label{Le3}Suppose that $f_{0}$ satisfies (\ref{Z1E1a})-(\ref{Z1E2b}) and
$\varphi\in\mathcal{E}_{T;\sigma}$. Then:%
\begin{align}
N_{\infty}\left(  \left(  {\mathcal{L}}_{f_{0}}-L\right)  \left[
\varphi\right]  ;\tau_{0},R\right)   &  \leq\frac{C\left\vert \left\vert
\left\vert \varphi\right\vert \right\vert \right\vert _{\sigma}}{R^{2+r}%
}\ \ ,\ \ \tau_{0}\in\left(  0,T\right)  \ \ ,\ \ R\geq1\label{M3E1}\\
M_{\infty}\left(  \left(  {\mathcal{L}}_{f_{0}}-L\right)  \left[
\varphi\right]  ;R\right)   &  \leq\frac{C\left\vert \left\vert \left\vert
\varphi\right\vert \right\vert \right\vert _{\sigma}}{R^{\frac{3}{2}}%
}\ ,\ \ R\leq1 \label{M3E2}%
\end{align}%
\begin{align}
N_{2,\sigma}\left(  \left(  {\mathcal{L}}_{f_{0}}-L\right)  \left[
\varphi\right]  ,\tau_{0},R\right)   &  \leq\frac{C}{R^{2+r}}\left\Vert
\varphi\right\Vert _{\mathcal{Z}_{\left(  \frac{3+\lambda}{2}+r\right)
}^{\sigma;\frac{1}{2}}\left(  T\right)  }\ \ ,\ \ \tau_{0}\in\left(
0,T\right)  \ \ ,\ \ R\geq1\label{M3E1a}\\
M_{2,\sigma}\left(  \left(  {\mathcal{L}}_{f_{0}}-L\right)  \left[
\varphi\right]  ,R\right)   &  \leq\frac{C}{R^{\frac{3}{2}}}\left\Vert
\varphi\right\Vert _{\mathcal{Z}_{\left(  \frac{3+\lambda}{2}+r\right)
}^{\sigma;\frac{1}{2}}\left(  T\right)  }\ \ ,\ \ R\leq1\ \label{M3E2a}%
\end{align}
\bigskip
\end{lemma}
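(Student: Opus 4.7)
The plan is to decompose $({\mathcal{L}}_{f_0}-L)[\varphi]$ into two kinds of contributions: (a) convolution terms in which the kernel $f_0$ is effectively replaced by the difference $h_0 = f_0 - \xi G$ (with $G(x)=x^{-(3+\lambda)/2}$), and (b) regularization remainders reconciling the mass integrals in ${\mathcal{L}}_{f_0}$ with the principal-value structure and the constant $-2\sqrt{2}x^{(\lambda-1)/2}\varphi(x)$ built into $L$. First I rewrite the convolution $\int_0^x (x-y)^{\lambda/2} f_0(x-y) y^{\lambda/2} \varphi(y)\,dy$ in ${\mathcal{L}}_{f_0}[\varphi]$ as $\int_0^{x/2}+\int_{x/2}^x$ and convert the second piece into $\int_0^{x/2} y^{\lambda/2} f_0(y)(x-y)^{\lambda/2}\varphi(x-y)\,dy$ via $y\mapsto x-y$, producing the splitting $[0,x/2]\cup[x/2,\infty)$ used in the definition of $L$. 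Substituting $f_0 = \xi G + h_0$ from (\ref{Z1E1a})-(\ref{S1E5}), the $G$-pieces cancel against the corresponding terms in $L$, leaving only integrals against $h_0$ (plus a cutoff remnant from $(\xi-1)G$, which is compactly supported near the origin) and the regularization-matching piece for the mass integrals.

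For the pointwise estimates (\ref{M3E1})-(\ref{M3E2}), I insert the bound $|\varphi(\tau,x)|\leq C|||\varphi|||_\sigma\min\{x^{-3/2},x^{-(3+\lambda)/2}\}$ coming from the $|||\cdot|||_{3/2,(3+\lambda)/2}$ component of $|||\cdot|||_\sigma$ into each term and integrate against the decay of $h_0$ supplied by (\ref{S1E6}). For $R\geq 1$ the extra factor $R^{-r}$ arises because $h_0$ decays like $y^{-(3+\lambda)/2-r}$, which combined with the baseline scaling $R^{\lambda/2}\varphi\sim R^{-3/2}$ and an additional $R^{-1/2}$ from the $y$-integration yields $R^{-2-r}$. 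For $R\leq 1$ no improvement comes from $h_0$, since $f_0$ and $G$ are not close near the origin and $\varphi$ itself decays only like $x^{-3/2}$ there, so the estimate reflects this baseline.

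For the $H^\sigma$-type estimates (\ref{M3E1a})-(\ref{M3E2a}), I use the rescaling $x=RX$, $y=RY$, $\tau = \tau_0 + R^{-(\lambda-1)/2}\theta$ employed throughout Section \ref{SectionQh} to reduce the $L^2_tH^\sigma_x$ bound for the difference to localized estimates on the rescaled integral operators acting on $H_R(\theta,X) = \varphi(\tau, RX)$. The product inequality (\ref{M1E3a}) transfers the $H^\sigma$-regularity of $\varphi$ (supplied by $\|\varphi\|_{\mathcal{Z}_{(3+\lambda)/2+r}^{\sigma;1/2}(T)}$) to the output of each term, with the decay of $h_0$ again producing the $R^{-r}$ gain at large scales; for the increment-type piece originating from the principal-value structure of $L$, Lemmas \ref{RegulFourier}-\ref{Inequality} (together with the seminorm $[\cdot]_{(3+\lambda)/2+r}^{\sigma;1/2}$ hidden in $\|\cdot\|_{\mathcal{Z}}$) give control analogous to the estimate of $I_2$ in Section \ref{SectionQh}. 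The main obstacle will be the careful bookkeeping of the regularization-matching term: one must verify that the difference between $-x^{\lambda/2}\varphi(x)\int_0^\infty y^{\lambda/2}f_0(y)\,dy$ and the $L$-regularized form produces only contributions controllable by $h_0$ and by the small-$x$ cutoff remnant. Most of this matching was already carried out in \cite{EV2} for the pointwise part; the $H^\sigma$ extension then follows by applying the product and rescaling estimates of Section \ref{SectionQh} term by term to the same decomposition.
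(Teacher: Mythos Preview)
Your proposal is correct and follows essentially the same approach as the paper: the paper writes $({\mathcal{L}}_{f_0}-L)[\varphi]=A_1+A_2$ with $H(y)=y^{\lambda/2}f_0(y)-y^{-3/2}$ playing the role of your $h_0$-kernel, cites \cite{EV2} (Lemmas 5.1, 5.2, 5.4) for the pointwise bounds on $A_1,A_2$ and the $H^\sigma$ bound on $A_1$, and obtains the new $H^\sigma$ bound (\ref{M3E1a}) for the increment term $A_2$ by rerunning the argument of Lemma~\ref{LQ2} with $|H(y)|\le C\min(y^{-3/2},y^{-3/2-r})$ and the seminorm $[\cdot]^{\sigma;1/2}$, exactly as you describe. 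Your ``regularization-matching'' piece is absorbed into $A_1$ in the paper's splitting, so no separate bookkeeping is needed beyond what is already in \cite{EV2}.
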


\begin{proof}
[Proof of Lemma \ref{Le3}]We write
\[
\left(  {\mathcal{L}}_{f_{0}}-L\right)  \left[  \varphi\right]  \left(
s,x_{0}\right)  =A_{1}+A_{2}%
\]
where:%
\[
A_{1}=\int_{0}^{\frac{x}{2}}\left(  H\left(  x-y\right)  -H\left(  x\right)
\right)  y^{\frac{\lambda}{2}}\varphi\left(  \tau,y\right)  dy-H\left(
x\right)  \int_{\frac{x}{2}}^{\infty}y^{\frac{\lambda}{2}}\varphi\left(
\tau,y\right)  dy-x^{\frac{\lambda}{2}}\varphi\left(  \tau,x\right)
\int_{\frac{x}{2}}^{\infty}H\left(  y\right)  dy
\]%
\[
A_{2}=\int_{0}^{\frac{x}{2}}\left(  \left(  x-y\right)  ^{\frac{\lambda}{2}%
}\varphi\left(  \tau,x-y\right)  -x^{\frac{\lambda}{2}}\varphi\left(
\tau,x\right)  \right)  H\left(  y\right)  dy
\]
and%
\[
H\left(  y\right)  =y^{\frac{\lambda}{2}}f_{0}\left(  y\right)  -y^{-\frac
{3}{2}}%
\]

It has been proved in \cite{EV2} (cf. Section 5, Lemmas 5.1, 5.2, 5.4):
\begin{equation}
\left\vert A_{1}\right\vert \leq\frac{C\left\vert \left\vert \left\vert
\varphi\right\vert \right\vert \right\vert _{\frac{3}{2},\frac{3+\lambda}{2}}%
}{x^{\frac{3}{2}}}\ \ ,\ \ x\leq1\ \ \ ,\ \ \left\vert A_{1}\right\vert
\leq\frac{C\left\vert \left\vert \left\vert \varphi\right\vert \right\vert
\right\vert _{\frac{3}{2},\frac{3+\lambda}{2}}}{x^{2+r}}\ \ ,\ \ x\geq1
\label{G1E4}%
\end{equation}%
\begin{equation}
N_{\infty}\left(  A_{2};\tau_{0},R\right)  \leq\frac{C\left\vert \left\vert
\left\vert \varphi\right\vert \right\vert \right\vert _{\sigma}}{R^{2+r}%
}\ ,\ \ \ ,\ \ R\geq1\ \ ;\ \ \ M_{\infty}\left(  A_{2};R\right)  \leq
\frac{C\left\vert \left\vert \left\vert \varphi\right\vert \right\vert
\right\vert _{\sigma}}{R^{\frac{3}{2}}}\ ,\ \ R\leq1\ \ \ ,\ \ \ \tau_{0}%
\in\left(  0,T\right)  \label{G1E6}%
\end{equation}%
\begin{align}
N_{2,\sigma}\left(  A_{1},\tau_{0},R\right)   &  \leq\frac{C}{R^{2+r}%
}\left\Vert \varphi\right\Vert _{Y_{\frac{3}{2},\frac{3+\lambda}{2}}^{\sigma}%
}\ \ ,\ \ \tau_{0}\in\left(  0,T\right)  \ \ ,\ \ R\geq1\label{G1E8}\\
M_{2,\sigma}\left(  A_{1},R\right)   &  \leq\frac{C}{R^{\frac{3}{2}}%
}\left\Vert \varphi\right\Vert _{Y_{\frac{3}{2},\frac{3+\lambda}{2}}^{\sigma}%
}\ \ ,\ \ R\leq1\ \label{G1E9}%
\end{align}
where $r$ might be chosen as in (\ref{U3E4a}) and the norms $\left\vert
\left\vert \left\vert \cdot\right\vert \right\vert \right\vert _{\frac{3}%
{2},\frac{3+\lambda}{2}},\ \left\vert \left\vert \left\vert \cdot\right\vert
\right\vert \right\vert _{\sigma}$ are as in (\ref{M2E8}), (\ref{M2E9}).

On the other hand, arguing as in the Proof of Lemma \ref{LQ2} it is possible
to prove the following estimates:%
\begin{align}
N_{2,\sigma}\left(  A_{2},\tau_{0},R\right)   &  \leq\frac{C}{R^{2+r}%
}\left\Vert \varphi\right\Vert _{\mathcal{Z}_{\left(  \frac{3+\lambda}%
{2}+r\right)  }^{\sigma;\frac{1}{2}}\left(  T\right)  }\ \ ,\ \ \tau_{0}%
\in\left(  0,T\right)  \ \ ,\ \ R\geq1\label{G1E10}\\
M_{2,\sigma}\left(  A_{2},R\right)   &  \leq\frac{C}{R^{\frac{3}{2}}%
}\left\Vert \varphi\right\Vert _{\mathcal{Z}_{\left(  \frac{3+\lambda}%
{2}+r\right)  }^{\sigma;\frac{1}{2}}\left(  T\right)  }\ \ ,\ \ R\leq
1\ \label{G1E11}%
\end{align}

The only difference in the argument is that instead of (\ref{S5E3}) the
estimate that must be used is:%
\[
\left\vert H\left(  y\right)  \right\vert \leq C\min\left(  y^{-\frac{3}{2}%
},y^{-\left(  \frac{3}{2}+r\right)  }\right)
\]
that implies that the function $H_{R}\left(  Y\right)  =R^{\left(
\frac{3+\lambda}{2}+r\right)  }H\left(  RY\right)  $ satisfies:
\[
\left\vert H_{R}\left(  Y\right)  \right\vert \leq C\min\left(  Y^{-\left(
\frac{3+\lambda}{2}+r\right)  },R^{\frac{\lambda}{2}+r}Y^{-\frac{3}{2}%
}\right)
\]

Notice that for $R<1$ we obtain estimates with the dependence $\frac
{1}{R^{\frac{3}{2}}}$ on the right hand side (cf. (\ref{G1E6}), (\ref{G1E9}),
(\ref{G1E11})) due to the fact that the term $H\left(  x\right)  \int
_{\frac{x}{2}}^{\infty}y^{\frac{\lambda}{2}}\varphi\left(  \tau,y\right)  dy$
in the definition of $A_{1}$ yields such a power law dependence for small $x.$
Combining (\ref{G1E8})-(\ref{G1E11}) the Lemma follows. 
\end{proof}

\begin{remark}
The estimate for the term $A_{2}$ cannot be improved to the decay in the norm
$H_{x}^{\sigma}$ except if we obtain instead the decay $R^{-2}.$ Such a decay
has been obtained in \cite{EV2}. The main novelty in the estimate for $A_{2}$
obtained in Lemma \ref{Le3} is the decay like $R^{-\left(  2+r\right)  }$ in
(\ref{G1E10}) for large $R,$ that can be obtained using the estimate for the
seminorm $\left[  \varphi\right]  _{\frac{3+\lambda}{2}}^{\sigma;\frac{1}{2}}$
contained in the spaces $Z_{\left(  \frac{3+\lambda}{2}+\bar{\delta}\right)
}^{\sigma;\frac{1}{2}}.$
\end{remark}

In the proof of the following results the notation will become simpler using
the following definitions:%
\[
\fint_{\tau_{0}-\rho}^{\tau_{0}}f\left(  s\right)  ds=\frac{1}{\rho}%
\int_{\left(  \tau_{0}-\rho\right)  _{+}}^{\tau_{0}}f\left(  s\right)  ds
\]

The next Lemma shows how to compute the asymptotics as $x\rightarrow\infty$ of
the solutions of:
\[
J_{\tau}=L\left[  J\right]  +F\left(  \tau,x\right)  \ \ ,\ \ J\left(
0,x\right)  =0
\]

To this end we will use the following auxiliary functional:%
\begin{equation}
\sup_{R\geq1,\ \tau_{0}\in\left(  0,T\right)  }\left[  N_{\infty}\left(
F;\tau_{0},R\right)  R^{2+\bar{\delta}}\right]  +\sup_{R<1}\left[  M_{\infty
}\left(  F;R\right)  R^{\frac{3}{2}}\right]  \equiv\mathcal{H}\left(
F\right)  \label{M3E3}%
\end{equation}

\begin{lemma}
\label{Lemma5} Let $0<\bar{\delta}<\min\left\{  \varepsilon,r\right\}  ,$ with
$\varepsilon$ as in (\ref{G1E2}) and $r$ as in (\ref{U3E4a}). Suppose that
$F\in X_{\frac{3}{2},2+\bar{\delta}}\left(  T\right)  .$ Let $J=J\left(
\tau,x\right)  $ be:%
\begin{equation}
J\left(  \tau,x\right)  =\int_{0}^{\tau}ds\int_{0}^{\infty}g\left(  \left(
\tau-s\right)  x_{0}^{\frac{\lambda-1}{2}},\frac{x}{x_{0}},1\right)  F\left(
s,x_{0}\right)  \frac{dx_{0}}{x_{0}} \label{Z2E0}%
\end{equation}

Then:%
\begin{equation}
J\left(  \tau,x\right)  -I\left(  \tau;F\right)  x^{-\frac{3+\lambda}{2}}%
\xi\left(  x\right)  =J_{R}\left(  \tau,x\right)  \label{J2E0a}%
\end{equation}
where $\xi\left(  \cdot\right)  $ is as in (\ref{Z1E2a}) and:%
\begin{equation}
I\left(  \tau;F\right)  =\int_{0}^{\infty}\frac{dx_{0}}{x_{0}}\int_{0}^{\tau
}dsF\left(  s,x_{0}\right)  \Theta\left(  \left(  \tau-s\right)  x_{0}%
^{\frac{\lambda-1}{2}}\right)  x_{0}^{\frac{3+\lambda}{2}} \label{Z2E1}%
\end{equation}
with $\Theta\left(  \cdot\right)  $ as in (\ref{G1E3}) and
\begin{equation}
\left\vert \left\vert \left\vert J_{R}\right\vert \right\vert \right\vert
_{\frac{3}{2},\frac{3+\lambda}{2}+\bar{\delta}}\leq C\left\Vert F\right\Vert
_{X_{\frac{3}{2},2+\bar{\delta}}\left(  T\right)  } \label{Z2E1c}%
\end{equation}

Moreover:%
\begin{equation}
\left\vert I\left(  \tau;F\right)  \right\vert \leq C\tau^{\frac{2\bar{\delta
}}{\lambda-1}}\left\Vert F\right\Vert _{X_{\frac{3}{2},2+\bar{\delta}}\left(
T\right)  } \label{Z2E1d}%
\end{equation}

\end{lemma}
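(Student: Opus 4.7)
The plan is to derive (\ref{J2E0a})--(\ref{Z2E1d}) from the asymptotic expansion of the fundamental solution $g$ stated in (\ref{G1E1})--(\ref{G1E3}). First I use the self-similarity (\ref{T1E0a}) to rewrite (\ref{Z2E0}) as $J(\tau,x)=\int_0^\tau\!ds\int_0^\infty g(\tau-s,x,x_0)F(s,x_0)\,dx_0$, and I observe that formally replacing $g(\tau-s,x,x_0)$ by its leading asymptotic profile $\tfrac{1}{x_0}\Theta((\tau-s)x_0^{(\lambda-1)/2})(x/x_0)^{-(3+\lambda)/2}$ produces, after integration against $F$, exactly $I(\tau;F)\,x^{-(3+\lambda)/2}$. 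The simple algebraic identity $\tfrac{1}{x_0}(x/x_0)^{-(3+\lambda)/2}=x_0^{(3+\lambda)/2}x^{-(3+\lambda)/2}/x_0$ matches the integrand of (\ref{Z2E1}). Hence $J_R$ is a remainder which I must control in the $|||\cdot|||_{3/2,(3+\lambda)/2+\bar\delta}$ norm for each fixed $\tau\in[0,T]$.

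For (\ref{Z2E1c}) in the regime $x\geq 1$ I split the $x_0$-integral at $x_0=x/2$ and write $J_R$ as the sum of three pieces: the difference $[g-\text{leading}]$ integrated over $x_0\in(0,x/2)$; the raw $g$-contribution from $x_0\geq x/2$; and the extended leading term $-x^{-(3+\lambda)/2}\int_{x_0\geq x/2}\Theta((\tau-s)x_0^{(\lambda-1)/2})x_0^{(\lambda+1)/2}F(s,x_0)\,dx_0$. The first piece is controlled by (\ref{G1E1})--(\ref{G1E2}) applied with $\tilde\tau=(\tau-s)x_0^{(\lambda-1)/2}$, $\tilde x=x/x_0\geq 2$, which produces an extra factor $(\tilde x)^{-\varepsilon}=(x/x_0)^{-\varepsilon}$; dyadic decomposition in $x_0$ together with Cauchy--Schwarz in $s$ on slices of length $x_0^{-(\lambda-1)/2}$, using the $N_\infty$ bound from the $X_{3/2,2+\bar\delta}$-norm, then yields decay $Cx^{-(3+\lambda)/2-\varepsilon}$, which is stronger than required since $\bar\delta<\varepsilon$. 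The two contributions from $x_0\geq x/2$ are bounded separately via dyadic decomposition on $x_0\sim R\geq x/2$, using the direct estimates of $g$ for $\tilde x\leq 1$ from Theorem \ref{ThFS} for the raw $g$-term, and the two regimes $|\Theta(u)|\leq Cu$ for $u\leq 1$ versus $|\Theta(u)|\leq Cu^{-(\lambda+1)/(\lambda-1)}$ for $u\geq 1$ for the extended leading term. Geometric summation in $R$ produces the required decay $x^{-(3+\lambda)/2-\bar\delta}$. The range $x\leq 1$ is handled separately: since $\xi\equiv 0$ on $[0,1/2]$, $J_R=J$ there, and the pointwise bound $|J(\tau,x)|\leq Cx^{-3/2}$ follows from the small-$x$ estimates of $g$ in Theorem \ref{ThFS} together with $M_\infty(F;R)\leq CR^{-3/2}\|F\|_X$.

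For the bound (\ref{Z2E1d}) on $|I(\tau;F)|$, I decompose the $x_0$-integral dyadically on shells $x_0\sim R$. On each shell the factor $|\Theta((\tau-s)R^{(\lambda-1)/2})|R^{(\lambda+1)/2}$ switches regimes at the time $\tau-s\sim R^{-(\lambda-1)/2}$, which is precisely the width of the $s$-slices appearing in $N_\infty(F;\cdot,R)$ for $R\geq 1$. Applying Cauchy--Schwarz in $s$ on such slices and bounding $\|F(s,\cdot)\|_{L^\infty(R/2,2R)}$ by the $X$-norm produces on each shell a contribution bounded by $CR^{-\bar\delta}$ times a $\tau$-dependent factor; summing the geometric series in $R$ (separately handling $R\leq 1$ via the $M_\infty$-part) and integrating the regime boundary in $s$ yields exactly $C\tau^{2\bar\delta/(\lambda-1)}\|F\|_{X_{3/2,2+\bar\delta}(T)}$. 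The condition $\bar\delta<r=(\lambda-1)/2$ is essential: it keeps the exponent $2\bar\delta/(\lambda-1)<1$, so that the $s$-integral across the $\Theta$-regime transition converges, and it also ensures geometric convergence of the dyadic sums in $R$.

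The principal obstacle is the estimate of the extended leading term over $x_0>x/2$: in the large-argument regime of $\Theta$ the $x_0$-powers cancel exactly (the product $|\Theta((\tau-s)x_0^{(\lambda-1)/2})|x_0^{(\lambda+1)/2}\sim(\tau-s)^{-(\lambda+1)/(\lambda-1)}$ is independent of $x_0$), so a naive pointwise bound on $F$ fails to produce $x$-decay. The resolution is to exploit the matching between the self-similar time scale $\tau-s\sim x_0^{-(\lambda-1)/2}$ that governs $\Theta$ and the time-slice width built into the $N_\infty$ part of the $X$-norm, integrating first in $s$ via Cauchy--Schwarz on those slices before integrating in $x_0$. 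The same mechanism produces the factor $\tau^{2\bar\delta/(\lambda-1)}$ in (\ref{Z2E1d}), which is therefore not an artifact of the proof but encodes the scaling structure of the problem.
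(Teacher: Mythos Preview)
Your proposal is correct and follows essentially the same strategy as the paper: split the $x_0$-integral into regions $x_0$ small versus large relative to $x$, subtract the leading $\Theta$-profile on the small-$x_0$ region, and control each piece via dyadic coverings by rectangles $[R/2,2R]\times[\tau_0,\tau_0+R^{-(\lambda-1)/2}]$ matched to the $N_\infty$/$M_\infty$ structure of the $X$-norm, with Cauchy--Schwarz in $s$ on each slice. The paper organizes the computation as a four-way split $J=J_1+J_2+J_3+J_4$ (additionally separating $s$ near $\tau$ from $s$ far, i.e.\ small versus large argument of $g$) and treats the singular region $x_0\approx x$ via the self-similar profile $\Psi$ of Theorem~\ref{ThFS}, but this is a matter of bookkeeping rather than a different idea.
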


\begin{proof}
We split the integral in (\ref{Z2E0}) as:%
\[
J=J_{1}+J_{2}+J_{3}+J_{4}%
\]%
\begin{align}
J_{1}  &  \equiv\int_{\frac{2x}{3}}^{\infty}dx_{0}\int_{\left(  \tau
-x_{0}^{-\frac{\lambda-1}{2}}\right)  _{+}}^{\tau}ds\left[  ...\right]
\ \ ,\ \ J_{2}\equiv\int_{0}^{\frac{2x}{3}}dx_{0}\int_{\left(  \tau
-x_{0}^{\frac{\lambda-1}{2}}\right)  _{+}}^{\tau}ds\left[  ...\right]
\label{Z2E1a}\\
\ J_{3}  &  \equiv\int_{\frac{2x}{3}}^{\infty}dx_{0}\int_{0}^{\left(
\tau-x_{0}^{-\frac{\lambda-1}{2}}\right)  _{+}}ds\left[  ...\right]
\ \ ,\ \ \ J_{4}\equiv\int_{0}^{\frac{2x}{3}}dx_{0}\int_{0}^{\left(
\tau-x_{0}^{-\frac{\lambda-1}{2}}\right)  _{+}}ds\left[  ...\right]
\label{Z2E1b}%
\end{align}

In the term $J_{1}$ we use the fact that (\ref{T1E3}) implies:%
\begin{align*}
\left\vert g(\tau,x,1)\right\vert  &  \leq C\tau x^{-\frac{3}{2}%
}\ \ ,\ \ 0<x\leq\frac{1}{2}\\
\left\vert g(\tau,x,1)\right\vert  &  \leq C\tau^{-2}\varphi\left(  \frac
{x-1}{\tau^{2}}\right)  \ \ \ ,\ \ \left\vert x-1\right\vert \leq\frac{1}%
{2}\ \ \text{with \ }\varphi\left(  \xi\right)  =\frac{1}{1+\xi^{\frac{3}%
{2}-\varepsilon_{1}}}%
\end{align*}
for some $\varepsilon_{1}>0.$ Then:%
\[
\left\vert J_{1}\right\vert \leq J_{1,1}+J_{1,2}%
\]
where:%
\begin{align}
J_{1,1}  &  \equiv C\int_{\frac{x}{2}}^{\frac{3x}{2}}dx_{0}\int_{\left(
\tau-x_{0}^{-\frac{\lambda-1}{2}}\right)  _{+}}^{\tau}\left(  \tau-s\right)
^{-2}dsx_{0}^{-\lambda}\varphi\left(  \frac{x-x_{0}}{\left(  \tau-s\right)
^{2}x_{0}^{\lambda}}\right)  \left\vert F\left(  s,x_{0}\right)  \right\vert
\label{S5E7a}\\
J_{1,2}  &  \equiv\frac{C}{x^{3/2}}\int_{2x}^{\infty}dx_{0}\int_{\left(
\tau-x_{0}^{-\frac{\lambda-1}{2}}\right)  _{+}}^{\tau}\left(  \tau-s\right)
dsx_{0}^{\frac{\lambda}{2}}\left\vert F\left(  s,x_{0}\right)  \right\vert
\label{S5E7}%
\end{align}

In order to estimate $J_{3}$ we use that (\ref{T1E0})-(\ref{T1E2}) implies:%
\[
\left\vert g(\tau,x,1)\right\vert \leq C\tau^{\frac{2}{\lambda-1}}\min\left\{
\left(  \tau^{\frac{2}{\lambda-1}}x\right)  ^{-\frac{3}{2}},\left(
\tau^{\frac{2}{\lambda-1}}x\right)  ^{-\frac{3+\lambda}{2}}\right\}
\ \ ,\ \ \tau\geq1
\]

Then:%
\begin{equation}
\left\vert J_{3}\right\vert \leq C\int_{\frac{2x}{3}}^{\infty}dx_{0}\int
_{0}^{\left(  \tau-x_{0}^{-\frac{\lambda-1}{2}}\right)  _{+}}ds\left\vert
F\left(  s,x_{0}\right)  \right\vert \min\left\{  \left(  \tau-s\right)
^{-\frac{1}{\lambda-1}}x^{-\frac{3}{2}},\left(  \tau-s\right)  ^{-\frac
{1+\lambda}{\lambda-1}}x^{-\frac{3+\lambda}{2}}\right\}  \label{S5E8}%
\end{equation}

To obtain the leading asymptotics of $J_{2}$ as $x\rightarrow\infty$ we will
use the detailed asymptotics of $g\left(  \tau,x,1\right)  .$ Using
(\ref{G1E1}), (\ref{G1E3}) we have:%
\begin{equation}
\left\vert J_{2}-x^{-\frac{3+\lambda}{2}}\int_{0}^{\frac{2x}{3}}dx_{0}%
\int_{\left(  \tau-x_{0}^{-\frac{\lambda-1}{2}}\right)  _{+}}^{\tau}dsF\left(
s,x_{0}\right)  \Theta\left(  \left(  \tau-s\right)  x_{0}^{\frac{\lambda
-1}{2}}\right)  x_{0}^{\frac{1+\lambda}{2}}\right\vert =J_{2,R} \label{S5E8a}%
\end{equation}
where, due to (\ref{Z3E1b}) in Proposition \ref{PropositionImproved}:%
\begin{equation}
\left\vert J_{2,R}\right\vert \leq Cx^{-\frac{3+\lambda}{2}-r}\int_{0}%
^{\frac{2x}{3}}dx_{0}\int_{\left(  \tau-x_{0}^{-\frac{\lambda-1}{2}}\right)
_{+}}^{\tau}\left(  \tau-s\right)  ds\left\vert F\left(  s,x_{0}\right)
\right\vert x_{0}^{\lambda+r} \label{S5E9}%
\end{equation}

To estimate $J_{4}$ we use (\ref{T1E0})-(\ref{T1E2}):%
\begin{equation}
\left\vert J_{4}\right\vert \leq\frac{C}{x^{\frac{3+\lambda}{2}}}\int
_{0}^{\frac{2x}{3}}dx_{0}\int_{0}^{\left(  \tau-x_{0}^{-\frac{\lambda-1}{2}%
}\right)  _{+}}\frac{ds}{\left(  \tau-s\right)  ^{\frac{\lambda+1}{\lambda-1}%
}}\left\vert F\left(  s,x_{0}\right)  \right\vert \label{S5E10}%
\end{equation}
where we use that, since $x_{0}\leq x,$ we have $\left(  \tau-s\right)
x^{\frac{\lambda-1}{2}}\geq\frac{x^{\frac{\lambda-1}{2}}}{x_{0}^{\frac
{\lambda-1}{2}}}\geq1$.
\begin{figure}
\begin{tikzpicture}
\draw[->] (-1, 0) -- (12, 0) node[right]{$x_0$};
\draw[->] (0, -0.3) -- (0, 6) node[right]{$s$};
\draw(3, 0) node {$\times$};
\draw(3, 0) node[below]{$\frac{x}{2}$};
\draw(2.5, 0) node {$\times$};
\draw(2.5, 0) node[below]{$\frac{R_{\ell_1}}{2}$};
\draw(5, 0) node {$\times$};
\draw(5, 0) node[below]{$R_{\ell_1}$};
\draw(6, 0) node {$\times$};
\draw(6, 0) node[below]{$x$};
\draw(9, 0) node {$\times$};
\draw(9, 0) node[below]{$\frac{3x}{2}$};
\draw(10, 0) node {$\times$};
\draw(10, 0) node[below]{$2 R_{\ell_1}$};
\draw[](9,4.5) parabola (3, 1.5);
\draw[] (0, 5) -- (12, 5);
\draw[dotted] (3, 0) -- (3,1.5);
\draw[] (3, 1.5) -- (3,5);
\draw[dotted] (0, 1.5) -- (3,1.5);
\draw(0, 1.5) node [left] {$\tau-(x/2)^{-\frac{\lambda-1}{2}}$};
\draw[dotted] (9, 0) -- (9, 4.5);
\draw(9, 4.5)--(9, 5);
\draw(2.5, 4.5)--(2.5, 4.7);
\draw(10, 4.5)--(10, 4.7);
\draw(2.5, 4.7)--(10,4.7);
\draw(2.5, 4.5)--(10,4.5);
\draw(1.75, 0) node {$\times$};
\draw(1.75, 0) node[below]{$\frac{R_{\ell_2}}{2}$};
\draw(3.5, 0) node {$\times$};
\draw(3.5, 0) node[below]{$R_{\ell_2}$};
\draw(7, 0) node {$\times$};
\draw(7, 0) node[below]{$2R_{\ell_2}$};
\draw(7, 4)--(7, 3.7);
\draw(1.75, 4)--(1.75, 3.7);
\draw(1.75,  4)--(7, 4);
\draw(1.75,  3.7)--(7, 3.7);
\end{tikzpicture}
\caption{Two cubes of the covering $\mathcal{C}_{x,\tau}^{\left( 1\right)  }$ }
\end{figure}
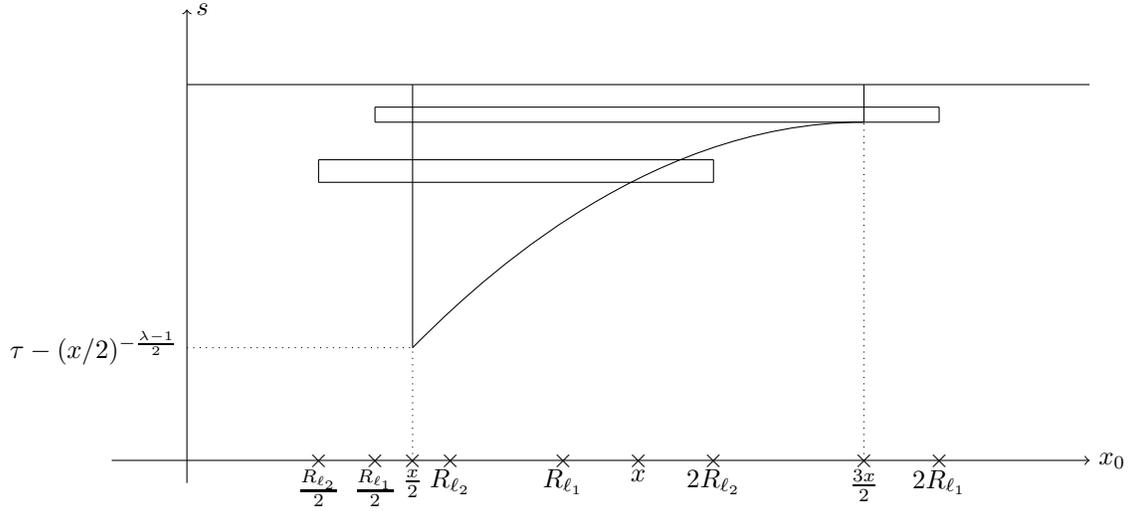
We now proceed to estimate all the terms $J_{k},\ k=1,2,3,4$ in terms of
$\mathcal{H}\left(  F\right)  $ in (\ref{M3E3}).We are interested in the
behaviour of all these quantities for large values of $x.$ We begin with the
term $J_{1,1}$ (cf. (\ref{S5E7a})). Notice that we can cover the domain
$\left\{  \left(  x_{0},s\right)  :x_{0}\in\left[  \frac{x}{2},\frac{3x}%
{2}\right]  ,\ s\in\left[  \left(  \tau-x_{0}^{-\frac{\lambda-1}{2}}\right)
_{+},\tau\right]  \right\}  $ by a set of rectangles with the form $\left[
\frac{R_{\ell}}{2},2R_{\ell}\right]  \times\left[  \tau_{0},\tau_{0}+\left(
R_{\ell}\right)  ^{-\frac{\lambda-1}{2}}\right]  ,\ $with $R_{\ell}\in\left[
\frac{x}{2},\frac{3x}{2}\right]  ,\ \tau_{0}\in\left[  \tau-2^{\frac
{\lambda-1}{2}}x^{-\frac{\lambda-1}{2}},\tau\right]  $ with $\ell
=1,...N_{x,\tau}$ and $N_{x,\tau}\leq n_{0}<\infty$ with $n_{0}$ independent
on $x,\tau.$
The fact that the number of rectangles can be estimated uniformly
on $x$ follows from the self-similarity of the problem. Let us denote such
finite covering as $\mathcal{C}_{x,,\tau}^{(1)}.$

The specific rescaling
chosen for the rectangles is due to the fact that they are the ones appearing
in the definition of the functionals $N_{\infty}\left(  F;\tau_{0},R\right)
$. We then have:%
\begin{align}
J_{1,1}  &  \leq C\sum_{\mathcal{C}_{x,,\tau}^{(1)}}\int_{\left(  \tau
-CR_{x}^{-\frac{\lambda-1}{2}}\right)  _{+}}^{\tau}\left(  \tau-s\right)
^{-2}ds\int_{\frac{x}{2}}^{\frac{3x}{2}}dx_{0}x_{0}^{-\lambda}\varphi\left(
\frac{x-x_{0}}{\left(  \tau-s\right)  ^{2}x_{0}^{\lambda}}\right)  \left\Vert
F\left(  s,\cdot\right)  \right\Vert _{L^{\infty}\left(  \frac{R_{x}}%
{2},2R_{x}\right)  }\nonumber\\
&  \leq CR_{x}^{-\frac{\lambda-1}{2}}\sum_{\mathcal{C}_{x,,\tau}^{(1)}}%
\fint_{\left(  \tau-CR_{x}^{-\frac{\lambda-1}{2}}\right)  _{+}}^{\tau
}\left\Vert F\left(  s,\cdot\right)  \right\Vert _{L^{\infty}\left(
\frac{R_{x}}{2},2R_{x}\right)  }ds\nonumber\\
&  \leq CR_{x}^{-\frac{\lambda-1}{2}}\sup_{0\leq\tau\leq T}N_{\infty}\left(
F;\tau,R\right)  \leq Cx^{-\left(  \frac{3+\lambda}{2}+\bar{\delta}\right)
}\mathcal{H}\left(  F\right)  \label{S6E1}%
\end{align}
\begin{figure}
\begin{tikzpicture}
\draw[->] (-1, 0) -- (12, 0) node[right]{$x_0$};
\draw[->] (0, -1) -- (0, 6) node[right]{$s$};
\draw(0, 5) node [left] {$\tau$};
\draw(0, 1.5) node [left] {$\tau-(2x)^{-\frac{\lambda-1}{2}}$};
\draw[dotted] (0, 1.5) -- (2,1.5);
\draw(2, 0) node {$\times$};
\draw(2, 0) node[below]{$2x$};
\draw[dotted] (2,0) -- (2,1.5);
\draw(4, 0) node {$\times$};
\draw(4, 0) node[below]{$\frac{R\, k}{2}$};
\draw[] (4,2.5) -- (4, 5);
\draw(7, 0) node {$\times$};
\draw(7, 0) node[below]{$R\, k$};
\draw(10, 0) node {$\times$};
\draw(10, 0) node[below]{$\frac{3 R\, k}{2}$};
\draw[] (10,2.5) -- (10, 5);
\draw[] (0, 5) -- (12, 5);
\draw (4, 2.5)--(10, 2.5);
\draw (4, 2.75)--(10, 2.75);
\draw (4, 3)--(10, 3);
\draw (4, 3.25)--(10, 3.25);
\draw (4, 3.5)--(10, 3.5);
\draw (4, 3.75)--(10, 3.75);
\draw (4, 4)--(10, 4);
\draw (4, 4.25)--(10, 4.25);
\draw (4, 4.5)--(10, 4.5);
\draw (4, 4.75)--(10, 4.75);
\draw[](12,4.5) parabola (2, 1.5);
\end{tikzpicture}
\caption{The sub-family $\mathcal R_{k}$ of cubes of the covering $\mathcal{C}_{x,\tau}^{\left( 2\right)  }$}
\end{figure}

We now estimate the term $J_{1,2}$ in (\ref{S5E7})$.$ Let us denote as
$\mathcal{C}_{x,\tau}^{\left(  2\right)  }$ a covering of the set\newline%
$\left\{  \left(  x_{0},s\right)  :x_{0}\geq2x,\ \left(  \tau-x_{0}%
^{-\frac{\lambda-1}{2}}\right)  _{+}\leq s\leq\tau\right\}  $ by means of
boxes with the form $\left[  \frac{R}{2},2R\right]  \times\left[  \tau
_{0},\tau_{0}+\left(  R\right)  ^{-\frac{\lambda-1}{2}}\right]  ,$ with $R\geq
x,\ \tau_{0}\in\left[  0,\tau\right]  $ in which each point is covered at most
by a finite number of boxes (independent on $x,\tau$), and where the sequence
of sizes $R$ increases exponentially. 

Then:

\begin{align*}
&  \frac{1}{x^{3/2}}\int_{2x}^{\infty}dx_{0}\int_{\tau-x_{0}^{-\frac
{\lambda-1}{2}}}^{\tau}ds\left(  \tau-s\right)  x_{0}^{\frac{\lambda}{2}%
}\left\vert F\left(  s,x_{0}\right)  \right\vert \\
&  \leq\frac{C}{x^{3/2}}\sum_{\mathcal{C}_{x,\tau}^{\left(  2\right)  }%
}R^{\frac{\lambda}{2}+1}\int_{\tau_{0}-R^{-\frac{\lambda-1}{2}}}^{\tau_{0}%
}ds\left(  \tau-s\right)  \left\Vert F\left(  s,\cdot\right)  \right\Vert
_{L^{\infty}\left(  \frac{R}{2},2R\right)  }\\
&  \leq\frac{C}{x^{3/2}}\sum_{\mathcal{C}_{x,\tau}^{\left(  2\right)  }%
}R^{2-\frac{\lambda}{2}}\left(  \fint_{\tau_{0}-R^{-\frac{\lambda-1}{2}}%
}^{\tau_{0}}ds\left\Vert F\left(  s,\cdot\right)  \right\Vert _{L^{\infty
}\left(  \frac{R}{2},2R\right)  }^{2}\right)  ^{\frac{1}{2}}\\
&  \leq\frac{C}{x^{3/2}}\sum_{\mathcal{C}_{x,\tau}^{\left(  2\right)  }%
}R^{2-\frac{\lambda}{2}}N_{\infty}\left(  F;\tau,R\right)  \leq\frac
{C\mathcal{H}\left(  F\right)  }{x^{3/2}}\sum_{\mathcal{C}_{x,\tau}^{\left(
2\right)  }}R^{2-\frac{\lambda}{2}}R^{-\left(  2+\bar{\delta}\right)  }%
\leq\frac{C\mathcal{H}\left(  F\right)  }{x^{\frac{3+\lambda}{2}+\bar{\delta}%
}}%
\end{align*}
where we use the fact that the series $\sum_{\mathcal{C}_{x,\tau}^{\left(
2\right)  }}\left[  \cdot\cdot\cdot\right]  $ is a geometric series, due to
our choice of the sizes of the boxes. Therefore:
\begin{equation}
\left\vert J_{1}\right\vert \leq Cx^{-\left(  \frac{3+\lambda}{2}+\bar{\delta
}\right)  }\mathcal{H}\left(  F\right)  \label{S6E2}%
\end{equation}

We now estimate $J_{3}$ (cf. (\ref{S5E8})). To this end we introduce a new
covering $\mathcal{C}_{x,\tau}^{\left(  3\right)  }$ of the set $\mathcal{D}%
_{x,\tau}=\left\{  \left(  x_{0},s\right)  :x_{0}\geq2x,\ 0\leq s\leq\left(
\tau-x_{0}^{-\frac{\lambda-1}{2}}\right)  _{+}\right\}  $ by means of boxes of
the form $\left[  \frac{R}{2}, 2R\right]  \times\left[  \tau_{0},\tau
_{0}+\left(  R\right)  ^{-\frac{\lambda-1}{2}}\right]  $ with $R\geq
x,\ \tau_{0}\in\left[  0,\tau\right]  $. 

We will assume that the rectangles in
the set $\mathcal{C}_{x,\tau}^{\left(  3\right)  }$ have the following
properties. (i) Each point in the set $\mathcal{D}_{x,\tau}$ is contained in
at most a finite number of boxes independent on $x,\ \tau$ for $\tau\in\left[
0,T\right]  $. (ii) There exist a sequence of sizes $\left\{  R_{k}\right\}_{k\in N}  $
that increases exponentially on $k,$ such that the number of cubes with $R$
comparable to a given $R_{k}$ (i. e. $R\in\left(  \frac{R_{k}}{2}%
,2R_{k}\right)  $) is bounded by $CR_{k}^{\frac{\lambda-1}{2}},$ with $C$
independent on $x,\ \tau$ for $\tau\in\left[  0,T\right]  .$ Notice that the
construction of the covering implies that, for $\left(  x_{0},s\right)
\in\left[  \frac{R}{2},2R\right]  \times\left[  \tau_{0},\tau_{0}+\left(
R\right)  ^{-\frac{\lambda-1}{2}}\right]  $ we have $\frac{1}{2}\left(
\tau-\tau_{0}\right)  \leq\left(  \tau-s\right)  \leq2\left(  \tau-\tau
_{0}\right)  .$

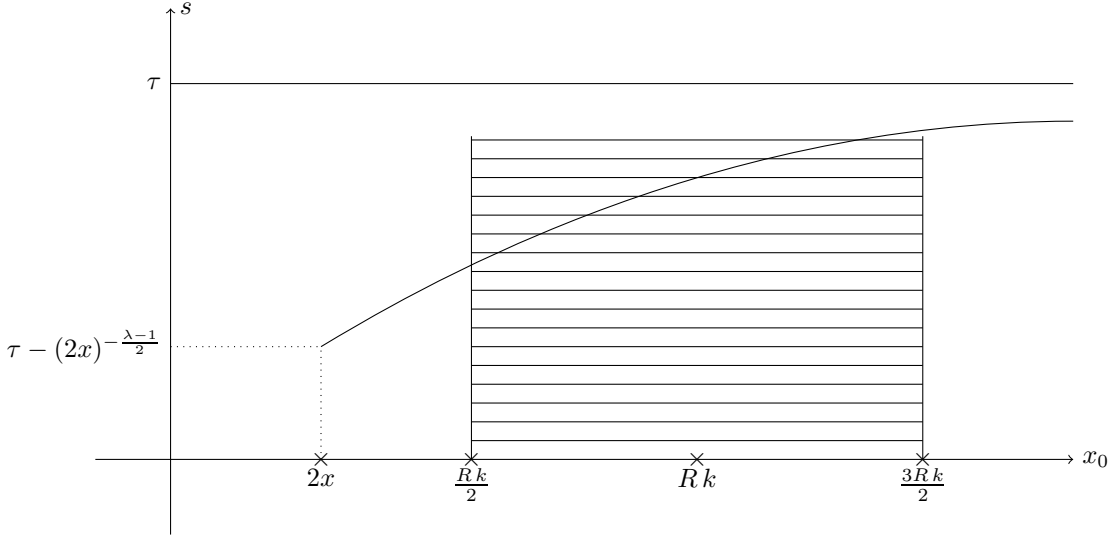
\begin{figure}
\begin{tikzpicture}
\draw[->] (-1, 0) -- (12, 0) node[right]{$x_0$};
\draw[->] (0, -1) -- (0, 6) node[right]{$s$};
\draw(0, 5) node [left] {$\tau$};
\draw(0, 1.5) node [left] {$\tau-(2x)^{-\frac{\lambda-1}{2}}$};
\draw[dotted] (0, 1.5) -- (2,1.5);
\draw(2, 0) node {$\times$};
\draw(2, 0) node[below]{$2x$};
\draw[dotted] (2,0) -- (2,1.5);
\draw(4, 0) node {$\times$};
\draw(4, 0) node[below]{$\frac{R\, k}{2}$};
\draw[] (4,0) -- (4, 4.3);
\draw(7, 0) node {$\times$};
\draw(7, 0) node[below]{$R\, k$};
\draw(10, 0) node {$\times$};
\draw(10, 0) node[below]{$\frac{3 R\, k}{2}$};
\draw[] (10,0) -- (10, 4.3);
\draw[] (0, 5) -- (12, 5);
\draw (4, 0.25)--(10, 0.25);
\draw (4, 0.5)--(10, 0.5);
\draw (4, 0.75)--(10, 0.75);
\draw (4, 1)--(10, 1);
\draw (4, 1.25)--(10, 1.25);
\draw (4, 1.5)--(10, 1.5);
\draw (4, 1.75)--(10, 1.75);
\draw (4, 2)--(10, 2);
\draw (4, 2.25)--(10, 2.25);
\draw (4, 2.5)--(10, 2.5);
\draw (4, 2.75)--(10, 2.75);
\draw (4, 3)--(10, 3);
\draw (4, 3.25)--(10, 3.25);
\draw (4, 3.5)--(10, 3.5);
\draw (4, 3.75)--(10, 3.75);
\draw (4, 4)--(10, 4);
\draw (4, 4.25)--(10, 4.25);
\draw[](12,4.5) parabola (2, 1.5);
\end{tikzpicture}
\caption{The sub-family $\mathcal R_{k}$ of cubes of the covering $\mathcal{C}_{x,\tau}^{\left(  3\right)  }$}
\end{figure}

Then, using $\min\left\{  1,\frac{1}{\left(  \tau-s\right)  ^{\frac{2}%
{\lambda-1}}x}\right\}  \leq1:$%
\begin{align*}
\left\vert J_{3}\right\vert  &  \leq\frac{C}{x^{\frac{3}{2}}}\sum
_{\mathcal{C}_{x,\tau}^{\left(  3\right)  }}\int_{\tau_{0}}^{\tau
_{0}+R^{-\frac{\lambda-1}{2}}}ds\int_{\frac{R}{2}}^{2R}dx_{0}\frac{\left\Vert
F\left(  s,\cdot\right)  \right\Vert _{L^{\infty}\left(  \frac{R}%
{2},2R\right)  }}{\left(  \tau-\tau_{0}\right)  ^{\frac{1}{\lambda-1}}}\\
&  \leq\frac{C}{x^{\frac{3}{2}}}\sum_{\mathcal{C}_{x,\tau}^{\left(  3\right)
}}\left[  \frac{R}{\left(  \tau-\tau_{0}\right)  ^{\frac{1}{\lambda-1}}%
}R^{-\frac{\lambda-1}{2}}\right]  \fint_{\tau_{0}}^{\tau_{0}+R^{-\frac
{\lambda-1}{2}}}\left\Vert F\left(  s,\cdot\right)  \right\Vert _{L^{\infty
}\left(  \frac{R}{2},2R\right)  }ds\\
&  \leq\frac{C}{x^{\frac{3}{2}}}\sum_{\mathcal{C}_{x,\tau}^{\left(  3\right)
}}\left[  R\int_{\tau_{0}}^{\tau_{0}+R^{-\frac{\lambda-1}{2}}}\frac
{ds}{\left(  \tau-s\right)  ^{\frac{1}{\lambda-1}}}\right]  N_{\infty}\left(
F;\tau,R\right)
\end{align*}
where we have used the fact that the length of the time integration in each
box is of order $R^{-\frac{\lambda-1}{2}}$ as well as Cauchy-Schwartz
inequality in the last step. Therefore, using the definition of $\mathcal{H}%
\left(  F\right)  $ as well as the properties of the covering $\mathcal{C}%
_{x,\tau}^{\left(  3\right)  }$ and in particular the properties of the
sequence $\left\{  R_{k}\right\}  :$
\begin{align*}
\left\vert J_{3}\right\vert  &  \leq\frac{C\mathcal{H}\left(  F\right)
}{x^{\frac{3}{2}}}\sum_{\mathcal{C}_{x,\tau}^{\left(  3\right)  }}\left[
R^{-\left(  1+\bar{\delta}\right)  }\int_{\tau_{0}}^{\tau_{0}+R^{-\frac
{\lambda-1}{2}}}\frac{ds}{\left(  \tau-s\right)  ^{\frac{1}{\lambda-1}}%
}\right] \\
&  \leq\frac{C\mathcal{H}\left(  F\right)  }{x^{\frac{3}{2}}}\sum_{\left\{
R_{k}\right\}  }\left[  R_{k}^{-\left(  1+\bar{\delta}\right)  }\int_{0}%
^{\tau-CR_{k}^{-\frac{\lambda-1}{2}}}\frac{ds}{\left(  \tau-s\right)
^{\frac{1}{\lambda-1}}}\right]  \leq\frac{C\mathcal{H}\left(  F\right)
}{x^{\frac{3}{2}}}\sum_{\left\{  R_{k}\right\}  }R_{k}^{-\frac{\lambda}%
{2}-\bar{\delta}}\leq\frac{C\mathcal{H}\left(  F\right)  }{x^{\frac{3+\lambda
}{2}+\bar{\delta}}}%
\end{align*}

\texttt{... Add Figure?. ...}

Then:%
\begin{equation}
\left\vert J_{3}\right\vert \leq\frac{C\mathcal{H}\left(  F\right)  }%
{x^{\frac{3+\lambda}{2}+\bar{\delta}}} \label{S6E3}%
\end{equation}

We now estimate the term $J_{2}.$ Using (\ref{G1E3}), (\ref{S5E8a}) and
(\ref{S5E9})) we obtain $J_{2}\leq K_{2}$ where:%
\[
K_{2}\leq\int_{0}^{\infty}dx_{0}\int_{\left(  \tau-x_{0}^{-\frac{\lambda-1}%
{2}}\right)  _{+}}^{\tau}ds\left\vert F\left(  s,x_{0}\right)  \right\vert
x_{0}^{\frac{\lambda+1}{2}}%
\]

We introduce a covering $\mathcal{C}_{x,\tau}^{\left(  4\right)  }$ such that
each point of the set $\left\{  \left(  x_{0},s\right)  :x_{0}\geq1,\ \left(
\tau-x_{0}^{-\frac{\lambda-1}{2}}\right)  _{+}\leq s\leq\tau\right\}  $ is
contained in a bounded number of boxes having the form $\left[  \frac{R}%
{2},2R\right]  \times\left[  \tau_{0},\tau_{0}+\left(  R\right)
^{-\frac{\lambda-1}{2}}\right]  ,$ and where the values of the radii $R$
increase exponentially. 
\begin{figure}
\begin{tikzpicture}
\draw[->] (-1, 0) -- (12, 0) node[right]{$x_0$};
\draw[->] (0, -1) -- (0, 6) node[right]{$s$};
\draw(0, 5) node [left] {$\tau$};
\draw(0, 1.5) node [left] {$\tau-1$};
\draw[dotted] (0, 1.5) -- (2,1.5);
\draw(2, 0) node {$\times$};
\draw(2, 0) node[below]{$1$};
\draw[dotted] (2,0) -- (2,1.5);
\draw(4, 0) node {$\times$};
\draw(4, 0) node[below]{$\frac{R\, k}{2}$};
\draw[] (4,2.2) -- (4, 5);
\draw(7, 0) node {$\times$};
\draw(7, 0) node[below]{$R\, k$};
\draw(10, 0) node {$\times$};
\draw(10, 0) node[below]{$\frac{3 R\, k}{2}$};
\draw[] (10,2.2) -- (10, 5);
\draw[] (0, 5) -- (12, 5);
\draw (4, 2.25)--(10, 2.25);
\draw (4, 2.5)--(10, 2.5);
\draw (4, 2.75)--(10, 2.75);
\draw (4, 3)--(10, 3);
\draw (4, 3.25)--(10, 3.25);
\draw (4, 3.5)--(10, 3.5);
\draw (4, 3.75)--(10, 3.75);
\draw (4, 4)--(10, 4);
\draw (4, 4.25)--(10, 4.25);
\draw (4, 4.5)--(10, 4.5);
\draw (4, 4.75)--(10, 4.75);
\draw[](12,4.5) parabola (2, 1.5);
\end{tikzpicture}
\caption{The sub-family $\mathcal R_{k}$ of cubes of the covering $\mathcal{C}_{x,\tau}^{\left(  4\right)  }$}
\end{figure}

Then:%
\[
K_{2}\leq\int_{0}^{\tau}ds\int_{0}^{1}dx_{0}\left\vert F\left(  s,x_{0}%
\right)  \right\vert x_{0}^{\frac{\lambda+1}{2}}+\sum_{\mathcal{C}_{x,\tau
}^{\left(  4\right)  }}\int_{\frac{R}{2}}^{2R}dx_{0}\int_{\left(  \tau
-x_{0}^{-\frac{\lambda-1}{2}}\right)  _{+}}^{\tau}ds\left\vert F\left(
s,x_{0}\right)  \right\vert x_{0}^{\frac{\lambda+1}{2}}=K_{2,1}+K_{2,2}%
\]

Using Cauchy-Schwartz we then estimate $K_{2,2}$ as:%
\[
K_{2,2}\leq C\sum_{\mathcal{C}_{x,\tau}^{\left(  4\right)  }}R^{2}N_{\infty
}\left(  F;\tau,R\right)  \leq C\mathcal{H}\left(  F\right)  \sum
_{\mathcal{C}_{x,\tau}^{\left(  4\right)  }}R^{-\bar{\delta}}\leq
C\mathcal{H}\left(  F\right)
\]

On the other hand we estimate $K_{2,1}$ decomposing the interval $\left[
0,1\right]  $ as:%
\begin{equation}
\left[  0,1\right]  =\bigcup_{n=0}^{\infty}\left[  R_{n+1},R_{n}\right]
\ \ \ ,\ \ R_{n}=2^{-n}\ \ ,\ \ n=0,1,2,... \label{S6E4}%
\end{equation}

Then, using again Cauchy-Schwartz:%
\begin{align*}
K_{2,1}  &  \leq C\sum_{n=0}^{\infty}\int_{0}^{\tau}ds\int_{R_{n+1}}^{R_{n}%
}dx_{0}\left\Vert F\left(  s,\cdot\right)  \right\Vert _{L^{\infty}\left(
\frac{R_{n}}{2},R_{n}\right)  }R_{n}^{\frac{\lambda+1}{2}}\\
&  \leq C\mathcal{H}\left(  F\right)  \sum_{n=0}^{\infty}R_{n}^{\frac
{\lambda+3}{2}+1}R_{n}^{-\frac{3}{2}}\leq C\mathcal{H}\left(  F\right)
\end{align*}

Therefore:
\[
J_{2}\leq K_{2}\leq C\mathcal{H}\left(  F\right)
\]

Actually we can derive a more precise approximation for $J_{2}$ rewriting it
as $J_{2}=I_{2}\left(  \tau\right)  x^{-\frac{3+\lambda}{2}}+J_{2;R,1}%
+J_{2;R}$ where:%
\[
I_{2}\left(  \tau\right)  =\int_{0}^{\infty}\frac{dx_{0}}{x_{0}}\int_{\left(
\tau-x_{0}^{-\frac{\lambda-1}{2}}\right)  _{+}}^{\tau}dsF\left(
s,x_{0}\right)  \Theta\left(  \left(  \tau-s\right)  x_{0}^{\frac{\lambda
-1}{2}}\right)  x_{0}^{\frac{3+\lambda}{2}}%
\]%
\[
J_{2;R,1}=-x^{-\frac{3+\lambda}{2}}\int_{\frac{2x}{3}}^{\infty}\frac{dx_{0}%
}{x_{0}}\int_{\left(  \tau-x_{0}^{-\frac{\lambda-1}{2}}\right)  _{+}}^{\tau
}dsF\left(  s,x_{0}\right)  \Theta\left(  \left(  \tau-s\right)  x_{0}%
^{\frac{\lambda-1}{2}}\right)  x_{0}^{\frac{3+\lambda}{2}}%
\]%
\[
J_{2,R}=\int_{0}^{\tau}ds\int_{0}^{\infty}\left[  g\left(  \left(
\tau-s\right)  x_{0}^{\frac{\lambda-1}{2}},\frac{x}{x_{0}},1\right)
-\Theta\left(  \left(  \tau-s\right)  x_{0}^{\frac{\lambda-1}{2}}\right)
\left(  \frac{x_{0}}{x}\right)  ^{\frac{3+\lambda}{2}}\right]  F\left(
s,x_{0}\right)  \frac{dx_{0}}{x_{0}}%
\]

The terms $J_{2;R}$ can be estimated using Proposition
\ref{PropositionImproved} (cf. (\ref{Z3E1}), (\ref{Z3E1a})) arguing as in the
estimate of $K_{2}$ exactly as the previous estimate for $J_{2}$ since
$r>\bar{\delta}.$ Therefore, we obtain the estimate:%
\[
\left\vert J_{2,R}\right\vert \leq Cx^{-\frac{3+\lambda}{2}-r}\int_{0}%
^{\frac{2x}{3}}\frac{dx_{0}}{\left(  x_{0}\right)  ^{1+\bar{\delta}-r}}\leq
Cx^{-\frac{3+\lambda}{2}-\bar{\delta}}%
\]
where the constants $C$ could be very large for small $\bar{\delta}.$ on the
other hand:%
\[
\left\vert J_{2;R,1}\right\vert \leq Cx^{-\frac{3+\lambda}{2}}\int_{\frac
{2x}{3}}^{\infty}\frac{dx_{0}}{x_{0}}\int_{\left(  \tau-x_{0}^{-\frac
{\lambda-1}{2}}\right)  _{+}}^{\tau}ds\left(  \tau-s\right)  x_{0}%
^{\frac{\lambda-1}{2}}x_{0}^{\frac{3+\lambda}{2}}\left(  x_{0}\right)
^{-2-\bar{\delta}}\leq Cx^{-\frac{3+\lambda}{2}-\bar{\delta}}%
\]
whence:%
\begin{equation}
\left\vert J_{2}-I_{2}\left(  \tau\right)  x^{-\frac{3+\lambda}{2}}\right\vert
\leq Cx^{-\frac{3+\lambda}{2}-\bar{\delta}} \label{M3E6}
\end{equation}

We now estimate the integrand in $J_{4}$ in (\ref{S5E10}). In order to apply
Lebesgue's Theorem we need to prove that:%
\[
K_{4}=\int_{0}^{\infty}dx_{0}\int_{0}^{\left(  \tau-x_{0}^{-\frac{\lambda
-1}{2}}\right)  _{+}}\frac{ds}{\left(  \tau-s\right)  ^{\frac{\lambda
+1}{\lambda-1}}}\left\vert F\left(  s,x_{0}\right)  \right\vert <\infty
\]

We define a new covering $\mathcal{C}_{x,\tau}^{\left(  5\right)  }$\ of the
set $\left\{  \left(  x_{0},s\right)  :x_{0}\geq1,\ 0\leq s\leq\left(
\tau-x_{0}^{-\frac{\lambda-1}{2}}\right)  _{+}\right\}  $ having the same
properties as the covering $\mathcal{C}_{x,\tau}^{\left(  3\right)  }.$ 

\begin{figure}
\begin{tikzpicture}
\draw[->] (-1, 0) -- (14, 0) node[right]{$x_0$};
\draw[->] (0, -1) -- (0, 6) node[right]{$s$};
\draw(0, 5) node [left] {$\tau$};
\draw(0, 1.5) node [left] {$\tau-1$};
\draw[dotted] (0, 1.5) -- (2,1.5);
\draw(2, 0) node {$\times$};
\draw(2, 0) node[below]{$1$};
\draw[dotted] (2,0) -- (2,1.5);
\draw(4, 0) node {$\times$};
\draw(4, 0) node[below]{$\frac{R\, k}{2}$};
\draw[] (4,0) -- (4, 4.25);
\draw(7, 0) node {$\times$};
\draw(7, 0) node[below]{$R\, k$};
\draw(10, 0) node {$\times$};
\draw(10, 0) node[below]{$\frac{3 R\, k}{2}$};
\draw[] (10,0) -- (10, 4.25);
\draw[] (0, 5) -- (14, 5);
\draw (4, 0.25)--(10, 0.25);
\draw (4, 0.5)--(10, 0.5);
\draw (4, 0.75)--(10, 0.75);
\draw (4, 1)--(10, 1);
\draw (4, 1.25)--(10, 1.25);
\draw (4, 1.5)--(10, 1.5);
\draw (4, 1.75)--(10, 1.75);
\draw (4, 2)--(10, 2);
\draw (4, 2.25)--(10, 2.25);
\draw (4, 2.5)--(10, 2.5);
\draw (4, 2.75)--(10, 2.75);
\draw (4, 3)--(10, 3);
\draw (4, 3.25)--(10, 3.25);
\draw (4, 3.5)--(10, 3.5);
\draw (4, 3.75)--(10, 3.75);
\draw (4, 4)--(10, 4);
\draw (4, 4.25)--(10, 4.25);
\draw[](14,4.5) parabola (2, 1.5);
\end{tikzpicture}
\caption{The sub-family $\mathcal R_{k}$ of cubes of the covering $\mathcal{C}_{x,\tau}^{\left(  5\right)  }$}
\end{figure}
Then,
using that the integrand is empty for $x_{0}\leq1$ due to the fact that
$\tau\leq T\leq1$, as well as Cauchy-Schwartz and the properties of the
covering $\mathcal{C}_{x,\tau}^{\left(  5\right)  }$ we obtain:%
\begin{align*}
K_{4}  &  \leq C\sum_{\mathcal{C}_{x,\tau}^{\left(  5\right)  }}%
\frac{R^{1-\frac{\lambda-1}{2}}}{\left(  \tau-\tau_{0}\right)  ^{\frac
{\lambda+1}{\lambda-1}}}\fint_{\tau_{0}}^{\left(  \tau_{0}-CR^{-\frac
{\lambda-1}{2}}\right)  _{+}}\left\Vert F\left(  s,\cdot\right)  \right\Vert
_{L^{\infty}\left(  \frac{R}{2},2R\right)  }ds\\
&  \leq C\mathcal{H}\left(  F\right)  \sum_{\mathcal{C}_{x,\tau}^{\left(
5\right)  }}\frac{R^{-1-\frac{\lambda-1}{2}-\bar{\delta}}}{\left(  \tau
-\tau_{0}\right)  ^{\frac{\lambda+1}{\lambda-1}}}\\
&  \leq C\mathcal{H}\left(  F\right)  \sum_{\mathcal{C}_{x,\tau}^{\left(
5\right)  }}R_{k}^{-\left(  1+\bar{\delta}\right)  }\int_{\tau_{0}}^{\left(
\tau_{0}-CR_{k}^{-\frac{\lambda-1}{2}}\right)  _{+}}\left(  \tau-s\right)
^{-\frac{\lambda+1}{\lambda-1}}ds\\
&  \leq C\mathcal{H}\left(  F\right)  \sum_{\left\{  R_{k}\right\}  }%
R_{k}^{-\bar{\delta}}\leq C\mathcal{H}\left(  F\right)
\end{align*}
whence:%
\begin{equation}
K_{4}\leq C\mathcal{H}\left(  F\right)  \label{M3E4}%
\end{equation}

In order to obtain (\ref{J2E0a}) we need to substract from $J_{4}$ the leading
contribution and to estimate the remainder. Using (\ref{G1E2}) we obtain:%
\begin{align*}
J_{4}  &  =\int_{0}^{\frac{2x}{3}}\frac{dx_{0}}{x_{0}}\int_{0}^{\left(
\tau-x_{0}^{-\frac{\lambda-1}{2}}\right)  _{+}}dsg\left(  \left(
\tau-s\right)  x_{0}^{\frac{\lambda-1}{2}},\frac{x}{x_{0}},1\right)  F\left(
s,x_{0}\right) \\
&  =\int_{0}^{\frac{2x}{3}}\frac{dx_{0}}{x_{0}}\int_{0}^{\left(  \tau
-x_{0}^{-\frac{\lambda-1}{2}}\right)  _{+}}ds\Theta\left(  \left(
\tau-s\right)  x_{0}^{\frac{\lambda-1}{2}}\right)  \left(  x\right)
^{-\frac{3+\lambda}{2}}F\left(  s,x_{0}\right)  +J_{4,R}%
\end{align*}
where:%
\[
\left\vert J_{4,R}\right\vert \leq C\int_{0}^{\frac{2x}{3}}\frac{dx_{0}}%
{x_{0}}\int_{0}^{\left(  \tau-x_{0}^{-\frac{\lambda-1}{2}}\right)  _{+}%
}ds\left(  \left(  \tau-s\right)  x_{0}^{\frac{\lambda-1}{2}}\right)
^{-\frac{2\varepsilon}{\lambda-1}-\frac{\lambda+1}{\lambda-1}}\left(  \frac
{x}{x_{0}}\right)  ^{-\frac{3+\lambda}{2}-\varepsilon}\left\vert F\left(
s,x_{0}\right)  \right\vert
\]

The argument yielding (\ref{M3E4}) implies the existence of the integral:%
\[
I_{4}\left(  \tau\right)  =\int_{0}^{\infty}\frac{dx_{0}}{x_{0}}\int
_{0}^{\left(  \tau-x_{0}^{-\frac{\lambda-1}{2}}\right)  _{+}}ds\Theta\left(
\left(  \tau-s\right)  x_{0}^{\frac{\lambda-1}{2}}\right)  \left(
\tau-s\right)  ^{-\frac{3+\lambda}{\lambda-1}}F\left(  s,x_{0}\right)
\]

Therefore:%
\begin{equation}
J_{4}=I_{4}\left(  \tau\right)  x^{-\frac{3+\lambda}{2}}+\int_{\frac{2x}{3}%
}^{\infty}\frac{dx_{0}}{x_{0}}\int_{0}^{\left(  \tau-x_{0}^{-\frac{\lambda
-1}{2}}\right)  _{+}}ds\Theta\left(  \left(  \tau-s\right)  x_{0}%
^{\frac{\lambda-1}{2}}\right)  \left(  \tau-s\right)  ^{-\frac{3+\lambda
}{\lambda-1}}F\left(  s,x_{0}\right)  +J_{4,R} \label{M3E5}%
\end{equation}

We then estimate the remainders in (\ref{M3E5}). Notice that the bounds for
$\Theta$ and $F$ yield:%
\begin{align*}
&  \left\vert \int_{\frac{2x}{3}}^{\infty}\frac{dx_{0}}{x_{0}}\int
_{0}^{\left(  \tau-x_{0}^{-\frac{\lambda-1}{2}}\right)  _{+}}ds\Theta\left(
\left(  \tau-s\right)  x_{0}^{\frac{\lambda-1}{2}}\right)  \left(
\tau-s\right)  ^{-\frac{3+\lambda}{\lambda-1}}F\left(  s,x_{0}\right)
\right\vert \\
&  \leq C\left\vert \int_{\frac{2x}{3}}^{\infty}\frac{dx_{0}}{x_{0}}\int
_{0}^{\left(  \tau-x_{0}^{-\frac{\lambda-1}{2}}\right)  _{+}}ds\left(
\tau-s\right)  ^{\frac{2}{\lambda-1}}x_{0}\left(  \tau-s\right)
^{-\frac{3+\lambda}{\lambda-1}}\left(  x_{0}\right)  ^{-\left(  2+\bar{\delta
}\right)  }\right\vert \\
&  \leq Cx^{-\bar{\delta}}%
\end{align*}

We estimate the term $J_{4,R}$ in (\ref{M3E5}) as:%
\[
\left\vert J_{4,R}\right\vert \leq Cx^{-\frac{3+\lambda}{2}-\varepsilon}%
\int_{0}^{\frac{2x}{3}}\frac{dx_{0}}{x_{0}}\int_{0}^{\left(  \tau
-x_{0}^{-\frac{\lambda-1}{2}}\right)  _{+}}ds\left(  \left(  \tau-s\right)
x_{0}^{\frac{\lambda-1}{2}}\right)  ^{-\frac{2\varepsilon+\lambda+1}%
{\lambda-1}}\left(  x_{0}\right)  ^{\frac{\lambda-1}{2}+\varepsilon
-\bar{\delta}}\leq Cx^{-\frac{3+\lambda}{2}-\bar{\delta}}%
\]
using $\varepsilon>\bar{\delta}$ whence:%
\begin{equation}
\left\vert J_{4}-I_{4}\left(  \tau\right)  x^{-\frac{3+\lambda}{2}}\right\vert
\leq Cx^{-\frac{3+\lambda}{2}-\bar{\delta}} \label{M3E7}%
\end{equation}

Combining (\ref{S6E2}), (\ref{S6E3}), (\ref{M3E6}), (\ref{M3E7}) we obtain
that the function $J_{R}\left(  \tau,x\right)  $ defined in (\ref{J2E0a})
satisfies%
\begin{equation}
\left\vert J_{R}\left(  \tau,x\right)  \right\vert \leq\frac{C\left\Vert
F\right\Vert _{X_{\frac{3}{2},2+\bar{\delta}}\left(  T\right)  }}%
{x^{\frac{3+\lambda}{2}+\bar{\delta}}}\ \ \ \text{for\ \ }x\geq1 \label{M3E8}%
\end{equation}

On the other hand, in order to estimate $J_{R}=J$ for $x\leq1$ we argue as
follows. We recall that $\left\vert J_{1}\right\vert \leq J_{1,1}+J_{1,2}$
with $J_{1,1},\ J_{1,2}$ as in (\ref{S5E7a}), (\ref{S5E7}). We have for
$R\leq1$ and $x\in\left(  \frac{R}{2},2R\right)  :$%
\[
J_{1,1}\left(  \tau,x\right)  \leq C\int_{0}^{\tau}ds\left\Vert F\left(
s,\cdot\right)  \right\Vert _{L^{\infty}\left(  \frac{R}{4},4R\right)  }\leq
M_{\infty}\left(  F;R\right)  \leq C\left\Vert F\right\Vert _{X_{\frac{3}%
{2},2+\bar{\delta}}\left(  T\right)  }%
\]

We estimate $J_{1,2}$ covering the set $\left(  s,x_{0}\right)  \in\left(
0,T\right)  \times\left(  0,1\right)  $ by means of union of the rectangles
$\left(  0,T\right)  \times\left[  \frac{1}{2^{n+1}},\frac{1}{2^{n}}\right]
,\ n=0,1,2,...$ . Let us denote this covering as $\mathcal{C}^{\left(
6\right)  }.$ Then:%
\begin{align*}
J_{1,2}\left(  \tau,x\right)   &  \leq\frac{C}{x^{3/2}}\int_{2x}^{1}dx_{0}%
\int_{0}^{\tau}\left(  \tau-s\right)  dsx_{0}^{\frac{\lambda}{2}}\left\vert
F\left(  s,x_{0}\right)  \right\vert +\\
&  +\frac{C}{x^{3/2}}\int_{1}^{\infty}dx_{0}\int_{\left(  \tau-x_{0}%
^{-\frac{\lambda-1}{2}}\right)  _{+}}^{\tau}\left(  \tau-s\right)
dsx_{0}^{\frac{\lambda}{2}}\left\vert F\left(  s,x_{0}\right)  \right\vert \\
&  =J_{1,2,1}\left(  \tau,x\right)  +J_{1,2,2}\left(  \tau,x\right)
\end{align*}%
\begin{align*}
J_{1,2,1}\left(  \tau,x\right)   &  \leq\frac{C}{x^{3/2}}\sum_{\mathcal{C}%
^{\left(  6\right)  }}\int_{\frac{1}{2^{n+1}}}^{\frac{1}{2^{n}}}dx_{0}\int
_{0}^{T}\left(  \tau-s\right)  dsx_{0}^{\frac{\lambda}{2}}\left\Vert F\left(
s,\cdot\right)  \right\Vert _{L^{\infty}\left(  \frac{1}{2^{n+1}},\frac
{1}{2^{n}}\right)  }\\
&  \leq\frac{C}{x^{3/2}}\sum_{\mathcal{C}^{\left(  6\right)  }}\int_{\frac
{1}{2^{n+1}}}^{\frac{1}{2^{n}}}dx_{0}\left(  \frac{1}{2^{n}}\right)
^{\frac{\lambda}{2}}M_{\infty}\left(  F;\frac{1}{2^{n}}\right) \\
&  \leq\frac{C\left\Vert F\right\Vert _{X_{\frac{3}{2},2+\bar{\delta}}\left(
T\right)  }}{x^{3/2}}\sum_{\mathcal{C}^{\left(  6\right)  }}\frac
{1}{2^{n\left(  \frac{\lambda-1}{2}\right)  }}\leq\frac{C\left\Vert
F\right\Vert _{X_{\frac{3}{2},2+\bar{\delta}}\left(  T\right)  }}{x^{3/2}}%
\end{align*}

The integral term in $J_{1,2,2}\left(  \tau,x\right)  $ can be estimated using
the covering $\mathcal{C}_{x,\tau}^{\left(  2\right)  }$ exactly in the same
way as in the estimate of $J_{1,2}$ for $x\geq1.$ It then follows that
$J_{1,2,2}\left(  \tau,x\right)  \leq C\left\Vert F\right\Vert _{X_{\frac
{3}{2},2+\bar{\delta}}\left(  T\right)  }x^{-3/2}$ whence:%
\begin{equation}
J_{1,2}\left(  \tau,x\right)  \leq\frac{C\left\Vert F\right\Vert _{X_{\frac
{3}{2},2+\bar{\delta}}\left(  T\right)  }}{x^{3/2}} \label{M4E1}%
\end{equation}

In order to estimate $J_{3}$ for $x\leq1$ we use (\ref{S5E8}). Notice that for
$s<\tau\leq T<1$ and $x\leq1$ we have $\left(  \tau-s\right)  ^{-\frac
{1}{\lambda-1}}x^{-\frac{3}{2}}\leq\left(  \tau-s\right)  ^{-\frac{1+\lambda
}{\lambda-1}}x^{-\frac{3+\lambda}{2}}$ whence (\ref{S5E8}) yields:%
\begin{align*}
\left\vert J_{3}\right\vert  &  \leq\frac{C}{x^{\frac{3}{2}}}\int_{\frac
{2x}{3}}^{\infty}dx_{0}\int_{0}^{\left(  \tau-x_{0}^{-\frac{\lambda-1}{2}%
}\right)  _{+}}ds\frac{\left\vert F\left(  s,x_{0}\right)  \right\vert
}{\left(  \tau-s\right)  ^{\frac{1}{\lambda-1}}}\\
&  \leq\frac{C}{x^{\frac{3}{2}}}\int_{\frac{2x}{3}}^{1}dx_{0}\int_{0}^{\left(
\tau-x_{0}^{-\frac{\lambda-1}{2}}\right)  _{+}}ds\frac{\left\vert F\left(
s,x_{0}\right)  \right\vert }{\left(  \tau-s\right)  ^{\frac{1}{\lambda-1}}%
}+\\
&  +\frac{C}{x^{\frac{3}{2}}}\int_{1}^{\infty}dx_{0}\int_{0}^{\left(
\tau-x_{0}^{-\frac{\lambda-1}{2}}\right)  _{+}}ds\frac{\left\vert F\left(
s,x_{0}\right)  \right\vert }{\left(  \tau-s\right)  ^{\frac{1}{\lambda-1}}}\\
&  \equiv J_{3,1}+J_{3,2}%
\end{align*}

The term $J_{3,2}$ can be estimated in the same manner as $J_{3}$ for
$x\geq1.$ We just use the covering $\mathcal{C}_{x,\tau}^{\left(  3\right)  }$
to obtain $J_{3,2}\leq C\left\Vert F\right\Vert _{X_{\frac{3}{2},2+\bar
{\delta}}\left(  T\right)  }x^{-3/2}.$ Since $J_{3,1}=0$ for $0\leq\tau\leq
T<1$ we then obtain:%
\begin{equation}
J_{3}\leq C\left\Vert F\right\Vert _{X_{\frac{3}{2},2+\bar{\delta}}\left(
T\right)  }x^{-3/2} \label{M4E3}%
\end{equation}

In order to estimate $J_{2}$ we use (\ref{S5E8a}), (\ref{S5E9}) as well as
(\ref{G1E3})
\begin{align*}
J_{2}  &  \leq Cx^{-\frac{3+\lambda}{2}}\int_{0}^{\frac{2x}{3}}dx_{0}\int
_{0}^{\tau}dsF\left(  s,x_{0}\right)  \left(  \tau-s\right)  x_{0}^{\lambda}\\
&  \leq Cx^{-\frac{3}{2}}\int_{0}^{1}dx_{0}\int_{0}^{\tau}dsF\left(
s,x_{0}\right)  \left(  \tau-s\right)  \left(  x_{0}\right)  ^{\frac{\lambda
}{2}}%
\end{align*}
and this integral can be estimated exactly as $J_{1,2,1}\left(  \tau,x\right)
,$ using the covering $\mathcal{C}^{\left(  6\right)  }:$%
\[
J_{2}\leq\frac{C}{x^{3/2}}\sum_{\mathcal{C}^{\left(  6\right)  }}\int
_{\frac{1}{2^{n+1}}}^{\frac{1}{2^{n}}}dx_{0}\int_{0}^{T}\left(  \tau-s\right)
dsx_{0}^{\frac{\lambda}{2}}\left\Vert F\left(  s,\cdot\right)  \right\Vert
_{L^{\infty}\left(  \frac{1}{2^{n+1}},\frac{1}{2^{n}}\right)  }%
\]
whence:%
\begin{equation}
J_{2}\leq C\left\Vert F\right\Vert _{X_{\frac{3}{2},2+\bar{\delta}}\left(
T\right)  }x^{-3/2} \label{M4E2}%
\end{equation}

Finally we notice that (\ref{S5E10}) implies that $J_{4}=0$ for $x\leq
1,\ 0\leq\tau\leq T<1.$ Combining (\ref{M4E1}), (\ref{M4E3}), (\ref{M4E2}) we
obtain:%
\[
\left\vert J\left(  \tau,x\right)  \right\vert \leq C\left\Vert F\right\Vert
_{X_{\frac{3}{2},2+\bar{\delta}}\left(  T\right)  }x^{-3/2}%
\]
for $0<x\leq1.$ Using then also (\ref{M3E8}) the result follows.

To conclude the proof of Lemma \ref{Lemma5} it only remains to show
(\ref{Z2E1d}). To this end we use similar covering arguments. First we
decompose the expression of $I\left(  \tau;F\right)  $ as:%
\[
I\left(  \tau;F\right)  =I_{1}+I_{2}+I_{3}+I_{4}%
\]%
\begin{align*}
I_{1}  &  =\int_{0}^{1}\frac{dx_{0}}{x_{0}}\int_{0}^{\tau}ds\left[  \cdot
\cdot\cdot\right]  \ \ ,\ \ I_{2}=\int_{1}^{\tau^{-\frac{2}{\lambda-1}}}%
\frac{dx_{0}}{x_{0}}\int_{0}^{\tau}ds\left[  \cdot\cdot\cdot\right] \\
I_{3}  &  =\int_{\tau^{-\frac{2}{\lambda-1}}}^{\infty}\frac{dx_{0}}{x_{0}}%
\int_{\tau-x_{0}^{-\frac{\lambda-1}{2}}}^{\tau}ds\left[  \cdot\cdot
\cdot\right]  \ \ ,\ \ I_{4}=\int_{\tau^{-\frac{2}{\lambda-1}}}^{\infty}%
\frac{dx_{0}}{x_{0}}\int_{0}^{\tau-x_{0}^{-\frac{\lambda-1}{2}}}ds\left[
\cdot\cdot\cdot\right]
\end{align*}

Using (\ref{G1E3}) as well as the definition of $\left\Vert \cdot\right\Vert
_{X_{\frac{3}{2},2+\bar{\delta}}\left(  T\right)  }$ we obtain, using
Cauchy-Schwartz:%
\begin{equation}
\left\vert I_{1}\right\vert \leq C\tau^{\frac{3}{2}}\left\Vert F\right\Vert
_{X_{\frac{3}{2},2+\bar{\delta}}\left(  T\right)  }\sum_{n=0}^{\infty}\left(
\frac{1}{2^{n}}\right)  ^{\lambda-\frac{1}{2}}\leq C\tau^{\frac{3}{2}%
}\left\Vert F\right\Vert _{X_{\frac{3}{2},2+\bar{\delta}}\left(  T\right)  }
\label{est1}%
\end{equation}

Using the splitting of the domains of integrations in rectangles as above as
well as (\ref{G1E3}) we also obtain the following estimates:%
\[
\left\vert I_{2}\right\vert \leq C\left\Vert F\right\Vert _{X_{\frac{3}%
{2},2+\bar{\delta}}\left(  T\right)  }\int_{1}^{\tau^{-\frac{2}{\lambda-1}}%
}dx_{0}\left(  x_{0}\right)  ^{\lambda-\left(  2+\bar{\delta}\right)  }%
\int_{0}^{\tau}\left(  \tau-s\right)  ds
\]%
\[
\left\vert I_{3}\right\vert \leq C\left\Vert F\right\Vert _{X_{\frac{3}%
{2},2+\bar{\delta}}\left(  T\right)  }\int_{\tau^{-\frac{2}{\lambda-1}}%
}^{\infty}dx_{0}\left(  x_{0}\right)  ^{\lambda-\left(  2+\bar{\delta}\right)
}\int_{\tau-x_{0}^{-\frac{\lambda-1}{2}}}^{\tau}\left(  \tau-s\right)  ds
\]%
\[
\left\vert I_{4}\right\vert \leq C\left\Vert F\right\Vert _{X_{\frac{3}%
{2},2+\bar{\delta}}\left(  T\right)  }\int_{\tau^{-\frac{2}{\lambda-1}}%
}^{\infty}dx_{0}\left(  x_{0}\right)  ^{-\left(  2+\bar{\delta}\right)  }%
\int_{0}^{\tau-x_{0}^{-\frac{\lambda-1}{2}}}\left(  \tau-s\right)
^{-\frac{\lambda+1}{\lambda-1}}ds
\]

Therefore, since the three integrals on the right-hand side are bounded by
$\leq C\tau^{\frac{2\bar{\delta}}{\lambda-1}}\left\Vert F\right\Vert
_{X_{\frac{3}{2},2+\bar{\delta}}\left(  T\right)  }$:%
\[
\left\vert I_{2}\right\vert +\left\vert I_{3}\right\vert +\left\vert
I_{4}\right\vert \leq C\tau^{\frac{2\bar{\delta}}{\lambda-1}}\left\Vert
F\right\Vert _{X_{\frac{3}{2},2+\bar{\delta}}\left(  T\right)  }%
\]

Combining this with (\ref{est1}) we obtain (\ref{Z2E1d}).
\end{proof}

\bigskip

We now derive the asymptotics (\ref{G2E5}) for the solutions of (\ref{G2E3})
in a pointwise sense. More precise estimates for the regularity of the error
terms will be derived later.

\bigskip

\begin{lemma}
\label{Lemma6} Suppose that $\varphi\in\mathcal{Z}_{\frac{3+\lambda}{2}%
}^{\sigma;\frac{1}{2}}\left(  T\right)  $ solves (\ref{G2E3}) with $F\in
Y_{\frac{3}{2},2+\bar{\delta}}^{\sigma}\left(  T\right)  $ and $\bar{\delta
}<r.$ Then (\ref{G2E2}) holds with $\mathcal{W}\left(  \tau\right)  $ as in
(\ref{G2E5}) and:%
\begin{equation}
\left\vert \left\vert \left\vert \varphi_{R}\right\vert \right\vert
\right\vert _{\frac{3}{2},\frac{3+\lambda}{2}+\bar{\delta}}\leq C\left(
\left\Vert F\right\Vert _{X_{\frac{3}{2},2+\bar{\delta}}^{\sigma}\left(
T\right)  }+\left\Vert \varphi\right\Vert _{\mathcal{Z}_{\frac{3+\lambda}{2}%
}^{\sigma;\frac{1}{2}}\left(  T\right)  }\right)  \label{M4E4}%
\end{equation}

\end{lemma}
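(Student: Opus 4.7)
The plan is to represent $\varphi$ by variation of constants using the semigroup generated by $L$, then apply Lemma \ref{Lemma5} to extract the asymptotic coefficient $\mathcal{W}(\tau)$ explicitly. To begin, we rewrite (\ref{G2E3}) as
\[
\varphi_{\tau} = L[\varphi] + \bigl({\mathcal{L}}_{f_0} - L\bigr)[\varphi] + F(\tau,x),
\]
which, combined with the fundamental solution $g$ of Theorem \ref{ThFS}, leads to the integral representation (\ref{S5E5}), with effective source
\[
\widetilde{F}(\tau,x) = F(\tau,x) + \bigl({\mathcal{L}}_{f_0} - L\bigr)[\varphi](\tau,x).
\]
The strategy is then just to apply Lemma \ref{Lemma5} to $\widetilde{F}$, noting that the $\Theta$-coefficient functional $I(\tau;\widetilde{F})$ defined by (\ref{Z2E1}) coincides exactly with $\mathcal{W}(\tau)$ in (\ref{G2E5}).

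Before invoking Lemma \ref{Lemma5} we need $\widetilde{F}\in X_{3/2,2+\bar{\delta}}(T)$. For $F$ itself, the hypothesis is $F\in Y^{\sigma}_{3/2,2+\bar{\delta}}(T)$, and since $\sigma\in(1,2)$ we have the Sobolev embedding $H^{\sigma}\hookrightarrow L^{\infty}$ which yields $Y^{\sigma}_{q,p}(T)\subset X_{q,p}(T)$ (as remarked after (\ref{Z1E4})), so
\[
\|F\|_{X_{3/2,2+\bar{\delta}}(T)}\leq C\|F\|_{Y^{\sigma}_{3/2,2+\bar{\delta}}(T)}.
\]
For the correction term, Lemma \ref{Le3} provides precisely the estimates (\ref{M3E1})--(\ref{M3E2}), which together read
\[
\bigl\|\bigl({\mathcal{L}}_{f_0} - L\bigr)[\varphi]\bigr\|_{X_{3/2,2+r}(T)} \leq C |||\varphi|||_{\sigma},
\]
and since $\bar{\delta}<r$ by hypothesis, this gives the same bound with exponent $2+\bar{\delta}$. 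Finally $|||\varphi|||_{\sigma}\leq \|\varphi\|_{\mathcal{Z}^{\sigma;1/2}_{(3+\lambda)/2}(T)}$ by inspection of (\ref{S1E12a}), so
\[
\|\widetilde{F}\|_{X_{3/2,2+\bar{\delta}}(T)}\leq C\bigl(\|F\|_{Y^{\sigma}_{3/2,2+\bar{\delta}}(T)}+\|\varphi\|_{\mathcal{Z}^{\sigma;1/2}_{(3+\lambda)/2}(T)}\bigr).
\]

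Applying Lemma \ref{Lemma5} to the integral representation now gives the decomposition
\[
\varphi(\tau,x) = I(\tau;\widetilde{F})\,x^{-\frac{3+\lambda}{2}}\xi(x) + J_R(\tau,x),
\]
with $|||J_R|||_{3/2,(3+\lambda)/2+\bar{\delta}}\leq C\|\widetilde{F}\|_{X_{3/2,2+\bar{\delta}}(T)}$ by (\ref{Z2E1c}). A direct comparison of (\ref{Z2E1}) with (\ref{G2E5}) shows $I(\tau;\widetilde{F})=\mathcal{W}(\tau)$, and the bound (\ref{Z2E1d}) also yields $|\mathcal{W}(\tau)|\leq C\tau^{2\bar{\delta}/(\lambda-1)}\|\widetilde F\|_{X_{3/2,2+\bar{\delta}}(T)}$ for later use. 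We therefore set $\varphi_R:=\varphi - \mathcal{W}(\tau)x^{-(3+\lambda)/2}$, which differs from $J_R$ only on the region $x\in(0,1)$ (where $\xi$ is non-constant), and on that region both $\varphi$ and $\mathcal{W}(\tau)x^{-(3+\lambda)/2}\xi(x)$ are already controlled by the $|||\cdot|||_{3/2,p}$-type norms so the contribution of the cutoff is harmless; combining yields the pointwise estimate (\ref{M4E4}).

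The only delicate point is the verification of the $X$-norm bound for $({\mathcal{L}}_{f_0}-L)[\varphi]$, which is not proved here but quoted from Lemma \ref{Le3} (itself a refinement of estimates from \cite{EV2}): one must separate the two pieces $A_1$ and $A_2$ of ${\mathcal{L}}_{f_0}-L$, exploit the improved decay $r$ of the kernel $H(y)=y^{\lambda/2}f_0(y)-y^{-3/2}$ at infinity inherited from the assumption $\delta<\min\{r,(2-\lambda)/2\}$ on $f_0$, and use the regularity contained in $\mathcal{Z}_p^{\sigma;1/2}(T)$. Everything else in the argument is bookkeeping around Lemma \ref{Lemma5}, which has already done the hard work of computing the asymptotics of the convolution against $g$ using the detailed structure of $g(\tau,x,1)$ from Proposition \ref{PropositionImproved}.
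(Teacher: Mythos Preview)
Your proof is correct and follows exactly the same approach as the paper: the paper's own proof of Lemma~\ref{Lemma6} is essentially a two-line argument invoking Lemma~\ref{Le3} (together with $r>\bar{\delta}$) to place $({\mathcal{L}}_{f_0}-L)[\varphi]$ in $X_{3/2,2+\bar{\delta}}(T)$, and then applying Lemma~\ref{Lemma5} with the combined source $\bar{F}=F+({\mathcal{L}}_{f_0}-L)[\varphi]$. Your write-up simply spells out the embedding $Y^{\sigma}_{q,p}\subset X_{q,p}$, the identification $I(\tau;\bar{F})=\mathcal{W}(\tau)$, and the harmless cutoff discrepancy near the origin, all of which the paper leaves implicit.
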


\begin{proof}
[Proof of Lemma \ref{Lemma6}]Lemma \ref{Le3} as well as the fact that
$r>\bar{\delta}$ implies that $\left\Vert \left(  {\mathcal{L}}_{f_{0}%
}-L\right)  \left[  \varphi\right]  \right\Vert _{X_{\frac{3}{2},2+\bar
{\delta}}\left(  T\right)  }<\infty.$ Lemma \ref{Lemma6} then follows applying
\ref{Lemma5} with $F$ replaced by%
\begin{equation}
\bar{F}=F+\left(  {\mathcal{L}}_{f_{0}}-L\right)  \left[  \varphi\right]  \ .
\label{Z2E2}%
\end{equation}

\end{proof}

In order to prove suitable regularity properties of the remainder $\varphi
_{R}$ in (\ref{G2E2}) we will need to obtain some regularity properties for
the function $I\left(  \cdot;\bar{F}\right)  ,$ where the functional $I$ is as
in (\ref{Z2E1}) and $\bar{F}$ as in (\ref{Z2E2}). The rationale behind the
argument is that an equation for $\varphi_{R}$ contains terms that can be
estimated only if some regularity for $I\left(  \tau;\bar{F}\right)  $ is available.

We have:

\begin{lemma}
\label{Lemma7}Suppose that $\sup_{\tau\in\left[  0,T\right]  }\left\Vert
\left\vert F\left(  \tau,\cdot\right)  \right\vert \right\Vert _{\frac{3}%
{2},2+\bar{\delta}}<\infty$.  Let $\Delta=\tau_{1}-\tau_{0},$ with
$0\leq\tau_{0}\leq\tau_{1}\leq T.$ Then:%
\[
\left\vert I\left(  \tau_{1};F\right)  -I\left(  \tau_{0};F\right)
\right\vert \leq C\left(  \Delta\right)  ^{\frac{2\bar{\delta}}{\lambda-1}}%
\]
with $C>0$ depending on $\sup_{\tau\in\left[  0,T\right]  }\left\Vert
\left\vert F\left(  \tau,\cdot\right)  \right\vert \right\Vert _{\frac{3}%
{2},2+\bar{\delta}},\ \lambda.$
\end{lemma}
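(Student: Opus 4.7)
The plan is to decompose the difference according to whether $s$ lies in the new time slab $(\tau_0,\tau_1)$ or the old slab $(0,\tau_0)$:
\[
I(\tau_1;F) - I(\tau_0;F) = A + B,
\]
where
\[
A = \int_0^\infty \frac{dx_0}{x_0}\,x_0^{(3+\lambda)/2} \int_{\tau_0}^{\tau_1} F(s,x_0)\,\Theta((\tau_1-s)x_0^{(\lambda-1)/2})\,ds,
\]
\[
B = \int_0^\infty \frac{dx_0}{x_0}\,x_0^{(3+\lambda)/2} \int_0^{\tau_0} F(s,x_0)\bigl[\Theta((\tau_1-s)x_0^{(\lambda-1)/2}) - \Theta((\tau_0-s)x_0^{(\lambda-1)/2})\bigr]\,ds.
\]

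For $A$ the substitution $s=\tau_0+s'$ identifies $A$ with $I(\Delta; F(\tau_0+\cdot\,,\cdot))$. Since the pointwise hypothesis $\sup_\tau|||F(\tau,\cdot)|||_{3/2,2+\bar\delta} < \infty$ is invariant under time translation, the same box-covering argument that established (\ref{Z2E1d})---now run on a time interval of length $\Delta$, with the uniform pointwise-in-time bound in place of the $X$-norm---should give $|A| \le C\Delta^{2\bar\delta/(\lambda-1)}$.

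For $B$ a modulus-of-continuity bound for $\Theta$ is needed. The smoothness of $g(\tau,\cdot,1)$ (Theorem \ref{ThFS}) together with differentiation under the integral sign in the representation of Theorem \ref{ReFor} should yield $\Theta\in C^1(0,\infty)$ with $|\Theta'(\tau)| \le C$ for $\tau\le 1$ and $|\Theta'(\tau)| \le C\tau^{-(\lambda+1)/(\lambda-1)-1}$ for $\tau\ge 1$. Writing
\[
\Theta((\tau_1-s)\zeta)-\Theta((\tau_0-s)\zeta) = \zeta\int_{\tau_0}^{\tau_1}\Theta'((\rho-s)\zeta)\,d\rho,\qquad \zeta=x_0^{(\lambda-1)/2},
\]
and inserting back into $B$, the inner $(s,x_0)$-integration at each fixed $\rho$ has the same structure as $\partial_\rho I(\rho;F)$: namely, the expression for $I$ with $\Theta$ replaced by $\Theta'$ and an extra factor $x_0^{(\lambda-1)/2}$. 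Rerunning the covering strategy of (\ref{Z2E1d})---where the extra $x_0^{(\lambda-1)/2}$ combined with the one-power-worse decay of $\Theta'$ contributes an overall $(\rho-s)^{-1}$---should produce a bound of order $\rho^{2\bar\delta/(\lambda-1)-1}$ for this inner integral. Integration in $\rho\in[\tau_0,\tau_1]$ combined with the elementary inequality $\tau_1^\alpha-\tau_0^\alpha \le (\tau_1-\tau_0)^\alpha$, valid for $0<\alpha=2\bar\delta/(\lambda-1)<1$ (which holds since $\bar\delta<r$), would then yield $|B| \le C\Delta^{2\bar\delta/(\lambda-1)}$.

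The main obstacle will be in the treatment of $B$: one must justify differentiating $\Theta$ under the integral sign of Theorem \ref{ReFor} and verify that the additional $\rho^{-1}$-type singularity inherited from $\Theta'$ is absorbed by the box-covering machinery of (\ref{Z2E1d}). This compatibility is ultimately forced by the self-similar scaling of $g$ built into Theorem \ref{ThFS}: each component of the coverings $\mathcal{C}_{x,\tau}^{(k)}$ retains its geometric-sum summability after the exponent drops by one, matching the expected decrease of the $\rho$-scaling at the level of $\partial_\rho I$.
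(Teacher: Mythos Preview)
Your decomposition $A+B$ matches the paper's $L_1+L_2$, and your treatment of $A$ is essentially the paper's argument for $L_1$: both reduce to the scaling estimate (\ref{Z2E1d}) applied on a time interval of length $\Delta$.

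For $B$, however, there is a genuine gap. Your plan hinges on the claim that the inner $(s,x_0)$-integral at fixed $\rho$ ``has the same structure as $\partial_\rho I(\rho;F)$'' and is bounded by $C\rho^{2\bar\delta/(\lambda-1)-1}$. Neither assertion holds. First, $\partial_\rho I(\rho;F)$ is not well-defined: carrying out your proposed computation with $s$ integrated over the full range $[0,\rho]$ and using $|F(s,x_0)|\le Cx_0^{-(2+\bar\delta)}$, the $x_0$-integral one obtains behaves like $\int^\infty x_0^{\,r-\bar\delta-1}\,dx_0$ for large $x_0$, which diverges since $r>\bar\delta$. This is consistent with the lemma: $I(\cdot;F)$ is only H\"older of exponent $2\bar\delta/(\lambda-1)<1$, not Lipschitz. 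Second, with the correct range $s\in[0,\tau_0]$ the inner integral is finite for $\rho>\tau_0$, but the controlling scale is $\rho-\tau_0$, not $\rho$: splitting the $x_0$-integral at $x_0\sim(\rho-\tau_0)^{-2/(\lambda-1)}$ one finds the inner integral bounded by $C(\rho-\tau_0)^{2\bar\delta/(\lambda-1)-1}$. Integrating this over $\rho\in[\tau_0,\tau_1]$ then gives $C\Delta^{2\bar\delta/(\lambda-1)}$ directly, with no need for the inequality $\tau_1^\alpha-\tau_0^\alpha\le\Delta^\alpha$. So your strategy can be repaired, but the specific bound you state and its justification are wrong.

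The paper's route for $L_2$ is more direct and avoids this pitfall entirely. Instead of differentiating $\Theta$, it performs the self-similar substitution $y=\Delta^{2/(\lambda-1)}x_0$, $Z=y^{(\lambda-1)/2}(\tau_0-s)/\Delta$, which pulls out the factor $\Delta^{2\bar\delta/(\lambda-1)}$ at once and leaves the scale-free integral $\int_0^\infty y^{-1-\bar\delta}\Psi_2(y)\,dy$ with $\Psi_2(y)=\int_0^\infty|\Theta(y^{(\lambda-1)/2}+Z)-\Theta(Z)|\,dZ\le C\min\{y^{(\lambda-1)/2},1\}$. This converges precisely because $\bar\delta<r$. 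Note that if you apply Fubini to your $\rho$-integral representation of $\Theta((\tau_1-s)\zeta)-\Theta((\tau_0-s)\zeta)$ you land back on the paper's estimate, so the two approaches are ultimately equivalent; the paper's ordering simply sidesteps the divergent intermediate object $\partial_\rho I$.
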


\begin{proof}
We write, using (\ref{Z2E1}):%
\begin{align*}
&  I\left(  \tau_{1};F\right)  -I\left(  \tau_{0};F\right) \\
&  =\int_{0}^{\infty}\frac{dx_{0}}{x_{0}}\int_{\tau_{0}}^{\tau_{1}}%
ds\Theta\left(  \left(  \tau_{1}-s\right)  x_{0}^{\frac{\lambda-1}{2}}\right)
x_{0}^{\frac{3+\lambda}{2}}F\left(  s,x_{0}\right)  +\\
&  +\int_{0}^{\infty}\frac{dx_{0}}{x_{0}}\int_{0}^{\tau_{0}}ds\left[
\Theta\left(  \left(  \tau_{1}-s\right)  x_{0}^{\frac{\lambda-1}{2}}\right)
-\Theta\left(  \left(  \tau_{0}-s\right)  x_{0}^{\frac{\lambda-1}{2}}\right)
\right]  x_{0}^{\frac{3+\lambda}{2}}F\left(  s,x_{0}\right) \\
&  =L_{1}+L_{2}%
\end{align*}

We have the global estimate:%
\[
\left\vert F\left(  s,x_{0}\right)  \right\vert \leq Cx_{0}^{-\left(
2+\bar{\delta}\right)  }%
\]

Then:%
\begin{align*}
\left\vert L_{1}\right\vert  &  \leq C\int_{0}^{\infty}\frac{dx_{0}}{x_{0}%
}\int_{\tau_{0}}^{\tau_{1}}ds\left\vert \Theta\left(  \left(  \tau
_{1}-s\right)  x_{0}^{\frac{\lambda-1}{2}}\right)  \right\vert x_{0}%
^{\frac{3+\lambda}{2}}x_{0}^{-\left(  2+\bar{\delta}\right)  }\\
&  =C\int_{0}^{\infty}\frac{dx_{0}}{\left(  x_{0}\right)  ^{1+\bar{\delta}}%
}\int_{0}^{\Delta x_{0}^{\frac{\lambda-1}{2}}}\left\vert \Theta\left(
s\right)  \right\vert ds=C\left(  \Delta\right)  ^{\frac{2\delta}{\lambda-1}%
}\int_{0}^{\infty}\frac{dy}{\left(  y\right)  ^{1+\bar{\delta}}}\int
_{0}^{y^{\frac{\lambda-1}{2}}}\left\vert \Theta\left(  s\right)  \right\vert
ds
\end{align*}

Using the fact that $\Psi_{1}\left(  y\right)  =\int_{0}^{y^{\frac{\lambda
-1}{2}}}\left\vert \Theta\left(  s\right)  \right\vert ds$ satisfies
$\left\vert \Psi_{1}\left(  y\right)  \right\vert \leq C\min\left\{
y^{\lambda-1},1\right\}  $ (cf. (\ref{G1E3})) we obtain:%
\begin{equation}
\left\vert L_{1}\right\vert \leq C\left(  \Delta\right)  ^{\frac{2\bar{\delta
}}{\lambda-1}} \label{B1}%
\end{equation}

We now estimate $\left\vert L_{2}\right\vert .$ We have:%
\begin{align*}
\left\vert L_{2}\right\vert  &  \leq\int_{0}^{\infty}\frac{dx_{0}}{x_{0}}%
\int_{0}^{\tau_{0}}ds\left\vert \Theta\left(  \left(  \tau_{1}-s\right)
x_{0}^{\frac{\lambda-1}{2}}\right)  -\Theta\left(  \left(  \tau_{0}-s\right)
x_{0}^{\frac{\lambda-1}{2}}\right)  \right\vert x_{0}^{\frac{3+\lambda}{2}%
}x_{0}^{-\left(  2+\bar{\delta}\right)  }\\
&  =\int_{0}^{\infty}\frac{dx_{0}}{\left(  x_{0}\right)  ^{1+\bar{\delta}}%
}\int_{0}^{\frac{\tau_{0}}{\Delta}}d\mathcal{\rho}\left\vert \Theta\left(
\left(  1+\mathcal{\rho}\right)  \Delta x_{0}^{\frac{\lambda-1}{2}}\right)
-\Theta\left(  \mathcal{\rho}\Delta x_{0}^{\frac{\lambda-1}{2}}\right)
\right\vert x_{0}^{\frac{\lambda-1}{2}}\Delta\\
&  \leq\left(  \Delta\right)  ^{\frac{2\bar{\delta}}{\lambda-1}}\int
_{0}^{\infty}\frac{\Psi_{2}\left(  y\right)  dy}{\left(  y\right)
^{1+\bar{\delta}}}\int_{0}^{\infty}dZ\left\vert \Theta\left(  y^{\frac
{\lambda-1}{2}}+Z\right)  -\Theta\left(  Z\right)  \right\vert
\end{align*}
where $\Psi_{2}\left(  y\right)  =\int_{0}^{\infty}dZ\left\vert \Theta\left(
y^{\frac{\lambda-1}{2}}+Z\right)  -\Theta\left(  Z\right)  \right\vert $. We
have used the change of variables $y=\left(  \Delta\right)  ^{\frac{2}%
{\lambda-1}}x_{0},$ and $Z=y^{\frac{\lambda-1}{2}}\mathcal{\rho}.$ Using
(\ref{G1E3}) we obtain $\Psi_{2}\left(  y\right)  \leq C\min\left\{
y^{\frac{\lambda-1}{2}},1\right\}  .$ Then:%
\begin{equation}
\left\vert L_{2}\right\vert \leq C\left(  \Delta\right)  ^{\frac{2\bar{\delta
}}{\lambda-1}} \label{B2}%
\end{equation}

Lemma \ref{Lemma7} follows combining (\ref{B1}), (\ref{B2}).
\end{proof}

\bigskip

We can now prove Proposition \ref{PropAsympt} deriving suitable regularity
estimates for the function $\varphi_{R}.$

\begin{proof}
[Proof of Proposition \ref{PropAsympt}]Given the function $\varphi_{R}$
defined in (\ref{G2E2}) our goal is to prove (\ref{G2E6}). To this end, given
$I\left(  \tau;F\right)  $ in (\ref{G2E5}) defined in $0\leq\tau\leq T$ we
extend it to $\mathbb{R}$ defining $I\left(  \tau;\bar{F}\right)  =I\left(
0;\bar{F}\right)  $ for $\tau\leq0$ and $I\left(  \tau;\bar{F}\right)
=I\left(  T;\bar{F}\right)  $ for $\tau\geq T.$we define a function $\tilde
{I}_{R}\left(  \tau\right)  $ for $0\leq\tau\leq T$ by means of:%
\begin{equation}
\tilde{I}_{R}\left(  \tau;\bar{F}\right)  =\left(  I\left(  \cdot;\bar
{F}\right)  \ast\chi_{R}\right)  \left(  \tau\right)  \label{G2E7}%
\end{equation}
with $\chi_{R}\left(  \tau\right)  =R^{\frac{\lambda-1}{2}}\chi_{1}\left(
R^{\frac{\lambda-1}{2}}\tau\right)  $ where the nonnegative function $\chi
_{1}\in C^{\infty}\left(  \mathbb{R}\right)  $ is compactly supported in
$\left[  -1,1\right]  $ and it satisfies $\int_{\mathbb{R}}\chi_{1}\left(
\tau\right)  d\tau=1.$ We extend $I\left(  \tau;\bar{F}\right)  $ in
(\ref{G2E7}) as It can be easily seen, using Lemma \ref{Lemma7} with
$\bar{\delta}<r$ with $r$ as in (\ref{U3E4a}) that:%
\begin{equation}
\left\vert \tilde{I}_{R}\left(  \tau;\bar{F}\right)  -I\left(  \tau;\bar
{F}\right)  \right\vert \leq CR^{-\bar{\delta}}\ \ ,\ \ \tau\in\left[
0,T\right]  \ \ ,\ \ \left\vert \frac{d\tilde{I}_{R}\left(  \tau;\bar
{F}\right)  }{d\tau}\right\vert \leq R^{\frac{\lambda-1}{2}-\bar{\delta}%
}\ \ ,\ \ \tau\in\left[  0,T\right]  \label{G2E9}%
\end{equation}

We define $\tilde{\varphi}_{R}\left(  \tau,x\right)  =\varphi\left(
\tau,x\right)  -\tilde{I}_{R}\left(  \tau;\bar{F}\right)  x^{-\frac{3+\lambda
}{2}}\xi\left(  x\right)  $ where $\xi$ is as in (\ref{Z1E2a}). Then, using
(\ref{G2E2}) we obtain $\left\vert \tilde{\varphi}_{R}\left(  \tau,x\right)
\right\vert \leq\left\vert \varphi_{R}\left(  \tau,x\right)  \right\vert
+\left\vert \tilde{I}_{R}\left(  \tau;\bar{F}\right)  -I\left(  \tau;\bar
{F}\right)  \right\vert \xi\left(  x\right)  x^{-\frac{3+\lambda}{2}}$ whence
(\ref{M4E4}), (\ref{G2E9}) yields:
\[
\left\vert \tilde{\varphi}_{R}\left(  \tau,x\right)  \right\vert \leq
CR^{-\left(  \frac{3+\lambda}{2}+\bar{\delta}\right)  }\ \ ,\ \ x\in\left[
\frac{R}{2},R\right]  \ \ ,\ \ \ \tau\in\left[  0,T\right]
\]
where $\tilde{\varphi}_{R}$ satisfies:%
\begin{align}
\left(  \tilde{\varphi}_{R}\right)  _{\tau}  &  =\mathcal{L}_{f_{0}}\left[
\tilde{\varphi}_{R}\right]  +\tilde{F}\left(  \tau,x\right) \label{G2E10}\\
\tilde{F}\left(  \tau,x\right)   &  =\bar{F}\left(  \tau,x\right)
-\frac{d\tilde{I}_{R}\left(  \tau;\bar{F}\right)  }{d\tau}x^{-\frac{3+\lambda
}{2}}+\tilde{I}_{R}\left(  \tau;\bar{F}\right)  \mathcal{L}_{f_{0}}\left[
x^{-\frac{3+\lambda}{2}}\xi\left(  x\right)  \right] \nonumber
\end{align}
where $\bar{F}$ is as in (\ref{Z2E2}). Using (\ref{G2E9}) as well as the fact
that $\bar{F}\in Y_{\frac{3}{2},2+\bar{\delta}}^{\sigma}$ and the fact that
$\mathcal{L}_{f_{0}}\left[  x^{-\frac{3+\lambda}{2}}\xi\left(  x\right)
\right]  $ decreases like $x^{-\left(  2+\bar{\delta}\right)  }$ as
$x\rightarrow\infty,$ including derivatives, it follows that $\tilde{F}\in
Y_{\frac{3}{2},2+\bar{\delta}}^{\sigma}$. We now use the rescaling
$\tilde{\varphi}_{R}\left(  \tau,x\right)  =R^{-\left(  \frac{3+\lambda}%
{2}+\bar{\delta}\right)  }\Phi_{R}\left(  \mathcal{\rho},X\right)  ,$ $x=RX,$
$\ \tau=\frac{\mathcal{\rho}}{R^{\frac{\lambda-1}{2}}}.$ The function
$\Phi_{R}\left(  \mathcal{\rho},X\right)  $ then satisfies an equation of the
form $\frac{\partial\Phi_{R}}{\partial\mathcal{\rho}}=\mathcal{L}_{f_{0}}%
^{R}\left[  \Phi_{R}\right]  +\tilde{F}_{R}$ where $\mathcal{L}_{f_{0}}^{R}$
is the operator obtained rescaling $\mathcal{L}_{f_{0}}$ as well as $\tilde
{F}_{R}$ that is obtained rescaling $\tilde{F}$ (cf. (5.82) in the proof of
Lemma 5.8 of \cite{EV2}).

\texttt{... See if it is worth to copy the formulas in \cite{EV2}. ...}

We can now argue as in the proof of Lemma 5.8 in \cite{EV2}, applying Theorem
\ref{S8T3-101} in Section \ref{Summary} to prove:
\[
\left\Vert \tilde{\varphi}_{R}\right\Vert _{\mathcal{Z}_{p}^{\sigma;\frac
{1}{2}}\left(  T\right)  }\leq C\left\Vert F\right\Vert _{Y_{\frac{3}%
{2},2+\bar{\delta}}^{\sigma}}%
\]

This concludes the Proof of Proposition \ref{PropAsympt}.
\end{proof}

\section{FIXED POINT ARGUMENT.}

\bigskip

In this Section we prove Theorem \ref{Th1} by means of a fixed point argument.
Our strategy is to define a function $\tilde{h}$ by means of (\ref{S2E4}) with
initial data $\tilde{h}\left(  0,x\right)  =0.$ To this end, we define two
auxiliary functions $\tilde{h}_{1},\ \tilde{h}_{2}$ that will be due
respectively to the source terms $\left(  \frac{Q\left[  h\right]  }%
{\Lambda\left(  \tau\right)  }+\Lambda\left(  \tau\right)  Q\left[
f_{0}\right]  \right)  $ and $\left(  -\Lambda_{\tau}f_{0}\left(  x\right)
\right)  .$ The goal of this splitting is to treat in a separate way the term
containing $\Lambda_{\tau}$ since this term will require some careful analysis
of the time regularity of the solutions. The function $h$ will be then defined
as $\tilde{h}\left(  t,x\right)  =\tilde{h}_{1}\left(  t,x\right)  +\tilde
{h}_{2}\left(  t,x\right)  .$

\subsection{Construction of the function $\tilde{h}_{1}.$}

We define $\tilde{h}_{1}$ as the solution of the problem:%
\begin{align}
\tilde{h}_{1,\tau}  &  ={\mathcal{L}}_{f_{0}}\left[  \tilde{h}_{1}\right]
+\frac{Q\left[  h\right]  }{\Lambda\left(  \tau\right)  }+\Lambda\left(
\tau\right)  Q\left[  f_{0}\right] \label{Z2E3}\\
\tilde{h}_{1}\left(  0,x\right)   &  =0\ \ ,\ \ x>0 \label{Z2E4}%
\end{align}

The existence, uniqueness and main regularity properties of the function
$\tilde{h}_{1}$ are given in the following result:

\begin{proposition}
\label{Propositionh_1}Given $h\in\mathcal{Z}_{\bar{p}}^{\sigma;\frac{1}{2}%
}\left(  T\right)  $ with $\bar{\delta}<r,$ $f_{0}$ as in (\ref{Z1E1a}%
)-(\ref{Z1E2b}), $0<T\leq1$ and $\Lambda\in C\left[  0,T\right]  $ satisfying
$\left\vert \Lambda\left(  \tau\right)  -1\right\vert \leq\frac{1}{4}$ there
exist a unique function $\tilde{h}_{1}\in\mathcal{Z}_{\frac{3+\lambda}{2}%
}^{\sigma;\frac{1}{2}}\left(  T\right)  $ solution of (\ref{Z2E3}),
(\ref{Z2E4}) if $T$ is sufficiently small. Moreover, there exist a function
$\mathcal{G}\left[  \cdot;h,\Lambda\right]  \in C\left[  0,T\right]  $ such
that the function $r_{1}$ defined by means of:%
\begin{equation}
\tilde{h}_{1}\left(  \tau,x\right)  -\mathcal{G}\left[  \tau;h,\Lambda\right]
\xi\left(  x\right)  x^{-\frac{3+\lambda}{2}}=r_{1}\left(  \tau,x;h,\Lambda
\right)  \label{G3E1}%
\end{equation}
satisfies $r_{1}\left(  \cdot,\cdot;h,\Lambda\right)  \in\mathcal{Z}_{\bar{p}%
}^{\sigma;\frac{1}{2}}\left(  T\right)  .$\ The function $\xi\left(
\cdot\right)  $ is as in (\ref{Z1E2a}). The mappings
\begin{equation}%
\begin{array}
[c]{c}%
\mathcal{Z}_{\bar{p}}^{\sigma;\frac{1}{2}}\left(  T\right)  \times C\left[
0,T\right]  \rightarrow C\left[  0,T\right] \\
\ \ \ \ \ \ \ \ \ \ \ \ \ \ \ \ \ \ \ \ \ \ \left(  h,\Lambda\right)
\longmapsto\mathcal{G}\left[  \cdot;h,\Lambda\right]
\end{array}
\label{G3E1a}%
\end{equation}%
\begin{equation}%
\begin{array}
[c]{c}%
\mathcal{Z}_{\bar{p}}^{\sigma;\frac{1}{2}}\left(  T\right)  \times C\left[
0,T\right]  \rightarrow\mathcal{Z}_{\bar{p}}^{\sigma;\frac{1}{2}}\left(
T\right) \\
\ \ \ \ \ \ \ \ \ \ \ \ \ \ \ \ \ \ \ \ \ \ \left(  h,\Lambda\right)
\longmapsto r_{1}\left(  \cdot,\cdot;h,\Lambda\right)
\end{array}
\label{G3E1b}%
\end{equation}
are Lipschitz continuous assuming that $C\left[  0,T\right]  $ is endowed with
the uniform topology and $T$ is sufficiently small. Moreover there exists
$\rho_{0},\ T_{0}$ such that the Lipschitz constant for the maps can be made
arbitrarily small if $T<T_{0}$ and $\left\Vert h\right\Vert _{\mathcal{Z}%
_{\bar{p}}^{\sigma;\frac{1}{2}}\left(  T\right)  }\leq\rho_{0}$.
\end{proposition}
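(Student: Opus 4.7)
The plan is to view \eqref{Z2E3}--\eqref{Z2E4} as the linear inhomogeneous problem \eqref{S3E5} studied in Theorem \ref{ExistenceLredonda}, with forcing
\[
\mu(\tau,x)=\frac{Q[h](\tau,x)}{\Lambda(\tau)}+\Lambda(\tau)\,Q[f_0](x),
\]
and then read off the leading asymptotics as $x\to\infty$ by applying Proposition \ref{PropAsympt}. First, I would verify that $\mu\in Y_{3/2,\,2+\bar\delta}^{\sigma}(T)$. For the term $Q[h]/\Lambda$ the bound $|\Lambda-1|\le 1/4$ plus Proposition \ref{LemmaQuad} gives $\|Q[h]/\Lambda\|_{Y_{3/2,\,2+\bar\delta}^{\sigma}(T)}\le C\|h\|_{\mathcal{Z}_{\bar p}^{\sigma;1/2}(T)}^{2}$. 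For $\Lambda\, Q[f_0]$ I would split $f_0=f_1+h_0$ with $h_0=f_2+f_3$ as in \eqref{S1E5}--\eqref{S1E6}; the splitting of $Q[f_0]$ into $I_1+I_2$ analogous to \eqref{M1E1}--\eqref{M1E2} and the pointwise decay $|h_0(x)|\le C(1+x)^{-(3+\lambda)/2-r}$ produce the same $R^{-(2+\bar\delta)}$ tail decay used in the proof of Lemmas \ref{LQ1}--\ref{LQ2}, since $\bar\delta<r$. With $\mu\in Y_{3/2,\,2+\bar\delta}^{\sigma}(T)$, Theorem \ref{ExistenceLredonda} yields a unique $\tilde h_1\in\mathcal E_{T;\sigma}$, and Theorem \ref{RegularityLredonda} supplies the seminorm $[\tilde h_1]_{(3+\lambda)/2}^{\sigma;1/2}$, so in particular $\tilde h_1\in\mathcal Z_{(3+\lambda)/2}^{\sigma;1/2}(T)$.

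Next I would extract the asymptotic profile by applying Proposition \ref{PropAsympt} with $F=\mu$. This produces
\[
\tilde h_1(\tau,x)=\mathcal W(\tau)\,x^{-(3+\lambda)/2}+\varphi_R(\tau,x),\qquad\varphi_R\in\mathcal Z_{\bar p}^{\sigma;1/2}(T),
\]
where $\mathcal W(\tau)$ is given by \eqref{G2E5} with $F$ and $\tilde h_1$ substituted in. I would define $\mathcal G[\tau;h,\Lambda]:=\mathcal W(\tau)$ and
\[
r_1(\tau,x;h,\Lambda):=\varphi_R(\tau,x)+\mathcal G[\tau;h,\Lambda]\bigl(1-\xi(x)\bigr)x^{-(3+\lambda)/2}.
\]
The cutoff correction belongs to $\mathcal Z_{\bar p}^{\sigma;1/2}(T)$ since $1-\xi$ is compactly supported in $[0,1]$ and $\mathcal G$ is bounded on $[0,T]$ by \eqref{Z2E1d} combined with the $\mu$-bound above. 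Continuity of $\mathcal G$ in $\tau$ would come from Lemma \ref{Lemma7} applied to $\bar F=\mu+(\mathcal L_{f_0}-L)[\tilde h_1]$, after checking $(\mathcal L_{f_0}-L)[\tilde h_1]\in X_{3/2,\,2+\bar\delta}(T)$ via Lemma \ref{Le3}.

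For the Lipschitz statements, let $(h_i,\Lambda_i)$, $i=1,2$, give rise to $\tilde h_{1,i}$, $\mathcal G_i$, $r_{1,i}$. The difference $\tilde h_{1,1}-\tilde h_{1,2}$ solves \eqref{S3E5} with forcing
\[
\mu_1-\mu_2=\frac{Q[h_1]-Q[h_2]}{\Lambda_1}+Q[h_2]\!\left(\tfrac{1}{\Lambda_1}-\tfrac{1}{\Lambda_2}\right)+(\Lambda_1-\Lambda_2)\,Q[f_0].
\]
By Proposition \ref{LemmaQuadLip} and Proposition \ref{LemmaQuad} applied to $Q[h_2]$,
\[
\|\mu_1-\mu_2\|_{Y_{3/2,\,2+\bar\delta}^{\sigma}(T)}\le C\Bigl(\sum_{k=1}^{2}\|h_k\|_{\mathcal Z_{\bar p}^{\sigma;1/2}(T)}\Bigr)\|h_1-h_2\|_{\mathcal Z_{\bar p}^{\sigma;1/2}(T)}+C\bigl(\|h_2\|^{2}+\|Q[f_0]\|\bigr)\|\Lambda_1-\Lambda_2\|_{C[0,T]}.
\]
Applying Theorems \ref{ExistenceLredonda}, \ref{RegularityLredonda} and Proposition \ref{PropAsympt} to this difference produces the Lipschitz bounds for \eqref{G3E1a}, \eqref{G3E1b}. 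The only non-trivially-small contribution on the right is $\|Q[f_0]\|\,\|\Lambda_1-\Lambda_2\|_{C[0,T]}$; I would absorb it into a small constant by noting, via Lemma \ref{LemmaInteg} applied to the time-independent source $Q[f_0]$, that the $\mathcal Z_{\bar p}^{\sigma;1/2}(T)$-norm of its contribution to $\tilde h_1$ carries an extra factor $\sqrt{T}$. Combined with the assumption $\|h\|_{\mathcal Z_{\bar p}^{\sigma;1/2}(T)}\le\rho_0$, this yields Lipschitz constants that can be made arbitrarily small by choosing $T<T_0$ and $\rho_0$ small.

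\textbf{Main obstacle.} The delicate step is not the linear theory—which is essentially invoked from \cite{EV2}—but proving that $\mu$ lies in $Y_{3/2,\,2+\bar\delta}^{\sigma}(T)$ with a norm that is \emph{quadratically} small in $h$ plus a piece uniformly controlled by $f_0$, so that both assertions in \eqref{G3E1a}--\eqref{G3E1b} carry a smallness factor. This requires that the quadratic bounds from Propositions \ref{LemmaQuad}, \ref{LemmaQuadLip} genuinely absorb the $\Lambda_i^{-1}$ factors, which in turn relies on the uniform lower bound $\Lambda\ge 3/4$, and requires that the $\Lambda\,Q[f_0]$ term—bounded but not small a priori—produce only small contributions to $r_1$ and $\mathcal G$. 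The latter is where the factor $\sqrt T$ from Lemma \ref{LemmaInteg} (or equivalently the small-time bound \eqref{Z2E1d} with exponent $2\bar\delta/(\lambda-1)$) is essential.
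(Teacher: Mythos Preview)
Your plan is essentially the paper's proof: reduce to the linear problem \eqref{S3E5} via Theorems \ref{ExistenceLredonda}--\ref{RegularityLredonda}, feed the source into Proposition \ref{PropAsympt} to extract $\mathcal{G}$ and $r_1$, and obtain Lipschitz dependence from Proposition \ref{LemmaQuadLip}. The paper packages the $Q[f_0]$ estimate as Lemma \ref{Le1} (your inline splitting argument is exactly its content) and the $Q[h]/\Lambda$ estimate as Lemma \ref{Le2}.

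One correction on the smallness step: Lemma \ref{LemmaInteg} is not the right tool for the $(\Lambda_1-\Lambda_2)Q[f_0]$ contribution to $r_1$. That lemma controls pure time antiderivatives $\int_0^t\phi$, whereas the contribution you need to bound passes through the semigroup of $\mathcal{L}_{f_0}$ and then through the remainder estimate of Proposition \ref{PropAsympt}; it is not a time integral of $Q[f_0]$. The paper instead shows directly that $\|Q[f_0]\|_{Y_{3/2,2+\bar\delta}^\sigma(T)}\le C\max\{\sqrt{T},T^{2(r-\bar\delta)/(\lambda-1)}\}$ (this is \eqref{S5E6b} in Lemma \ref{Le1}): the $M$-parts of the norm are $L^2$-in-time over $[0,T]$ and pick up $\sqrt{T}$ for a time-independent source, while the $N$-parts with $R>T^{-2/(\lambda-1)}$ pick up $R^{\bar\delta-r}\le T^{2(r-\bar\delta)/(\lambda-1)}$ from the extra decay $Q[f_0]=O(x^{-(2+r)})$. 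With this, the source difference $\mu_1-\mu_2$ has small $Y^\sigma$ norm, and the linear theory plus Proposition \ref{PropAsympt} transfer this smallness to $r_1$. Your alternative mention of \eqref{Z2E1d} is exactly what handles the $\mathcal{G}$ part, so once you replace the Lemma \ref{LemmaInteg} citation by the $Y^\sigma(T)$ smallness of $Q[f_0]$, the argument is complete and matches the paper.
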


We will prove Proposition \ref{Propositionh_1} with the help of some auxiliary
Lemmas. Estimates for $Q\left[  f_{0}\right]  \left(  x_{0}\right)  $ are
obtained in the next result.

\begin{lemma}
\label{Le1} Let $f_{0}$ be as in (\ref{Z1E1a})-(\ref{Z1E2b}). We assume that
this function is defined in $\left[  0,T\right]  \times\mathbb{R}^{+}$ as a
function independent on $\tau.$ We then have:%
\begin{equation}
\left\Vert Q\left[  f_{0}\right]  \right\Vert _{Y_{\frac{3}{2},\left(
2+r\right)  }^{\sigma}\left(  T\right)  }\leq C \label{S5E6a}%
\end{equation}%
\begin{equation}
\left\Vert Q\left[  f_{0}\right]  \right\Vert _{Y_{\frac{3}{2},\left(
2+\bar{\delta}_{1}\right)  }^{\sigma}\left(  T\right)  }\leq C\max\left\{
\sqrt{T},T^{\frac{2\left(  r-\bar{\delta}\right)  }{\lambda-1}}\right\}
\ \ \text{,\ \ }0<\bar{\delta}_{1}\leq r \label{S5E6b}%
\end{equation}%
\begin{equation}
\left\vert Q\left[  f_{0}\right]  \left(  x_{0}\right)  \right\vert \leq
\frac{C}{1+x_{0}^{2+r}}\ \ ,\ \ \ x_{0}>0 \label{S5E6}%
\end{equation}

\end{lemma}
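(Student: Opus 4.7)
The plan is to prove the pointwise estimate (\ref{S5E6}) first, then derive (\ref{S5E6a})--(\ref{S5E6b}) as consequences of (\ref{S5E6}) combined with the smoothness of $f_0$ and its $\tau$-independence.

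For (\ref{S5E6}), write $f_0 = f_1 + h_0$ as in (\ref{S1E5}) and use bilinearity to split
\begin{equation*}
Q[f_0] = Q[f_1] + Q_{\mathrm{cross}}(f_1, h_0) + Q[h_0],
\end{equation*}
where $Q_{\mathrm{cross}}$ denotes the symmetric cross term. The key structural ingredient is that the pure power law $\bar f(x) = x^{-(3+\lambda)/2}$ is a formal stationary (constant-flux) solution of (\ref{S1E1}) for the kernel (\ref{S0E1}), so the leading $x^{-3/2}$ contributions of the gain and loss terms in $Q[\bar f]$ cancel, leaving a remainder of smaller order. For $x \geq 2$, $f_1$ agrees with $\bar f$ throughout the relevant integration ranges, and after this gain-loss cancellation the cutoff corrections yield $|Q[f_1](x)| \leq C x^{-(2+r)}$. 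For $Q_{\mathrm{cross}}(f_1, h_0)$ the $x^{-3/2}$ and $x^{-3/2-r}$ pieces of gain and loss cancel by the same constant-flux mechanism applied to the $\bar f \cdot D_2 \bar f_r$ interaction, where $\bar f_r(x) = x^{-(3+\lambda)/2 - r}$ is the leading part of $h_0$ according to (\ref{Z1E1a}); tracking the next-order term in the expansion of $(x-y)^{\lambda/2}\bar f(x-y) - x^{\lambda/2} \bar f(x)$ for $y \ll x$ and invoking the decay $|h_0(y)| \leq C(1+y)^{-(3+\lambda)/2 - r}$ from (\ref{S1E6}) then produces a remainder of order $x^{-(2+r)}$. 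The term $Q[h_0]$ is treated analogously: its own $x^{-3/2-r}$ gain-loss cancellation, now between the two copies of $\bar f_r$, yields a remainder of order $x^{-2-2r} = x^{-(1+\lambda)}$, which is strictly smaller than $x^{-(2+r)} = x^{-(3+\lambda)/2}$ for $\lambda > 1$.

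For (\ref{S5E6a}), $f_0 \in C^3(\mathbb R^+)$ with derivative decay bounds from (\ref{Z1E2b}) and (\ref{S1E6}). Differentiating inside the integrals defining $Q$ and repeating the same cancellation analysis for $\partial_x^k Q[f_0]$ yields $|\partial_x^k Q[f_0](x)| \leq C(1+x)^{-(2+r)-k}$ for $k = 0, 1, 2$; this controls $\|Q[f_0]\|_{H^\sigma(R/2, 2R)}$ for $\sigma \in (1, 2)$ via interpolation on dyadic shells, bounding $N_{2;\sigma}(Q[f_0]; t_0, R)$ for $R \geq 1$ and $M_{2;\sigma}(Q[f_0]; R)$ for $R \leq 1$ with the weights required by the definition of $\|\cdot\|_{Y^\sigma_{3/2, 2+r}}$. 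For (\ref{S5E6b}), the $\tau$-independence of $Q[f_0]$ gives the time integration in $N_{2;\sigma}$ a length of $\min\{R^{-(\lambda-1)/2}, T\}$, contributing a factor $\sqrt T$ in the low-$R$ regime $R \leq T^{-2/(\lambda-1)}$ and $R^{-(\lambda-1)/4}$ in the high-$R$ regime. Replacing the spatial weight exponent $r$ by $\bar\delta_1 \leq r$ produces a surplus factor $R^{r-\bar\delta_1}$ on dyadic shells, and combining the two regimes by taking the supremum in $R$ produces the maximum $\max\{\sqrt T, T^{2(r-\bar\delta)/(\lambda-1)}\}$ as claimed.

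The main obstacle is the pointwise cancellation leading to (\ref{S5E6}): one must carefully track the $x^{-3/2}$ and $x^{-3/2-r}$ contributions in both gain and loss and verify that they cancel exactly, exploiting the stationary-flux structure of $\bar f$ and the scale-invariance of the kernel. Once (\ref{S5E6}) and its derivative analogs are established, the Sobolev and time-dependent norm estimates follow by routine dyadic decompositions based on the definitions of $N_{2;\sigma}$ and $M_{2;\sigma}$.
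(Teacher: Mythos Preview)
Your proposal is correct and follows essentially the same decomposition as the paper: split $Q[f_0]=Q[f_1]+\mathcal L_{f_1}[h_0]+Q[h_0]$, use the constant-flux identity for the first piece, and bound the other two by exploiting gain--loss cancellation of the leading orders. The only technical difference is in how that cancellation is made rigorous for $\mathcal L_{f_1}[h_0]$ and $Q[h_0]$: the paper uses an integration-by-parts device, introducing $q(y;h)=\int_y^\infty \xi^{\lambda/2}h(\xi)\,d\xi$ and the increment $S(x,y;h)=(x-y)^{\lambda/2}h(x-y)-x^{\lambda/2}h(x)$, while you Taylor-expand the gain integrand and match leading orders directly; both routes give the same $x^{-(2+r)}$ bound for the cross term and $x^{-(2+2r)}$ for $Q[h_0]$. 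Two minor remarks: for $Q[f_1]$ the paper actually obtains the sharper $x^{-5/2}$ from the cutoff correction (which of course implies your $x^{-(2+r)}$), and calling the cross-term cancellation a ``constant-flux mechanism'' is a slight misnomer, since what happens there is just generic leading-order gain/loss matching rather than the special identity $Q[\bar f]=0$; your subsequent next-order expansion is the correct justification. Your treatment of (\ref{S5E6a}) and (\ref{S5E6b}) via derivative bounds and the low/high-$R$ split at $R=T^{-2/(\lambda-1)}$ matches the paper's.
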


\begin{proof}
[Proof of Lemma \ref{Le1}]We use the decomposition (\ref{S1E5})
\[
f_{0}=\bar{f}_{0}\left(  x\right)  +h_{0}\left(  x\right)
\ \ \ ,\ \ \ \ \ \bar{f}_{0}\left(  x\right)  =\frac{\xi\left(  x\right)
}{x^{\frac{3+\lambda}{2}}}%
\]
with $\xi\left(  \cdot\right)  $ as in (\ref{Z1E2a}). Then:%
\begin{equation}
\left\vert h_{0}\left(  x\right)  \right\vert +x\left\vert h_{0}^{\prime
}\left(  x\right)  \right\vert \leq\frac{CB}{x^{\frac{3+\lambda}{2}+r}%
}\ \ ,\ \ x\geq1 \label{M4E4a}%
\end{equation}

Then:%
\begin{equation}
Q\left[  f_{0}\right]  \left(  x_{0}\right)  =Q\left[  \bar{f}_{0}\right]
\left(  x_{0}\right)  +{\mathcal{L}}_{\bar{f}_{0}}\left[  h_{0}\right]
\left(  x_{0}\right)  +Q\left[  h_{0}\right]  \left(  x_{0}\right)
\label{M4E5}%
\end{equation}

\[
Q\left[  \bar{f}_{0}\right]  \left(  x\right)  =-\int_{\frac{x}{2}}^{\infty
}\left(  xy\right)  ^{\frac{\lambda}{2}}\bar{f}_{0}\left(  x\right)  \bar
{f}_{0}\left(  y\right)  dy+\int_{0}^{\frac{x}{2}}y^{\frac{\lambda}{2}}\bar
{f}_{0}\left(  y\right)  \left[  \left(  x-y\right)  ^{\frac{\lambda}{2}}%
\bar{f}_{0}\left(  x-y\right)  -x^{\frac{\lambda}{2}}\bar{f}_{0}\left(
x\right)  \right]  dy
\]

Using the fact that $\bar{f}_{0}\left(  y\right)  =y^{-\frac{3+\lambda}{2}}$
for large $y$ we obtain:%
\begin{align*}
Q\left[  \bar{f}_{0}\right]  \left(  x\right)   &  =-\int_{\frac{x}{2}%
}^{\infty}\frac{dy}{x^{\frac{3}{2}}y^{\frac{3}{2}}}+\int_{0}^{\frac{x}{2}%
}\left[  \frac{1}{\left(  x-y\right)  ^{\frac{3}{2}}}-\frac{1}{x^{\frac{3}{2}%
}}\right]  \frac{dy}{y^{\frac{3}{2}}}+\\
&  +\int_{0}^{2}\left[  \xi\left(  y\right)  -1\right]  \left[  \frac
{1}{\left(  x-y\right)  ^{\frac{3}{2}}}-\frac{1}{x^{\frac{3}{2}}}\right]
\frac{dy}{y^{\frac{3}{2}}}%
\end{align*}

The first two terms on the right-hand side cancel out. The third one can be
estimated, using Taylor's expansion, as:%
\[
\int_{0}^{2}\frac{1}{y^{\frac{3}{2}}}\left\vert \frac{1}{\left(  x-y\right)
^{\frac{3}{2}}}-\frac{1}{x^{\frac{3}{2}}}\right\vert dy\leq\frac{C}%
{x^{\frac{5}{2}}}\ \ ,\ \ x>1
\]
whence:%
\begin{equation}
\left\vert Q\left[  \bar{f}_{0}\right]  \left(  x\right)  \right\vert
\leq\frac{C}{x^{\frac{5}{2}}}\ \ ,\ \ x>1 \label{M4E6}%
\end{equation}

We can estimate $Q\left[  h_{0}\right]  \left(  x_{0}\right)  $ as:%
\[
\left\vert Q\left[  h_{0}\right]  \left(  x\right)  \right\vert \leq
\frac{CB^{2}}{x^{2+2r}}+\left\vert \int_{0}^{\frac{x}{2}}\frac{d q}{d y}\left(
y\right)  S\left(  x,y;h_{0}\right)  dy\right\vert
\]
where:%
\[
q\left(  y;h_{0}\right)  =\int_{y}^{\infty}\xi^{\frac{\lambda}{2}}h_{0}\left(
\xi\right)  d\xi\ \ ,\ \ S\left(  x,y;h_{0}\right)  =\left[  \left(
x-y\right)  ^{\frac{\lambda}{2}}h_{0}\left(  x-y\right)  -x^{\frac{\lambda}%
{2}}h_{0}\left(  x\right)  \right]
\]

Integrating by parts we obtain:
\[
\left\vert \int_{0}^{\frac{x}{2}}\frac{d q}{d y}\left(  y;h_{0}\right)  S\left(
x,y;h_{0}\right)  dy\right\vert \leq\frac{CB^{2}}{x^{2+2r}}+CB^{2}\left\vert
\int_{0}^{\frac{x}{2}}\frac{1}{y^{\frac{1}{2}+r}}\frac{1}{\left(  x-y\right)
^{\frac{5}{2}+r}}dy\right\vert \leq\frac{CB^{2}}{x^{2+2r}}%
\]

Then:%
\begin{equation}
\left\vert Q\left[  h_{0}\right]  \left(  x\right)  \right\vert \leq
\frac{CB^{2}}{x^{2+2r}}\ \ ,\ \ x>1 \label{M4E7}%
\end{equation}

Finally we write ${\mathcal{L}}_{f_{0}}\left[  h_{0}\right]  $ as:%
\begin{align*}
{\mathcal{L}}_{f_{0}}\left[  h_{0}\right]   &  =-\int_{0}^{\frac{x}{2}%
}S\left(  x,y;f_{0}\right)  q\prime\left(  y;h_{0}\right)  dy-x^{\lambda
/2}\,f_{0}(x)q\left(  \frac{x}{2};h_{0}\right)  +\\
&  -\int_{0}^{\frac{x}{2}}S\left(  x,y;h_{0}\right)  q\prime\left(
y;f_{0}\right)  dy-x^{\lambda/2}h_{0}(x)q\left(  \frac{x}{2};f_{0}\right)
\end{align*}

An inmediate computation shows that the second and fourth terms on the right
can be estimated as $CBx^{-\left(  2+r\right)  }.$ The other two terms can be
estimated integrating by parts. Then:%
\begin{equation}
\left\vert {\mathcal{L}}_{\bar{f}_{0}}\left[  h_{0}\right]  \left(  x\right)
\right\vert \leq\frac{CB}{x^{2+r}}\ ,\ \ x>1 \label{M4E8}%
\end{equation}

In the region $x\leq1$ we have trivially boundedness of $Q\left[
f_{0}\right]  \left(  x_{0}\right)  .$ Combining this with (\ref{M4E5}),
(\ref{M4E6}), (\ref{M4E7}), (\ref{M4E8}) we obtain (\ref{S5E6}). Estimate
(\ref{S5E6a}) follows similarly using the differentiability properties assumed
for $f_{0}$.

It only remains to prove (\ref{S5E6b}). To this end, notice that due to the
definition of $\left\Vert \cdot\right\Vert _{Y_{\frac{3}{2},\left(
2+\bar{\delta}\right)  }^{\sigma}\left(  T\right)  }$ we need to estimate
$L^{2}$ norms in time for $R\leq1.$ Since we have estimates in $L^{\infty}$
for $Q\left[  f_{0}\right]  $ we then obtain a dependence on $T$ like
$\sqrt{T}.$ Similar estimates can be obtained, using also the definition of
the norms $\left\Vert \cdot\right\Vert _{Y_{\frac{3}{2},\left(  2+r\right)
}^{\sigma}\left(  T\right)  }$ for all the values of $R\leq T^{-\frac
{2}{\lambda-1}},$ since for these values there is not splitting of the domain
of integration in the $t$ variable. In the region where $R>T^{-\frac
{2}{\lambda-1}}$ we use the fact that $Q\left[  f_{0}\right]  $ is pointwise
estimated by $R^{-\left(  2+r\right)  }.$ Therefore:%
\[
R^{\left(  2+\bar{\delta}\right)  }\left(  N_{\infty}\left(  Q\left[
f_{0}\right]  ;t_{0},R\right)  +N_{2;\sigma}\left(  Q\left[  f_{0}\right]
;t_{0},R\right)  \right)  \leq CR^{\bar{\delta}-r}\leq CT^{\frac{2\left(
r-\bar{\delta}\right)  }{\lambda-1}}%
\]
for $R>T^{-\frac{2}{\lambda-1}}.$ Therefore (\ref{S5E6b}) follows.
\end{proof}

As a next step we estimate the quadratic terms in (\ref{Z2E3}).

\begin{lemma}
\label{Le2} Given $h\in\mathcal{Z}_{\bar{p}}^{\sigma;\frac{1}{2}}\left(
T\right)  $ and $\Lambda$ as in Proposition \ref{Propositionh_1}.\ Then:%
\begin{equation}
\left\Vert \frac{Q\left[  h\right]  }{\Lambda\left(  \cdot\right)
}\right\Vert _{Y_{\frac{3+\lambda}{2},\left(  2+\bar{\delta}\right)  }%
^{\sigma}}\leq C\left\Vert h\right\Vert _{\mathcal{Z}_{\bar{p}}^{\sigma
;\frac{1}{2}}\left(  T\right)  }^{2} \label{M4E9}%
\end{equation}

\end{lemma}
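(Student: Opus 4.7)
The claim is essentially a triviality once Proposition \ref{LemmaQuad} is in hand, because $\Lambda$ is a function of $\tau$ alone that is bounded away from zero. The plan is as follows.

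First I would observe that by hypothesis $|\Lambda(\tau)-1|\leq \tfrac{1}{4}$, so $\Lambda(\tau) \geq \tfrac{3}{4}$ and $\|1/\Lambda\|_{L^\infty(0,T)} \leq \tfrac{4}{3}$. Next, unwinding the definition of $\|\cdot\|_{Y^\sigma_{q,p}(T)}$ in terms of the functionals $M_{2;\sigma}(\cdot;R)$, $M_{2;0}(\cdot;R)$, $N_{2;\sigma}(\cdot;t_0,R)$ and $N_{2;0}(\cdot;t_0,R)$ from \eqref{S1E8}--\eqref{S1E10}, I would note that every one of these functionals is built from $L^2_t H^\sigma_x$ or $L^2_t L^2_x$ norms on dyadic boxes. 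Since $\Lambda$ depends only on $\tau$ and is independent of $x$, it commutes with the spatial operator $D^\sigma_x$ and factors out of every spatial integral; therefore for each box
\[
\left\|D^\sigma_x\!\left(\tfrac{Q[h](t,\cdot)}{\Lambda(t)}\right)\right\|_{L^2(R/2,2R)} = \tfrac{1}{|\Lambda(t)|}\,\|D^\sigma_x Q[h](t,\cdot)\|_{L^2(R/2,2R)},
\]
and the same identity holds without the derivative. Taking the $L^2_t$ norm and using $\|1/\Lambda\|_{L^\infty_t}\leq 4/3$ yields the pointwise bound of each of the defining functionals by a universal constant times the corresponding functional applied to $Q[h]$ alone.

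Combining these estimates produces
\[
\Bigl\|\tfrac{Q[h]}{\Lambda(\cdot)}\Bigr\|_{Y^\sigma_{3/2,(2+\bar\delta)}(T)} \;\leq\; \tfrac{4}{3}\,\|Q[h]\|_{Y^\sigma_{3/2,(2+\bar\delta)}(T)},
\]
and then an application of Proposition \ref{LemmaQuad} bounds the right-hand side by $C\|h\|^2_{\mathcal{Z}^{\sigma;1/2}_{\bar p}(T)}$, which is \eqref{M4E9}. I do not foresee any genuine obstacle: the only point to verify carefully is that the time-dependent multiplier $1/\Lambda(\tau)$ does not interact with the spatial Sobolev regularity (and in particular with the fractional derivative $D_x^\sigma$), which is immediate from its independence of $x$. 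No new Leibniz-type estimate or interpolation is required beyond what has already been established in Proposition \ref{LemmaQuad}.
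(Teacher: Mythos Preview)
Your proposal is correct and follows exactly the paper's approach: the paper's proof is a single sentence invoking Propositions \ref{LemmaQuad} and \ref{LemmaQuadLip} together with the bound $\tfrac{1}{2}\le\Lambda(\cdot)\le\tfrac{3}{2}$, and your argument simply makes explicit why the scalar time-dependent factor $1/\Lambda(\tau)$ passes harmlessly through the $Y^{\sigma}_{q,p}$ norm. Note that you actually prove the estimate in the stronger norm $Y^{\sigma}_{3/2,(2+\bar\delta)}$ (matching Proposition \ref{LemmaQuad}), which immediately implies the stated bound in $Y^{\sigma}_{(3+\lambda)/2,(2+\bar\delta)}$ since $R^{(3+\lambda)/2}\le R^{3/2}$ for $R\le 1$.
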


Moreover:
\begin{align}
&  \left\Vert \frac{Q\left[  h_{1}\right]  }{\Lambda_{1}\left(  \cdot\right)
}-\frac{Q\left[  h_{2}\right]  }{\Lambda_{2}\left(  \cdot\right)  }\right\Vert
_{Y_{\frac{3+\lambda}{2},\left(  2+\bar{\delta}\right)  }^{\sigma}%
}\label{M4E10}\\
&  \leq C\left(  \sum_{k=1}^{2}\left\Vert h_{k}\right\Vert _{\mathcal{Z}%
_{\bar{p}}^{\sigma;\frac{1}{2}}\left(  T\right)  }\right)  \left(  \left\Vert
h_{1}-h_{2}\right\Vert _{\mathcal{Z}_{\bar{p}}^{\sigma;\frac{1}{2}}\left(
T\right)  }+\left\Vert \Lambda_{1}-\Lambda_{2}\right\Vert _{C\left[
0,T\right]  }\right) \nonumber
\end{align}

\begin{proof}
[Proof of Lemma \ref{Le2}]Estimates (\ref{M4E9}), (\ref{M4E10}) are just a
consequence of Propositions \ref{LemmaQuad}, \ref{LemmaQuadLip} as well as the
fact that $\frac{1}{2}\leq\Lambda\left(  \cdot\right)  \leq\frac{3}{2}%
.$\texttt{ }
\end{proof}

\begin{proof}
[Proof of Proposition \ref{Propositionh_1}]Existence and uniqueness of the
function $\tilde{h}_{1}$ follow from the results in \cite{EV2} (cf.
\ref{ExistenceLredonda}) combined with Lemmas \ref{Le1}, \ref{Le2}.\texttt{ }

On the other hand, the decay and regularity properties of the the function
$r_{1}$ defined in (\ref{G3E1}) are a consequence of Proposition
\ref{PropAsympt}. In order to apply this Proposition some regularity and decay
for the source terms $\frac{Q\left[  h\right]  }{\Lambda\left(  \tau\right)
}$ and $\Lambda\left(  \tau\right)  Q\left[  f_{0}\right]  $ are needed$.$ In
the case of $\frac{Q\left[  h\right]  }{\Lambda\left(  \tau\right)  }$ such
properties are a consequence of Proposition \ref{LemmaQuad}. The corresponding
properties for $\Lambda\left(  \tau\right)  Q\left[  f_{0}\right]  $ follow
from Lemma \ref{Le1} and the fact that $r>\bar{\delta}.$ The function
$\mathcal{G}\left[  \tau;h,\Lambda\right]  $ is given by the function
$\mathcal{W}\left(  \tau\right)  $ in (\ref{G2E5}) with source $F$ given by
$\frac{Q\left[  h\right]  }{\Lambda\left(  \tau\right)  }+\Lambda\left(
\tau\right)  Q\left[  f_{0}\right]  $. Notice that the linearity of the
equation satisfied by $\tilde{h}_{1}$ as well as the Lipschitz property for
$Q\left[  h\right]  $ in Proposition \ref{LemmaQuadLip} implies that the map
$h\rightarrow\tilde{h}_{1}$ is Lipschitz in $h$ in the space $\mathcal{Z}%
_{\frac{3+\lambda}{2}}^{\sigma;\frac{1}{2}}\left(  T\right)  .$ Moreover, due
to Lemma \ref{Le3}, the map $h\rightarrow\left(  {\mathcal{L}}_{f_{0}%
}-L\right)  \left(  \tilde{h}_{1}\right)  $ from $\mathcal{Z}_{\bar{p}%
}^{\sigma;\frac{1}{2}}\left(  T\right)  $ to $Y_{\frac{3}{2},\frac{3+\lambda
}{2}+\bar{\delta}}^{\sigma}$ is Lipschitz.$.$ Therefore the map in
(\ref{G3E1a}) has the Lipschitz dependence stated in Proposition
\ref{Propositionh_1}. The Lipschitz property for the map in (\ref{G3E1b}) is
again a consequence of Proposition \ref{LemmaQuadLip}, the linearity of the
problem under consideration and Proposition \ref{PropAsympt}.

It only remains to check that the Lipschitz constant of the maps
(\ref{G3E1a}), (\ref{G3E1b}) can be made small if $T\leq T_{0}$ and $T_{0}$ is
small enough. Indeed, given two couples $\left(  h^{\left(  1\right)
},\Lambda^{\left(  1\right)  }\right)  ,\ \left(  h^{\left(  2\right)
},\Lambda^{\left(  2\right)  }\right)  $ satisfying the hypothesis of the
Proposition, let us denote as $F^{\left(  1\right)  },\ F^{\left(  2\right)
}$, $\tilde{h}_{1}^{\left(  1\right)  },\ \tilde{h}_{1}^{\left(  2\right)  }$
and $\mathcal{W}^{\left(  1\right)  },\ \mathcal{W}^{\left(  2\right)  }$ the
corresponding functions $F,\ \tilde{h}_{1},\ \mathcal{W}$ respectively. The
stated Lipschitz properties yield:%
\begin{align}
&  \left\Vert F^{\left(  1\right)  }-F^{\left(  2\right)  }\right\Vert
_{Y_{\frac{3}{2},\left(  2+\bar{\delta}\right)  }^{\sigma}}+\left\Vert \left(
{\mathcal{L}}_{f_{0}}-L\right)  \left(  \tilde{h}_{1}^{\left(  1\right)
}-\tilde{h}_{1}^{\left(  2\right)  }\right)  \right\Vert _{Y_{\frac{3}%
{2},\left(  2+\bar{\delta}\right)  }^{\sigma}}\label{H1E9}\\
&  \leq C\left(  \left\Vert \Lambda^{\left(  1\right)  }-\Lambda^{\left(
2\right)  }\right\Vert _{C\left[  0,T\right]  }+\left\Vert h^{\left(
1\right)  }-h^{\left(  2\right)  }\right\Vert _{\mathcal{Z}_{\bar{p}}%
^{\sigma;\frac{1}{2}}\left(  T\right)  }\right) \nonumber
\end{align}

Using the inequality (\ref{H1E9}) combined with (\ref{G2E5}) and (\ref{Z2E1d})
we then obtain that the difference $\mathcal{W}^{\left(  1\right)
}-\mathcal{W}^{\left(  2\right)  }$ can be estimated as:%
\[
\left\vert \mathcal{W}^{\left(  1\right)  }\left(  \tau\right)  -\mathcal{W}%
^{\left(  2\right)  }\left(  \tau\right)  \right\vert \leq CT^{\frac
{2\bar{\delta}}{\lambda-1}}\left(  \left\Vert \Lambda^{\left(  1\right)
}-\Lambda^{\left(  2\right)  }\right\Vert _{C\left[  0,T\right]  }+\left\Vert
h^{\left(  1\right)  }-h^{\left(  2\right)  }\right\Vert _{\mathcal{Z}%
_{\bar{p}}^{\sigma;\frac{1}{2}}\left(  T\right)  }\right)
\]
for $0\leq\tau\leq T.$

The fact that the Lipschitz constant for the map (\ref{G3E1b}) can be made
small for small $T$ is just a consequence of (\ref{S5E6b}) in Lemma \ref{Le1}
and (\ref{M4E10}) in Lemma \ref{Le2} if $\rho_{0}$ in the statement of
Proposition \ref{Propositionh_1} is sufficiently small.
\end{proof}

\bigskip

\subsection{Construction of the function $\tilde{h}_{2}.$\label{Functionh_2}}

\bigskip

It would be natural to construct $\tilde{h}_{2}$ as the solution of the problem:%

\begin{equation}
\tilde{h}_{2,\tau}={\mathcal{L}}_{f_{0}}\left[  \tilde{h}_{2}\right]
-\Lambda_{\tau}f_{0}\left(  x\right)  \ \ \ \ ,\ \ \ \tilde{h}_{2}\left(
0,x\right)  =0\ \ ,\ \ x>0 \label{Z2E7}%
\end{equation}

However, since it is more convenient from the technical point of view to avoid
using the derivative $\Lambda_{\tau}$ we will use an alternative procedure
that we describe shortly here. More precisely, we will obtain a solution of
the initial value problem:%
\begin{equation}
\psi_{\tau}={\mathcal{L}}_{f_{0}}\left[  \psi\right]  \ \ \ ,\ \ \ \ \psi
\left(  0,x\right)  =f_{0}\left(  x\right)  \ \ ,\ \ x>0 \label{Z2E5}%
\end{equation}

In order to solve this problem we define the change of variables:%
\begin{equation}
\psi\left(  \tau,x\right)  =f_{0}\left(  x\right)  +\zeta\left(
\tau,x\right)  \label{U3E8}%
\end{equation}

The function $\zeta$ then solves:%
\begin{equation}
\zeta_{\tau}={\mathcal{L}}_{f_{0}}\left[  \zeta\right]  +{\mathcal{L}}_{f_{0}%
}\left[  f_{0}\right]  \ \ \ ,\ \ \ \ \zeta\left(  0,x\right)
=0\ \ ,\ \ x>0\ \label{Z2E6}%
\end{equation}

This equation can be solved, assuming (\ref{Z1E1a})-(\ref{Z1E2b}) using
Theorem \ref{ExistenceLredonda}.

Variation of constants formula then suggests that $\tilde{h}_{2}$, solution of
(\ref{Z2E7}) is given by:%
\begin{equation}
\tilde{h}_{2}\left(  \tau,x\right)  =-\int_{0}^{\tau}\psi\left(
\tau-s,x\right)  \Lambda_{\tau}\left(  s\right)  ds \label{Z2E8}%
\end{equation}
and assuming that $\psi$ is differentiable in time we would obtain:%
\begin{equation}
\tilde{h}_{2}\left(  \tau,x\right)  =-f_{0}\left(  x\right)  \Lambda\left(
\tau\right)  +\psi\left(  \tau,x\right)  -\int_{0}^{\tau}\frac{\partial\psi
}{\partial\tau}\left(  \tau-s,x\right)  \Lambda\left(  s\right)  ds
\label{Z2E9}%
\end{equation}

This representation formula avoids using $\Lambda_{\tau}$. However, it
requires to prove that $\frac{\partial\psi}{\partial\tau}$ is well defined. We
now prove the properties of $\psi$ required to give a precise meaning to
(\ref{Z2E9}).

\bigskip

\begin{proposition}
\label{Proposition_psi}Suppose that $f_{0}$ satisfies (\ref{Z1E1a}%
)-(\ref{Z1E2b}). There exist a function $\psi\in\mathcal{Z}_{\frac{3+\lambda
}{2}}^{\sigma;\frac{1}{2}}\left(  T\right)  $ defined by means of
(\ref{Z2E5}). We have:%
\begin{equation}
\psi\left(  \tau,x\right)  =a\left(  \tau\right)  \xi\left(  x\right)
x^{-\frac{3+\lambda}{2}}+r_{2}\left(  \tau,x\right)  \label{G3E2}%
\end{equation}
where $\xi\left(  \cdot\right)  $ is the cutoff in (\ref{Z1E2a}) and where:
\begin{equation}
a\left(  0\right)  =1\ \ \ \ ,\ \ \left\Vert r_{2}\right\Vert _{\mathcal{Z}%
_{\bar{p}}^{\sigma;\frac{1}{2}}\left(  T\right)  }\leq C \label{G3E2a}%
\end{equation}

\end{proposition}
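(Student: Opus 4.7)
The strategy, already laid out in the excerpt preceding the statement, is to set $\psi = f_0 + \zeta$ so that $\zeta$ solves the linear Cauchy problem (\ref{Z2E6}), and then to analyze $\zeta$ using Theorems \ref{ExistenceLredonda}--\ref{RegularityLredonda} and Proposition \ref{PropAsympt}. The first observation is the algebraic identity
\[
\mathcal{L}_{f_0}[f_0] = 2\,Q[f_0],
\]
which holds because the convolution part of $\mathcal{L}_{f_0}$ matches twice the gain term in $Q[f_0]$, while the two loss terms of $\mathcal{L}_{f_0}$ together match twice the loss term of $Q$. By Lemma \ref{Le1}, the source $\mathcal{L}_{f_0}[f_0] = 2Q[f_0]$ therefore belongs to $Y^{\sigma}_{\frac{3}{2},\,2+\bar{\delta}}(T)$ for every $\bar{\delta}\le r$, with norm bounded by a constant depending only on $B$, $\lambda$.

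Applying Theorem \ref{ExistenceLredonda} produces a unique $\zeta\in\mathcal{E}_{T;\sigma}$ solving (\ref{Z2E6}), and Theorem \ref{RegularityLredonda} then yields the additional control $[\zeta]_{\frac{3+\lambda}{2}}^{\sigma;\frac{1}{2}}\le C$. Together, these bounds place $\zeta$ in the space $\mathcal{Z}^{\sigma;\frac{1}{2}}_{\frac{3+\lambda}{2}}(T)$ needed as a hypothesis for Proposition \ref{PropAsympt}. Invoking that proposition with source $F=\mathcal{L}_{f_0}[f_0]$ produces the decomposition
\[
\zeta(\tau,x) = \mathcal{W}(\tau)\,\xi(x)\,x^{-\frac{3+\lambda}{2}} + \zeta_R(\tau,x),
\]
where $\mathcal{W}$ is the continuous function given by (\ref{G2E5}) and $\zeta_R\in\mathcal{Z}_{\bar{p}}^{\sigma;\frac{1}{2}}(T)$; in particular $\mathcal{W}(0)=0$ because the definition of $\mathcal{W}$ involves an integral $\int_0^{\tau}$.

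Since $D_1=1$ the initial data decomposes as $f_0(x)=\xi(x)\,x^{-\frac{3+\lambda}{2}}+h_0(x)$, with $h_0$ satisfying the pointwise and derivative bounds (\ref{S1E6}). Substituting into $\psi=f_0+\zeta$ yields
\[
\psi(\tau,x) = \bigl(1+\mathcal{W}(\tau)\bigr)\,\xi(x)\,x^{-\frac{3+\lambda}{2}} + h_0(x) + \zeta_R(\tau,x),
\]
and setting $a(\tau):=1+\mathcal{W}(\tau)$ and $r_2(\tau,x):=h_0(x)+\zeta_R(\tau,x)$ delivers (\ref{G3E2}) together with $a(0)=1$.

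What remains is to verify $r_2\in\mathcal{Z}_{\bar{p}}^{\sigma;\frac{1}{2}}(T)$. The part $\zeta_R$ lies in that space by Proposition \ref{PropAsympt}. For the stationary function $h_0$, the condition $\bar{\delta}<r$ (hence $\bar{p}<\frac{3+\lambda}{2}+r$) together with (\ref{S1E6}) gives each pointwise ingredient of $\|\cdot\|_{\mathcal{Z}_{\bar{p}}^{\sigma;\frac{1}{2}}(T)}$ directly, while the Fourier seminorm $[h_0]_{\bar{p}}^{\sigma;\frac{1}{2}}$ is controlled because the rescaled localization $F_{R,t_0}(\theta,X)=\eta(RX)h_0(RX)$ is $\theta$-independent and, by (\ref{Z1E2b}), smooth enough to absorb the weight $Q_{R,\sigma}$ at every scale $R\ge 1$. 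The main obstacle is precisely this bookkeeping on $h_0$: one must check that the extra factor $r-\bar{\delta}>0$ in the decay of $h_0$ compensates for the weight loss at each scale, uniformly in $R$; this is where the assumption $\bar{\delta}<r$ (flagged by the author in Remark \ref{rationalef_0}) becomes essential.
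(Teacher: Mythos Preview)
Your proof is correct and follows essentially the same route as the paper: write $\psi=f_0+\zeta$, bound the source $\mathcal{L}_{f_0}[f_0]$ in $Y^{\sigma}_{3/2,\,2+\bar\delta}$, solve (\ref{Z2E6}) via Theorems \ref{ExistenceLredonda}--\ref{RegularityLredonda}, and read off (\ref{G3E2})--(\ref{G3E2a}) from Proposition \ref{PropAsympt} together with $f_0=\xi(x)x^{-(3+\lambda)/2}+h_0$. The only cosmetic difference is that the paper invokes Lemma \ref{Lemmaf_0} for the source bound, whereas you use the identity $\mathcal{L}_{f_0}[f_0]=2Q[f_0]$ and Lemma \ref{Le1}; both give the same $Y^{\sigma}_{3/2,\,2+\bar\delta}$ control.
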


Moreover:%
\begin{equation}
\left\vert \frac{da}{d\tau}\right\vert \leq C\ \ \ \ ,\ \ \ \ \left\Vert
\frac{\partial r_{2}}{\partial\tau}\right\Vert _{\mathcal{Z}_{\bar{p}}%
^{\sigma;\frac{1}{2}}\left(  T\right)  }\leq C \label{G3E3}%
\end{equation}

The proof of this result is based on the following Lemma:

\begin{lemma}
\label{Lemmaf_0}Suppose that $f_{0}$ satisfies (\ref{Z1E1a})-(\ref{Z1E2b}).
There exists a constant $C$ such that, for any $0<T\leq1:$%
\begin{equation}
\left\Vert {\mathcal{L}}_{f_{0}}\left[  f_{0}\right]  \right\Vert
_{Y_{\frac{3}{2},2+\delta}^{\sigma}\left(  T\right)  }+\left\Vert
{\mathcal{L}}_{f_{0}}\left[  {\mathcal{L}}_{f_{0}}\left[  f_{0}\right]
\right]  \right\Vert _{Y_{\frac{3}{2},2+\delta}^{\sigma}\left(  T\right)
}\leq C \label{U3E5}%
\end{equation}
where $\delta>0$ is as in (\ref{Z1E1a})-(\ref{Z1E2b}).
\end{lemma}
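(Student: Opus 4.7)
My plan starts from the elementary identity $\mathcal{L}_{f_{0}}[f_{0}]=2Q[f_{0}]$: substituting $h=f_{0}$ in (\ref{S2E3}) and comparing with (\ref{S1E2}) under the kernel (\ref{S0E1}), the convolution term in $\mathcal{L}_{f_{0}}$ equals twice the gain term of $Q$, while the two loss contributions of $\mathcal{L}_{f_{0}}$ combine to $-2x^{\lambda/2}f_{0}(x)\int_{0}^{\infty}y^{\lambda/2}f_{0}(y)\,dy$, which is twice the loss term of $Q[f_{0}]$. Consequently the first summand of (\ref{U3E5}) reduces to controlling $\|Q[f_{0}]\|_{Y_{3/2,2+\delta}^{\sigma}(T)}$, and this is already contained in Lemma \ref{Le1}: estimate (\ref{S5E6a}) gives $\|Q[f_{0}]\|_{Y_{3/2,2+r}^{\sigma}(T)}\leq C$, and since the weight exponent $p$ enters as $R^{p}$ for $R\geq 1$ we have the inclusion $Y_{3/2,2+r}^{\sigma}(T)\hookrightarrow Y_{3/2,2+\delta}^{\sigma}(T)$ as soon as $\delta<r$, which is the standing hypothesis.

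For the second summand I rewrite $\mathcal{L}_{f_{0}}[\mathcal{L}_{f_{0}}[f_{0}]]=2\,\mathcal{L}_{f_{0}}[Q[f_{0}]]$ and split $\mathcal{L}_{f_{0}}=L+(\mathcal{L}_{f_{0}}-L)$, with $L$ the translation-invariant leading-order operator (\ref{S3E1}). To treat $(\mathcal{L}_{f_{0}}-L)[Q[f_{0}]]$ by Lemma \ref{Le3}, I first verify $Q[f_{0}]\in\mathcal{Z}_{(3+\lambda)/2+r}^{\sigma;1/2}(T)$ with norm uniformly bounded for $0<T\leq 1$: the pointwise bound (\ref{S5E6}) gives the $|||\cdot|||_{3/2,(3+\lambda)/2+r}$ piece, the $Y$-piece is (\ref{S5E6a}), and because $Q[f_{0}]$ is time-independent the almost-half-derivative seminorm $[Q[f_{0}]]_{(3+\lambda)/2+r}^{\sigma;1/2}$ collapses to an integral of $\|\widehat{F}_{R,t_{0}}\|$ over a time interval of length $R^{-(\lambda-1)/2}$, which is dominated by the spatial $H^{\sigma}$-regularity of $Q[f_{0}]$ inherited from $f_{0}\in C^{3}$ through the decomposition (\ref{M4E5}). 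Then estimates (\ref{M3E1a})--(\ref{M3E2a}) of Lemma \ref{Le3} yield $\|(\mathcal{L}_{f_{0}}-L)[Q[f_{0}]]\|_{Y_{3/2,2+r}^{\sigma}(T)}\leq C$, which is absorbed into the target norm again via $\delta<r$.

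For the principal piece $L[Q[f_{0}]]$ I exploit the decomposition $Q[f_{0}]=Q[\bar{f}_{0}]+\mathcal{L}_{\bar{f}_{0}}[h_{0}]+Q[h_{0}]$ from (\ref{M4E5}), with pointwise decay $x^{-5/2}$, $x^{-(2+r)}$ and $x^{-(2+2r)}$ respectively, and apply $L$ to each summand using the covering-argument bookkeeping already developed in the proof of Lemma \ref{Lemma5} together with the convolution-type estimates of Section \ref{SectionQh}. The main obstacle is that a naive pointwise evaluation of the three terms in $L[Q[f_{0}]]$ produces only $x^{-3/2}$ decay at infinity, well short of the $x^{-(2+\delta)}$ demanded by the target norm; the required improvement comes from the cancellation between the convolution part and the two sink terms of $L$, reflecting that the profile $G(x)=x^{-(3+\lambda)/2}$ is (formally) a stationary solution of the semigroup generated by $L$ — only the sub-leading parts of $Q[f_{0}]$ (carried by $h_{0}$ and by the departure from a pure power in $\bar{f}_{0}$) survive and give decay of order $x^{-(2+2r)}$, comfortably beyond $x^{-(2+\delta)}$. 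Tracking this cancellation in the weighted $L^{2}$/$H^{\sigma}$ scale that defines $Y_{3/2,2+\delta}^{\sigma}(T)$ — parallel to the cancellation that produced the $x^{-5/2}$ bound for $Q[\bar{f}_{0}]$ in Lemma \ref{Le1} — is the technical crux; the $\sigma$-spatial regularity propagates through the same covering arguments as in Lemma \ref{Lemma5} since the inputs $Q[\bar f_{0}]$, $\mathcal{L}_{\bar f_{0}}[h_{0}]$ and $Q[h_{0}]$ all inherit enough smoothness from the $C^{3}$-regularity of $f_{0}$ and the smooth cutoff $\xi$.
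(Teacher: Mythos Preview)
Your treatment of the first summand is fine and matches the paper: the identity $\mathcal{L}_{f_{0}}[f_{0}]=2Q[f_{0}]$ (recorded as (\ref{M7E1a})) together with Lemma~\ref{Le1} gives the bound, and the embedding $Y_{3/2,2+r}^{\sigma}\hookrightarrow Y_{3/2,2+\delta}^{\sigma}$ for $\delta<r$ is exactly right.

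The second summand has a genuine gap. You claim that $Q[f_{0}]\in\mathcal{Z}_{(3+\lambda)/2+r}^{\sigma;1/2}(T)$, citing (\ref{S5E6}) for the pointwise piece. But (\ref{S5E6}) gives only $|Q[f_{0}](x)|\leq C/(1+x^{2+r})$, and since $2+r=\tfrac{3+\lambda}{2}$ this is decay of order $\tfrac{3+\lambda}{2}$, \emph{not} $\tfrac{3+\lambda}{2}+r$. In fact the paper shows (see (\ref{U3E9})) that $\mathcal{L}_{f_{0}}[f_{0}]=Kx^{-(3+\lambda)/2}\xi(x)+w_{0,R}(x)$ with a constant $K$ that is generically nonzero (it depends on $D_{2}$ through (\ref{H5E3}), (\ref{H5E2})), so $Q[f_{0}]$ has leading term $\tfrac{K}{2}x^{-(3+\lambda)/2}$ and cannot lie in $\mathcal{Z}_{(3+\lambda)/2+r}^{\sigma;1/2}$. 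Consequently Lemma~\ref{Le3}, whose $H^{\sigma}$ estimate (\ref{M3E1a}) carries the norm $\|\varphi\|_{\mathcal{Z}_{(3+\lambda)/2+r}^{\sigma;1/2}}$ on the right, is not applicable to $\varphi=Q[f_{0}]$ as written, and your splitting $\mathcal{L}_{f_{0}}=L+(\mathcal{L}_{f_{0}}-L)$ loses exactly the cancellation you need: the leading power $x^{-(3+\lambda)/2}$ sits in both pieces, and only their \emph{sum} annihilates it.

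The paper avoids this by working directly with $\mathcal{L}_{f_{0}}$ rather than splitting off $L$. It first \emph{extracts} the leading asymptotic (\ref{U3E9}) of $\mathcal{L}_{f_{0}}[f_{0}]$ via a careful decomposition $f_{0}=f_{1}+f_{2}+f_{3}$ and the identity $\mathcal{L}_{\bar f_{1}}[\bar f_{1}]=0$ (equation (\ref{M7E1})), obtaining a remainder $w_{0,R}$ with pointwise decay $x^{-(3+\lambda)/2-\delta}$ and three controlled derivatives. Then $\mathcal{L}_{f_{0}}$ is applied separately to $K\xi(x)x^{-(3+\lambda)/2}$ (where the same power-law cancellation gives decay $x^{-(2+\delta)}$) and to $w_{0,R}$ (where $\mathcal{L}_{f_{0}}$ acts like multiplication by $x^{(\lambda-1)/2}$, turning decay $\tfrac{3+\lambda}{2}+\delta$ into $2+\delta$). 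Your route is salvageable only if you first isolate this leading term --- but once you do that, the detour through $L$ and Lemma~\ref{Le3} is no longer needed.
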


\begin{proof}
[Proof of Lemma \ref{Lemmaf_0}]Using (\ref{Z1E1a})-(\ref{Z1E2b}) we obtain the
asymptotics:%
\begin{equation}
{\mathcal{L}}_{f_{0}}\left[  f_{0}\right]  =Kx^{-\frac{3+\lambda}{2}}%
\xi\left(  x\right)  +w_{0,R}\left(  x\right)  \ \ \label{U3E9}%
\end{equation}
with a remainder $w_{0,R}\left(  x\right)  $ that can be estimated, together
with its derivatives as $x^{-\left(  \frac{3+\lambda}{2}+\delta\right)  }$ as
$x\rightarrow\infty.$ The main idea to keep in mind is that the operator
${\mathcal{L}}_{f_{0}}$ acting on power laws $x^{-p}$ amounts to multiply then
by $C_{p}x^{r}.$ The constant $C_{p}$ vanishes if $p=\frac{3+\lambda}{2}.$ The
estimate (\ref{U3E5}) will be then proved multiplying by the cutoff
$\eta\left(  \tau\right)  $ and taking the operator ${\mathcal{L}}_{f_{0}}$.
Since the leading power law in (\ref{U3E9}) is $x^{-\frac{3+\lambda}{2}}$ the
action of the operator ${\mathcal{L}}_{f_{0}}$ will cancel the first order and
only a remainder behaving like $x^{-\left(  2+\delta\right)  }$ will be left,
with $\delta$ as in (\ref{Z1E1a})-(\ref{Z1E2b}).

We now describe the details. The operator ${\mathcal{L}}_{f_{0}}$ is defined
in (\ref{S2E3}). We then have, rearranging the integral terms:%
\begin{equation}
\frac{1}{2}{\mathcal{L}}_{f_{0}}\left[  f_{0}\right]  \left(  x\right)
=Q\left[  f_{0}\right]  \left(  x\right)  \label{M7E1a}%
\end{equation}

Using (\ref{Z1E1a}), (\ref{Z1E2}):%
\begin{equation}
{\mathcal{L}}_{f_{0}}\left[  f_{0}\right]  \left(  x\right)  ={\mathcal{L}%
}_{f_{1;2}}\left[  f_{1;2}\right]  \left(  x\right)  +2{\mathcal{L}}_{f_{1;2}%
}\left[  f_{3}\right]  +{\mathcal{L}}_{f_{3}}\left[  f_{3}\right]  \left(
x\right)  \label{M7E2}%
\end{equation}

We can estimate ${\mathcal{L}}_{f_{1;2}}\left[  f_{3}\right]  ,$ as well as
its derivative using (\ref{Z1E1a}), (\ref{Z1E2}), (\ref{Z1E2b}) as well as the
fact that $1<\lambda<2$ and $0<\delta<r:$%
\begin{equation}
\left\vert {\mathcal{L}}_{f_{1;2}}\left[  f_{3}\right]  \right\vert +\left(
1+x\right)  \left\vert \frac{d}{dx}\left(  {\mathcal{L}}_{f_{1;2}}\left[
f_{3}\right]  \right)  \right\vert +\left(  1+x\right)  ^{2}\left\vert
\frac{d^{2}}{dx^{2}}\left(  {\mathcal{L}}_{f_{1;2}}\left[  f_{3}\right]
\right)  \right\vert \leq\frac{C}{\left(  1+x\right)  ^{2+r+\delta}%
}\ \ \ ,\ \ \ x>0 \label{M7E3}%
\end{equation}

The term ${\mathcal{L}}_{f_{3}}\left[  f_{3}\right]  \left(  x\right)  $ can
be estimated similarly:%
\begin{equation}
\left\vert \frac{1}{2}{\mathcal{L}}_{f_{3}}\left[  f_{3}\right]  \right\vert
+\left(  1+x\right)  \left\vert \frac{1}{2}\frac{d}{dx}\left(  {\mathcal{L}%
}_{f_{3}}\left[  f_{3}\right]  \right)  \right\vert +\left(  1+x\right)
^{2}\left\vert \frac{d^{2}}{dx^{2}}\left(  {\mathcal{L}}_{f_{3}}\left[
f_{3}\right]  \right)  \right\vert \leq\frac{C}{\left(  1+x\right)
^{\frac{3+\lambda}{2}+r}}\leq\frac{C}{\left(  1+x\right)  ^{2+r+\delta}}
\label{M7E4}%
\end{equation}
for $\ x>0.$ Therefore, it only remains to estimate ${\mathcal{L}}_{f_{1;2}%
}\left[  f_{1;2}\right]  $ in (\ref{M7E2}).

We then only need to approximate the term ${\mathcal{L}}_{f_{1;2}}\left[
f_{1;2}\right]  $ that might be rewritten as:%
\begin{equation}
{\mathcal{L}}_{f_{1;2}}\left[  f_{1;2}\right]  \left(  x\right)
={\mathcal{L}}_{f_{1}}\left[  f_{1}\right]  \left(  x\right)  +2{\mathcal{L}%
}_{f_{1}}\left[  f_{2}\right]  \left(  x\right)  +{\mathcal{L}}_{f_{2}}\left[
f_{2}\right]  \left(  x\right)  \label{Num}%
\end{equation}

We then have, using $r>\delta:$%
\begin{equation}
\left\vert {\mathcal{L}}_{f_{2}}\left[  f_{2}\right]  \right\vert +\left(
1+x\right)  \left\vert \frac{d}{dx}\left(  {\mathcal{L}}_{f_{2}}\left[
f_{2}\right]  \right)  \right\vert +\left(  1+x\right)  ^{2}\left\vert
\frac{d^{2}}{dx^{2}}\left(  {\mathcal{L}}_{f_{2}}\left[  f_{2}\right]
\right)  \right\vert \leq\frac{C}{\left(  1+x\right)  ^{2+r+\delta}%
}\ \ ,\ \ x>0 \label{M7E7}%
\end{equation}

We need to obtain precise asymptotics of the terms ${\mathcal{L}}_{f_{1}%
}\left[  f_{1}\right]  \left(  x\right)  ,\ 2{\mathcal{L}}_{f_{1}}\left[
f_{2}\right]  \left(  x\right)  $ in order to obtain the leading order term in
(\ref{U3E9}). Let us write $\bar{f}_{1}\left(  x\right)  =x^{-\frac{3+\lambda
}{2}}.$ Notice that:%
\begin{equation}
{\mathcal{L}}_{\bar{f}_{1}}\left[  \bar{f}_{1}\right]  =0 \label{M7E1}%
\end{equation}

Moreover, we have the following identity:%
\[
x^{\lambda/2}\,f_{1}(x)\int_{\frac{x}{2}}^{\infty}y^{\lambda/2}f_{1}%
(y)dy=x^{\lambda/2}\,\bar{f}_{1}(x)\int_{\frac{x}{2}}^{\infty}y^{\lambda
/2}\bar{f}_{1}(y)dy\ \ ,\ \ x\geq1
\]

Using (\ref{M7E1}) we then obtain for $x>2:$%
\[
{\mathcal{L}}_{f_{1}}\left[  f_{1}\right]  =\int_{0}^{2}y^{\lambda/2}\bar
{f}_{1}\left(  y\right)  \left[  \xi(y)-1\right]  \left[  (x-y)^{\lambda
/2}\bar{f}_{1}(x-y)-x^{\lambda/2}\,\bar{f}_{1}(x)\right]  dy
\]

Taylor's expansion, as well as the fact that $\delta<\frac{\lambda-2}{2}$
implies:%
\begin{equation}
\left\vert {\mathcal{L}}_{f_{1}}\left[  f_{1}\right]  \right\vert +\left(
1+x\right)  \left\vert \frac{d}{dx}\left(  {\mathcal{L}}_{f_{1}}\left[
f_{1}\right]  \right)  \right\vert +\left(  1+x\right)  ^{2}\left\vert
\frac{d^{2}}{dx^{2}}\left(  {\mathcal{L}}_{f_{1}}\left[  f_{1}\right]
\right)  \right\vert \leq\frac{C}{\left(  1+x\right)  ^{\frac{3+\lambda}%
{2}+\delta}}\ \ ,\ \ x>0 \label{M7E8}%
\end{equation}
where we use the fact that ${\mathcal{L}}_{f_{1}}\left[  f_{1}\right]  $ and
its derivatives are trivially bounded for $x$ bounded as it might be seen
using directly using the definition of ${\mathcal{L}}_{f_{1}}\left[
f_{1}\right]  $.

It remains to estimate the term $2{\mathcal{L}}_{f_{1}}\left[  f_{2}\right]
\left(  x\right)  $ in (\ref{Num}). Using the definition of ${\mathcal{L}%
}_{f_{1}}\left[  f_{2}\right]  \left(  x\right)  $ we obtain%
\begin{equation}
2{\mathcal{L}}_{f_{1}}\left[  f_{2}\right]  \left(  x\right)  =\mathcal{H}%
_{1}\left(  f_{1},f_{2}\right)  \left(  x\right)  +\mathcal{H}_{2}\left(
f_{1},f_{2}\right)  \left(  x\right)  \label{H5E4}%
\end{equation}
where:%
\begin{align*}
\mathcal{H}_{1}\left(  f_{1},f_{2}\right)  \left(  x\right)   &  =\int
_{0}^{\frac{x}{2}}y^{\lambda/2}f_{1}(y)\left[  (x-y)^{\lambda/2}%
f_{2}(x-y)-x^{\lambda/2}\,f_{2}(x)\right]  dy+\\
&  +\int_{0}^{\frac{x}{2}}y^{\lambda/2}f_{2}(y)\left[  (x-y)^{\lambda/2}%
f_{1}(x-y)-x^{\lambda/2}\,f_{1}(x)\right]  dy
\end{align*}%
\[
\mathcal{H}_{2}\left(  f_{1},f_{2}\right)  \left(  x\right)  =x^{\lambda
/2}\,f_{1}(x)\int_{\frac{x}{2}}^{\infty}y^{\lambda/2}f_{2}(y)dy+x^{\lambda
/2}\,f_{2}(x)\int_{\frac{x}{2}}^{\infty}y^{\lambda/2}f_{1}(y)dy
\]
The term $\mathcal{H}_{2}\left(  f_{1},f_{2}\right)  \left(  x\right)  $ can
be explicitly computed for large values of $x:$%
\begin{equation}
\mathcal{H}_{2}\left(  f_{1},f_{2}\right)  \left(  x\right)  =K_{1}%
x^{-\frac{3+\lambda}{2}}\ \ ,\ \ K_{1}\in\mathbb{R}\text{\ and }x>1
\label{H5E3}%
\end{equation}

In order to approximate $\mathcal{H}_{1}\left(  f_{1},f_{2}\right)  \left(
x\right)  $ we define a function $\bar{f}_{2}(x)=\frac{a_{1}}{x^{\frac
{3+\lambda}{2}+r}}.$ We then have:%
\begin{equation}
\mathcal{H}_{1}\left(  \bar{f}_{1},\bar{f}_{2}\right)  \left(  x\right)
=K_{2}x^{-\frac{3+\lambda}{2}}\ \ ,\ \ K_{1}\in\mathbb{R}\text{\ and }x>0
\label{H5E2}%
\end{equation}

On the other hand:%
\begin{align*}
&  \mathcal{H}_{1}\left(  f_{1},f_{2}\right)  \left(  x\right)  -\mathcal{H}%
_{1}\left(  \bar{f}_{1},\bar{f}_{2}\right)  \left(  x\right) \\
&  =\int_{0}^{2}y^{\lambda/2}\bar{f}_{1}(y)\left[  \xi(y)-1\right]  \left[
(x-y)^{\lambda/2}\bar{f}_{2}(x-y)-x^{\lambda/2}\bar{f}_{2}(x)\right]  dy+\\
&  +\int_{0}^{2}y^{\lambda/2}\bar{f}_{2}(y)\left[  \xi(y)-1\right]  \left[
(x-y)^{\lambda/2}\bar{f}_{1}(x-y)-x^{\lambda/2}\bar{f}_{1}(x)\right]  dy
\end{align*}

Taylor's expansion, as well as the fact that $\delta<\frac{2-\lambda}{2},$
yields:%
\begin{align}
&  \left\vert \mathcal{H}_{1}\left(  f_{1},f_{2}\right)  \left(  x\right)
-\mathcal{H}_{1}\left(  \bar{f}_{1},\bar{f}_{2}\right)  \left(  x\right)
\right\vert +\left(  1+x\right)  \left\vert \frac{d}{dx}\left(  \mathcal{H}%
_{1}\left(  f_{1},f_{2}\right)  \left(  x\right)  -\mathcal{H}_{1}\left(
\bar{f}_{1},\bar{f}_{2}\right)  \left(  x\right)  \right)  \right\vert
\label{M5E1}\\
&  \leq\frac{C}{x^{\frac{5}{2}}}\leq\frac{C}{x^{\frac{3+\lambda}{2}+\delta}%
}\ \ ,\ \ x\geq1\nonumber
\end{align}

The boundedness of ${\mathcal{L}}_{f_{1}}\left[  f_{2}\right]  \left(
x\right)  $ and its derivatives combined with (\ref{H5E4})-(\ref{M5E1})
yields:%
\begin{equation}
\left\vert 2{\mathcal{L}}_{f_{1}}\left[  f_{2}\right]  \left(  x\right)
-Kx^{-\frac{3+\lambda}{2}}\right\vert \leq\frac{C}{1+x^{\frac{3+\lambda}%
{2}+\delta}}\ \ ,\ \ x>0 \label{M7E9}%
\end{equation}

Combining (\ref{M7E2})-(\ref{M7E8}), (\ref{M7E9}) we obtain (\ref{U3E9})
where:%
\begin{equation}
\sum_{k=0}^{3}\left(  1+x\right)  ^{k}\left\vert \frac{d^{k}w_{0,R}}{dx^{k}%
}\right\vert \leq\frac{C}{1+x^{\frac{3+\lambda}{2}+\delta}}\ \ ,\ \ x>0
\label{M8E1}%
\end{equation}

Applying ${\mathcal{L}}_{f_{0}}\left[  {\mathcal{\cdot}}\right]  $ on both
sides of (\ref{U3E9}), using ${\mathcal{L}}_{f_{0}}\left[  x^{-\frac
{3+\lambda}{2}}\right]  =0,$ and arguing as in the proof of (\ref{M7E8}) we
obtain $\sum_{k=0}^{2}\left(  1+x\right)  ^{k}\left\vert \frac{d^{k}}{dx^{k}%
}\left(  {\mathcal{L}}_{f_{0}}\left[  Kx^{-\frac{3+\lambda}{2}}\xi\left(
x\right)  \right]  \right)  \right\vert \leq\frac{C}{1+x^{2+\delta}}.$ On the
other hand, the action of the operator ${\mathcal{L}}_{f_{0}}$ over functions
satistying (\ref{M8E1}) amounts to multiplying by $x^{r}$ for large values of
$x.$ Therefore $\sum_{k=0}^{2}\left(  1+x\right)  ^{k}\left\vert \frac{d^{k}%
}{dx^{k}}\left(  {\mathcal{L}}_{f_{0}}\left[  w_{0,R}\right]  \right)
\right\vert \leq\frac{C}{1+x^{2+\delta}}$ and the result follows.
\end{proof}

\begin{proof}
[Proof of Proposition \ref{Proposition_psi}]Due to (\ref{U3E5}) in Lemma
\ref{Lemmaf_0} ${\mathcal{L}}_{f_{0}}\left[  f_{0}\right]  $ is bounded in the
space $Y_{3/2,2+\delta}^{\sigma}\left(  T\right)  .$ Therefore (\ref{Z2E6})
can be solved using the results in \cite{EV2}. We obtain in this way a
solution $\zeta\in\mathcal{Z}_{\frac{3+\lambda}{2}}^{\sigma;\frac{1}{2}%
}\left(  T\right)  .$ Then, $\psi$ can be obtained by means of (\ref{U3E8}).
Therefore expansion (\ref{G3E2}), (\ref{G3E2a}) are just a consequence of
Proposition \ref{PropAsympt}.

It only remains to obtain estimates for the derivatives on time of the
functions $a,\ r_{2}.$ Formal differentiation of (\ref{Z2E6}) suggests that
$w=\frac{\partial\zeta}{\partial\tau}$ satisfies the following initial value
problem:%
\begin{equation}
\left(  w\right)  _{\tau}={\mathcal{L}}_{f_{0}}\left[  w\right]
\ \ \ ,\ \ \ \ w\left(  0,x\right)  ={\mathcal{L}}_{f_{0}}\left[
f_{0}\right]  \label{U3E1}%
\end{equation}

Actually we can use the results in \cite{EV2} to construct a solution of
(\ref{U3E1}) as follows. We define a function $W\left(  \tau,x\right)  $ by
means of:%
\begin{equation}
w\left(  \tau,x\right)  ={\mathcal{L}}_{f_{0}}\left[  f_{0}\right]  +W\left(
\tau,x\right)  \label{U3E2a}%
\end{equation}

Then, $w$ solves (\ref{U3E1}) iff $W$ solves:%
\begin{equation}
W_{\tau}={\mathcal{L}}_{f_{0}}\left[  W\right]  +{\mathcal{L}}_{f_{0}}\left[
{\mathcal{L}}_{f_{0}}\left[  f_{0}\right]  \right]  \ \ ,\ \ W\left(
0,x\right)  =0 \label{U3E6}%
\end{equation}

In order to be able to solve the problem (\ref{U3E1}) we use the hypothesis
(\ref{Z1E1a})-(\ref{Z1E2b}). Due to Lemma \ref{Lemmaf_0} we have that
$\left\Vert {\mathcal{L}}_{f_{0}}\left[  {\mathcal{L}}_{f_{0}}\left[
f_{0}\right]  \right]  \right\Vert _{Y_{\frac{3}{2},2+\delta}^{\sigma}\left(
T\right)  }$ bounded. Therefore, we can apply the results in \cite{EV2} (cf.
Theorem \ref{ExistenceLredonda}) to obtain a unique solution $W$ of
(\ref{U3E6}) satisfying $\left\Vert W\right\Vert _{\mathcal{Z}_{\frac
{3+\lambda}{2}}^{\sigma;\frac{1}{2}}\left(  T\right)  }\leq C.$ The function
$\psi,$ solution of (\ref{Z2E5}) can be obtained, using also (\ref{U3E8}) as:%
\begin{equation}
\psi\left(  \tau,x\right)  =f_{0}\left(  x\right)  +{\mathcal{L}}_{f_{0}%
}\left[  f_{0}\right]  \tau+\int_{0}^{\tau}W\left(  s,x\right)  ds
\label{U3E7a}%
\end{equation}

Using (\ref{G2E2}), (\ref{G2E6}) in Proposition \ref{PropAsympt} we obtain:%
\begin{equation}
W\left(  \tau,x\right)  =\mathcal{W}\left(  \tau\right)  x^{-\frac{3+\lambda
}{2}}\xi\left(  x\right)  +W_{R}\left(  \tau,x\right)  \label{M6E0}%
\end{equation}
with
\begin{equation}
\left\vert \mathcal{W}\left(  \tau\right)  \right\vert \leq C\ \ ,\ \ 0\leq
\tau\leq T\ \ \ ,\ \ \left\Vert W_{R}\left(  \tau,x\right)  \right\Vert
_{\mathcal{Z}_{\bar{p}}^{\sigma;\frac{1}{2}}\left(  T\right)  }\leq C
\label{M6E1}%
\end{equation}

Then $\psi\left(  \tau,x\right)  =a\left(  \tau\right)  x^{-\frac{3+\lambda
}{2}}\xi\left(  x\right)  +r_{2}\left(  \tau,x\right)  $ where:%
\[
a\left(  \tau\right)  =1+K\tau+\int_{0}^{\tau}\mathcal{W}\left(  s\right)
ds\ \ ,\ \ r_{2}\left(  \tau,x\right)  =\left[  f_{0}\left(  x\right)
-x^{-\frac{3+\lambda}{2}}\xi\left(  x\right)  \right]  +w_{0,R}\left(
x\right)  \tau+\int_{0}^{\tau}W_{R}\left(  s,x\right)  ds
\]

Using (\ref{M6E1}) we obtain (\ref{G3E3}) and the Proposition follows.
\end{proof}

\begin{remark}
\label{rationalef_0}As indicated in Section \ref{initialdata} the assumption
(\ref{Z1E1a})-(\ref{Z1E2b}) is very strong. However, the argument proving
Proposition \ref{Proposition_psi} shows that the main reason for assuming
(\ref{Z1E1a})-(\ref{Z1E2b}) is to show that $\left\vert \frac{da}{d\tau
}\right\vert ,\ \frac{\partial r_{2}}{\partial\tau}$ are bounded in a suitable
sense. It would be possible to weaken (\ref{Z1E1a}), (\ref{Z1E2}) to some
assumption with the form $f_{0}\left(  x\right)  =D_{1}x^{-\frac{3+\lambda}%
{2}}+O\left(  x^{-\frac{3+\lambda}{2}-\delta}\right)  $ as $x\rightarrow
\infty$ for some $\delta>0$. Making such an assumption the only difference in
the argument proving Proposition \ref{Proposition_psi} would be that the term
${\mathcal{L}}_{f_{0}}\left[  f_{0}\right]  $ in (\ref{U3E1}) would behave
like $O\left(  x^{-\left(  2+\delta\right)  }\right)  $ instead of $O\left(
x^{-\frac{3+\lambda}{2}}\right)  $ as $x\rightarrow\infty$ Unfortunately the
well-posedness theory developed in \cite{EV2} cannot cover such weakest rate
of decay at infinity. The expected asymptotics for $w\left(  \tau,x\right)  $
as $x\rightarrow\infty$ for small $\tau$ would have the form $w\left(
\tau,x\right)  \sim\frac{C}{\tau^{1-\frac{2\delta}{\lambda-1}}}x^{-\frac
{3+\lambda}{2}}\ $as$\ \ x\rightarrow\infty,\ \ \tau\rightarrow0.$ This type
of asymptotics has been obtained in \cite{EMV1}, \cite{EMV2} for an different
equation, namely the Uehling-Uhlenbeck equation. Unfortunately since the
well-posedness theory of classical solutions for the coagulation equation is
more difficult, we have preferred not to consider such a case, at the price of
assuming stronger regularity assumptions near the singular point. Nevertheless
it would be an interesting question to prove analogous regularizing results in time.
\end{remark}

With the previous construction we can define the function $\tilde{h}_{2}$ as follows.

\begin{definition}
\label{h2}For any $f_{0}$, $\Lambda$ satisfying the assumptions in Proposition
\ref{Propositionh_1} we define $\tilde{h}_{2}$ by means of%
\begin{equation}
\tilde{h}_{2}\left(  \tau,x\right)  =-f_{0}\left(  x\right)  \Lambda\left(
\tau\right)  +\psi\left(  \tau,x\right)  -\int_{0}^{\tau}w\left(
\tau-s,x\right)  \Lambda\left(  s\right)  ds \label{P1E1}%
\end{equation}
where $\psi$ is as in (\ref{U3E7a}) and $w$ is as in (\ref{U3E2a}).
\end{definition}

\begin{remark}
The rationale behind Definition \ref{h2} is the following. Assuming smoothness
we obtain, differentiating (\ref{P1E1}):%
\[
\left(  \tilde{h}_{2}\right)  _{\tau}=-f_{0}\left(  x\right)  \Lambda_{\tau
}\left(  \tau\right)  +\frac{\partial\psi}{\partial\tau}\left(  \tau,x\right)
-\Lambda\left(  \tau\right)  {\mathcal{L}}_{f_{0}}\left[  f_{0}\right]
-\int_{0}^{\tau}{\mathcal{L}}_{f_{0}}\left[  w\right]  \left(  \tau
-s,x\right)  \Lambda\left(  s\right)  ds
\]
where we have used (\ref{U3E1}). Exchanging the order of the integral in time
and ${\mathcal{L}}_{f_{0}}$ and using again (\ref{P1E1}) we obtain, after some
cancellations $\tilde{h}_{2,\tau}=-f_{0}\Lambda_{\tau}\left(  \tau\right)
+\frac{\partial\psi}{\partial\tau}+{\mathcal{L}}_{f_{0}}\left(  \tilde{h}%
_{2}\right)  -{\mathcal{L}}_{f_{0}}\left[  \psi\right]  .$ Using then
(\ref{Z2E5}) we obtain that $\tilde{h}_{2}$ would solve (\ref{Z2E7}).
\end{remark}

The asymptotics of the function $\tilde{h}_{2}$ as $x\rightarrow\infty$ can be
derived using the corresponding results for the functions $\psi,\ w$ in
Proposition \ref{Proposition_psi}.

\begin{lemma}
\label{Ash2}For any $\Lambda\in C\left[  0,T\right]  $ satisfying the
assumptions in Proposition \ref{Propositionh_1} we have:%
\[
\tilde{h}_{2}\left(  \tau,x\right)  =\mathcal{K}\left[  \Lambda\right]
\left(  \tau\right)  \xi\left(  x\right)  x^{-\frac{3+\lambda}{2}}+\tilde
{h}_{2,R}\left(  \tau,x;\Lambda\right)
\]
where%
\[
\mathcal{K}\left[  \Lambda\right]  \left(  \tau\right)  =-\Lambda\left(
\tau\right)  +a\left(  \tau\right)  -\int_{0}^{\tau}\mathcal{W}\left(
\tau-s\right)  \Lambda\left(  s\right)  ds
\]
with $a$ as in Proposition \ref{Proposition_psi},$\ \frac{da}{d\tau
}=\mathcal{W}$ and $\tilde{h}_{2,R}\in\mathcal{Z}_{\bar{p}}^{\sigma;\frac
{1}{2}}\left(  T\right)  .$ Moreover, the map:%
\begin{equation}
C\left[  0,T\right]  \rightarrow\mathcal{Z}_{\bar{p}}^{\sigma;\frac{1}{2}%
}\left(  T\right)  \ \ :\ \ \Lambda\rightarrow\tilde{h}_{2} \label{P1E3}%
\end{equation}
\ is Lipschitz if $T\leq T_{0},$ with $T_{0}$ sufficiently small.
\end{lemma}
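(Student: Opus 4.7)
The plan is to substitute the decompositions of $f_0$, $\psi$, and $w$ coming from Section \ref{initialdata}, Proposition \ref{Proposition_psi}, and its proof into the defining formula (\ref{P1E1}) for $\tilde{h}_2$, and then identify the coefficient of $\xi(x)x^{-(3+\lambda)/2}$ as $\mathcal{K}[\Lambda](\tau)$, with everything left over put in $\tilde{h}_{2,R}$.

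First, I would write
\[
f_0(x) = \xi(x)\,x^{-\frac{3+\lambda}{2}} + h_0(x),
\]
using (\ref{S1E5}) (recall $D_1=1$), where $h_0$ is time-independent and satisfies (\ref{S1E6}); in particular, viewed as a function on $[0,T]\times\mathbb{R}^+$, $h_0$ lies in $\mathcal{Z}_{\bar{p}}^{\sigma;1/2}(T)$ since $\bar{\delta}<r$. Next, from Proposition \ref{Proposition_psi} and its proof I would use
\[
\psi(\tau,x) = a(\tau)\,\xi(x)\,x^{-\frac{3+\lambda}{2}} + r_2(\tau,x), \qquad w(\tau,x) = \mathcal{L}_{f_0}[f_0](x) + W(\tau,x),
\]
combined with the decomposition of $\mathcal{L}_{f_0}[f_0]$ from (\ref{U3E9})--(\ref{M8E1}) in Lemma \ref{Lemmaf_0} and of $W$ from (\ref{M6E0})--(\ref{M6E1}); after absorbing the constant $K$ coming from $\mathcal{L}_{f_0}[f_0]$ into the leading coefficient, one obtains
\[
w(\tau,x) = \mathcal{W}(\tau)\,\xi(x)\,x^{-\frac{3+\lambda}{2}} + \bigl[w_{0,R}(x) + W_R(\tau,x)\bigr],
\]
with $\mathcal{W}=da/d\tau$ bounded on $[0,T]$ and the bracketed remainder in $\mathcal{Z}_{\bar{p}}^{\sigma;1/2}(T)$.

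Substituting into (\ref{P1E1}) and grouping the coefficients of $\xi(x)x^{-(3+\lambda)/2}$ gives exactly
\[
\mathcal{K}[\Lambda](\tau) = -\Lambda(\tau) + a(\tau) - \int_0^\tau \mathcal{W}(\tau-s)\Lambda(s)\,ds,
\]
while the remainder is
\[
\tilde{h}_{2,R}(\tau,x;\Lambda) = -\Lambda(\tau)\,h_0(x) + r_2(\tau,x) - \Bigl(\int_0^\tau \Lambda(s)\,ds\Bigr) w_{0,R}(x) - \int_0^\tau W_R(\tau-s,x)\,\Lambda(s)\,ds.
\]
To show $\tilde{h}_{2,R}\in \mathcal{Z}_{\bar{p}}^{\sigma;1/2}(T)$, I would bound each piece separately: the terms containing $h_0$ and $w_{0,R}$ are time-independent times a bounded factor of $\Lambda$ and lie in the space thanks to (\ref{S1E6}) and (\ref{M8E1}) together with $\bar{\delta}<\min\{r,\delta\}$; the term $r_2$ is controlled by (\ref{G3E2a}); and the convolution-in-time with $W_R$ is handled by Lemma \ref{LemmaInteg} (using boundedness of $\Lambda$), which gives a bound of order $\sqrt{T}\,\|W_R\|_{\mathcal{Z}_{\bar{p}}^{\sigma;1/2}(T)}$.

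Finally, for the Lipschitz property of the map (\ref{P1E3}), I would exploit linearity in $\Lambda$: if $\Lambda_1,\Lambda_2\in C[0,T]$ with the prescribed smallness, then $\tilde{h}_2[\Lambda_1]-\tilde{h}_2[\Lambda_2]$ is obtained from (\ref{P1E1}) with $\Lambda$ replaced by $\Lambda_1-\Lambda_2$, and the same estimates as above give
\[
\bigl\|\tilde{h}_2[\Lambda_1]-\tilde{h}_2[\Lambda_2]\bigr\|_{\mathcal{Z}_{\bar{p}}^{\sigma;1/2}(T)} \leq C\bigl(1+\sqrt{T}\,\bigr)\|\Lambda_1-\Lambda_2\|_{C[0,T]},
\]
with $C$ depending only on the data. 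The only real bookkeeping obstacle is checking that $\int_0^\tau W_R(\tau-s,x)\Lambda(s)\,ds$ lies in $\mathcal{Z}_{\bar{p}}^{\sigma;1/2}(T)$: this is precisely the content of Lemma \ref{LemmaInteg} (after writing $\Lambda$ as a bounded multiplier and using the convexity of the integral in the definition of the half-derivative seminorm $[\,\cdot\,]_p^{\sigma;1/2}$), so no new work is required.
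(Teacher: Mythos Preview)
Your proposal is correct and is precisely a fleshed-out version of the paper's one-line proof, which simply says the lemma ``is just a consequence of the definition of $\tilde{h}_{2}$ in (\ref{P1E1}) and Proposition \ref{Proposition_psi}.'' You have correctly identified the decompositions of $f_0$, $\psi$, and $w$, grouped the $\xi(x)x^{-(3+\lambda)/2}$ coefficients to recover $\mathcal{K}[\Lambda]$, and handled the remainder pieces using (\ref{S1E6}), (\ref{M8E1}), (\ref{G3E2a}), (\ref{M6E1}) together with Lemma \ref{LemmaInteg}; the paper itself invokes Lemma \ref{LemmaInteg} in exactly the same slightly informal way for the time-convolution term later in the proof of Lemma \ref{contract}.
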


\begin{proof}
It is just a consequence of the definition of $\tilde{h}_{2}$ in (\ref{P1E1})
and Proposition \ref{Proposition_psi}.
\end{proof}

\subsection{Setting of the fixed point argument. Solution of an integral
equation.}

Given $h\in\mathcal{Z}_{\bar{p}}^{\sigma;\frac{1}{2}}\left(  T\right)  $ and
$\Lambda\in C\left[  0,T\right]  $ as in Proposition \ref{Propositionh_1} we
can define a map $\left(  h,\Lambda\right)  \rightarrow\tilde{h}$ with
$\tilde{h}\left(  t,x\right)  =\tilde{h}_{1}\left(  t,x\right)  +\tilde{h}%
_{2}\left(  t,x\right)  $ where $\tilde{h}_{1}$ is as in Proposition
\ref{Propositionh_1} and $\tilde{h}_{2}$ as in (\ref{P1E1}).

We now select, for any given $h,$ the function $\Lambda$ in order to have:
\begin{equation}
\lim_{x\rightarrow\infty}\left(  x^{\frac{3+\lambda}{2}}\tilde{h}\left(
t,x\right)  \right)  =0 \label{P1E2}%
\end{equation}

Due to (\ref{G3E1}) and since $r_{1}\left(  \cdot,\cdot;h\right)
\in\mathcal{Z}_{\bar{p}}^{\sigma;\frac{1}{2}}\left(  T\right)  ,$ as well as
Lemma \ref{Ash2} it follows that (\ref{P1E2}) holds if $\Lambda$ solves:%
\begin{equation}
\Lambda\left(  \tau\right)  =\mathcal{G}\left[  \tau;h,\Lambda\right]
+a\left(  \tau\right)  -\int_{0}^{\tau}\mathcal{W}\left(  \tau-s,x\right)
\Lambda\left(  s\right)  ds \label{Z4E1}%
\end{equation}

We first show that for $h\in\mathcal{Z}_{\bar{p}}^{\sigma;\frac{1}{2}}\left(
T\right)  $ we can find $\Lambda=\Lambda\left(  \cdot;h\right)  $ such that
(\ref{Z4E1}) is satisfied.

\begin{lemma}
\label{LemmaIntEqu}There exist $T_{0}>0$ sufficiently small, such that, for
any $h\in\mathcal{Z}_{\bar{p}}^{\sigma;\frac{1}{2}}\left(  T\right)  $
satisfying $\left\Vert h\right\Vert _{\mathcal{Z}_{\bar{p}}^{\sigma;\frac
{1}{2}}\left(  T\right)  }\leq\rho_{0}$ equation (\ref{Z4E1}) has a unique
solution for $0\leq\tau\leq T,$ assuming that $T\leq T_{0}.$ Moreover, this
solution defines a mapping:%
\begin{equation}
\mathcal{Z}_{\bar{p}}^{\sigma;\frac{1}{2}}\left(  T\right)  \rightarrow
C\left[  0,T\right]  :\ \ h\longmapsto\Lambda\left(  \cdot;h\right)
\label{Z4E2}%
\end{equation}
that is contractive.
\end{lemma}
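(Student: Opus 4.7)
The plan is to solve (\ref{Z4E1}) by a Banach fixed point argument for the operator
$$\Phi[\Lambda](\tau)=\mathcal{G}[\tau;h,\Lambda]+a(\tau)-\int_{0}^{\tau}\mathcal{W}(\tau-s)\Lambda(s)\,ds$$
on the closed ball $B_{T}=\{\Lambda\in C[0,T]:\|\Lambda-1\|_{C[0,T]}\leq 1/4\}$, which sits inside the region $|\lambda-1|\leq 1/2$ imposed throughout the paper. Note that $a(0)=1$ by Proposition~\ref{Proposition_psi} and that $\mathcal{G}[0;h,\Lambda]=0$ by direct inspection of the formula (\ref{G2E5}) (a vanishing $s$-integral), so $\Phi[\Lambda](0)=1$ automatically.

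First I would check that $\Phi$ maps $B_{T}$ into itself. Since $|da/d\tau|=|\mathcal{W}|\leq C$ by Proposition~\ref{Proposition_psi}, we have $|a(\tau)-1|\leq CT$ and $|\int_{0}^{\tau}\mathcal{W}(\tau-s)\Lambda(s)\,ds|\leq CT\|\Lambda\|_{C[0,T]}$. To control $\mathcal{G}[\tau;h,\Lambda]$, I would identify it with the functional $\mathcal{W}$ in (\ref{G2E5}) with source $F=Q[h]/\Lambda+\Lambda\,Q[f_{0}]$, and apply the bound (\ref{Z2E1d}) of Lemma~\ref{Lemma5} together with the $X_{3/2,2+\bar{\delta}}$–estimates for $Q[h]$ and $Q[f_{0}]$ coming from Proposition~\ref{LemmaQuad} and Lemma~\ref{Le1}; this produces
$$|\mathcal{G}[\tau;h,\Lambda]|\leq C\,\tau^{2\bar{\delta}/(\lambda-1)}\bigl(\|h\|_{\mathcal{Z}_{\bar{p}}^{\sigma;1/2}(T)}^{2}+1\bigr).$$
Each of the three contributions to $|\Phi[\Lambda](\tau)-1|$ is $O(T^{\alpha})$ for some $\alpha>0$, so $\Phi(B_{T})\subset B_{T}$ once $T\leq T_{0}$ with $T_{0}$ small enough.

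Next I would show that $\Phi$ is a strict contraction on $B_{T}$. The Lipschitz statement in Proposition~\ref{Propositionh_1} gives
$$\|\mathcal{G}[\cdot;h,\Lambda_{1}]-\mathcal{G}[\cdot;h,\Lambda_{2}]\|_{C[0,T]}\leq\varepsilon(T,\rho_{0})\,\|\Lambda_{1}-\Lambda_{2}\|_{C[0,T]},$$
with $\varepsilon(T,\rho_{0})\to 0$ as $T,\rho_{0}\to 0$, while the convolution difference is dominated by $CT\|\Lambda_{1}-\Lambda_{2}\|_{C[0,T]}$. Hence $\|\Phi[\Lambda_{1}]-\Phi[\Lambda_{2}]\|_{C[0,T]}\leq\tfrac{1}{2}\|\Lambda_{1}-\Lambda_{2}\|_{C[0,T]}$ for $T\leq T_{0}$ and $\|h\|_{\mathcal{Z}_{\bar{p}}^{\sigma;1/2}(T)}\leq\rho_{0}$, producing a unique fixed point $\Lambda(\cdot;h)\in B_{T}$.

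Finally, to obtain the contractive dependence of $h\mapsto\Lambda(\cdot;h)$, I would subtract the equations satisfied by two fixed points $\Lambda_{i}=\Lambda(\cdot;h_{i})$ and apply the joint Lipschitz bound of Proposition~\ref{Propositionh_1}:
$$\|\Lambda_{1}-\Lambda_{2}\|_{C[0,T]}\leq\varepsilon(T,\rho_{0})\bigl(\|h_{1}-h_{2}\|_{\mathcal{Z}_{\bar{p}}^{\sigma;1/2}(T)}+\|\Lambda_{1}-\Lambda_{2}\|_{C[0,T]}\bigr)+CT\,\|\Lambda_{1}-\Lambda_{2}\|_{C[0,T]}.$$
Absorbing the $\|\Lambda_{1}-\Lambda_{2}\|_{C[0,T]}$ term from the right yields $\|\Lambda_{1}-\Lambda_{2}\|_{C[0,T]}\leq\kappa(T,\rho_{0})\|h_{1}-h_{2}\|_{\mathcal{Z}_{\bar{p}}^{\sigma;1/2}(T)}$ with $\kappa\to 0$ as $T,\rho_{0}\to 0$. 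The only (mild) obstacle is the bookkeeping needed to ensure that $T_{0}$ and $\rho_{0}$ can be chosen so that the self–map property, the contraction in $\Lambda$, and the contractive dependence on $h$ all hold simultaneously; this is feasible because the Lipschitz constants produced by Proposition~\ref{Propositionh_1}, Lemma~\ref{Le1} and Lemma~\ref{Le2} are uniform over such balls.
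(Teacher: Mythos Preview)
Your argument is correct and follows essentially the same route as the paper's own proof: the boundedness of $\mathcal{W}$ from Proposition~\ref{Proposition_psi} controls the convolution term, and the Lipschitz smallness of $\mathcal{G}[\cdot;h,\Lambda]$ in $\Lambda$ from Proposition~\ref{Propositionh_1} handles the nonlinear part, giving a contraction on $C[0,T]$ for $T$ small. The paper's proof is considerably terser (three sentences), whereas you have spelled out the self-map step and the $h$-dependence explicitly; one minor point is that $\mathcal{G}[\tau;h,\Lambda]$, being $\mathcal{W}(\tau)$ from (\ref{G2E5}), also contains the contribution $(\mathcal{L}_{f_0}-L)[\tilde{h}_1]$ in addition to $Q[h]/\Lambda+\Lambda Q[f_0]$, but this is already absorbed in the Lipschitz statement of Proposition~\ref{Propositionh_1} that you invoke.
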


\begin{proof}
The function $\mathcal{W}$ in the integral term (\ref{Z4E1}) is uniformly
bounded due to Proposition \ref{Proposition_psi}. On the other hand, the
function $\mathcal{G}\left[  \tau;h,\Lambda\right]  $ is Lipschitz contractive
in $\Lambda$ if $T_{0}$ is sufficiently small and $\left\Vert h\right\Vert
_{\mathcal{Z}_{\bar{p}}^{\sigma;\frac{1}{2}}\left(  T\right)  }\leq\rho_{0}$
due to Proposition \ref{Propositionh_1}. It then follows from (\ref{Z4E1})
that the mapping (\ref{Z4E2}) is contractive.
\end{proof}

\begin{lemma}
\label{contract}Let us denote as $B_{\rho_{0}}$ the ball of radius $\rho_{0}$
in $\mathcal{Z}_{\bar{p}}^{\sigma;\frac{1}{2}}\left(  T\right)  ,$ with
$\rho_{0}$ as in Proposition \ref{Propositionh_1} and let us consider the
mapping from $B_{\rho_{0}}$ to $B_{\rho_{0}}$ given by $h\rightarrow
\mathcal{T}\left[  h\right]  $ where $\mathcal{T}\left[  h\right]  =\tilde
{h}_{1}+\tilde{h}_{2},$ with $\tilde{h}_{1}$as in Proposition
\ref{Propositionh_1} and $\tilde{h}_{2}$ as in (\ref{P1E1}) and with
$\Lambda=\Lambda\left(  \cdot;h\right)  $ in (\ref{Z2E3}) where $\Lambda
\left(  \cdot;h\right)  $ is chosen as in Lemma \ref{LemmaIntEqu}. Then, there
exists $T_{0}$ such that the mapping $\mathcal{T}$ is contractive in
$B_{\rho_{0}}$ if $T\leq T_{0}.$ In such a case there exists a unique fixed
point of $\mathcal{T}$ in $B_{\rho_{0}}.$
\end{lemma}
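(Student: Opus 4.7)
The plan is to apply the Banach fixed point theorem to $\mathcal{T}$ on the closed ball $B_{\rho_0}\subset\mathcal{Z}_{\bar{p}}^{\sigma;\frac{1}{2}}(T)$. Two things must be checked: first, $\mathcal{T}(B_{\rho_0})\subset B_{\rho_0}$, and second, $\mathcal{T}$ is a strict contraction provided $T\le T_0$ for some small $T_0$. The key observation is that, by the construction of $\Lambda=\Lambda(\cdot;h)$ via Lemma \ref{LemmaIntEqu} (which enforces equation (\ref{Z4E1})), the $x^{-(3+\lambda)/2}$ coefficients appearing in the asymptotics of $\tilde h_1$ and $\tilde h_2$ cancel identically, so that
\[
\mathcal{T}[h](\tau,x)=r_1(\tau,x;h,\Lambda(\cdot;h))+\tilde h_{2,R}(\tau,x;\Lambda(\cdot;h)),
\]
and both summands belong to $\mathcal{Z}_{\bar{p}}^{\sigma;\frac{1}{2}}(T)$ by Proposition \ref{Propositionh_1} and Lemma \ref{Ash2} respectively. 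This handles the a priori membership in the right functional space.

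For the self-mapping property, I would estimate $\|\mathcal{T}[h]\|_{\mathcal{Z}_{\bar{p}}^{\sigma;\frac{1}{2}}(T)}$ by splitting the sources in (\ref{Z2E3}) and (\ref{P1E1}) into an $h$-independent part and an $h$-dependent quadratic part. The $h$-dependent part $Q[h]/\Lambda$ is quadratically small in $\|h\|_{\mathcal{Z}_{\bar{p}}^{\sigma;\frac{1}{2}}(T)}$ by Proposition \ref{LemmaQuad}, hence bounded by $C\rho_0^2$ which is $\le\rho_0/2$ if $\rho_0$ is small. The $h$-independent part is controlled by $\|\Lambda Q[f_0]\|_{Y^\sigma_{3/2,2+\bar\delta}(T)}$ together with the $\tilde h_2$ piece; thanks to the $T$-dependent estimate (\ref{S5E6b}) in Lemma \ref{Le1} and Lemma \ref{LemmaInteg} applied to the time integrals defining $\tilde h_{2,R}$, this contribution tends to $0$ as $T\to0$. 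Choosing $T_0$ small enough so that both contributions are $\le\rho_0/2$ yields the self-mapping property.

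For contractivity, given $h_1,h_2\in B_{\rho_0}$ set $\Lambda_i=\Lambda(\cdot;h_i)$. Lemma \ref{LemmaIntEqu} provides a constant $L_1(T)$, small for small $T$, with
\[
\|\Lambda_1-\Lambda_2\|_{C[0,T]}\le L_1(T)\,\|h_1-h_2\|_{\mathcal{Z}_{\bar{p}}^{\sigma;\frac{1}{2}}(T)}.
\]
The Lipschitz property of the map $(h,\Lambda)\mapsto r_1$ in Proposition \ref{Propositionh_1} (with Lipschitz constant that can be made arbitrarily small by taking $T\le T_0$ and $\rho_0$ small, by the last assertion of that Proposition) and the Lipschitz property of $\Lambda\mapsto \tilde h_{2,R}$ from (\ref{P1E3}) in Lemma \ref{Ash2} then combine to give
\[
\|\mathcal{T}[h_1]-\mathcal{T}[h_2]\|_{\mathcal{Z}_{\bar{p}}^{\sigma;\frac{1}{2}}(T)}\le L_2(T)\bigl(\|h_1-h_2\|_{\mathcal{Z}_{\bar{p}}^{\sigma;\frac{1}{2}}(T)}+\|\Lambda_1-\Lambda_2\|_{C[0,T]}\bigr),
\]
with $L_2(T)\to0$ as $T\to0$. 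Inserting the bound on $\Lambda_1-\Lambda_2$ and choosing $T_0$ so small that $L_2(T_0)(1+L_1(T_0))<1/2$ yields the desired strict contraction constant.

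The main obstacle is a bookkeeping one rather than a conceptual one: one must verify that every Lipschitz constant appearing along the way can be made simultaneously small by a single choice of $T_0$ (and of $\rho_0$ in those places where the quadratic dependence on $h$ enters). This relies crucially on (i) the quadratic smallness of the bilinear term $Q[h]$ captured by Proposition \ref{LemmaQuadLip}, (ii) the time smallness encoded in the factor $T^{2(r-\bar\delta)/(\lambda-1)}$ of (\ref{S5E6b}) for $Q[f_0]$, (iii) the gain $T^{2\bar\delta/(\lambda-1)}$ in Lemma \ref{Lemma7} (used implicitly inside Proposition \ref{Propositionh_1} to bound $\mathcal{G}$), and (iv) the $\sqrt{T}$ prefactor in Lemma \ref{LemmaInteg} absorbing the time integration in the definition (\ref{P1E1}) of $\tilde h_2$. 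Once these small factors are aligned, the Banach fixed point theorem produces a unique $h^\ast\in B_{\rho_0}$ with $\mathcal{T}[h^\ast]=h^\ast$, and unwinding the construction (with $f(t,x)=\lambda(t)f_0(x)+h^\ast(\tau,x)$, $\lambda$ recovered from $\Lambda(\cdot;h^\ast)$ by the change of time scale (\ref{time})) produces the solution asserted in Theorem \ref{Th1}.
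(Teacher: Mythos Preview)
Your plan is essentially the same as the paper's: the identity $\mathcal{T}[h]=r_1+\tilde h_{2,R}$ coming from (\ref{Z4E1}), the quadratic smallness of $Q[h]$ via Proposition~\ref{LemmaQuad}, the $T$-smallness of the $Q[f_0]$ contribution via (\ref{S5E6b}), and the chain of Lipschitz estimates for contractivity are exactly what the paper does.

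There is, however, one piece of the self-mapping estimate that your list (i)--(iv) does not cover. Not every term in $\tilde h_{2,R}$ is a time integral to which Lemma~\ref{LemmaInteg} applies. Writing out (\ref{P1E1}) and subtracting the $\xi(x)x^{-(3+\lambda)/2}$ asymptotics as in the paper, one of the resulting pieces is
\[
\tilde h_{2,R,1}(\tau,x)=\bigl[f_0(x)-\xi(x)x^{-\frac{3+\lambda}{2}}\bigr]\,\bigl[1-\Lambda(\tau)\bigr]=h_0(x)\,\bigl[1-\Lambda(\tau)\bigr],
\]
which carries no explicit time integration. The function $h_0$ has a fixed, $T$-independent $\mathcal{Z}_{\bar p}^{\sigma;\frac12}$ norm, so the smallness of this term must come from $\sup_{[0,T]}|1-\Lambda(\tau)|$. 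That in turn is obtained by feeding the integral equation (\ref{Z4E1}) back in: $1-\Lambda(\tau)=-\mathcal{G}[\tau;h,\Lambda]+(1-a(\tau))+\int_0^\tau\mathcal{W}(\tau-s)\Lambda(s)\,ds$, and each term on the right is small (by (\ref{Z2E1d}) for $\mathcal{G}$, by (\ref{G3E3}) for $1-a(\tau)$, and by (\ref{M6E1}) for the integral). This is the step (\ref{B3}) in the paper's argument; once you add it to your bookkeeping list the proof goes through as you outlined.
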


\begin{proof}
The definition of $\tilde{h},\ \tilde{h}_{1},\ \tilde{h}_{2}$ combined with
(\ref{Z4E1}) imply:%
\begin{equation}
\mathcal{T}\left[  h\right]  \left(  \tau,x\right)  =\tilde{h}\left(
\tau,x\right)  =r_{1}\left(  \tau,x;h,\Lambda\left(  \cdot;h\right)  \right)
+\tilde{h}_{2,R}\left(  \tau,x;\Lambda\left(  \cdot;h\right)  \right)
\label{Z4E5}%
\end{equation}

Notice that $\mathcal{T}$ transforms $B_{\rho_{0}}$ into $B_{\rho_{0}}$ for
$T\leq T_{0}$ small. Indeed, $r_{1}$ consists of two pieces that are due to
the contributions of the source terms $\frac{Q\left[  h\right]  }%
{\Lambda\left(  \tau\right)  }$ and $\Lambda\left(  \tau\right)  Q\left[
f_{0}\right]  $ in (\ref{Z2E3}) respectively. The norm $\mathcal{Z}_{\bar{p}%
}^{\sigma;\frac{1}{2}}\left(  T\right)  $ of the solution due to the source
term $\frac{Q\left[  h\right]  }{\Lambda\left(  \tau\right)  }$ can be bounded
as $C\rho_{0}^{2}$ due to Proposition \ref{LemmaQuad}. On the other hand in
order to estimate the contribution due to the term $\Lambda\left(
\tau\right)  Q\left[  f_{0}\right]  $ we use (\ref{S5E6b}) in Lemma \ref{Le1}.
Using then Proposition \ref{PropAsympt} it follows that the contribution due
to the source $\Lambda\left(  \tau\right)  Q\left[  f_{0}\right]  $ is smaller
than $\frac{\rho_{0}}{8}$ if $T_{0}$ is small enough.

On the other hand, in order to see that the contribution of the term
$\tilde{h}_{2,R}$ is small for small times, we use the formulas for
$\psi,\ \tilde{h}_{2}.$ Using (\ref{U3E7a}) and (\ref{P1E1}) we obtain:%
\[
\tilde{h}_{2}\left(  \tau,x\right)  =f_{0}\left(  x\right)  \left[
1-\Lambda\left(  \tau\right)  \right]  +{\mathcal{L}}_{f_{0}}\left[
f_{0}\right]  \tau+\int_{0}^{\tau}W\left(  s,x\right)  ds-\int_{0}^{\tau
}w\left(  \tau-s,x\right)  \Lambda\left(  s\right)  ds
\]

We substracts the terms behaving like $\xi\left(  x\right)  x^{-\frac
{3+\lambda}{2}}$ in all the pieces. We then obtain:%
\begin{align*}
\tilde{h}_{2,R}\left(  \tau,x\right)   &  =\left[  f_{0}\left(  x\right)
-\xi\left(  x\right)  x^{-\frac{3+\lambda}{2}}\right]  \left[  1-\Lambda
\left(  \tau\right)  \right]  +\\
&  +w_{0,R}\left(  x\right)  \tau+\int_{0}^{\tau}W_{R}\left(  s,x\right)
ds-\int_{0}^{\tau}\left[  w_{0,R}\left(  x\right)  +W_{R}\left(
\tau-s,x\right)  \right]  \Lambda\left(  s\right)  ds\\
&  \equiv\tilde{h}_{2,R,1}\left(  \tau,x\right)  +\tilde{h}_{2,R,2}\left(
\tau,x\right)  +\tilde{h}_{2,R,3}\left(  \tau,x\right)  +\tilde{h}%
_{2,R,4}\left(  \tau,x\right)
\end{align*}
(cf. (\ref{U3E9}), (\ref{U3E2a}), (\ref{M6E0})). Using (\ref{U3E9}) we obtain
$\left\Vert \tilde{h}_{2,R,2}\right\Vert _{\mathcal{Z}_{\bar{p}}^{\sigma
;\frac{1}{2}}\left(  T\right)  }\leq CT.$ We can estimate $\tilde{h}_{2,R,3}$
and $\tilde{h}_{2,R,4}$ in the space $\mathcal{Z}_{\bar{p}}^{\sigma;\frac
{1}{2}}\left(  T\right)  $ using the fact that these functions are integrals
on time of functions bounded in $\mathcal{Z}_{\bar{p}}^{\sigma;\frac{1}{2}%
}\left(  T\right)  .$ Using Lemma \ref{LemmaInteg} we obtain $\left\Vert
\tilde{h}_{2,R,3}\right\Vert _{\mathcal{Z}_{\bar{p}}^{\sigma;\frac{1}{2}%
}\left(  T\right)  }+\left\Vert \tilde{h}_{2,R,4}\right\Vert _{\mathcal{Z}%
_{\bar{p}}^{\sigma;\frac{1}{2}}\left(  T\right)  }\leq C\sqrt{T}.$ It only
remains to control the term $\tilde{h}_{2,R,1}.$ To this end, we use here the
integral equation (\ref{Z4E1}) that yields:%
\begin{equation}
\left[  1-\Lambda\left(  \tau\right)  \right]  =-\mathcal{G}\left[
\tau;h,\Lambda\right]  +\left(  1-a\left(  \tau\right)  \right)  +\int
_{0}^{\tau}\mathcal{W}\left(  \tau-s,x\right)  \Lambda\left(  s\right)  ds
\label{B3}%
\end{equation}

Due to Proposition \ref{Propositionh_1} and Lemmas \ref{Le1}, \ref{Le2} we can
estimate the contributions to $\mathcal{G}\left[  \tau;h,\Lambda\right]  $
that are due to $\Lambda\left(  \tau\right)  ,\ Q\left[  f_{0}\right]  $ and
$\frac{Q\left[  h\right]  }{\Lambda\left(  \tau\right)  }$ respectively as
$C\max\left\{  \sqrt{T},T^{\frac{2\left(  r-\bar{\delta}\right)  }{\lambda-1}%
}\right\}  $ and $C\rho_{0}^{2}.$ Therefore this contribution can be estimated
by $\frac{\rho_{0}}{8}.$ The second term on the right-hand side of (\ref{B3})
can be estimated using the differentiability of $a$ (cf. (\ref{G3E3})).
Therefore this term can be estimated as $CT.$ On the other hand, the
boundedness of $\mathcal{W}$ (cf. (\ref{M6E1})) provides a similar estimate
for the last term in (\ref{B3}). Therefore, using the regularity of $f_{0}$ we
obtain $\left\Vert \tilde{h}_{2,R,1}\right\Vert _{\mathcal{Z}_{\bar{p}%
}^{\sigma;\frac{1}{2}}\left(  T\right)  }\leq\frac{\rho_{0}}{4}.$ It then
follows that $\mathcal{T}$ transforms $B_{\rho_{0}}$ into $B_{\rho_{0}}$ if
$T_{0}$ is sufficiently small. Combining the contractivity of the map
(\ref{Z4E2}) with the Lipschitz properties of the maps (\ref{G3E1b}),
(\ref{P1E3}) we obtain the contractivity of $\mathcal{T}$ if $T\leq T_{0}$
sufficiently small.
\end{proof}

\begin{proof}
[Proof of Theorem \ref{Th1}]We define $\tilde{f}$ by means of:%
\begin{equation}
\tilde{f}\left(  \tau,x\right)  =\Lambda\left(  \tau\right)  f_{0}\left(
x\right)  +h\left(  \tau,x\right)  \label{Y1E1}%
\end{equation}
where $h\left(  \tau,x\right)  $ is the fixed point associated to the operator
$\mathcal{T}$ obtained in Lemma \ref{contract}. Notice that:%
\[
h\left(  \tau,x\right)  =\mathcal{T}\left(  h\right)  \left(  \tau,x\right)
=r_{1}\left(  \tau,x;h,\Lambda\right)  +\tilde{h}_{2,R}\left(  \tau,x\right)
=\tilde{h}_{1}\left(  \tau,x\right)  +\tilde{h}_{2}\left(  \tau,x\right)
\]
where $\tilde{h}_{1},\ \tilde{h}_{2}$ are as in Proposition
\ref{Propositionh_1} and (\ref{P1E1}) respectively. Using (\ref{Y1E1}) and
(\ref{P1E1}) we obtain:%
\begin{equation}
\tilde{f}\left(  \tau,x\right)  =\tilde{h}_{1}\left(  \tau,x\right)
+\psi\left(  \tau,x\right)  -\int_{0}^{\tau}w\left(  \tau-s,x\right)
\Lambda\left(  s\right)  ds \label{Y1E2}%
\end{equation}
where $\psi,\ w$ are as in (\ref{U3E7a}), (\ref{U3E2a}). The function
$\tilde{f}$ is differentiable with respect to $\tau$ due to Proposition
\ref{Propositionh_1}, Proposition \ref{Proposition_psi} and the continuity of
$\Lambda,$\ and differentiability of $w$ (cf. Lemma \ref{LemmaIntEqu} and
(\ref{U3E2a}), (\ref{U3E6}) respectively). Therefore $\tilde{f}$ solves:%
\begin{equation}
\tilde{f}_{\tau}=\frac{Q\left[  \tilde{f}\right]  }{\Lambda\left(
\tau\right)  } \label{Y1E3}%
\end{equation}
as it can be checked as follows. Differentiating (\ref{Y1E2}), using the fact
that $w$ solves (\ref{U3E1}) and exchanging the integration in time with the
operator $\mathcal{L}_{f_{0}}\left[  \cdot\right]  $ we obtain:%
\[
\tilde{f}_{\tau}=\left(  \tilde{h}_{1}\right)  _{\tau}+\psi_{\tau}%
-\Lambda\left(  \tau\right)  \mathcal{L}_{f_{0}}\left[  f_{0}\right]
-\mathcal{L}_{f_{0}}\left[  \int_{0}^{\tau}w\left(  \tau-s,x\right)
\Lambda\left(  s\right)  ds\right]
\]

Eliminating the integral in the last term by means of (\ref{Y1E2}):%
\[
\tilde{f}_{\tau}=\left[  \left(  \tilde{h}_{1}\right)  _{\tau}-\mathcal{L}%
_{f_{0}}\left[  \tilde{h}_{1}\right]  \right]  +\left[  \psi_{\tau
}-\mathcal{L}_{f_{0}}\left[  \psi\right]  \right]  -\Lambda\left(
\tau\right)  \mathcal{L}_{f_{0}}\left[  f_{0}\right]  +\mathcal{L}_{f_{0}%
}\left[  \tilde{f}\right]
\]

Proposition \ref{Proposition_psi} yields $\psi_{\tau}-\mathcal{L}_{f_{0}%
}\left[  \psi\right]  =0.$ Using also (\ref{Z2E3}) we obtain:%
\[
\tilde{f}_{\tau}=\frac{Q\left[  h\right]  }{\Lambda\left(  \tau\right)
}+\Lambda\left(  \tau\right)  Q\left[  f_{0}\right]  -\Lambda\left(
\tau\right)  \mathcal{L}_{f_{0}}\left[  f_{0}\right]  +\mathcal{L}_{f_{0}%
}\left[  \tilde{f}\right]
\]
where due to (\ref{Y1E1}) $h=\tilde{f}\left(  \tau,x\right)  -\Lambda\left(
\tau\right)  f_{0}\left(  x\right)  .$ Then:%
\[
\tilde{f}_{\tau}=\frac{Q\left[  \tilde{f}\right]  }{\Lambda\left(
\tau\right)  }-\mathcal{L}_{f_{0}}\left[  \tilde{f}\right]  +\Lambda\left(
\tau\right)  Q\left[  f_{0}\right]  +\Lambda\left(  \tau\right)  Q\left[
f_{0}\right]  -\Lambda\left(  \tau\right)  \mathcal{L}_{f_{0}}\left[
f_{0}\right]  +\mathcal{L}_{f_{0}}\left[  \tilde{f}\right]
\]

Using that $\mathcal{L}_{f_{0}}\left[  f_{0}\right]  =2Q\left[  f_{0}\right]
$ (cf. (\ref{M7E1a})) we obtain that $\tilde{f}$ solves (\ref{Y1E3}). Using
the time scale $t$ given by means of (\ref{time}) we deduce that $f\left(
t,x\right)  =\tilde{f}\left(  \tau,x\right)  $ solves (\ref{S1E1}),
(\ref{S1E2}). Using (\ref{Z2E4}), (\ref{Z2E5}) and (\ref{Y1E2}) we have that
$f$ satisfies (\ref{S1E3}). This concludes the Proof of the existence of the
sought-for solution.

We prove uniqueness in the class of solutions stated in Theorem \ref{Th1} as
follows. Suppose that we have two solutions $f^{\alpha},\ f^{\beta}$ of
(\ref{S1E1})-(\ref{S1E3}) such that $f^{\alpha}=\lambda^{\alpha}\left(
t\right)  f_{0}\left(  x\right)  +h^{\alpha}\ \ ,\ \ f^{\beta}=\lambda^{\beta
}\left(  t\right)  f_{0}\left(  x\right)  +h^{\beta}$ with $\lambda^{\alpha
},\ \lambda^{\beta}\in C\left[  0,T\right]  ,\ h^{\alpha},\ h^{\beta}%
\in\mathcal{Z}_{\bar{p}}^{\sigma;\frac{1}{2}}\left(  T\right)  .$

Using the change of variables (\ref{time}), for both solutions and denoting as
$\tau$ the new time scale in both case we obtain functions $\tilde{f}^{\alpha
},\ \tilde{f}^{\beta}$ satisfying (\ref{Y1E3}) with $\Lambda=\Lambda^{\alpha
}=\lambda^{\alpha}$ and $\Lambda=\Lambda^{\beta}=\lambda^{\beta}$
respectively. We will write, with a bit abuse of notation $\tilde{f}^{\alpha
}=\Lambda^{\alpha}\left(  \tau\right)  f_{0}\left(  x\right)  +h^{\alpha
}\ \ ,\ \ \tilde{f}^{\beta}=\Lambda^{\beta}\left(  \tau\right)  f_{0}\left(
x\right)  +h^{\beta}.$ We define functions $\tilde{h}_{2}^{\alpha},\ \tilde
{h}_{2}^{\beta}\in\mathcal{Z}_{\bar{p}}^{\sigma;\frac{1}{2}}\left(  T\right)
$ by means of (\ref{P1E1}) with the corresponding functions $\Lambda^{\alpha
},\ \Lambda^{\beta}.$ We define also the functions $\tilde{h}_{1}^{\alpha
},\ \tilde{h}_{1}^{\beta}\in\mathcal{Z}_{\bar{p}}^{\sigma;\frac{1}{2}}\left(
T\right)  $ by means of $\tilde{h}_{1}^{\alpha}=h^{\alpha}-\tilde{h}%
_{2}^{\alpha}\ \ ,\ \ \tilde{h}_{1}^{\beta}=h^{\beta}-\tilde{h}_{2}^{\beta}.$

Using arguments analogous to the ones used in the derivation of (\ref{Y1E3})
we obtain:%
\[
\left(  \tilde{h}_{1}^{k}\right)  _{\tau}-\mathcal{L}_{f_{0}}\left[  \tilde
{h}_{1}^{k}\right]  =\frac{Q\left[  h^{k}\right]  }{\Lambda^{k}\left(
\tau\right)  }+\Lambda^{k}\left(  \tau\right)  Q\left[  f_{0}\right]
\ \ ,\ \ \ \tilde{h}_{1}^{k}\left(  0\right)  =0\ \ ,\ \ k=\alpha,\beta
\]

Using Proposition \ref{Propositionh_1} we obtain that $\tilde{h}_{1}^{\alpha
},\ \tilde{h}_{1}^{\beta}$ have the asymptotics (\ref{G3E1}). Moreover, for
$T\leq T_{0}$ small enough we have that the operator $\mathcal{G}$ is
contractive in $h$ and $\Lambda.$ Therefore:
\[
\left\Vert \mathcal{G}\left[  \cdot;h^{\alpha},\Lambda^{\alpha}\right]
-\mathcal{G}\left[  \cdot;h^{\beta},\Lambda^{\beta}\right]  \right\Vert
_{C\left[  0,T\right]  }\leq\theta\left\Vert \Lambda^{\alpha}-\Lambda^{\beta
}\right\Vert _{C\left[  0,T\right]  }+\theta\left\Vert h^{\alpha}-h^{\beta
}\right\Vert _{\mathcal{Z}_{\bar{p}}^{\sigma;\frac{1}{2}}\left(  T\right)  }%
\]
where $0<\theta<1$ can be made arbitrarily small for $T_{0}$ sufficiently
small. Moreover, since $h^{\alpha},\ h^{\beta}\in\mathcal{Z}_{\bar{p}}%
^{\sigma;\frac{1}{2}}\left(  T\right)  ,$ the functions $\Lambda^{\alpha
},\ \Lambda^{\beta}$ solve the integral equation (\ref{Z4E1}). The function
$\Lambda$ depends in a Lifschitz manner on the function $\mathcal{G}$ with a
constant smaller than two if $T\leq T_{0}.$ Therefore $\left\Vert
\Lambda^{\alpha}-\Lambda^{\beta}\right\Vert _{C\left[  0,T\right]  }%
\leq2\theta\left\Vert \Lambda^{\alpha}-\Lambda^{\beta}\right\Vert _{C\left[
0,T\right]  }+2\theta\left\Vert h^{\alpha}-h^{\beta}\right\Vert _{\mathcal{Z}%
_{\bar{p}}^{\sigma;\frac{1}{2}}\left(  T\right)  }$ whence $\left\Vert
\Lambda^{\alpha}-\Lambda^{\beta}\right\Vert _{C\left[  0,T\right]  }%
\leq4\theta\left\Vert h^{\alpha}-h^{\beta}\right\Vert _{\mathcal{Z}_{\bar{p}%
}^{\sigma;\frac{1}{2}}\left(  T\right)  }.$ Using then the contractivity of
the mapping $\left(  h,\Lambda\right)  \rightarrow\tilde{h}_{1}$ (cf.
Proposition \ref{Propositionh_1}) we then obtain $\left\Vert \tilde{h}%
_{1}^{\alpha}-\tilde{h}_{1}^{\beta}\right\Vert _{\mathcal{Z}_{\bar{p}}%
^{\sigma;\frac{1}{2}}\left(  T\right)  }\leq\frac{1}{4}\left\Vert h^{\alpha
}-h^{\beta}\right\Vert _{\mathcal{Z}_{\bar{p}}^{\sigma;\frac{1}{2}}\left(
T\right)  }.$ On the other hand (\ref{P1E1}) yields $\left\Vert \tilde{h}%
_{2}^{\alpha}-\tilde{h}_{2}^{\beta}\right\Vert _{\mathcal{Z}_{\bar{p}}%
^{\sigma;\frac{1}{2}}\left(  T\right)  }\leq C\theta\left\Vert h^{\alpha
}-h^{\beta}\right\Vert _{\mathcal{Z}_{\bar{p}}^{\sigma;\frac{1}{2}}\left(
T\right)  }.$ Therefore, choosing $\theta$ small enough:%
\[
\left\Vert h^{\alpha}-h^{\beta}\right\Vert _{\mathcal{Z}_{\bar{p}}%
^{\sigma;\frac{1}{2}}\left(  T\right)  }\leq\left\Vert \tilde{h}_{1}^{\alpha
}-\tilde{h}_{1}^{\beta}\right\Vert _{\mathcal{Z}_{\bar{p}}^{\sigma;\frac{1}%
{2}}\left(  T\right)  }+\left\Vert \tilde{h}_{2}^{\alpha}-\tilde{h}_{2}%
^{\beta}\right\Vert _{\mathcal{Z}_{\bar{p}}^{\sigma;\frac{1}{2}}\left(
T\right)  }\leq\frac{1}{2}\left\Vert h^{\alpha}-h^{\beta}\right\Vert
_{\mathcal{Z}_{\bar{p}}^{\sigma;\frac{1}{2}}\left(  T\right)  }%
\]
whence $h^{\alpha}=h^{\beta}.$ Then $\Lambda^{\alpha}=\Lambda^{\beta}$ and the
uniqueness follows.
\end{proof}

\textbf{Acknowledgements. }ME is supported by Grants MTM2008-03541 and
IT-305-07. JJLV is supported by Grant MTM2007-61755. He also thanks
Universidad Complutense.

\end{document}